\documentclass[12pt]{article}
\pdfoutput=1

\newif\ifTITsubmission
\makeatletter
\@ifclassloaded{IEEEtran}{\TITsubmissiontrue}{\TITsubmissionfalse}
\makeatother

\ifTITsubmission\else
\usepackage{fullpage}
\fi
\usepackage{amssymb}
\usepackage{amsmath}
\usepackage{bm}
\usepackage{amsthm}
\usepackage{dsfont}
\ifTITsubmission\else
\usepackage{libertine}
\fi
\usepackage{thmtools}
\makeatletter
\@ifundefined{newcounteralias}{}{%
  \renewcommand\thmt@autorefsetup{%
    \@xa\def\csname\thmt@envname autorefname\@xa\endcsname\@xa{\thmt@thmname}}%
}
\makeatother
\usepackage{mathtools}
\usepackage{mathrsfs}
\usepackage{booktabs}
\usepackage{array}
\usepackage{tikz}
\usepackage{float}
\usepackage{hyperref}
\usepackage[capitalise]{cleveref}
\usetikzlibrary{positioning,arrows.meta,calc,shapes.geometric}

\AtBeginDocument{%
  \setlength{\emergencystretch}{.5em}
  \setlength{\abovedisplayskip}{9pt plus 2pt minus 4pt}%
  \setlength{\belowdisplayskip}{9pt plus 2pt minus 4pt}%
  \setlength{\abovedisplayshortskip}{0pt plus 2pt}%
  \setlength{\belowdisplayshortskip}{5pt plus 2pt minus 2pt}%
}

\ifTITsubmission\else
\makeatletter
\renewcommand{\section}{\@startsection {section}{1}{\z@}
             {-3.5ex \@plus -1ex \@minus -.2ex}
             {2.3ex \@plus .2ex}
             {\normalfont\Large\scshape\bfseries}}
\renewcommand{\subsection}{\@startsection{subsection}{2}{\z@}
             {-3.25ex\@plus -1ex \@minus -.2ex}
             {1.5ex \@plus .2ex}
             {\normalfont\large\scshape\bfseries}}
\renewcommand{\subsubsection}{\@startsection{subsubsection}{3}{\z@}
             {-3.25ex\@plus -1ex \@minus -.2ex}
             {1.5ex \@plus .2ex}
             {\normalfont\normalsize\scshape\bfseries}}
\renewcommand{\paragraph}{\@startsection{paragraph}{4}{\z@}
             {-2.5ex\@plus -.5ex \@minus -.2ex}
             {.75ex \@plus .15ex}
             {\normalfont\large\sffamily\bfseries\color{black!45!black}}}
\makeatother
\fi

\newtheorem{theorem}{Theorem}[section]

\newtheorem{lemma}[theorem]{Lemma}
\newtheorem{prop}[theorem]{Proposition}
\newtheorem{cor}[theorem]{Corollary}

\theoremstyle{definition}
\newtheorem{definition}[theorem]{Definition}

\newtheorem{remark}[theorem]{Remark}
\newtheorem{fact}[theorem]{Fact}
\newtheorem{example}[theorem]{Example}

\crefname{prop}{Proposition}{Propositions}
\Crefname{prop}{Proposition}{Propositions}
\crefname{cor}{Corollary}{Corollaries}
\Crefname{cor}{Corollary}{Corollaries}
\crefname{fact}{Fact}{Facts}
\Crefname{fact}{Fact}{Facts}



\def\K{\mathcal{K}}

\def\Q{\mathcal{Q}}

\def\U{\mathcal{U}}
\def\V{\mathcal{V}}
\def\W{\mathcal{W}}

\newcommand {\set} [1] {\ensuremath{ \left\lbrace #1 \right\rbrace }}

\newcommand{\normthree}[1]{{\left\vert\kern-0.25ex\left\vert\kern-0.25ex\left\vert #1 \right\vert\kern-0.25ex\right\vert\kern-0.25ex\right\vert}}
\newcommand {\br} [1] {\ensuremath{ \left( #1 \right) }}
\newcommand {\Br} [1] {\ensuremath{ \left[ #1 \right] }}

\newcommand {\norm} [1] {\ensuremath{ \left\| #1 \right\| }}

\newcommand {\abs} [1] {\ensuremath{ \left| #1 \right| }}
\newcommand{\grayhighlight}[1]{%
  #1
}

\newcommand {\bra} [1] {\ensuremath{ \left\langle #1 \right| }}
\newcommand {\ket} [1] {\ensuremath{ \left| #1 \right\rangle }}
\newcommand {\ketbratwo} [2] {\ensuremath{ \left| #1 \middle\rangle \middle\langle #2 \right| }}
\newcommand {\ketbra} [1] {\ketbratwo{#1}{#1}}

\DeclareMathOperator{\RelEnt}{D}

\newcommand {\Tr} {\ensuremath{ \mathrm{Tr} }}

\newcommand {\id} {\ensuremath{\mathds{1}}}

\newcommand{\iu}{\mathrm{i}}
\newcommand{\pauliX}{\mathsf{X}}
\newcommand{\pauliY}{\mathsf{Y}}
\newcommand{\pauliZ}{\mathsf{Z}}

\newcommand{\Matrix} {\ensuremath{\mathcal{M}}}

\newcommand {\Ent} {\ensuremath{\mathrm{Ent}}}

\newcommand{\MIP}{\mathsf{MIP}}
\newcommand{\NEXP}{\mathsf{NEXP}}
\newcommand{\RE}{\mathsf{RE}}
\newcommand{\poly}{\mathrm{poly}}

\newcommand{\mobius}{Möbius}

\newcommand{\sfG}{\mathsf{G}}
\DeclareMathOperator{\Corr}{Corr}
\DeclareMathOperator{\Val}{Val}

\DeclareMathOperator{\supp}{supp}

\newcommand{\cL}{\mathcal L}

\newcommand{\cE}{\mathcal E}
\newcommand{\cG}{\mathcal G}
\newcommand{\bll}{\boldsymbol\lambda}
\newcommand{\dd}{\,\mathrm d}

\newcommand{\one}{\mathbf 1}
\newcommand{\bfS}{\mathbf{S}}
\newcommand{\bfr}{\mathbf{r}}

\newcommand{\wh}{\widehat}
\newcommand{\wt}{\widetilde}
\newcommand{\tA}{t_A}
\newcommand{\tB}{t_B}

\newcommand{\namednote}[3]
{\ThereShouldBeZeroNotes\textcolor{#2}{\textbf{#1:} #3}}

\newcommand{\znote}[1]
{\namednote{Haigang Zhou}{red}{#1}}

\usepackage[english]{babel}
\usepackage{graphicx}

\title{A Dichotomy for $\mathsf{MIP}^*$ in the Presence of Unital Noise}

\author{
 Yangjing Dong\thanks{\scriptsize  State Key Laboratory of Novel Software Technology, Nanjing University, Nanjing 210023, China. Email: dongmassimo@gmail.com.}
\and Zhenyu Jiang\thanks{\scriptsize State Key Laboratory of Novel Software Technology, Nanjing University, Nanjing 210023, China. Email: zhenyujiang@smail.nju.edu.cn.}
\and Minglong Qin\thanks{\scriptsize Centre for Quantum Technologies, National University of Singapore, Singapore 117543, Singapore. Email: mlqin6@gmail.com.}
\and Penghui Yao\thanks{\scriptsize  State Key Laboratory of Novel Software Technology, Nanjing University, Nanjing 210023, China.  Email: phyao1985@gmail.com.}~\thanks{\scriptsize Hefei National Laboratory, Hefei 230088, China.}
\and Haigang Zhou\thanks{\scriptsize State Key Laboratory of Novel Software Technology, Nanjing University, Nanjing 210023, China. Email: hgzhou2003@outlook.com.}
}

\begin{document}

\maketitle

\begin{abstract}
A recent work \cite{DongEtAl24} studied quantum two-prover one-round proof systems $\mathsf{MIP}^{\psi}[\poly, O(1)]$, in which the provers share arbitrarily many copies of a bipartite state $\psi$ and the verifier sends polynomial-length questions while the provers return constant-length answers. It showed that when $\psi$ is a noisy EPR state with quantum maximal correlation $\rho_{\max}(\psi)<1$, the resulting complexity class is equivalent to $\mathsf{NEXP} = \mathsf{MIP}$. In contrast, $\mathsf{MIP}^{\text{EPR}}[\poly, O(1)]=\mathsf{RE}$.

Not all noise, however, decreases quantum maximal correlation. A basic example a dephased EPR pair, which retains quantum maximal correlation $1$. In this work, we remove the constant-answer restriction and establish a complete dichotomy for $\mathsf{MIP}^{\psi}[\poly, \poly]$, where the provers share arbitrarily many copies of EPR pairs subject to unital noise.  Equivalently, let $\psi$ be a fixed two-qubit state with maximally mixed marginals. We prove that 
\[\mathsf{MIP}^{\psi}[\poly,\poly]
=\mathsf{MIP}
=\mathsf{NEXP},~\text{if $\psi$ is separable or $\rho_{\max}(\psi)<1$};\]
\[\mathsf{MIP}^{\psi}[\poly,\poly]
=\mathsf{MIP}^{\psi}[\poly,O(1)]
=\mathsf{RE},
\text{if $\psi$ is entangled and $\rho_{\max}(\psi)=1$.}\]

Beyond the complexity classification, our proof develops two techniques that may be useful more broadly in quantum information and quantum complexity. First, we introduce a positivity-preserving low-degree approximation framework for quantum measurements: instead of truncating positive operators directly in the Pauli basis, we approximate suitable square-root factorizations, simultaneously achieving low Pauli degree, positivity, and global control over an entire POVM. Second, we develop a new dimension-reduction method,
called \emph{Pauli folding}, for low-degree positive operators. Its key ingredient is an almost-multiplicativity theorem for randomized Pauli hashing, which allows noncommutative products, positivity, and normalization to survive compression with controlled error. This yields an exponential improvement over the dimension-reduction bounds in~\cite{QY21,DongEtAl24} and suggests a general route toward dimension reduction for structured quantum operators.

\end{abstract}

\setcounter{tocdepth}{2}
\tableofcontents


\section{Introduction}
\label{sec:introduction}

Multiprover interactive proofs are a fundamental model in computational
complexity. They capture which computational problems can be verified
by a probabilistic polynomial-time verifier interacting with multiple
computationally unbounded provers. The provers may agree on a strategy
in advance but cannot communicate during the protocol. A fundamental theorem of Babai, Fortnow, and Lund
characterizes the resulting complexity class as
$\MIP=\NEXP$, where $\NEXP$ is the class of languages decidable
in nondeterministic exponential time~\cite{BFL91}. This power can already be achieved with
two provers and a single round of interaction~\cite{FL92}. 

Shared quantum entanglement changes the power of this model
dramatically. Allowing the provers to share quantum entanglement gives rise to
the complexity class $\MIP^*$.
After a long line of work on the complexity of entangled nonlocal games, Ji, Natarajan, Vidick, Wright,
and Yuen proved the landmark result $\MIP^*=\RE$~\cite{JNVWY21,JNVWY22}, where $\RE$ is the class of recursively enumerable
languages. Remarkably, this already holds for
entangled games in which the players are restricted to share
an unbounded number of {\em perfect} EPR pairs. Equivalently, even the
problem of approximating the entangled value of a nonlocal game can
encode undecidable problems, including the halting problem. 

This raises a
natural robustness question: what happens to the computational power
of $\MIP^*$ when the shared entanglement is noisy? Qin and
Yao~\cite{QY21} initiated the study of quantum multiprover interactive proof systems, where  the provers share arbitrarily many copies
of a fixed noisy bipartite state $\psi$, independent of the input.
They borrowed the notion of {\em quantum maximal correlation}, a measure of correlation  introduced by Beigi~\cite{Beigi13}, generalizing
classical maximal correlation~\cite{Renyi1959}, and focused on \emph{noisy maximally entangled states}, namely
states with maximally mixed marginals and quantum maximal
correlation $\rho_{\max}(\psi)<1$.

Qin and Yao established a computable bound on the number of copies
of $\psi$ needed to approximate the optimal acceptance probability
to any desired accuracy. Consequently, the corresponding proof
systems recognize only decidable languages~\cite{QY21}.
They subsequently extended this decidability result to the setting in which the verifier sends quantum questions and receives quantum
answers~\cite{QY23}. These results establish decidability but leave open whether the
computational power still exceeds the classical class $\NEXP$.

A sharper complexity-theoretic characterization was obtained by
Dong, Fu, Natarajan, Qin, Xu, and Yao~\cite{DongEtAl24}. They studied
two-prover one-round $\MIP$ in which the provers share
arbitrarily many copies of a fixed noisy maximally entangled state
$\psi$ and the answers have constant length. They proved
\[
    \MIP^{\psi}[\poly,O(1)]
       = \NEXP \quad\mbox{and}\quad \MIP^{\mathrm{EPR}}[\poly,O(1)]=\RE,
\]
where $\MIP^\psi[q,a]$ denotes the two-prover one-round proof class
using arbitrarily many copies of $\psi$,
with question and answer lengths bounded by $q$ and $a$ bits,
respectively.
Thus, in the constant-answer regime, the quantum advantages vanish in the presence of noise.

These results leave two natural gaps. First, the $\NEXP$
characterization of~\cite{DongEtAl24} assumes that the provers return
only constantly many bits. When the answers have polynomial length,
the number of possible answers becomes exponential, and the techniques
for the constant-answer setting do not directly give an $\NEXP$
upper bound.  Second, all of the preceding noisy-state results assume
that the quantum maximal correlation satisfies
$\rho_{\max}(\psi)<1$. This does not cover all natural noisy EPR
states. In particular, unital qubit noise, such as dephasing noise, can produce mixed states
with maximally mixed marginals and
$\rho_{\max}(\psi)=1$.  Such states exhibit qualitatively different
behavior: maximal correlation one by itself does not determine
whether the shared resource is entangled, since even a shared
classical random bit has maximal correlation one~\cite{Beigi13}.
The remaining case therefore requires understanding the distinction
between maximal correlation and entanglement.


In this work, we resolve both issues for two-qubit resources with
maximally mixed marginals. Equivalently, we obtain a complete
classification for EPR pairs subjected to unital qubit noise. Our
first result extends the $\NEXP$ characterization in~\cite{DongEtAl24} from constant-length answers to polynomial-length
answers. More generally, we prove that if $\psi$ is a two-qubit state
with maximally mixed marginals and either $\psi$ is separable or
$\rho_{\max}(\psi)<1,$
then
\[
    \MIP^{\psi}[\poly,\poly]
       = \NEXP.
\]
Thus neither polynomially long answers nor arbitrary classical
correlations increase the computational power beyond $\NEXP$ in
these cases.

The complementary regime behaves completely differently. We prove
that if $\psi$ is entangled and
$\rho_{\max}(\psi)=1,$
then the players recover the full computational power of perfect
entanglement:
\[
    \MIP^{\psi}[\poly,O(1)]
       = \RE.
\]
In particular, perfect EPR pairs are not the only resources capable
of realizing the full power of $\MIP^*$: certain mixed noisy states
do so as well. Combining the two directions gives a complete
dichotomy for two-qubit states with maximally mixed marginals.
Within this family, the computational power of the shared resource
is determined jointly by its maximal correlation and whether it is
entangled. We state the precise statements next.




\subsection{Main Result}




Throughout, let $\psi$ be a fixed two-qubit state with maximally
mixed marginals and known parameters. The two provers may share
arbitrarily many copies of $\psi$ and perform arbitrary local
measurements on them, with no access to any other shared
quantum resource. We use $\MIP^\psi[q,a]$ to denote two-prover one-round interactive
proof systems in this setting, where $q$ and $a$ bound the question
and answer lengths in bits, respectively, as functions of the
input length. Our main result gives a
complete dichotomy for the computational power of these
proof systems.
\begin{theorem}
\label{thm:main}
Let $\psi\in \Matrix_2\otimes \Matrix_2$ be a fixed two-qubit state with maximally
mixed marginals. 

\begin{enumerate}
\item If $\psi$ is separable or $\rho_{\max}(\psi)<1$, then
\[
\MIP^{\psi}[\poly,\poly]
=
\NEXP
=
\MIP.
\]

\item If $\psi$ is entangled and $\rho_{\max}(\psi)=1$, then
\[
   \MIP^{\psi}[\poly,\poly]
  =
  \RE.
\]

\end{enumerate}
\end{theorem}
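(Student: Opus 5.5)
We organize the argument around a normal form for $\psi$. Since $\psi$ has maximally mixed marginals, after local unitaries $U\otimes V$ (which the provers can apply for free, so they do not change $\MIP^\psi$) we may write $\psi=\frac14\bigl(\id\otimes\id+\sum_{i=1}^3 t_i\,\sigma_i\otimes\sigma_i\bigr)$ with $(t_1,t_2,t_3)$ in the tetrahedron of Bell-diagonal states. In this form $\rho_{\max}(\psi)=\max_i|t_i|$, and $\psi$ is separable iff $|t_1|+|t_2|+|t_3|\le 1$. This splits the statement into three regimes: (a) $\rho_{\max}<1$; (b) $\psi$ separable; (c) $\max_i|t_i|=1$ and $\sum_i|t_i|>1$.

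\textbf{Lower bounds.} $\NEXP=\MIP\subseteq\MIP^\psi[\poly,\poly]$ holds in all cases. Shared randomness is available, so the usual sound $\MIP$ protocols remain sound against any entangled strategies once we use a classically sound, entanglement-sound two-prover protocol (e.g.\ the Ito--Vidick / Natarajan--Wright $\NEXP\subseteq\MIP^*$ protocols). In regime (c), the upper bound $\MIP^\psi\subseteq\MIP^*=\RE$ is trivial. For the lower bound, assume WLOG $t_3=1$. Then $\psi=\frac12(\ketbra{\Phi^+}+\ketbra{\Phi^-})$ when $t_1=-t_2$, or the analogous mixture with the other sign. More generally, $\psi$ is a mixture $p\Phi^++(1-p)\Phi^-$ with $p\neq\frac12$ exactly when entangled, since entanglement forces $|t_1|=|t_2|=2p-1\ne0$. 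Such a state is EPR up to a random $Z$ error that is perfectly correlated in the computational basis.

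I would then show that copies of $\psi$ suffice to run the $\MIP^*=\RE$ compression protocol. The plan is entanglement distillation by LOCC-free means: the provers cannot communicate, but a known phase-flip channel $p\Phi^++(1-p)\Phi^-$ can be converted, by local operations alone, into near-perfect EPR pairs with vanishing rate loss? That is false in general. Instead, I would argue directly that the $\RE$-hardness games (Pauli basis tests, with $\pauliZ$- and $\pauliX$-type questions) only need the honest strategy to achieve value $1$. The honest $\pauliZ$-measurements are unaffected, and the $\pauliX$-correlations are multiplied by $t_1$. I would therefore design a modified game family in which $\pauliX$-tests are replaced by tests checkable under $Z$-dephasing. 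Concretely: use the dephased pair to embed a classical shared string, and use the residual coherence, amplified over many copies via a majority code, to realize an effective EPR pair. This should be shown to preserve completeness $1$ while soundness carries over because any $\psi$-strategy is an EPR-strategy.

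\textbf{Upper bounds in (a), (b).} For (b), $\psi^{\otimes n}$ is separable, so any strategy is a classical mixture of product strategies and the value equals the classical value; hence $\MIP^\psi=\MIP$. Regime (a) is the substantive part. Following~\cite{DongEtAl24}, I would: (1) replace each POVM element, via the square-root factorization $A_a=B_a^\dagger B_a$, by low-Pauli-degree truncations of $B_a$ using hypercontractivity / the noise operator with $\rho_{\max}<1$; this keeps positivity and the total POVM normalization with error controlled uniformly over exponentially many answers. (2) Apply Pauli folding: randomly hash the $n$ qubits into $m=\poly$ many, using almost-multiplicativity to keep products, positivity, and normalization approximately. (3) Conclude that a strategy on $\poly(|x|)$ copies with $\exp(\poly)$-size matrices achieves the value up to small error. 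An $\NEXP$ machine can then guess it and compute its value, using the gap to decide.

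\textbf{Main obstacle.} Step (1)--(2) with polynomial-length answers: error bounds must not scale with the number of answers, which is why I would use global square-root control. In regime (c), the honest-strategy construction is the step I am least sure of.
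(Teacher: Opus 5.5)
Your canonical form and the separable case are fine. The $\NEXP$ lower bound is also fine for every $\psi$, provided the entanglement-sound protocol you cite is two-prover one-round with \emph{classical} honest provers.

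\textbf{Main gap: regime (c).} Here you have no working mechanism. The $\RE$-hard EPR games need a number $m$ of EPR pairs that grows without bound across instances. Running the honest strategy directly on $\Psi_\lambda^{\otimes m}$ only guarantees the factor $\bigl((1+\lambda)/2\bigr)^m$, its overlap with $\Phi_{2^m}$; weight-$w$ $\pauliX$-correlations decay like $\lambda^w$. So completeness vanishes.

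A majority/repetition code makes this worse, not better. On the span of $\ket{0^N},\ket{1^N}$ the coherence is $\lambda^N$, and correcting phase flips would require both parties to learn a bilateral syndrome, which needs communication. Completeness $1$ is neither justified nor needed. Once you modify the games, soundness no longer ``carries over'' either; it would need a fresh analysis.

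The missing idea is a communication-free conversion with \emph{dimension-independent} EPR weight:
\begin{enumerate}
\item Partition $\mathbb F_2^{2^m-1}$ into Hamming-code cosets $c+e_i$. Each party locally relabels $x\mapsto(c,i)$ and discards $c$.
\item Since $d_H(e_i,e_j)\le2$, every coherence of the resulting $2^m$-dimensional state is at least $\lambda^2$. Hence its overlap with $\Phi_{2^m}$ is at least $\lambda^2$ for every $m$.
\item A bilateral Pauli twirl, seeded by $2m$ extra copies measured in the $\pauliZ$ basis, makes the state exactly $F\Phi_{2^m}+(1-F)\zeta$ with $F\ge\lambda^2$.
\item Positivity of the acceptance operator then gives value at least $\lambda^2p$ for any EPR strategy of value $p$.
\end{enumerate}
The original games are kept unchanged. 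Yuen's parallel repetition pushes their soundness below $1/(6k)$ against all entangled strategies. Then $k$-fold OR repetition on independent resource blocks lifts completeness from $\lambda^2/2$ to $2/3$, while a union bound keeps soundness at most $1/6$.

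A minor point: $\MIP^\psi\subseteq\MIP^*$ is not a valid inclusion, since a $\MIP^\psi$ protocol is only sound against $\psi$-strategies. Argue the $\RE$ upper bound directly by enumerating finite-copy $\psi$-strategies.

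\textbf{Second gap: step (3) of regime (a).} Step (3) miscounts the dimension. Pauli folding compresses only to $D=\poly(t)$ qubits; the paper needs $D\gtrsim k^{64}t^2$, because normalization and value errors are summed over $t$ outcomes. With polynomial-length answers, $t=2^{\poly(N)}$. So the compressed strategy lives on exponentially many qubits, and its matrices are doubly exponential in size; an $\NEXP$ machine cannot guess them.

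The certificate must instead be the $\poly(q,t)D^{O(k)}$ low-degree Pauli coefficients of the block entries $Q_a^x,R_b^y$. Normalization and value are then checked by Pauli convolution. This is exponential only because $k=\poly(\log t)$. Soundness of this test is a separate step: you must show that any certificate passing an approximate normalization test is close in value to a genuine POVM strategy. The paper does this via the polar decomposition $Q=Q^\sharp\sqrt{Q^\dagger Q}$.

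Finally, $k=\poly(\log t)$ is exactly what your step (1) leaves unjustified. Hypercontractivity and outcome-by-outcome truncation give errors that sum to $\poly(t)$ and losses exponential in the degree. The paper gets a $\log t$ dependence in two steps:
\begin{enumerate}
\item It bounds the smoothing cost via the Gao--Rouz\'e complete SDPI, applied to the block density $Z_{\mathscr A}$, whose relative entropy to its depolarization is at most $\log t$.
\item It approximates the square-root block operator itself (not its truncation) in Schatten-$4$ norm, using a Mendel--Naor decay estimate and the Stroock--Varopoulos inequality.
\end{enumerate}
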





\Cref{thm:main} shows that unital noise can eliminate the computational
advantage of shared EPR pairs in two-prover one-round $\MIP$.  If the noise reduces quantum maximal correlation below
one or renders the state separable, the computational power
collapses to the classical class $\MIP=\NEXP$. Otherwise, the
full power of $\MIP^*=\RE$ survives. Within this family of
two-qubit resources, there are no intermediate possibilities.
This dichotomy already holds for constant-length answers and
remains unchanged when polynomial-length answers are allowed.
Thus, quantum maximal correlation, together with entanglement,
determines the computational power of these proof systems.


\subsection{Proof Overview}

We formulate the proof in terms of two-player entangled
games. On each fixed input, a two-prover one-round protocol
defines such a game, with winning probability equal to
the verifier's acceptance probability. Throughout, the
game value is optimized over strategies using arbitrarily
many copies of $\psi$ and no other shared quantum states.





\subsubsection{The case with imperfect correlation}

Our proof for the case $\rho_{\max}(\psi)<1$ follows the
general framework for the decidability of non-interactive
simulation of joint distributions introduced by Ghazi,
Kamath, and Sudan~\cite{GKS16}, and developed further
in subsequent work~\cite{DMN18,GKR18}. In the classical
problem, two parties receive arbitrarily many i.i.d. samples
from a fixed correlated distribution and, without
communicating, attempt to jointed sample a prescribed target
distribution.  The central idea is to show that a simulation using arbitrarily many correlated samples can be approximated using a bounded number of samples.  Qin and Yao~\cite{QY21} initiated the study
of the quantum counterpart of this problem, in which
shared samples are replaced by copies of a bipartite
quantum state and local functions by quantum measurements.
They proved decidability for nonlocal games with noisy
maximally entangled states and subsequently extended
the framework to fully quantum games~\cite{QY23}.
~\cite{DongEtAl24} refined this approach
to obtain an $\NEXP$ upper bound for constant-length answers.

Let $s$ and $t$ denote the maximum numbers of questions
and answers, respectively, and fix the resource state and approximation error. The best previous copy bound is  $D=\poly(s)\exp(\poly(t))$
in~\cite{DongEtAl24}. To avoid writing out the full measurement matrices, their algorithm uses low-degree Pauli coefficients, whose number per operator scales as $D^{O(k)}$ for degree $k$. Two ingredients make the dependence on $t$ costly. First, the previous arguments smooth each measurement operator and truncate its high-degree Pauli coefficients. Truncation need not preserve positivity, and the outcome-by-outcome approximation and repair estimates incur losses when summed over all answers, resulting in degree polynomial in $t$. Second, dimension reduction proceeds through an invariance principle and a dimension-reduction theorem for Gaussian polynomials~\cite{GKR18}. The hypercontractive estimates required by the invariance principle introduce further losses exponential in the degree. These dependencies prevent the previous bounds from directly yielding an (NEXP) upper bound for polynomial-length answers.

Our proof retains the broad philosophy of regularizing
an arbitrary strategy, reducing it to bounded dimension,
and finally rounding it back to a valid strategy,
but changes how the middle steps are implemented and analyzed.
First, we smooth the entire POVM and upper bound the total changes by a relative entropy bound $\log t$ using Gao--Rouz\'e's inequality~\cite[Corollary~4.3]{GR22}. 
Second, instead of truncating the high-degree part of  smoothed measurement
operators, we approximate the block operator
formed by their square roots by a low-degree block
operator. Our main tool here is a matrix-valued analog
of a theorem of Mendel and
Naor~\cite[Theorem~5.1]{MN14}, which gives degree $k=\poly(\log t)$. The resulting operators $Q_a^\dagger Q_a$ are positive by construction, and the block approximation controls their total normalization error. 
Third, we perform dimension reduction directly on
the matrix-valued Pauli expansions. We design a randomized hashing map, which we call
\emph{Pauli folding}. This map is approximately
multiplicative on low-degree operators, allowing us
to control normalization and game value without
an invariance principle or a passage through Gaussian space. Finally, polar decomposition restores exact POVM normalization.


Together, these ingredients yield $D=\poly(t)$ and $k=\poly(\log t)$. The low-degree coefficients before rounding provide a certificate of size $\poly(s,t)D^{O(k)}$, which is exponential in the input length when the questions and answers have polynomial length. Its normalization and value can be checked in exponential time; rounding guarantees that an accepted certificate corresponds to a genuine strategy with nearly the same value. This establishes the $\NEXP$ upper bound, while \cite[Corollary~2]{DongEtAl24} supplies the matching lower bound. We discuss the broader significance of the two main analytic tools in~\cref{subsection:technicalcontribution}.

\subsubsection{The case with perfect correlation}

The authors of~\cite{DongEtAl24} obtain constant-answer protocols for
$\mathrm{RE}$ using perfect EPR pairs. At $\rho_{\max}(\psi)=1$,
we simulate these protocols using the fixed entangled resource instead.
Our main contribution is a local conversion that, for any finite number
of EPR pairs, uses finitely many copies of the resource to produce a
state containing an exact EPR component whose weight is bounded below
independently of the target dimension.

We first use local unitaries and positivity to reduce $\psi$ to
$\Psi_\lambda$, where $\lambda\in[0,1]$ and the state is separable
exactly when $\lambda=0$; see \cref{prop:boundary-canonical}.
For $\lambda>0$, a perfect radius-one Hamming partition reorganizes
the computational basis into code and logical registers.
Locally relabeling the basis and discarding the code registers leaves
logical basis representatives at pairwise Hamming distance at most two,
so their coherence factors are at least $\lambda^2$; see
\cref{lem:boundary-hamming-partition,lem:boundary-local-extraction}.
Then we can use bilateral Pauli twirling to convert the overlap with the target
into the exact decomposition
$F_{m,\lambda}\Phi_{2^m}+(1-F_{m,\lambda})\zeta_{m,\lambda}$,
where $F_{m,\lambda}\geq\lambda^2$ independently of $m$.
Applying the original EPR measurements therefore retains at least a
$\lambda^2$ fraction of their acceptance probability; see
\cref{thm:boundary-epr-component} and
\cref{cor:boundary-epr-simulation}.

To obtain a constant completeness--soundness gap, we first apply
Yuen's parallel-repetition theorem~\cite{Yuen16} to reduce the
soundness of the EPR protocols to a sufficiently small constant by
AND repetition.
We then apply \cref{cor:boundary-epr-simulation} and amplify the
remaining positive constant completeness by OR repetition on
independent resource blocks.
A union bound controls soundness even against arbitrary joint
entangled strategies.
Since $\lambda>0$ is fixed, all repetition numbers are independent
of the input length, so the answer length remains constant.

\subsection{Technical Contribution}\label{subsection:technicalcontribution}

Beyond the complexity dichotomy, our proof develops two techniques that may be of independent interest: a positive low-degree approximation scheme for quantum measurements and a new dimension-reduction for positive low-degree operators. The first combines collective error control with a representation that guarantees positivity. The second shows how low-degree structure enables dimension reduction in a noncommutative setting, with almost multiplicativity of a randomized Pauli hashing map as its central technical ingredient.

\paragraph{Smoothing and positive low-degree approximation}

Smoothing followed by Fourier truncation is a standard regularization method in classical Fourier analysis~\cite{ODonnell2014,FeldmanKV2017}. Analogous Pauli-based truncation methods, exploiting the suppression
of high-weight components by noise, also play a central role in classical
simulation algorithms for noisy quantum
circuits~\cite{AharonovGLLV2023,SchusterYGY2025,AngrisaniMRCH2026}. For quantum applications, however, a low-degree description alone is often insufficient: the approximating operators may also be required to remain positive rather than almost positive. Direct truncation does not preserve positivity, while separate estimates for individual operators can be inefficient as the error accumulates when the family is large. Our contribution is an approximation scheme that treats positivity, low degree, and collective error control together.

This paper applies the measurements and exploits their structure rather than treating them as an arbitrary collection of operators. A $t$-outcome POVM, represented as a block-diagonal operator, satisfies a relative-entropy bound of $\log t$. Under the imperfect-correlation assumption, the complete strong data-processing inequality \cite[Corollary~4.3]{GR22} bounds each qubit's contribution by a corresponding decrease in relative entropy. These decreases sum to at most the initial relative-entropy bound of $\log t$, allowing us to control the total effect of the depolarizing channel on the measurement operators. This bounds the change in game value under smoothing in terms of $\log t$ rather than polynomially in $t$, independently of the number of copies of $\psi$.

Our main analytic ingredient is a low-degree approximation
theorem in the normalized Schatten $4$-norm. Roughly speaking,
if an operator is close to its slightly depolarized
version in this norm, then it is also close to a
low-degree operator, with bounds independent of
the dimension.
We obtain this theorem by transferring the high-degree
decay estimate of Mendel and Naor~\cite{MN14} to the
Pauli setting and applying Schatten duality.
The theorem applies to general Hermitian operators,
beyond the measurement operators considered here.


To apply the theorem to games, we first smooth the POVM.
Let $M_a$ denote its smoothed element for outcome $a$.
We approximate the square roots $\sqrt{M_a}$ by
low-degree operators $Q_a$, treating all square roots
together as a single column of operators. 
This allows
us to exploit the identity $\sum_a M_a=I$ to control
the approximation error across all outcomes. The operators
$Q_a^\dagger Q_a$ are positive semidefinite and have
at most twice the degree of $Q_a$. By controlling
the approximation error for the entire column, we also
ensure that $\sum_a Q_a^\dagger Q_a$ is close to the
identity, so these operators nearly form a valid measurement.
\paragraph{Dimension reduction for positive low-degree operators}

Dimension reduction is a powerful tool in classical analysis and algorithms, but extending it to quantum settings presents additional difficulties. Quantum measurements must satisfy positivity and normalization constraints, while their correlations depend on the noncommutative structure of the underlying operators. General dimension-reduction statements for quantum operators may fail: Harrow, Montanaro, and Short~\cite{HarrowMontanaroShort2011} show that no distribution over quantum channels can substantially reduce dimension while preserving Hilbert–Schmidt distances between arbitrary quantum states with high probability. Related possibilities and limitations for compressing positive-semidefinite factorizations and quantum models were studied by Stark and Harrow~\cite{StarkHarrow2016}. These results motivate identifying structural conditions under which quantum dimension reduction becomes possible.

Our contribution is a direct dimension-reduction theorem for positive low-degree operators. 
Given a family of positive operators
$\set{M_a=Q_a^\dagger Q_a}_{a=1}^t$ on $n$ qubits, where each $Q_a$
has degree at most $k$ and $\sum_a M_a$ is close
to the identity, our goal is to construct positive
operators $\set{\widetilde M_a}_a$ on $D$ qubits
that retain nearly the original game value while
keeping the sums $\sum_a\widetilde M_a$ close to
the identity.
For fixed approximation error, the bound on $D$
is polynomial in $k$ and $t$.

To construct the compressed operators, we design
a randomized linear map $\Phi$, called \emph{Pauli folding}, from operators on
$n$ qubits to operators on $D$ qubits. The map uses
a hash function to assign the original qubits to
output qubits and introduces random signs in their
Pauli expansions.

The key property is approximate multiplicativity:
for operators $A$ and $B$ of Pauli degree at most $k$,
$\Phi(AB)$ is close to $\Phi(A)\Phi(B)$ in the
normalized Schatten $2$-norm. We bound the expected
squared error by $\poly(k)/D$ times
$\|A\|_4^2\|B\|_4^2$, where these are normalized
Schatten $4$-norms. The bound is independent of $n$
and requires neither positivity nor Hermiticity.

We preserve positivity by constructing
\[
    \widetilde M_a=\Phi(Q_a)^\dagger\Phi(Q_a).
\]
Approximate multiplicativity controls the difference
between $\widetilde M_a$ and $\Phi(Q_a^\dagger Q_a)$.
We also show that applying $\Phi$
to both players' operators approximately preserves
their correlations in expectation. Combined with
approximate multiplicativity and control of the
normalization error, this allows us to choose a map
that yields the desired dimension reduction.

Establishing almost multiplicativity is the most challenging part in this paper. It requires controlling a genuinely noncommutative obstruction: collisions between hashed coordinates can change Pauli multiplication phases. A coefficient-wise collision bound destroys Fourier cancellations and introduces either dependence on the original dimension or losses exponential in the degree. Our analysis instead retains these cancellations by organizing collision patterns through M\"obius inversion and representing character multipliers by unitary Pauli conjugations. Schatten H\"older and polynomial derivative bounds then control the resulting sums. This gives a dimension-independent, polynomial-in-degree estimate without passing through an invariance principle or Gaussian space.

We believe that approximate multiplicativity and its consequence, quantum dimension-reduction, are interesting on their own. It would be tempting to see which other structured families of quantum measurements or correlations admit dimension reduction through this mechanism, and whether almost multiplicativity can serve as a general tool for constructing such reductions.

\subsection{Summary and Future work}

\subsection*{Acknowledgment}

Y.D., Z.J., P.Y. and H.Z were supported by the National Natural Science Foundation of China (Grant Nos. 62332009 and 12347104), the Quantum Science and Technology—National Science and Technology Major Project (Grant No. 2021ZD0302901), the NSFC/RGC Joint Research Scheme (Grant No. 12461160276), the Natural Science Foundation of Jiangsu Province (No. BK20243060), the Fundamental and Interdisciplinary Disciplines Breakthrough Plan of the Ministry of Education of China (No. JYB2025XDXM118), the "111 Center"(No. B26023), and the Fundamental Reseach Funds for the Central Universities (Grant no. 2026300376). Most of the work was done when Y.D. was a Ph.D. student at Nanjing University. M.Q. was supported by the National Research Foundation, Singapore through the National Quantum Office, hosted in A*STAR, under its Centre for Quantum Technologies Funding Initiative (S24Q2d0009).

\subsection*{AI disclosure statement}
OpenAI's GPT-5.6 and GPT-6 Astra provided valuable help in verifying
our ideas and improving the writing. We reviewed all AI-assisted content
and take full responsibility for the final manuscript.

\section{Preliminaries}

\subsection{Notation and Conventions}
\label{subsec:notation-and-conventions}
For every $d\geq1$, let $\Matrix_d:=\Matrix_d(\mathbb C)$ denote the set of
$d\times d$ complex matrices, and write $\id$ for the identity whenever its
dimension is clear.  For every integer $n\geq1$, write
$[n]:=\{1,\ldots,n\}$, and set $\iu:=\sqrt{-1}$.  We use
\begin{equation}
\label{eq:unitary-and-rotation-groups}
  \begin{aligned}
    \U(d)&:=\{U\in \Matrix_d:U^\dagger U=\id\},\\
    \mathcal{SO}(d)&:=\{R\in \Matrix_d(\mathbb R):R^{\mathsf T}R=\id,
      \ \det R=1\}.
  \end{aligned}
\end{equation}
In particular,
$\U(1)=\{z\in\mathbb C:|z|=1\}$.

For $t\geq1$, write $\ell_2^t$ for $\mathbb C^t$ equipped with its
standard inner product and Euclidean norm:
\begin{equation}
\label{eq:euclidean-inner-product-and-norm}
  \langle z,w\rangle_{\ell_2^t}
  :=\sum_{a=1}^t\overline{z_a}w_a,
  \qquad
  \norm z_{\ell_2^t}
  :=\left(\sum_{a=1}^t|z_a|^2\right)^{1/2}.
\end{equation}
In particular, this norm is not normalized by the dimension~$t$.

\paragraph{Relative entropy.}
For $X\in \Matrix_d$, write
\begin{equation}
\label{eq:normalized-trace}
  \tau_d(X):=\frac1d\Tr(X)
\end{equation}
for the \emph{normalized trace} on $\Matrix_d$.  Thus the normalized trace on
$\Matrix_2^{\otimes n}\cong \Matrix_{2^n}$ is $\tau_{2^n}$.

For states $\rho$ and $\sigma$ with
$\supp(\rho)\subseteq\supp(\sigma)$, their \emph{quantum relative entropy} is
\begin{equation}
\label{eq:quantum-relative-entropy}
  \RelEnt(\rho\|\sigma)
  :=\Tr\!\left(\rho(\log\rho-\log\sigma)\right).
\end{equation}
We set $\RelEnt(\rho\|\sigma)=+\infty$ if the support condition fails.
All logarithms in entropy expressions have base $e$.

For positive $R,S\in \Matrix_d$ with $\tau_d(R)=\tau_d(S)=1$, define
\begin{equation}\label{eqn:relent}
  \RelEnt_{\tau_d}(R\|S)
  :=\tau_d\!\left(R(\log R-\log S)\right),
\end{equation}
with the same support convention.  
The scalar $\log d$ terms cancel after rescaling, so
\begin{equation*}
  \RelEnt(R/d\|S/d)=\RelEnt_{\tau_d}(R\|S).
\end{equation*}

Applying the same quantum channel to two states cannot increase their
relative entropy.  This is the \emph{data-processing inequality}.

\begin{prop}[{\cite[Theorem~5.35]{tqi}}]
  \label{prop:relative-entropy-data-processing}
  Let $\Phi\colon \Matrix_d\to \Matrix_e$ be a completely positive trace-preserving
  linear map.  For all states $\rho,\sigma\in \Matrix_d$,
  \begin{equation*}
    \RelEnt(\Phi(\rho)\|\Phi(\sigma))\leq \RelEnt(\rho\|\sigma).
  \end{equation*}
  If $e=d$, then every pair of positive operators $R,S\in \Matrix_d$ with
  $\tau_d(R)=\tau_d(S)=1$ also satisfies
  \begin{equation*}
    \RelEnt_{\tau_d}(\Phi(R)\|\Phi(S))\leq \RelEnt_{\tau_d}(R\|S).
  \end{equation*}
\end{prop}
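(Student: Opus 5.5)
The first inequality is the standard data-processing inequality, cited from \cite[Theorem~5.35]{tqi}, so I will take it as given and only derive the second statement, for positive operators normalized with respect to $\tau_d$, when $e=d$. The plan is to rescale $R$ and $S$ into density matrices, apply the first inequality, and rescale back using the identity $\RelEnt(R/d\|S/d)=\RelEnt_{\tau_d}(R\|S)$ recorded just before the proposition.

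Concretely, let $R,S\ge0$ with $\tau_d(R)=\tau_d(S)=1$, and set $\rho:=R/d$ and $\sigma:=S/d$. These are positive with unit trace, so they are states. By linearity, $\Phi(\rho)=\Phi(R)/d$ and $\Phi(\sigma)=\Phi(S)/d$. Since $\Phi$ is completely positive and trace-preserving, $\Phi(R)$ and $\Phi(S)$ are positive and satisfy $\tau_d(\Phi(R))=\tau_d(R)=1$, and likewise for $S$. Here the hypothesis $e=d$ is used: the output operators are normalized for the same trace $\tau_d$, so rescaling by the same factor $d$ on both sides is legitimate.

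Applying the first inequality to $\rho$ and $\sigma$ gives
\[
\RelEnt(\Phi(R)/d\|\Phi(S)/d)\le\RelEnt(R/d\|S/d).
\]
Both sides now convert by the rescaling identity. Expanding $\log(X/d)=\log X-(\log d)\id$ on the support, the $\log d$ terms cancel in the difference $\log\rho-\log\sigma$, and $\Tr(X/d)=\tau_d(X)$. This gives $\RelEnt_{\tau_d}(\Phi(R)\|\Phi(S))\le\RelEnt_{\tau_d}(R\|S)$.

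The only delicate point is the support convention for infinite values. Since $\supp(R/d)=\supp(R)$, the two sides are infinite under exactly the same support conditions. If the right-hand side is $+\infty$, there is nothing to prove. If it is finite, then $\supp(\rho)\subseteq\supp(\sigma)$, and the first inequality already forces the left-hand side to be finite. So the convention causes no problem, and I expect no real obstacle beyond this bookkeeping.
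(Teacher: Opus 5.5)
Your proposal is correct and follows the same route as the paper: take the first inequality from the cited theorem, apply it to the states $R/d$ and $S/d$, and convert back using the rescaling identity $\RelEnt(R/d\|S/d)=\RelEnt_{\tau_d}(R\|S)$. Your remarks on where $e=d$ is used and on the support convention fill in details that the paper leaves implicit.
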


The normalized-trace assertion follows by applying the first inequality
to $R/d$ and $S/d$ and using the preceding rescaling identity.

\paragraph{Normalized Schatten norm.}
For $X\in \Matrix_d$, set $|X|:=(X^\dagger X)^{1/2}$ and for $1\leq p<\infty$,  the \emph{normalized Schatten $p$-norm} is defined as
\begin{equation}
\label{eq:normalized-schatten-norm}
  \norm X_p:=\tau_d(|X|^p)^{1/p},
  \qquad
  \norm X_\infty:=\norm X_{\mathrm{op}}.
\end{equation}
We write $L_2(\tau_d)$ for the Hilbert space $\Matrix_d$ equipped with the
normalized Hilbert--Schmidt inner product
\begin{equation}
\label{eq:normalized-hilbert-schmidt-inner-product}
  \langle X,Y\rangle_2:=\tau_d(X^\dagger Y).
\end{equation}
Its induced norm is the normalized Schatten $2$-norm in
\eqref{eq:normalized-schatten-norm}.  We
omit the trace from the notation $L_2(\tau_d)$ when it is clear from context.

If $\mathcal H_0$ is a finite-dimensional Hilbert space and
$W\colon\mathbb C^d\to\mathcal H_0\otimes\mathbb C^d$ is a linear operator, its normalized
Schatten norm is defined using the input trace:
\begin{equation}
\label{eq:block-schatten-norm}
  \norm W_p
  :=\tau_d\!\left((W^\dagger W)^{p/2}\right)^{1/p}
  \quad(1\leq p<\infty),
  \qquad
  \norm W_\infty:=\norm W_{\mathrm{op}}.
\end{equation}
In particular,
$\norm W_p^2=\norm{W^\dagger W}_{p/2}$ for $p\geq2$.

We use the quantum Pinsker inequality in the following normalized-trace form.

\begin{prop}[Pinsker inequality, {\cite[Theorem~5.38]{tqi}}]
  \label{prop:quantum-pinsker}
  Let $d\geq1$ and $\tau=\tau_d$.  For all positive $R,S\in\Matrix_d$ with
  $\tau(R)=\tau(S)=1$,
  \begin{equation*}
    \norm{R-S}_1^2\leq2\RelEnt_\tau(R\|S).
  \end{equation*}
\end{prop}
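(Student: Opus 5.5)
The statement is the normalized-trace form of the quantum Pinsker inequality, and the plan is to reduce it to the standard (unnormalized) version by rescaling, exactly as was done for the data-processing inequality in \cref{prop:relative-entropy-data-processing}.

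\textbf{Rescaling.} Given positive $R,S\in\Matrix_d$ with $\tau(R)=\tau(S)=1$, set $\rho:=R/d$ and $\sigma:=S/d$. Then $\Tr\rho=\tau_d(R)=1$ and likewise $\Tr\sigma=1$, so $\rho$ and $\sigma$ are density operators. Their supports coincide with those of $R$ and $S$, so the support conventions match on both sides.

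\textbf{Translating the two sides.} For the right-hand side, the identity $\RelEnt(R/d\|S/d)=\RelEnt_{\tau_d}(R\|S)$ stated after \eqref{eqn:relent} gives $\RelEnt(\rho\|\sigma)=\RelEnt_\tau(R\|S)$. For the left-hand side, we have
\[
\norm{R-S}_1=\tau_d(|R-S|)=\tfrac1d\Tr|R-S|=\Tr|\rho-\sigma|,
\]
which is the unnormalized trace norm $\Tr|\rho-\sigma|$ of $\rho-\sigma$.

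\textbf{Applying the standard inequality.} The standard quantum Pinsker inequality states $(\Tr|\rho-\sigma|)^2\le 2\RelEnt(\rho\|\sigma)$ with natural logarithms, and it yields the claim immediately. If the cited form is instead written with the trace distance $\tfrac12\Tr|\rho-\sigma|$, it reads $\tfrac12\bigl(\Tr|\rho-\sigma|\bigr)^2\le\RelEnt(\rho\|\sigma)$, which is the same inequality.

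\textbf{Infinite case.} If $\RelEnt_\tau(R\|S)=+\infty$, there is nothing to prove.

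\textbf{Main obstacle.} The only real obstacle is bookkeeping: making sure the constant $2$ agrees with the convention in the cited theorem (trace norm versus trace distance, and natural logarithm). This is settled by the base-$e$ convention fixed in the notation section.
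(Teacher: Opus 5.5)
Your proposal is correct and is essentially the paper's argument: rescale to the density operators $R/d$ and $S/d$, use $\Tr\lvert(R-S)/d\rvert=\norm{R-S}_1$ and $\RelEnt(R/d\|S/d)=\RelEnt_\tau(R\|S)$, and apply the standard quantum Pinsker inequality. The cited theorem is stated with base-$2$ logarithms and a factor $1/(2\ln 2)$, and converting it to natural logarithms gives exactly the constant $2$ you use.
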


This follows by converting the cited inequality to logarithms with base $e$ and
applying it to $R/d$ and $S/d$, since
$\Tr|(R-S)/d|=\norm{R-S}_1$ and
$\RelEnt(R/d\|S/d)=\RelEnt_\tau(R\|S)$.

\subsection{Pauli Analysis}
\label{subsec:pauli-analysis}

For standard accounts of Pauli
expansions, degree decompositions, and their relation to operator Fourier
analysis, see \cite[Section~5]{MO10} and
\cite[Section~2.3]{QY23}.

\begin{definition}[Pauli labels]
  \label{def:pauli-labels}
  Let $\mathbb F_2=\{0,1\}$ be the field whose arithmetic is modulo~$2$.
  The one-qubit Pauli label group is the four-element abelian group
  \begin{equation*}
    G:=\mathbb F_2^2
    =\{(0,0),(1,0),(0,1),(1,1)\},
  \end{equation*}
  with componentwise addition.  We write $0=(0,0)$ for its identity.
  Since addition is modulo~$2$, one has $s+s=0$ and hence $-s=s$ for every
  $s\in G$.
\end{definition}

\begin{definition}[Pauli matrices]
  \label{def:one-qubit-pauli-matrices}
  Set the Pauli matrices
  \[
    \pauliX:=\begin{pmatrix}0&1\\1&0\end{pmatrix},
    \qquad
    \pauliY:=\begin{pmatrix}0&-\iu\\ \iu&0\end{pmatrix},
    \qquad
    \pauliZ:=\begin{pmatrix}1&0\\0&-1\end{pmatrix}.
  \]
  For $(a,b)\in G$, define $ P_{(a,b)}:=\iu^{ab}\pauliX^a\pauliZ^b$, thus
  \begin{equation*}
    P_{(0,0)}=\id,\quad
    P_{(1,0)}=\pauliX,\quad
    P_{(0,1)}=\pauliZ,\quad
    P_{(1,1)}=\pauliY.
  \end{equation*}
\end{definition}

\begin{definition}[Pauli strings and weight]
  \label{def:pauli-strings-and-weight}
  For $n\geq1$, the elements of $G^n$ are added coordinatewise, and
  $0\in G^n$ also denotes the all-zero label.
  For $\alpha=(\alpha_1,\ldots,\alpha_n)\in G^n$, its associated
  \emph{Pauli string}, coordinate support, and Pauli weight are
  \begin{equation*}
    P_\alpha:=P_{\alpha_1}\otimes\cdots\otimes P_{\alpha_n},
    \qquad
    \supp(\alpha):=\{i\in[n]:\alpha_i\neq0\},
    \qquad
    |\alpha|:=|\supp(\alpha)|.
  \end{equation*}
\end{definition}

For each $n\geq1$, we view $\Matrix_2^{\otimes n}$ as $L_2(\tau_{2^n})$.

\begin{fact}
  \label{fact:pauli-orthonormal-basis}
  Let $n\geq1$.
  For all $\alpha,\beta\in G^n$,
  \begin{equation*}
    P_\alpha^\dagger=P_\alpha,
    \qquad
    P_\alpha^2=\id,
    \qquad
    \tau_{2^n}(P_\alpha)=\delta_{\alpha,0},
    \qquad
    \tau_{2^n}(P_\alpha^\dagger P_\beta)=\delta_{\alpha,\beta},
  \end{equation*}
  where $\delta_{\alpha,\beta}=\one_{\{\alpha=\beta\}}$ is the Kronecker delta.  Hence
  $(P_\alpha)_{\alpha\in G^n}$ is an orthonormal basis of
  $\Matrix_2^{\otimes n}$.
\end{fact}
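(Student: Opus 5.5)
The statement is \cref{fact:pauli-orthonormal-basis}. I will reduce every claim to the one-qubit case and then tensor up, using $P_\alpha=P_{\alpha_1}\otimes\cdots\otimes P_{\alpha_n}$ and the multiplicativity of the normalized trace under tensor products: $\tau_{2^n}(X_1\otimes\cdots\otimes X_n)=\prod_i\tau_2(X_i)$, since $\Tr$ is multiplicative on tensor products and $2^n=\prod_i 2$.

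\textbf{One-qubit case.} For each of the four labels in \cref{def:one-qubit-pauli-matrices}, I will read off the matrices $\id,\pauliX,\pauliZ,\pauliY$. Each is Hermitian by direct inspection and squares to $\id$ by a direct $2\times2$ multiplication. The traceless claim $\tau_2(P_s)=\delta_{s,0}$ is immediate from the diagonals.

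For orthonormality I will compute $P_sP_{s'}$. It equals a phase in $\{\pm1,\pm\iu\}$ times $P_{s+s'}$, which follows from the relations $\pauliX\pauliZ=-\iu\pauliY$, $\pauliZ\pauliX=\iu\pauliY$ and their cyclic analogues. Taking $\tau_2$ then gives $\tau_2(P_s^\dagger P_{s'})=\tau_2(P_sP_{s'})$, which is zero unless $s+s'=0$, that is, unless $s=s'$. When $s=s'$ it equals $\tau_2(\id)=1$.

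\textbf{Tensoring up.} Both $\dagger$ and the product act factorwise on tensor products. Hence $P_\alpha^\dagger=P_\alpha$ and $P_\alpha^2=\id$. The trace identities follow factorwise:
\[
\tau_{2^n}(P_\alpha)=\prod_i\delta_{\alpha_i,0}=\delta_{\alpha,0},\qquad
\tau_{2^n}(P_\alpha^\dagger P_\beta)=\prod_i\tau_2(P_{\alpha_i}P_{\beta_i})=\prod_i\delta_{\alpha_i,\beta_i}=\delta_{\alpha,\beta}.
\]

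\textbf{Basis.} This gives $4^n$ orthonormal, hence linearly independent, vectors in $L_2(\tau_{2^n})$. That space has complex dimension $(2^n)^2=4^n$, so the family is an orthonormal basis.

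There is no genuine obstacle here. The only step requiring care is the one-qubit phase bookkeeping with the convention $P_{(1,1)}=\iu\pauliX\pauliZ$, and the only fact actually needed from it is that $P_sP_{s'}$ is a unimodular multiple of $P_{s+s'}$.
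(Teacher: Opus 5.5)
Your proof is correct: the one-qubit checks (Hermiticity, $P_s^2=\id$, tracelessness of the non-identity Paulis, and $P_sP_{s'}$ being a unimodular multiple of $P_{s+s'}$), followed by factorwise tensorization and the dimension count $4^n=\dim\Matrix_{2^n}$, establish every claim. The paper states this as a standard fact without proof, and your argument is the routine verification it takes for granted.
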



\begin{definition}[$\V$-valued Pauli expansion]
  \label{def:coefficient-valued-pauli-expansion}
  Let $n\geq1$, let $\V$ be a complex vector space, and let
  $F\in \V \otimes \Matrix_2^{\otimes n}$.  For $\alpha\in G^n$, define
  \begin{equation*}
    \widehat F(\alpha)
    :=
    (\operatorname{id}_{\V}\otimes\varphi_\alpha)(F)
    \in \V,
    \qquad
    \varphi_\alpha(X):=\langle P_\alpha,X\rangle_2.
  \end{equation*}
  Then $F$ has the unique expansion
  \begin{equation*}
    F=\sum_{\alpha\in G^n}\widehat F(\alpha)\otimes P_\alpha.
  \end{equation*}
  We call $\widehat F(\alpha)$ the Pauli coefficient of $F$ at
  $\alpha$.
\end{definition}

\begin{definition}[Pauli support, degree, and truncation]
  \label{def:pauli-support-degree-and-truncation}
  Let $n\geq1$, let $\V$ be a complex vector space, and let
  $F=\sum_{\alpha\in G^n}\widehat F(\alpha)\otimes P_\alpha$ be as in
  \cref{def:coefficient-valued-pauli-expansion}.  Its \emph{Pauli support}
  and \emph{Pauli degree} are
  \begin{equation*}
    \operatorname{supp}_{\mathrm P}(F)
      :=\{\alpha\in G^n:\widehat F(\alpha)\neq0\},
      \qquad
    \deg(F):=
      \max_{\alpha\in\operatorname{supp}_{\mathrm P}(F)}|\alpha|,
  \end{equation*}
  with $\deg(0):=-\infty$.  For an integer $k\geq0$, the corresponding \emph{degree-$k$ truncation} is
  \begin{equation*}
    F^{\leq k}:=\sum_{|\alpha|\leq k}
      \widehat F(\alpha)\otimes P_\alpha.
  \end{equation*}
\end{definition}

For a complex coefficient space $\V$ and a linear map
$\Phi\colon \Matrix_2^{\otimes n}\to \Matrix_2^{\otimes n}$, we also write $\Phi$ for
its extension $\operatorname{id}_\V \otimes\Phi$ to
$\V\otimes \Matrix_2^{\otimes n}$.  Thus $\Phi$ acts only on the matrix factor.
For $\V=\mathbb C^t$, this convention reads
\begin{equation}
\label{eq:coefficientwise-map-action}
  \Phi\!\left(\sum_{a=1}^t|a\rangle\otimes W_a\right)
  =\sum_{a=1}^t|a\rangle\otimes\Phi(W_a),
  \qquad W_a\in \Matrix_2^{\otimes n}.
\end{equation}
We use this convention for the depolarizing maps that smooth the
measurements.  Their action on Pauli expansions is particularly simple.
\begin{definition}
  \label{def:depolarizing-maps}
  Let $\Pi\colon\Matrix_2\to \Matrix_2$ be the map
  $\Pi(X):=\tau_2(X)\id$.  For $u\geq0$ and $n\geq1$, define the one-qubit
  \emph{depolarizing map} and its $n$-qubit tensor product by
  \begin{equation*}
    \mathcal D_u:=\Pi+e^{-u}(\operatorname{id}_{\Matrix_2}-\Pi),
    \qquad
    \mathcal T_u:=\mathcal D_u^{\otimes n}.
  \end{equation*}
\end{definition}

\begin{prop}
  \label{prop:depolarizing-pauli-action}
  Let $n\geq1$, $u\geq0$, and $\alpha\in G^n$.  Then
  \begin{equation*}
    \mathcal T_u(P_\alpha)=e^{-u|\alpha|}P_\alpha.
  \end{equation*}
  Consequently, for any complex vector space $\V$ and
  $F\in \V\otimes \Matrix_2^{\otimes n}$,
  \begin{equation*}
    \mathcal T_u(F)
    =\sum_{\alpha\in G^n}e^{-u|\alpha|}
      \widehat F(\alpha)\otimes P_\alpha.
  \end{equation*}
\end{prop}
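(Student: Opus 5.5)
The computation reduces to one qubit, after which tensor multiplicativity and linearity finish the argument.

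\emph{One-qubit action.} By definition, $\Pi(X)=\tau_2(X)\id$. From \cref{fact:pauli-orthonormal-basis} we have $\tau_2(P_s)=\delta_{s,0}$ for $s\in G$. Hence $\Pi(\id)=\id$, and $\Pi(P_s)=0$ for every $s\neq0$. Substituting into $\mathcal D_u=\Pi+e^{-u}(\operatorname{id}-\Pi)$ gives
\[
\mathcal D_u(\id)=\id,
\qquad
\mathcal D_u(P_s)=e^{-u}P_s\quad(s\neq0).
\]
In compact form, $\mathcal D_u(P_s)=e^{-u\one_{\{s\neq0\}}}P_s$.

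\emph{Tensor product.} The map $\mathcal T_u=\mathcal D_u^{\otimes n}$ acts factorwise on elementary tensors. For $\alpha\in G^n$ this gives
\[
\mathcal T_u(P_\alpha)=\bigotimes_{i=1}^n \mathcal D_u(P_{\alpha_i})
=\prod_{i=1}^n e^{-u\one_{\{\alpha_i\neq0\}}}\,P_\alpha .
\]
The exponent is $\sum_i \one_{\{\alpha_i\neq0\}}=|\alpha|$ by \cref{def:pauli-strings-and-weight}, so $\mathcal T_u(P_\alpha)=e^{-u|\alpha|}P_\alpha$. This is the first claim.

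\emph{Vector-valued extension.} For $F\in\V\otimes\Matrix_2^{\otimes n}$, \cref{def:coefficient-valued-pauli-expansion} gives the expansion $F=\sum_\alpha \widehat F(\alpha)\otimes P_\alpha$. By convention, $\mathcal T_u$ acts on $F$ as $\operatorname{id}_\V\otimes\mathcal T_u$, which is linear. Applying it termwise and using the first claim yields
\[
\mathcal T_u(F)=\sum_\alpha e^{-u|\alpha|}\,\widehat F(\alpha)\otimes P_\alpha .
\]

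No step is a genuine obstacle. The only points needing care are that $\mathcal D_u^{\otimes n}$ factorizes on the elementary tensor $P_\alpha$, and that the trace identity $\tau_2(P_s)=\delta_{s,0}$ is what makes $\Pi$ kill every nontrivial Pauli.
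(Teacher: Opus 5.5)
Your proof is correct and is the intended argument. The paper states this proposition without a separate proof and treats it as immediate from three facts: $\tau_2(P_s)=\delta_{s,0}$, the factorwise action of $\mathcal D_u^{\otimes n}$ on $P_\alpha$, and linearity of $\operatorname{id}_\V\otimes\mathcal T_u$. The same qubit-by-qubit reasoning reappears in the proof of \cref{prop:depolarizing-qms}.
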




\begin{prop}
  \label{prop:pauli-parseval}
  Let $n\geq1$ and let $\V$ be a finite-dimensional complex inner-product
  space.  Equip
  $\V\otimes \Matrix_2^{\otimes n}$ with the Hilbert tensor-product inner product
  determined by
  \begin{equation}
\label{eq:tensor-product-inner-product}
    \langle e\otimes A,f\otimes B\rangle_{\V \otimes L_2}
    :=\langle e,f\rangle_\V\,\tau_{2^n}(A^\dagger B).
  \end{equation}
  For $F,H\in \V \otimes \Matrix_2^{\otimes n}$, the following hold.
  \begin{enumerate}
    \item \emph{(Parseval)}
      \begin{equation*}
        \norm{F}_{\V\otimes L_2}^2
        =\sum_{\alpha\in G^n}
          \norm{\widehat F(\alpha)}_\V^2.
      \end{equation*}
    \item \emph{(Inner product)}
      \begin{equation*}
        \langle F,H\rangle_{\V\otimes L_2}
        =\sum_{\alpha\in G^n}
          \left\langle\widehat F(\alpha),
          \widehat H(\alpha)\right\rangle_\V.
      \end{equation*}
    \item \emph{(Constant coefficient)}
      For every $e\in \V$,
      \begin{equation*}
        (\operatorname{id}_\V \otimes\tau_{2^n})(F)=\widehat F(0),
        \qquad
        \langle e\otimes\id,F\rangle_{\V\otimes L_2}
        =\langle e,\widehat F(0)\rangle_\V.
      \end{equation*}
  \end{enumerate}
\end{prop}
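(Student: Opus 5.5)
We reduce everything to the orthonormality of Pauli strings in \cref{fact:pauli-orthonormal-basis}, tensored with an orthonormal basis of $\V$. Fix an orthonormal basis $(e_j)_{j=1}^m$ of $\V$. Then $\{e_j\otimes P_\alpha\}_{j,\alpha}$ is orthonormal for the inner product \eqref{eq:tensor-product-inner-product}: by definition $\langle e_j\otimes P_\alpha,e_l\otimes P_\beta\rangle=\delta_{jl}\,\tau_{2^n}(P_\alpha^\dagger P_\beta)=\delta_{jl}\delta_{\alpha\beta}$. Since it has $m4^n=\dim(\V\otimes\Matrix_2^{\otimes n})$ elements, it is an orthonormal basis. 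Sesquilinearity of \eqref{eq:tensor-product-inner-product} is part of the definition of the Hilbert tensor product, so we take it as given.

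For the inner-product identity (ii), we write $F=\sum_\alpha \widehat F(\alpha)\otimes P_\alpha$ and $H=\sum_\beta\widehat H(\beta)\otimes P_\beta$ using \cref{def:coefficient-valued-pauli-expansion}. Expanding each coefficient in the basis $(e_j)$ and applying sesquilinearity gives
\[
\langle F,H\rangle=\sum_{\alpha,\beta}\langle \widehat F(\alpha),\widehat H(\beta)\rangle_\V\,\tau_{2^n}(P_\alpha^\dagger P_\beta).
\]
The orthonormality relation $\tau_{2^n}(P_\alpha^\dagger P_\beta)=\delta_{\alpha,\beta}$ collapses this to the diagonal sum, which is (ii). Parseval (i) is then the special case $H=F$.

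For (iii), the map $\operatorname{id}_\V\otimes\tau_{2^n}$ is linear, so it suffices to evaluate it on the expansion of $F$:
\[
(\operatorname{id}_\V\otimes\tau_{2^n})(F)=\sum_\alpha \widehat F(\alpha)\,\tau_{2^n}(P_\alpha)=\widehat F(0),
\]
using $\tau_{2^n}(P_\alpha)=\delta_{\alpha,0}$. Alternatively, this is immediate from the definition: $P_0=\id$ gives $\varphi_0(X)=\langle \id,X\rangle_2=\tau_{2^n}(X)$, so $\operatorname{id}_\V\otimes\varphi_0=\operatorname{id}_\V\otimes\tau_{2^n}$. For the second identity in (iii), observe that $e\otimes\id$ has Pauli coefficients $\widehat{(e\otimes\id)}(\alpha)=\delta_{\alpha,0}\,e$. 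Applying (ii) with this element in place of $F$ and $F$ in place of $H$ leaves only the $\alpha=0$ term, namely $\langle e,\widehat F(0)\rangle_\V$.

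None of these steps is a genuine obstacle. The only point requiring care is bookkeeping: the inner product is conjugate-linear in the first slot, and Hermiticity $P_\alpha^\dagger=P_\alpha$ ensures no stray phases appear. Uniqueness of the expansion, already asserted in \cref{def:coefficient-valued-pauli-expansion}, also follows from the orthonormal basis established above.
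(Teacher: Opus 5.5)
Your proposal is correct and follows essentially the same route as the paper: expand both arguments in the Pauli basis, use $\tau_{2^n}(P_\alpha^\dagger P_\beta)=\delta_{\alpha,\beta}$ to collapse to the diagonal, set $H=F$ for Parseval, and use $P_0=\id$, $\varphi_0=\tau_{2^n}$ for the constant-coefficient identities. Your extra step of verifying that $\{e_j\otimes P_\alpha\}$ is an orthonormal basis is harmless but unnecessary, since sesquilinearity applied directly to the simple tensors $\widehat F(\alpha)\otimes P_\alpha$ already gives the double sum.
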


\begin{proof}
  Expanding both arguments in the Pauli basis gives
  \begin{equation*}
    \langle F,H\rangle_{\V \otimes L_2}
    =\sum_{\alpha,\beta\in G^n}
      \left\langle\widehat F(\alpha),
      \widehat H(\beta)\right\rangle_\V
      \tau_{2^n}(P_\alpha^\dagger P_\beta).
  \end{equation*}
  Pauli orthogonality removes all terms with $\alpha\neq\beta$, proving the
  inner-product identity; Parseval follows by taking $H=F$.  Finally,
  $P_0=\id$ and
  $\varphi_0=\tau_{2^n}$ give the two constant-coefficient identities.
\end{proof}

The choices $\V=\mathbb C$, $\V=\mathbb C^t$, and $\V=\Matrix_{d_0}$ give the three
forms used below.

\paragraph{Scalar-valued coefficients.}
Taking $\V=\mathbb C$ identifies
$\mathbb C\otimes \Matrix_2^{\otimes n}$ with $\Matrix_2^{\otimes n}$.  Thus every
$A\in \Matrix_2^{\otimes n}$ has the expansion
\begin{equation}
\label{eq:scalar-pauli-expansion}
  A=\sum_{\alpha\in G^n}\widehat A(\alpha)P_\alpha,
  \qquad
  \widehat A(\alpha)
  =\tau_{2^n}(P_\alpha^\dagger A)
  =\tau_{2^n}(P_\alpha A).
\end{equation}
\begin{prop}[{\cite[Fact~2.11]{QY21}}]
  \label{prop:scalar-pauli-identities}
  Let $n\geq1$ and $A,B\in \Matrix_2^{\otimes n}$.  Then
  \begin{enumerate}
    \item \emph{(Parseval)}
      \(
        \norm A_2^2
        =\sum_{\alpha\in G^n}|\widehat A(\alpha)|^2.
      \)
    \item \emph{(Inner product)}
      \(
        \tau_{2^n}(A^\dagger B)
        =\sum_{\alpha\in G^n}
          \overline{\widehat A(\alpha)}\widehat B(\alpha).
      \)
    \item \emph{(Constant coefficient)}
      \(
        \langle\id,A\rangle_2
        =\tau_{2^n}(A)
        =\widehat A(0).
      \)
  \end{enumerate}
\end{prop}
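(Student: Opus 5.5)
The statement is the scalar specialization of \cref{prop:pauli-parseval}. I will derive it by taking the coefficient space $\V=\mathbb C$, with its standard inner product $\langle z,w\rangle_{\mathbb C}=\overline z w$.

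The first step is to check that this choice reproduces the scalar objects. Under the canonical identification $\mathbb C\otimes\Matrix_2^{\otimes n}\cong\Matrix_2^{\otimes n}$, the tensor-product inner product \eqref{eq:tensor-product-inner-product} becomes
\[
\langle z\otimes A,w\otimes B\rangle=\overline z w\,\tau_{2^n}(A^\dagger B)=\tau_{2^n}\bigl((zA)^\dagger(wB)\bigr).
\]
So it coincides with the normalized Hilbert--Schmidt inner product $\langle\cdot,\cdot\rangle_2$. Likewise, the coefficient $\widehat A(\alpha)=(\operatorname{id}_{\mathbb C}\otimes\varphi_\alpha)(A)$ is just $\varphi_\alpha(A)=\tau_{2^n}(P_\alpha^\dagger A)$. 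Since $P_\alpha^\dagger=P_\alpha$ by \cref{fact:pauli-orthonormal-basis}, this is also $\tau_{2^n}(P_\alpha A)$, which confirms \eqref{eq:scalar-pauli-expansion}.

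With this dictionary in place, each item follows from the corresponding item of \cref{prop:pauli-parseval}:

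\begin{itemize}
\item Parseval there gives $\norm A_2^2=\sum_\alpha|\widehat A(\alpha)|^2$, because $\norm{\cdot}_{\mathbb C}=|\cdot|$.
\item The inner-product identity gives $\tau_{2^n}(A^\dagger B)=\sum_\alpha\overline{\widehat A(\alpha)}\widehat B(\alpha)$.
\item The constant-coefficient identity with $e=1$ gives $\langle\id,A\rangle_2=\widehat A(0)$. Also $\langle\id,A\rangle_2=\tau_{2^n}(\id^\dagger A)=\tau_{2^n}(A)$ directly from the definition.
\end{itemize}

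As a self-contained alternative, I could also expand both operators as $A=\sum_\alpha\widehat A(\alpha)P_\alpha$ and $B=\sum_\beta \widehat B(\beta)P_\beta$. Taking $\tau_{2^n}(A^\dagger B)$ and using $\tau_{2^n}(P_\alpha^\dagger P_\beta)=\delta_{\alpha,\beta}$ proves item 2. Item 1 is the case $B=A$, and item 3 is the case $A=\id$, where $\widehat{\id}(\alpha)=\delta_{\alpha,0}$.

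The only point needing care is conjugate-linearity in the first slot, so that $\overline{\widehat A(\alpha)}$ appears rather than $\widehat A(\alpha)$. This is a convention check rather than a real obstacle.
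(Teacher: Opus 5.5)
Your proposal is correct and matches the paper's approach: the paper presents this proposition as the $\V=\mathbb C$ specialization of \cref{prop:pauli-parseval}, which is itself proved by Pauli expansion and orthonormality. Your dictionary check on the inner product and on $\widehat A(\alpha)=\tau_{2^n}(P_\alpha A)$ is exactly what that specialization requires, and your self-contained alternative simply reproduces the proof of \cref{prop:pauli-parseval}.
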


\paragraph{Vector-valued coefficients and block operators.}
Let
$W\colon(\mathbb C^2)^{\otimes n}\to
\mathbb C^t\otimes(\mathbb C^2)^{\otimes n}$ be a linear operator.
In the standard bases, its $t2^n\times2^n$ matrix has the block form
\begin{equation}
\label{eq:block-operator-form}
  W=\begin{pmatrix}W_1\\ \vdots\\ W_t\end{pmatrix}
   =\sum_{a=1}^t|a\rangle\otimes W_a,
  \qquad W_a\in \Matrix_2^{\otimes n}.
\end{equation}
Thus $Wv=\sum_{a=1}^t|a\rangle\otimes W_av$ for every
$v\in(\mathbb C^2)^{\otimes n}$.
We call such an operator a \emph{block operator}.  This is the
$\V=\mathbb C^t$ case of \cref{def:coefficient-valued-pauli-expansion}:
\begin{equation}
\label{eq:block-pauli-expansion}
  W=\sum_{\alpha\in G^n}\widehat W(\alpha)\otimes P_\alpha,
  \qquad
  \widehat W(\alpha)
  =\sum_{a=1}^t|a\rangle\widehat{W_a}(\alpha)
  \in\mathbb C^t.
\end{equation}
Recall the normalized Schatten $p$-norm defined in~\eqref{eq:block-schatten-norm}. The following proposition follows directly from~\cref{prop:pauli-parseval}.

\begin{prop}
  \label{prop:block-operator-pauli-identities}
  Let $n,t\geq1$, and let $W,Z\colon(\mathbb C^2)^{\otimes n}\to
  \mathbb C^t\otimes(\mathbb C^2)^{\otimes n}$ be block operators.  Then
  \begin{enumerate}
    \item \emph{(Parseval)}
      \(
        \norm W_2^2
        =\sum_{\alpha\in G^n}
          \norm{\widehat W(\alpha)}_{\ell_2^t}^2.
      \)
    \item \emph{(Inner product)
    }
      \(
        \tau_{2^n}(W^\dagger Z)
        =\sum_{\alpha\in G^n}
          \left\langle\widehat W(\alpha),
          \widehat Z(\alpha)\right\rangle_{\ell_2^t}.
      \)
  \end{enumerate}
\end{prop}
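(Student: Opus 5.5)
The plan is to obtain both identities as the special case $\V=\mathbb C^t$ of \cref{prop:pauli-parseval}; the only real work is matching the block Schatten norm of \eqref{eq:block-schatten-norm} with the Hilbert tensor-product norm on $\mathbb C^t\otimes\Matrix_2^{\otimes n}$. Under the identification \eqref{eq:block-operator-form}, a block operator $W$ corresponds to the element $\sum_a|a\rangle\otimes W_a$. I would check directly that its tensor-product inner product agrees with the normalized trace pairing. By \eqref{eq:tensor-product-inner-product} and orthonormality of $\{|a\rangle\}$ in $\ell_2^t$,
\[
  \Bigl\langle \sum_a|a\rangle\otimes W_a,\ \sum_b|b\rangle\otimes Z_b\Bigr\rangle_{\mathbb C^t\otimes L_2}
  =\sum_{a}\tau_{2^n}(W_a^\dagger Z_a).
\]
Computing $W^\dagger Z$ as an operator on $(\mathbb C^2)^{\otimes n}$ gives $\sum_{a,b}\langle a|b\rangle W_a^\dagger Z_b=\sum_a W_a^\dagger Z_a$. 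The two pairings therefore coincide: $\tau_{2^n}(W^\dagger Z)=\langle W,Z\rangle_{\mathbb C^t\otimes L_2}$.

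Next I would confirm that the block Pauli coefficients in \eqref{eq:block-pauli-expansion} are exactly the coefficients $\widehat F(\alpha)$ of \cref{def:coefficient-valued-pauli-expansion} with $F=W$. This holds because $(\operatorname{id}\otimes\varphi_\alpha)$ acts componentwise, sending $\sum_a|a\rangle\otimes W_a$ to $\sum_a|a\rangle\,\widehat{W_a}(\alpha)$.

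Item~2 then follows immediately from the inner-product part of \cref{prop:pauli-parseval}. For item~1, set $Z=W$ and use $\norm W_2^2=\tau_{2^n}((W^\dagger W)^{1}) =\tau_{2^n}(W^\dagger W)$, which is the case $p=2$ of \eqref{eq:block-schatten-norm}, together with Parseval.

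None of these steps is a real obstacle. The only point needing care is that the block Schatten norm uses the input trace $\tau_{2^n}$, not a trace on $\mathbb C^t\otimes(\mathbb C^2)^{\otimes n}$ normalized by $t2^n$. This is exactly what makes it match the unnormalized $\ell_2^t$ factor in the tensor inner product.
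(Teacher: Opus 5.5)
Your proposal is correct and matches the paper, which simply notes that this proposition is the case $\V=\mathbb C^t$ of \cref{prop:pauli-parseval}. Your explicit checks supply the details the paper leaves implicit: that $\tau_{2^n}(W^\dagger Z)=\sum_a\tau_{2^n}(W_a^\dagger Z_a)$ agrees with the tensor-product pairing \eqref{eq:tensor-product-inner-product}, and that the coefficients in \eqref{eq:block-pauli-expansion} are those of \cref{def:coefficient-valued-pauli-expansion}.
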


\paragraph{Operator-valued coefficients.}
Taking $\V=\Matrix_{d_0}$ with normalized Hilbert--Schmidt inner product
\(
  \langle C,D\rangle_2:=\tau_{d_0}(C^\dagger D)
\), every $A\in \Matrix_{d_0}\otimes \Matrix_2^{\otimes n}$ has the expansion
\begin{equation*}
  A=\sum_{\alpha\in G^n}\widehat A(\alpha)\otimes P_\alpha,
  \qquad
  \widehat A(\alpha)\in \Matrix_{d_0}.
\end{equation*}
Here the Pauli support records the nonzero matrix coefficients, and degree
and truncation are determined by their labels exactly as in the general
definition.

\begin{prop}
  \label{prop:operator-valued-pauli-identities}
  Let $n,d_0\geq1$ and $A,B\in \Matrix_{d_0}\otimes \Matrix_2^{\otimes n}$.  Then
  \begin{enumerate}
    \item \emph{(Parseval)}
      \(
        \norm A_2^2
        =\sum_{\alpha\in G^n}\norm{\widehat A(\alpha)}_2^2.
      \)
    \item \emph{(Inner product)}
      \(
        (\tau_{d_0}\otimes\tau_{2^n})(A^\dagger B)
        =\sum_{\alpha\in G^n}
          \tau_{d_0}\!\left(\widehat A(\alpha)^\dagger
          \widehat B(\alpha)\right).
      \)
  \end{enumerate}
\end{prop}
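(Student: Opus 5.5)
This follows from \cref{prop:pauli-parseval} applied with $\V=\Matrix_{d_0}$ and the normalized Hilbert--Schmidt inner product $\langle C,D\rangle_2=\tau_{d_0}(C^\dagger D)$. The plan has two steps: identify the inner product on $\Matrix_{d_0}\otimes\Matrix_2^{\otimes n}$ with the tensor-product inner product of \eqref{eq:tensor-product-inner-product}, and then read off the two identities.

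\textbf{Step 1: the inner products agree.} We check that
\[
  \langle A,B\rangle_{\V\otimes L_2}=(\tau_{d_0}\otimes\tau_{2^n})(A^\dagger B)
\]
for all $A,B\in\Matrix_{d_0}\otimes\Matrix_2^{\otimes n}$. Both sides are sesquilinear, so it suffices to check elementary tensors $A=C\otimes X$ and $B=D\otimes Y$. Since the adjoint on the tensor product is $(C\otimes X)^\dagger=C^\dagger\otimes X^\dagger$, we get
\[
  (\tau_{d_0}\otimes\tau_{2^n})\big((C\otimes X)^\dagger(D\otimes Y)\big)
  =\tau_{d_0}(C^\dagger D)\,\tau_{2^n}(X^\dagger Y),
\]
which is exactly \eqref{eq:tensor-product-inner-product}. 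In addition, $\tau_{d_0}\otimes\tau_{2^n}$ is the normalized trace $\tau_{d_02^n}$ on $\Matrix_{d_0 2^n}$. Hence the induced norm is the normalized Schatten $2$-norm $\norm{A}_2$ on the combined space.

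\textbf{Step 2: apply \cref{prop:pauli-parseval}.} With the identification from Step 1, part~(2) of \cref{prop:pauli-parseval} gives
\[
  (\tau_{d_0}\otimes\tau_{2^n})(A^\dagger B)
  =\sum_{\alpha\in G^n}\big\langle\widehat A(\alpha),\widehat B(\alpha)\big\rangle_{\V}
  =\sum_{\alpha\in G^n}\tau_{d_0}\big(\widehat A(\alpha)^\dagger\widehat B(\alpha)\big).
\]
Setting $B=A$ gives Parseval: $\norm{A}_2^2=\sum_{\alpha}\norm{\widehat A(\alpha)}_2^2$.

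No step is a real obstacle. The only point needing care is the bookkeeping in Step 1: the norm on the coefficient space must be the normalized one, so that the normalizations multiply to give $\tau_{d_02^n}$.
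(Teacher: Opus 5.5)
Your proposal is correct and follows the paper's route: the paper gives no separate proof and obtains this proposition as the $\V=\Matrix_{d_0}$ case of \cref{prop:pauli-parseval}, with the normalized Hilbert--Schmidt inner product on the coefficient space. Your Step~1, identifying the tensor-product inner product with $(\tau_{d_0}\otimes\tau_{2^n})(A^\dagger B)=\tau_{d_02^n}(A^\dagger B)$, spells out a check the paper leaves implicit.
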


\subsection{Pauli Cocycle}

Recall that $G=\mathbb F_2^2
    =\{(0,0),(1,0),(0,1),(1,1)\}$ in \cref{def:pauli-labels}.
\begin{definition}[Pauli cocycle]
  \label{def:pauli-cocycle}
  For $s,t\in G$, define $\kappa(s,t)\in\{\pm1,\pm\iu\}$ satisfying
  \begin{equation*}
    P_sP_t=\kappa(s,t)P_{s+t}.
  \end{equation*}
\end{definition}

From the definition we have
\begin{prop}
  \label{prop:pauli-cocycle-identities}
  For all $r,s,t\in G$, the Pauli cocycle satisfies
  \begin{equation*}
    \kappa(0,s)=\kappa(s,0)=1,
    \qquad
    \kappa(r,s)\kappa(r+s,t)
    =\kappa(s,t)\kappa(r,s+t)
  \end{equation*}
\end{prop}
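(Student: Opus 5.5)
Both identities follow from the defining relation $P_sP_t=\kappa(s,t)P_{s+t}$ in \cref{def:pauli-cocycle}, together with the fact that $\{P_s\}_{s\in G}$ is linearly independent (\cref{fact:pauli-orthonormal-basis} with $n=1$). Linear independence means that scalar coefficients in front of the same Pauli matrix can be compared directly.

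For the normalization identities, note that $P_0=\id$ by \cref{def:one-qubit-pauli-matrices}. Hence $P_0P_s=P_s=P_{0+s}$ and $P_sP_0=P_s=P_{s+0}$. Since $P_s\neq0$, the scalar $\kappa$ is uniquely determined, and we get $\kappa(0,s)=\kappa(s,0)=1$.

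For the cocycle identity, I would evaluate the triple product $P_rP_sP_t$ in two ways using associativity of matrix multiplication. Grouping as $(P_rP_s)P_t$ gives
\begin{equation*}
  \kappa(r,s)P_{r+s}P_t=\kappa(r,s)\kappa(r+s,t)P_{r+s+t}.
\end{equation*}
Grouping as $P_r(P_sP_t)$ gives
\begin{equation*}
  \kappa(s,t)P_rP_{s+t}=\kappa(s,t)\kappa(r,s+t)P_{r+s+t}.
\end{equation*}
Because $P_{r+s+t}$ is nonzero (indeed $P_{r+s+t}^2=\id$), we can compare the scalar coefficients, which yields the claimed equality.

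There is no real obstacle here. The only point to check is that $\kappa$ is well defined, that is, that $P_sP_t$ is always a scalar multiple of $P_{s+t}$ with scalar in $\{\pm1,\pm\iu\}$. This follows from $P_{(a,b)}=\iu^{ab}\pauliX^a\pauliZ^b$ and the relation $\pauliZ\pauliX=-\pauliX\pauliZ$, or equivalently from a direct check of the $16$-entry multiplication table.
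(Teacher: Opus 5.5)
Your proposal is correct: comparing the coefficients of $P_{r+s+t}$ in the two groupings of $P_rP_sP_t$, together with $P_0=\id$ for the normalization, is exactly the standard argument. The paper gives no proof beyond ``from the definition,'' and your argument is the one that remark leaves implicit.
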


For $n\geq1$, tensorize the one-qubit phase by setting
\begin{equation}
  \kappa_n(\alpha,\beta):=\prod_{i=1}^n\kappa(\alpha_i,\beta_i),
  \qquad
  P_\alpha P_\beta=\kappa_n(\alpha,\beta)P_{\alpha+\beta}.
  \label{eq:tensor-pauli-cocycle}
\end{equation}

\begin{prop}[
  {\cite[Lemma~3.2.1]{Karpilovsky85}}]
  \label{prop:pauli-product-convolution}
  Let $A,B\in \Matrix_2^{\otimes n}$.  For every $\gamma\in G^n$, the Pauli
  coefficient of their product is
  \begin{equation*}
    \widehat{AB}(\gamma)
    =
    \sum_{\alpha+\beta=\gamma}
      \widehat A(\alpha)\widehat B(\beta)\kappa_n(\alpha,\beta).
  \end{equation*}
\end{prop}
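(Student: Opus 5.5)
The identity follows by expanding both operators in the Pauli basis, multiplying term by term, and reading off the coefficient of $P_\gamma$ using orthonormality. By the scalar expansion \eqref{eq:scalar-pauli-expansion}, write $A=\sum_{\alpha}\widehat A(\alpha)P_\alpha$ and $B=\sum_{\beta}\widehat B(\beta)P_\beta$. Bilinearity of matrix multiplication gives
\begin{equation*}
  AB=\sum_{\alpha,\beta\in G^n}\widehat A(\alpha)\widehat B(\beta)\,P_\alpha P_\beta
  =\sum_{\alpha,\beta\in G^n}\widehat A(\alpha)\widehat B(\beta)\,\kappa_n(\alpha,\beta)\,P_{\alpha+\beta}.
\end{equation*}
The second equality uses the tensorized cocycle relation \eqref{eq:tensor-pauli-cocycle}.

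That relation should be checked first. Since $P_\alpha=\bigotimes_i P_{\alpha_i}$, the mixed-product rule gives $P_\alpha P_\beta=\bigotimes_i P_{\alpha_i}P_{\beta_i}$. Applying \cref{def:pauli-cocycle} factor by factor, this equals $\bigotimes_i\kappa(\alpha_i,\beta_i)P_{\alpha_i+\beta_i}$. Pulling the scalars out yields $\kappa_n(\alpha,\beta)P_{\alpha+\beta}$, where addition in $G^n$ is coordinatewise.

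Next, compute $\widehat{AB}(\gamma)=\tau_{2^n}(P_\gamma AB)$ using the expansion above and linearity of the trace. By \cref{fact:pauli-orthonormal-basis}, $\tau_{2^n}(P_\gamma P_{\alpha+\beta})=\delta_{\gamma,\alpha+\beta}$, since the Paulis are Hermitian. Only pairs with $\alpha+\beta=\gamma$ survive, which gives the stated sum. Equivalently, one can regroup the double sum by the value $\gamma=\alpha+\beta$ and invoke uniqueness of the Pauli expansion in \cref{def:coefficient-valued-pauli-expansion}.

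No step is a real obstacle. The only point requiring care is keeping the order of the cocycle arguments as $\kappa_n(\alpha,\beta)$ rather than $\kappa_n(\beta,\alpha)$. This matters because $\kappa$ is not symmetric: anticommuting Paulis pick up opposite signs.
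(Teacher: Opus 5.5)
Your proposal is correct and follows essentially the same route as the paper: expand $A$ and $B$ in the Pauli basis, apply the tensorized rule $P_\alpha P_\beta=\kappa_n(\alpha,\beta)P_{\alpha+\beta}$, and group the terms by $\gamma=\alpha+\beta$ using uniqueness of the Pauli expansion. Your explicit check of the tensorized cocycle relation and your alternative extraction via $\tau_{2^n}(P_\gamma P_{\alpha+\beta})=\delta_{\gamma,\alpha+\beta}$ are valid additions that do not change the argument.
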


\begin{proof}
  Expand both operators in the Pauli basis and use the tensorized
  multiplication rule above:
  \[
    AB
    =
    \sum_{\alpha,\beta\in G^n}
      \widehat A(\alpha)\widehat B(\beta)
      \kappa_n(\alpha,\beta)P_{\alpha+\beta}.
  \]
  Grouping the terms with $\alpha+\beta=\gamma$ and using uniqueness of the
  Pauli expansion proves the claimed coefficient formula.
\end{proof}

Using \cref{prop:pauli-product-convolution}, we can easily show the following:
\begin{prop}
  \label{prop:pauli-degree-subadditivity}
  For all $A,B\in \Matrix_2^{\otimes n}$,
  \(
    \deg(AB)\leq\deg(A)+\deg(B).
  \)
\end{prop}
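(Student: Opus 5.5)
We prove $\deg(AB)\leq\deg(A)+\deg(B)$ directly from the coefficient formula in \cref{prop:pauli-product-convolution}.

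First dispose of the degenerate case. If $A=0$ or $B=0$, then $AB=0$, so $\deg(AB)=-\infty$ and the inequality holds under the convention $-\infty+x=-\infty$.

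Now assume both $A$ and $B$ are nonzero, and fix $\gamma\in\operatorname{supp}_{\mathrm P}(AB)$, so that $\widehat{AB}(\gamma)\neq0$. By \cref{prop:pauli-product-convolution},
\[
\widehat{AB}(\gamma)=\sum_{\alpha+\beta=\gamma}\widehat A(\alpha)\widehat B(\beta)\kappa_n(\alpha,\beta).
\]
Since this sum is nonzero, at least one term is nonzero. Hence there exist $\alpha\in\operatorname{supp}_{\mathrm P}(A)$ and $\beta\in\operatorname{supp}_{\mathrm P}(B)$ with $\alpha+\beta=\gamma$.

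Next we compare supports coordinatewise. If $\gamma_i\neq0$, then $\alpha_i+\beta_i\neq0$ in $G$. This is impossible when $\alpha_i=\beta_i=0$, so $\alpha_i\neq0$ or $\beta_i\neq0$. Therefore
\[
\supp(\gamma)\subseteq\supp(\alpha)\cup\supp(\beta),
\]
and consequently
\[
|\gamma|\leq|\alpha|+|\beta|\leq\deg(A)+\deg(B).
\]

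Finally, take the maximum over all $\gamma\in\operatorname{supp}_{\mathrm P}(AB)$ to conclude $\deg(AB)\leq\deg(A)+\deg(B)$.

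The argument has no real obstacle. The one point to handle carefully is that cancellations in the sum can only remove labels from $\operatorname{supp}_{\mathrm P}(AB)$, never create them. This only strengthens the bound, and it is why the argument goes from a nonzero coefficient to the existence of a nonzero term, rather than the other way around.
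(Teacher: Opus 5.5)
Your proof is correct and takes the same route the paper indicates: the paper derives this proposition from the product-convolution formula of \cref{prop:pauli-product-convolution} and leaves the details implicit. Your argument supplies exactly those details: a nonzero coefficient of $AB$ at $\gamma$ forces a nonzero term with $\alpha+\beta=\gamma$, so $\supp(\gamma)\subseteq\supp(\alpha)\cup\supp(\beta)$, and the case $AB=0$ with $A,B\neq0$ is covered because the maximum over an empty support is $-\infty$.
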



\subsection{Fourier Analysis on Finite Abelian Groups}
\label{subsec:finite-abelian-fourier-analysis}

Fourier analysis on a finite abelian group expands functions in the
character basis, while Pauli analysis expands operators in the Pauli
basis.
For vector-valued Fourier analysis on the Boolean cube, including
normalized Bochner norms and high-degree spaces, we follow
\cite[Section~5]{MN14}.  The matrix-valued setting with normalized Schatten
norms appears in \cite[Section~1.1]{BARW08}.

\begin{definition}[Characters and dual group]
  \label{def:characters-and-dual-group}
  Let $K$ be a finite abelian group.  A \emph{character} of $K$ is a group
  homomorphism $\chi\colon K\to\U(1)$.  The set of all characters,
  equipped with pointwise multiplication, is the \emph{dual group}
  \(
    \widehat K:=\operatorname{Hom}(K,\U(1)).
  \)
\end{definition}

Throughout this subsection, $\mathbb E_{x\in K}$ denotes the normalized
average $|K|^{-1}\sum_{x\in K}$.

\begin{fact}[{\cite[Theorems~1.60 and~1.62]{Milne25}}]
  \label{fact:character-orthogonality}
  Let $K$ be a finite abelian group.
  \leavevmode
  \begin{enumerate}
    \item The constant function $\mathbf1(t)=1$ is the trivial character.
    \item If $2t=0$ for every $t\in K$, then every character takes values in
      $\{\pm1\}$.
    \item The characters form an orthonormal basis of the space of
      complex-valued functions on $K$:
      \begin{equation*}
        \mathbb E_{x\in K}\overline{\chi(x)}\eta(x)
        =\one_{\{\chi=\eta\}}
        \qquad(\chi,\eta\in\widehat K).
      \end{equation*}
  \end{enumerate}
\end{fact}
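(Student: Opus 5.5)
The three items of \cref{fact:character-orthogonality} are standard, and the plan is to prove them in order. The whole argument needs only two facts: a character is a homomorphism $K\to\U(1)$, and $K$ is finite.

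For item~1, the constant function $\mathbf1$ satisfies $\mathbf1(x+y)=1=\mathbf1(x)\mathbf1(y)$. It is therefore a homomorphism into $\U(1)$, and it is the identity element of $\widehat K$ under pointwise multiplication. For item~2, suppose $2t=0$ for every $t\in K$. Then any character $\chi$ satisfies $\chi(t)^2=\chi(2t)=\chi(0)=1$, so $\chi(t)\in\{\pm1\}$.

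For item~3, I will first prove orthonormality and then spanning. Orthonormality comes from a translation-invariance argument.
\begin{itemize}
\item Let $\xi:=\overline\chi\eta$. Since $|\chi(x)|=1$, we have $\overline{\chi(x)}=\chi(x)^{-1}$, so $\xi$ is again a character.
\item If $\chi=\eta$, then $\xi=\mathbf1$ and the average equals $1$.
\item If $\chi\neq\eta$, pick $y$ with $\xi(y)\neq1$. Reindexing $x\mapsto x+y$ gives $S:=\mathbb E_x\xi(x)=\xi(y)S$, which forces $S=0$.
\end{itemize}
In particular distinct characters are linearly independent, so $|\widehat K|\le|K|$.

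The main step is spanning, i.e.\ showing $|\widehat K|=|K|$. I would use the structure theorem: write $K\cong\prod_j\mathbb Z/n_j$. On the factor $\mathbb Z/n_j$, the maps $x\mapsto e^{2\pi\iu mx/n_j}$ for $m\in\mathbb Z/n_j$ are $n_j$ distinct characters. Products of characters of the factors are characters of $K$, and distinct tuples $(m_j)_j$ give distinct functions on $K$, so this produces $|K|$ distinct characters. Being orthonormal, they form $|K|$ linearly independent vectors in the $|K|$-dimensional space of functions on $K$, hence a basis. Since the cited fact is from Milne, one could instead simply invoke the reference. An alternative route to spanning is to simultaneously diagonalize the commuting unitary translation operators on $\ell^2(K)$: each common eigenvector, normalized so that its value at $0$ is $1$, is a character.
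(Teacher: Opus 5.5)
Your argument is correct and complete. The paper gives no proof of this fact; it only cites Milne. Your write-up is therefore a self-contained replacement for that citation, not a different route through an argument in the text.

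The orthonormality step is the standard one: translation invariance of $\mathbb E_x$ applied to the character $\overline\chi\eta$. Both of your spanning arguments work.

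The simultaneous-diagonalization route has the advantage of avoiding the structure theorem. A common eigenvector $f$ of the translations $T_yf(x)=f(x+y)$, with $T_yf=\lambda_y f$, satisfies $f(y)=\lambda_y f(0)$. Hence $f(0)\neq0$ automatically, which justifies your normalization. After normalizing, $f(x+y)=f(x)f(y)$, and each $|\lambda_y|=1$ because $T_y$ is unitary.

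It is also worth noticing which groups the paper actually needs. Every group to which this fact is applied ($\Omega_N=\mathbb F_2^N$, $G^n\cong\mathbb F_2^{2n}$, and $\mathcal H_4\cong G^3$) is an elementary abelian $2$-group. For these, the Walsh characters $\chi_\xi$ of \cref{def:binary-characters} already give $|K|$ pairwise distinct characters, since $\chi_\xi(e_j)=(-1)^{\xi_j}$. Combined with your bound $|\widehat K|\le|K|$, this yields spanning with no structure theory. It also justifies the identification $\widehat{\mathbb F_2^N}\cong\mathbb F_2^N$ that the paper makes in \cref{def:binary-characters}, which is all the later arguments use.
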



\begin{definition}[$\V$-valued Fourier expansion]
  \label{def:coefficient-valued-fourier-expansion}
  Let $K$ be a finite abelian group with the uniform probability measure, let
  $\V$ be a complex vector space, and let $f\colon K\to
  \V$.  For $\chi\in\widehat K$, define
  \begin{equation*}
    \widehat f(\chi)
    :=
    \mathbb E_{x\in K}\overline{\chi(x)}f(x)
    \in \V.
  \end{equation*}
  By \cref{fact:character-orthogonality}, Fourier inversion takes the
  coefficient-valued form
  \begin{equation*}
    f(x)=\sum_{\chi\in\widehat K}\widehat f(\chi)\chi(x).
  \end{equation*}
  Whenever characters are indexed by labels, we abbreviate
  $\widehat f(\chi_\xi)$ to $\widehat f(\xi)$.
\end{definition}

\begin{definition}[Convolution]
  \label{def:finite-abelian-convolution}
  Let $K$ be a finite abelian group, let $\V$ be a complex vector space,
  and let $f:K\to \V$ and $h:K\to\mathbb C$ be functions.  Define their
  \emph{convolution} by
  \begin{equation*}
    (h*f)(x)
    :=\mathbb E_{z\in K}h(z)f(x-z)
    =\mathbb E_{y\in K}h(x-y)f(y).
  \end{equation*}
  The function $h$ is called the \emph{kernel} of this convolution.
  For a probability measure $\nu$ on $K$, define
  \begin{equation*}
    (\nu*f)(x)
    :=\sum_{z\in K}\nu(\{z\})f(x-z)
    =\mathbb E_{z\sim\nu}f(x-z).
  \end{equation*}
  These conventions agree when $h(z)=|K|\nu(\{z\})$; the probability
  weights in the second formula already sum to one.
\end{definition}

\begin{definition}
  \label{def:fourier-support-degree-and-truncation}
  Let $f:K\to \V$ be as in
  \cref{def:coefficient-valued-fourier-expansion}.  Its Fourier support is
  \begin{equation*}
    \operatorname{supp}_{\mathrm F}(f)
    :=\{\chi\in\widehat K:\widehat f(\chi)\neq0\}.
  \end{equation*}
  To define degree and truncation, choose a function
  $w:\widehat K\to\mathbb Z_{\geq0}$ with $w(\mathbf1)=0$;
  $w(\chi)$ is the weight assigned to the character $\chi$.  Set
  \begin{equation*}
    \deg_w(f):=\max_{\chi\in\operatorname{supp}_{\mathrm F}(f)}w(\chi),
    \qquad
    f^{\leq k}:=\sum_{w(\chi)\leq k}\widehat f(\chi)\chi
    \quad(k\geq0),
  \end{equation*}
  with $\deg_w(0):=-\infty$.  These definitions depend on the choice of $w$.
  The choices for the Boolean cube and $G^n$ are specified in
  \cref{def:coefficient-valued-boolean-fourier-analysis,def:pauli-label-fourier-analysis}.
\end{definition}

\begin{definition}[Normalized Bochner norm]
  \label{def:normalized-bochner-norm}
  Let $K$ be a finite abelian group with the uniform probability measure, let
  $\V$ be a Banach space, and let $f\colon K\to \V$.  For $1\leq p<\infty$, the
  \emph{normalized Bochner $p$-norm} is
  \begin{equation*}
    \norm f_{L_p(K;\V)}
    :=
    \left(\mathbb E_{x\in K}\norm{f(x)}_\V^p\right)^{1/p}.
  \end{equation*}
  We write $L_p(K;\V)$ for the space of such functions with this norm.
  Once a weight $w$ has been chosen, $L_p^{>k}(K;\V)$ is its subspace
  consisting of functions for which $\widehat f(\chi)=0$ whenever
  $w(\chi)\leq k$.
\end{definition}

If $\V$ is a complex Hilbert space, the norm on $L_2(K;\V)$ is induced by
\begin{equation}
\label{eq:bochner-inner-product}
  \langle f,g\rangle_{L_2(K;\V)}
  :=\mathbb E_{x\in K}\langle f(x),g(x)\rangle_\V.
\end{equation}
For $\V=\mathbb C$, this is the usual normalized inner product on scalar
functions.
For $\V=\Matrix_d$ with the normalized Hilbert--Schmidt
inner product 
$\mathbb E_{x\in K}\tau_d(f(x)^\dagger g(x))$.

\begin{prop}
  \label{prop:coefficient-valued-fourier-identities}
  Let $K$ be a finite abelian group, let $\V$ be a complex vector space,
  and let $f,g:K\to \V$.  Then
  \begin{enumerate}
    \item \emph{(Parseval)}
      If $\V$ is a Hilbert space, then
      \begin{equation*}
        \norm f_{L_2(K;\V)}^2
        =\sum_{\chi\in\widehat K}\norm{\widehat f(\chi)}_\V^2.
      \end{equation*}
    \item \emph{(Inner product)}
      If $\V$ is a Hilbert space, then
      \begin{equation*}
        \langle f,g\rangle_{L_2(K;\V)}
        =\sum_{\chi\in\widehat K}
          \langle\widehat f(\chi),\widehat g(\chi)\rangle_\V.
      \end{equation*}
    \item \emph{(Constant coefficient)}
      \(\mathbb E_{x\in K}f(x)=\widehat f(\mathbf1).\)
      If $\chi_0=\mathbf1$, this coefficient is written $\widehat f(0)$.
  \end{enumerate}
\end{prop}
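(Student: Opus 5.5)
Each of the three items is verified directly from the definition of the $\V$-valued Fourier coefficient in \cref{def:coefficient-valued-fourier-expansion}, together with character orthonormality from \cref{fact:character-orthogonality}. I would prove the inner-product identity first and obtain Parseval from it by setting $g=f$, in the same way \cref{prop:pauli-parseval} was obtained from Pauli orthogonality.

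For the inner product, I substitute the Fourier inversion formulas $f(x)=\sum_{\chi}\widehat f(\chi)\chi(x)$ and $g(x)=\sum_{\eta}\widehat g(\eta)\eta(x)$ into \eqref{eq:bochner-inner-product}. Since $K$ and $\widehat K$ are finite, all sums are finite and may be interchanged freely. Sesquilinearity of $\langle\cdot,\cdot\rangle_\V$ (conjugate-linear in the first slot, matching the convention of \eqref{eq:euclidean-inner-product-and-norm}) then gives
\begin{equation*}
  \langle f,g\rangle_{L_2(K;\V)}
  =\sum_{\chi,\eta\in\widehat K}
   \langle\widehat f(\chi),\widehat g(\eta)\rangle_\V\,
   \mathbb E_{x\in K}\overline{\chi(x)}\eta(x).
\end{equation*}
By part 3 of \cref{fact:character-orthogonality}, the inner average equals $\one_{\{\chi=\eta\}}$. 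Only the diagonal terms survive, which proves the inner-product identity, and taking $g=f$ gives Parseval.

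For the constant coefficient, I use that the trivial character is $\mathbf1\equiv1$ (part 1 of \cref{fact:character-orthogonality}). The defining formula then reads $\widehat f(\mathbf1)=\mathbb E_{x\in K}\overline{1}\,f(x)=\mathbb E_{x\in K}f(x)$, which needs no Hilbert structure.

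The only point requiring care is that Fourier inversion itself holds for $\V$-valued functions. Choosing a basis of the finite-dimensional span of $f(K)$ reduces this to the scalar statement that characters form an orthonormal basis. I do not expect a genuine obstacle anywhere; the remaining work is bookkeeping of the conjugation conventions.
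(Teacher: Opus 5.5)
Your proposal is correct and follows the paper's own proof: the constant-coefficient identity is read off from the definition using $\mathbf1\equiv1$, and the inner-product identity comes from expanding $f$ and $g$ in characters and applying \cref{fact:character-orthogonality}, with Parseval as the case $g=f$. Your remark that vector-valued Fourier inversion reduces to the scalar case through the finite-dimensional span of $f(K)$ fills in a step the paper leaves implicit.
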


\begin{proof}
  The constant-coefficient identity follows
  directly from $\mathbf1(x)=1$ and the definition of the Fourier
  coefficients.  If $\V$ is a Hilbert space, expanding $f$ and $g$ in the
  character basis and using \cref{fact:character-orthogonality} proves the
  inner-product identity; setting $g=f$ then proves Parseval.
\end{proof}


\paragraph{The Boolean cube.}
\begin{definition}[Walsh characters]
  \label{def:binary-characters}
  For $N\geq1$ and $\xi,x\in\mathbb F_2^N$, define
  \begin{equation*}
    \xi\cdot x:=\sum_{j=1}^N\xi_jx_j\in\mathbb F_2,
    \qquad
    \chi_\xi(x):=(-1)^{\xi\cdot x}.
  \end{equation*}
  These are called the \emph{Walsh characters}.  We identify
  $\widehat{\mathbb F_2^N}$ with $\mathbb F_2^N$ through
  $\xi\mapsto\chi_\xi$; under this identification, $\chi_0=\mathbf1$ and
  $\chi_\xi\chi_\eta=\chi_{\xi+\eta}$.
\end{definition}

\begin{definition}
  \label{def:coefficient-valued-boolean-fourier-analysis}
  For $N\geq1$, let $\Omega_N:=\mathbb F_2^N$, with the Walsh characters
  of \cref{def:binary-characters} and the uniform probability measure.
  For $j\in[N]$, we write $\chi_{\{j\}}(x):=(-1)^{x_j}$ and use
  $\widehat f(\{j\})$ for the Fourier coefficient of $f$ at this character.
  Here we choose $w(\chi_\xi)=|\xi|$, where
  $|\xi|:=|\{j\in[N]:\xi_j=1\}|$ is the Boolean Hamming weight.
  Boolean degree, truncation, and $L_p^{>k}(\Omega_N;\V)$ use this weight.
\end{definition}

\paragraph{The Pauli label group.}
\begin{definition}
  \label{def:pauli-label-fourier-analysis}
  For $n\geq1$, identify $G^n=(\mathbb F_2^2)^n$ with
  $\mathbb F_2^{2n}$ by listing the two bits in each coordinate.
  We use the characters of \cref{def:binary-characters} under this
  identification.  Writing
  $\langle r,s\rangle:=r_1s_1+r_2s_2\in\mathbb F_2$ for $r,s\in G$,
  their coordinate form for $\rho,\alpha\in G^n$ is
  \begin{equation*}
    \chi_\rho(\alpha)
    =\prod_{i=1}^n\chi_{\rho_i}(\alpha_i)
    =(-1)^{\sum_{i=1}^n\langle\rho_i,\alpha_i\rangle}.
  \end{equation*}
  We choose $w(\chi_\alpha)=|\alpha|=|\{i\in[n]:\alpha_i\ne0\}|$
  and write $\deg_G$ for the corresponding Fourier
  degree in \cref{def:fourier-support-degree-and-truncation}.
\end{definition}


The identification $G^n\cong\Omega_{2n}$ preserves the characters and the
uniform probability measure, but not the chosen weight: if $\xi$ lists
the bits of $\alpha\in G^n$, then $|\alpha|\leq|\xi|\leq2|\alpha|$.
Thus $G$-degree and Boolean degree refer to different weights on the same
character basis.

\paragraph{The group \texorpdfstring{$\mathcal H_4$}{H4}.}
\begin{definition}
  \label{def:four-input-constraint-group}
  We define the subgroup of $G^4$
  \begin{equation*}
    \mathcal H_4
    :=\{(t_1,t_2,t_3,t_4)\in G^4:t_1+t_2+t_3+t_4=0\}.
  \end{equation*}
  We use the uniform probability measure on $\mathcal H_4$ and identify it with $G^3$
  through $(t_1,t_2,t_3,t_4)\mapsto(t_1,t_2,t_3)$; the inverse appends
  $t_4=t_1+t_2+t_3$.  We define $\chi_\nu$ on $\mathcal H_4$, for
  $\nu\in G^3$, by composing the Walsh characters on $G^3$ with this
  identification:
  \begin{equation*}
    \chi_\nu(t_1,t_2,t_3,t_4)
    :=\prod_{j=1}^3\chi_{\nu_j}(t_j).
  \end{equation*}
\end{definition}

This identification is a group isomorphism, so these functions are
precisely all characters of $\mathcal H_4$.
In particular, $|\mathcal H_4|=|\widehat{\mathcal H_4}|=64$.
For $\mathcal H_4^v$, we apply this identification in each factor.
The Fourier expansions on $\mathcal H_4$ and its products in the
folded-product proof are the scalar case $\V=\mathbb C$ of the same
conventions.  An unindexed character $\varphi\in\widehat{\mathcal H_4}$
denotes a member of this family, not a different type of character.

\subsection{A construction of Pauli multipliers}

\begin{definition}[Pauli multiplier]
  \label{def:pauli-multiplier}
  Let $n\geq1$ and $m\colon G^n\to\mathbb C$.  The \emph{Pauli multiplier with symbol
  $m$} is the linear map $\mathsf M_m\colon \Matrix_2^{\otimes n}\to
  \Matrix_2^{\otimes n}$ defined by
  \begin{equation*}
    \mathsf M_m(A)
    :=
    \sum_{\alpha\in G^n}
      m(\alpha)\widehat A(\alpha)P_\alpha,
    \qquad
    A=\sum_{\alpha\in G^n}\widehat A(\alpha)P_\alpha.
  \end{equation*}
  Thus $\mathsf M_m(P_\alpha)=m(\alpha)P_\alpha$ for every
  $\alpha\in G^n$.  More generally, for a complex coefficient
  space $\V$, we use the same notation for the extension
  $\operatorname{id}_\V \otimes\mathsf M_m$ on
  $\V\otimes \Matrix_2^{\otimes n}$:
  \begin{equation*}
    \mathsf M_m\!\left(
      \sum_{\alpha\in G^n}\widehat F(\alpha)\otimes P_\alpha
    \right)
    :=
    \sum_{\alpha\in G^n}
      m(\alpha)\widehat F(\alpha)\otimes P_\alpha.
  \end{equation*}
\end{definition}

To identify the multipliers with symbols $\chi_\rho$ as unitary
conjugations, introduce the symplectic form
\(
  [r,s]:=r_1s_2+r_2s_1.
\)
We have
\begin{equation*}
  P_rP_sP_r^\dagger=(-1)^{[r,s]}P_s.
\end{equation*}

\begin{definition}[Coordinate-swap involution]
  The \emph{coordinate-swap involution} is the linear map
  \begin{equation*}
    \vartheta:G\longrightarrow G,
    \qquad
    \vartheta(r_1,r_2):=(r_2,r_1),
    \qquad
    \vartheta^2=\operatorname{id}_G.
  \end{equation*}
  This involution converts the symplectic pairing into the standard bilinear form:
\begin{equation*}
  [r,s]
  =r_1s_2+r_2s_1
  =\langle\vartheta(r),s\rangle,
  \qquad
  (-1)^{[r,s]}=\chi_{\vartheta(r)}(s).
\end{equation*}
\end{definition}

Each $\rho\in G^n$ therefore determines the Pauli multiplier
$\mathsf M_{\chi_\rho}$.  The coordinate swap identifies this multiplier
with conjugation by a Pauli operator.

\begin{lemma}[{\cite[Eq.~(4.75)]{tqi}}]
  \label{lem:pauli-multiplier-implementation}
  Let $n\geq1$ and $\rho\in G^n$.  Extend $\vartheta$ coordinatewise to
  $G^n$ and set
  \begin{equation*}
    U_\rho
    :=
    P_{\vartheta(\rho)}
    =
    \bigotimes_{i=1}^n P_{\vartheta(\rho_i)}.
  \end{equation*}
  Then, for every $\alpha\in G^n$,
  \begin{equation*}
    U_\rho P_\alpha U_\rho^\dagger
    =
    \chi_\rho(\alpha)P_\alpha.
  \end{equation*}
  Equivalently, $\mathsf M_{\chi_\rho}$ is conjugation by $U_\rho$.
\end{lemma}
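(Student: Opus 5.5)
Since conjugation by $U_\rho$ is linear and $U_\rho=\bigotimes_i P_{\vartheta(\rho_i)}$ is a tensor product, it suffices to check the identity on Pauli strings, and then factor by factor on single qubits. For $\alpha\in G^n$ we have $P_\alpha=\bigotimes_i P_{\alpha_i}$, so
\[
  U_\rho P_\alpha U_\rho^\dagger
  =\bigotimes_{i=1}^n P_{\vartheta(\rho_i)}P_{\alpha_i}P_{\vartheta(\rho_i)}^\dagger .
\]
The problem therefore reduces to the one-qubit identity $P_rP_sP_r^\dagger=(-1)^{[r,s]}P_s$ for $r,s\in G$. We apply it with $r=\vartheta(\rho_i)$ and then use the involution property.

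For the one-qubit identity, write $P_r=\iu^{r_1r_2}\pauliX^{r_1}\pauliZ^{r_2}$. The scalar phase cancels in $P_r(\cdot)P_r^\dagger$. The commutation relations $\pauliX\pauliZ=-\pauliZ\pauliX$, $\pauliX^2=\pauliZ^2=\id$ then give
\[
  \pauliX^{r_1}\pauliZ^{r_2}\,\pauliX^{s_1}\pauliZ^{s_2}
  =(-1)^{r_1s_2+r_2s_1}\,\pauliX^{s_1}\pauliZ^{s_2}\,\pauliX^{r_1}\pauliZ^{r_2}.
\]
Moving $\pauliX^{s_1}$ past $\pauliZ^{r_2}$ produces the sign $(-1)^{r_2s_1}$, and moving $\pauliZ^{s_2}$ past $\pauliX^{r_1}$ produces $(-1)^{r_1s_2}$. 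Since $P_r$ is unitary, $P_rP_sP_r^\dagger=(-1)^{[r,s]}P_s$. Alternatively, one can simply check the sixteen pairs $(r,s)$ directly.

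Now take $r=\vartheta(\rho_i)$. The identity $[r,s]=\langle\vartheta(r),s\rangle$ together with $\vartheta^2=\operatorname{id}_G$ gives
\[
  (-1)^{[\vartheta(\rho_i),\alpha_i]}=(-1)^{\langle\rho_i,\alpha_i\rangle}=\chi_{\rho_i}(\alpha_i).
\]
Taking the product over $i$ and using the coordinate form of $\chi_\rho$ in \cref{def:pauli-label-fourier-analysis} yields $U_\rho P_\alpha U_\rho^\dagger=\chi_\rho(\alpha)P_\alpha$.

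The equivalence with $\mathsf M_{\chi_\rho}$ follows by linearity. Expand $A=\sum_\alpha\widehat A(\alpha)P_\alpha$; then conjugation gives $\sum_\alpha\chi_\rho(\alpha)\widehat A(\alpha)P_\alpha=\mathsf M_{\chi_\rho}(A)$. The same computation applies with coefficients in any $\V$, acting on the matrix factor only.

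No step is a real obstacle. The only point needing care is the bookkeeping in the single-qubit sign, namely confirming that the symplectic pairing $r_1s_2+r_2s_1$ is exactly the exponent that arises and that it matches the coordinate-swap convention defining $\vartheta$.
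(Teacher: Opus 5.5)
Your proof is correct and follows the paper's argument: factor the conjugation over the tensor product, apply the one-qubit identity $P_rP_sP_r^\dagger=(-1)^{[r,s]}P_s$ with $r=\vartheta(\rho_i)$, use $\vartheta^2=\operatorname{id}_G$ to turn the symplectic sign into $\chi_{\rho_i}(\alpha_i)$, and pass to $\mathsf M_{\chi_\rho}$ by linearity. The one difference is that you derive the one-qubit sign from $\pauliX\pauliZ=-\pauliZ\pauliX$, whereas the paper just quotes that identity.
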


\begin{proof}
  Since $\vartheta^2=\operatorname{id}_G$, apply the Pauli conjugation
  identity above on each tensor factor:
  \begin{align*}
    U_\rho P_\alpha U_\rho^\dagger
    &={}
    \bigotimes_{i=1}^n
    \left(
      P_{\vartheta(\rho_i)}P_{\alpha_i}
      P_{\vartheta(\rho_i)}^\dagger
    \right)
    \notag\\
    &={}
    \left(
      \prod_{i=1}^n
      \chi_{\vartheta(\vartheta(\rho_i))}(\alpha_i)
    \right)P_\alpha
    =
    \chi_\rho(\alpha)P_\alpha.
  \end{align*}
  The equivalent multiplier identity follows by expanding an arbitrary
  operator in the Pauli basis.
\end{proof}

\subsection{Nonlocal Games and Parallel Repetition}


\label{subsec:nonlocal-games-and-parallel-repetition}
\paragraph{Game Setting.}
An entangled game is specified by
\(
  \sfG=(\mathcal X,\mathcal Y,\mathcal A,\mathcal B,\mathsf p_{\sfG},V),
\)
where the question sets $\mathcal X$, $\mathcal Y$ and the answer sets
$\mathcal A$, $\mathcal B$ are finite, $\mathsf p_{\sfG}$ is a probability
distribution on $\mathcal X\times\mathcal Y$, and
$V(a,b\mid x,y)\in\{0,1\}$ is called \emph{decision predicate}.  The referee samples
$(x,y)$ according to $\mathsf p_{\sfG}$, sends $x$ to Alice and $y$ to Bob,
and receives answers $a\in\mathcal A$ and $b\in\mathcal B$.  The referee
accepts if and only if $V(a,b\mid x,y)=1$.  We write
\begin{equation}
\label{eq:game-question-marginals}
  \mathsf p_X(x):=\sum_y\mathsf p_{\sfG}(x,y),
  \qquad
  \mathsf p_Y(y):=\sum_x\mathsf p_{\sfG}(x,y)
\end{equation}
for the question marginals.

A \textit{strategy} on a bipartite state $\psi$ is specified by the POVM
families\footnote{By a slight abuse of terminology, we also refer to families
of operators approximating POVMs as strategies when no confusion can arise.}
\begin{equation*}
  \mathscr A=(\mathscr A^x)_{x\in\mathcal X},
  \qquad
  \mathscr A^x=(A_a^x)_{a\in\mathcal A},
  \qquad
  \mathscr B=(\mathscr B^y)_{y\in\mathcal Y},
  \qquad
  \mathscr B^y=(B_b^y)_{b\in\mathcal B}.
\end{equation*}
Alice uses the measurement operators $A_a^x$, and Bob uses
$(B_b^y)^{\mathsf T}$.\footnote{This convention simplifies the
correlation-channel and Pauli calculations below. All transposes are taken
in the computational basis. Since transposition maps POVMs bijectively to
POVMs, it does not infect the strategies.}
The \textit{value} of this strategy is its acceptance probability:
\begin{equation}
\label{eq:strategy-value}
  \Val_{\sfG}^{\psi}(\mathscr A,\mathscr B)
  :=\sum_{x,y,a,b}\mathsf p_{\sfG}(x,y)V(a,b\mid x,y)
    \Tr\!\left[(A_a^x\otimes(B_b^y)^{\mathsf T})\psi\right].
\end{equation}
Here the resource state is fixed independently of the game: for a
fixed two-qubit state $\psi$, the players may use $\psi^{\otimes n}$ for any
finite $n$, but may not replace it by another bipartite state. Define
\begin{equation}
\label{eq:fixed-resource-game-values}
  \omega_n^*(\sfG,\psi)
  :=\sup_{\mathscr A,\mathscr B}
    \Val_{\sfG}^{\psi^{\otimes n}}(\mathscr A,\mathscr B),
  \qquad
  \omega_\infty^*(\sfG,\psi)
  :=\sup_{n\geq1}\omega_n^*(\sfG,\psi),
\end{equation}
where the first supremum is over all local POVM strategies on $n$ copies.
When the state is clear, we abbreviate
$\Val_{\sfG}^{\psi^{\otimes n}}(\mathscr A,\mathscr B)$ as
$\Val_{\sfG}(\mathscr A,\mathscr B)$.


\begin{definition}[Fixed-resource $\MIP^\psi$]
  \label{def:polynomial-answer-noisy-mip}
  Fix a bipartite state $\psi$, and let $\mathsf q, \mathsf a$ be two fixed functions.
  A language belongs to
  \(
    \MIP^{\psi}[\mathsf q,\mathsf a]
  \)
  if it admits a uniform family of one-round protocols in which a classical
  randomized verifier interacts with two noncommunicating, computationally
  unbounded quantum provers. The verifier samples question pairs and
  evaluates decision predicates in time polynomial in the input bit length.
  The lengths of the questions and answers are upper bounded by
  $\mathsf q$ and $\mathsf a$, respectively, as functions of the input bit length, and the protocol has a constant completeness--soundness gap. The provers share arbitrarily finite copies of $\psi$,  but no other shared quantum resource.
  Allowing shared classical randomness does not change
  the value, since it only forms convex combinations of strategies.
\end{definition}


\paragraph{Parallel repetition.}
For a finite game $\sfG$, write $\omega^*(\sfG)$ for the supremum of its
acceptance probabilities over all finite-dimensional bipartite shared
states and all local POVM strategies.  We now define the repeated games.

\begin{definition}[AND and OR parallel repetition]
\label{def:and-or-parallel-repetition}
Fix a finite game
$\sfG=(\mathcal X,\mathcal Y,\mathcal A,\mathcal B,\mathsf p_{\sfG},V)$
and an integer $k\geq1$.  In a $k$-fold repetition of $\sfG$, the referee draws
\[
  (\mathbf x,\mathbf y)
  =\bigl((x_1,\ldots,x_k),(y_1,\ldots,y_k)\bigr)
  \sim\mathsf p_{\sfG}^{\otimes k}
\]
and receives answer strings
$\mathbf a=(a_1,\ldots,a_k)\in\mathcal A^k$ and
$\mathbf b=(b_1,\ldots,b_k)\in\mathcal B^k$.  Thus
\begin{equation*}
  \mathsf p_{\sfG}^{\otimes k}(\mathbf x,\mathbf y)
  :=\prod_{i=1}^k\mathsf p_{\sfG}(x_i,y_i).
\end{equation*}
The \emph{AND repetition} $\sfG^{\otimes k}$ accepts if every coordinate
accepts, whereas the \emph{OR repetition} $\operatorname{OR}_k(\sfG)$
accepts if at least one coordinate accepts.  Set
$V_i:=V(a_i,b_i\mid x_i,y_i)$.  Their respective decision predicates are
\begin{equation*}
  V_{\wedge}^{(k)}(\mathbf a,\mathbf b\mid\mathbf x,\mathbf y)
  :=\prod_{i=1}^k V_i,
  \qquad
  V_{\vee}^{(k)}(\mathbf a,\mathbf b\mid\mathbf x,\mathbf y)
  :=1-\prod_{i=1}^k(1-V_i).
\end{equation*}
A strategy for either repeated game is a general POVM strategy
\[
  \mathscr A^{\mathbf x}
    =(A_{\mathbf a}^{\mathbf x})_{\mathbf a\in\mathcal A^k},
  \qquad
  \mathscr B^{\mathbf y}
    =(B_{\mathbf b}^{\mathbf y})_{\mathbf b\in\mathcal B^k}.
\]
The measurements may depend on the entire local question string and may be
joint across all $k$ coordinates.  In particular, no product structure is
imposed on the strategy.  The corresponding entangled values are
\begin{align*}
  \omega^*(\sfG^{\otimes k})
  &:=\sup_{\rho,\mathscr A,\mathscr B}
    \Val_{\sfG^{\otimes k}}^\rho(\mathscr A,\mathscr B),\\
  \omega^*\!\left(\operatorname{OR}_k(\sfG)\right)
  &:=\sup_{\rho,\mathscr A,\mathscr B}
    \Val_{\operatorname{OR}_k(\sfG)}^\rho(\mathscr A,\mathscr B),
\end{align*}
where both suprema range over all finite-dimensional bipartite shared
states $\rho$ and the POVM families described above.
The repetition number $k$ counts instances of the
game and is unrelated to the number $n$ of copies of the resource state.
\end{definition}

\begin{lemma}
  \label{lem:or-repetition-bounds}
  For every finite game
  $\sfG=(\mathcal X,\mathcal Y,\mathcal A,\mathcal B,\mathsf p_{\sfG},V)$
  and every $k\geq1$,
  \begin{equation*}
    1-\bigl(1-\omega^*(\sfG)\bigr)^k
    \leq \omega^*\!\left(\operatorname{OR}_k(\sfG)\right)
    \leq \min\{1,k\omega^*(\sfG)\}.
  \end{equation*}
  Moreover, for every fixed bipartite resource state $\psi$,
  \begin{equation*}
    1-\bigl(1-\omega_\infty^*(\sfG,\psi)\bigr)^k
    \leq
    \omega_\infty^*\!\left(\operatorname{OR}_k(\sfG),\psi\right)
    \leq
    \min\{1,k\omega_\infty^*(\sfG,\psi)\}.
  \end{equation*}
  The finite-copy upper bound
  \begin{equation*}
    \omega_n^*\!\left(\operatorname{OR}_k(\sfG),\psi\right)
    \leq\min\{1,k\omega_n^*(\sfG,\psi)\}
  \end{equation*}
  also holds for every $n\geq1$.
\end{lemma}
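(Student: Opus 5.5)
The upper bounds follow from a union bound applied to an arbitrary joint strategy, and the lower bounds from letting the players run independent copies of a near-optimal single-game strategy in each coordinate.

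\emph{Upper bound.} Fix any shared state $\rho$ (or $\psi^{\otimes n}$) and any joint POVM families $(A^{\mathbf x}_{\mathbf a})$, $(B^{\mathbf y}_{\mathbf b})$ for $\operatorname{OR}_k(\sfG)$. The OR predicate satisfies $V_\vee^{(k)}\le\sum_{i=1}^k V_i$ pointwise, so the value is at most $\sum_i \Pr[V_i=1]$. For each coordinate $i$ we build a strategy for $\sfG$ on the same state:
\begin{itemize}
\item on question $x$, Alice samples the other coordinates $x_{-i}$ herself, conditioned appropriately;
\item she measures $A^{\mathbf x}$ and outputs $a_i$, and Bob acts symmetrically.
\end{itemize}
Because $\mathsf p_{\sfG}^{\otimes k}$ is a product across coordinates, the pairs $(x_j,y_j)$ for $j\neq i$ are independent of $(x_i,y_i)$. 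The sampling of the remaining coordinates therefore needs shared randomness: the players jointly sample $(x_j,y_j)_{j\ne i}\sim\mathsf p_{\sfG}^{\otimes(k-1)}$, and each keeps its own half. Shared classical randomness only forms convex combinations of strategies, as remarked after \cref{def:polynomial-answer-noisy-mip}. Fixing the best realization of the randomness gives a deterministic choice of $(x_{-i},y_{-i})$, yielding a genuine strategy on the \emph{same} state $\rho$ whose value is at least the conditional win probability of coordinate $i$ for that realization. Averaging then shows $\Pr[V_i=1]\le\omega^*(\sfG)$, respectively $\le\omega_n^*(\sfG,\psi)$ on $\psi^{\otimes n}$. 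Summing over $i$ and using that probabilities are at most $1$ gives the upper bound $\min\{1,k\,\omega\}$ in all three settings. Since the state is unchanged, the finite-copy version holds for each $n$; taking $\sup_n$ yields the $\omega_\infty^*$ upper bound.

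\emph{Lower bound.} Take any strategy for $\sfG$ with value $v$ on a state $\rho$ (respectively on $\psi^{\otimes n}$). The players share $\rho^{\otimes k}$ (respectively $\psi^{\otimes nk}$, still finitely many copies of $\psi$) and apply the product measurements $\bigotimes_i A^{x_i}_{a_i}$ and $\bigotimes_i B^{y_i}_{b_i}$. With product questions, product states and product measurements, the coordinate outcomes are independent, each winning with probability $v$. Hence the OR value is $1-(1-v)^k$. Taking the supremum over strategies and using continuity and monotonicity of $v\mapsto1-(1-v)^k$ gives both lower bounds. For $\omega^*_\infty$ we take the sup over $n$, noting that $nk$ ranges over valid copy numbers.

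\emph{Main obstacle.} The delicate step is the per-coordinate reduction in the upper bound. Alice's measurement depends on all of $\mathbf x$, while she only receives $x_i$. This is resolved by the product structure of the question distribution together with derandomizing the shared sampling, so that no extra shared resource beyond $\rho$ or $\psi^{\otimes n}$ is needed. That last point is essential for the fixed-resource finite-copy statement.
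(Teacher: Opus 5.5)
Your proposal is correct and follows the paper's proof: the upper bound combines the union bound $V_\vee^{(k)}\le\sum_i V_i$ with the observation that, once the other $k-1$ question pairs are fixed, the marginal POVMs $\sum_{\mathbf a:\,a_i=a}A^{\mathbf x}_{\mathbf a}$ form a single-game strategy on the same state. Your shared-randomness-then-derandomize step is just a rephrasing of that conditioning, and the lower bound is the same product strategy on $\rho^{\otimes k}$ (respectively $\psi^{\otimes nk}$). One small correction: for each fixed realization of the other questions, the induced strategy's value \emph{equals} the conditional win probability of coordinate $i$, so averaging directly gives $\Pr[V_i=1]\le\omega^*(\sfG)$ without picking a ``best'' realization.
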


\begin{proof}
  Let $(\rho,\mathscr A,\mathscr B)$ be a single-game strategy with value
  $v=\Val_{\sfG}^{\rho}(\mathscr A,\mathscr B)$.  On $\rho^{\otimes k}$,
  use the product POVMs
  \[
    \widetilde A_{\mathbf a}^{\mathbf x}
      :=\bigotimes_{i=1}^k A_{a_i}^{x_i},
    \qquad
    \widetilde B_{\mathbf b}^{\mathbf y}
      :=\bigotimes_{i=1}^k B_{b_i}^{y_i}.
  \]
  The coordinates are independent under this strategy, so
  \[
    \Val_{\operatorname{OR}_k(\sfG)}^{\rho^{\otimes k}}
      (\widetilde{\mathscr A},\widetilde{\mathscr B})
    =1-\prod_{i=1}^k(1-v)=1-(1-v)^k.
  \]
  Taking the supremum over single-game strategies gives the first lower
  bound.  If $\rho=\psi^{\otimes n}$, this construction uses $nk$ copies of
  $\psi$, and hence
  \[
    \omega_{nk}^*\!\left(\operatorname{OR}_k(\sfG),\psi\right)
    \geq 1-\bigl(1-\omega_n^*(\sfG,\psi)\bigr)^k.
  \]
  Taking the supremum over $n$ gives the lower bound for
  $\omega_\infty^*$.

  For the upper bounds, fix any repeated-game strategy on a state $\rho$,
  and let $E_i$ be the event $V(a_i,b_i\mid x_i,y_i)=1$.  For fixed values
  of all question pairs except $(x_i,y_i)$, the families
  \[
    \left(\sum_{\mathbf a:\,a_i=a}A_{\mathbf a}^{\mathbf x}\right)_
      {a\in\mathcal A},
    \qquad
    \left(\sum_{\mathbf b:\,b_i=b}B_{\mathbf b}^{\mathbf y}\right)_
      {b\in\mathcal B}
  \]
  are POVMs for a single play of $\sfG$ on the same state $\rho$.
  Since the question pairs are independent, the remaining pair
  $(x_i,y_i)$ still has distribution $\mathsf p_{\sfG}$.
  Averaging over the fixed questions therefore gives
  $\Pr(E_i)\leq\omega^*(\sfG)$, and
  $\Pr(E_i)\leq\omega_n^*(\sfG,\psi)$ when $\rho=\psi^{\otimes n}$.
  Thus
  \[
    \Val_{\operatorname{OR}_k(\sfG)}^\rho(\mathscr A,\mathscr B)
    =\Pr\!\left(\bigcup_{i=1}^k E_i\right)
    \leq\sum_{i=1}^k\Pr(E_i).
  \]
  Taking suprema and using that acceptance probabilities are at most $1$
  gives the upper bounds for $\omega^*$ and $\omega_n^*$.  Taking the
  supremum over $n$ gives the upper bound for $\omega_\infty^*$.
\end{proof}

\begin{lemma}
  \label{lem:and-repetition-lower-bounds}
  For every finite game
  $\sfG=(\mathcal X,\mathcal Y,\mathcal A,\mathcal B,\mathsf p_{\sfG},V)$
  and every $k\geq1$,
  \[
    \omega^*(\sfG^{\otimes k})\geq\omega^*(\sfG)^k.
  \]
  Moreover, for every fixed bipartite resource state $\psi$,
  \[
    \omega_\infty^*(\sfG^{\otimes k},\psi)
    \geq\omega_\infty^*(\sfG,\psi)^k.
  \]
\end{lemma}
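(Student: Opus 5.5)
The plan is to mirror the lower-bound half of the proof of \cref{lem:or-repetition-bounds}: take a single-game strategy, run it independently on $k$ copies of its shared state with product measurements, and compute the value of the AND predicate exactly. Fix a finite-dimensional state $\rho$ and POVM families $\mathscr A,\mathscr B$ for $\sfG$ with value $v=\Val_{\sfG}^{\rho}(\mathscr A,\mathscr B)$. On $\rho^{\otimes k}$, define
\[
  \widetilde A_{\mathbf a}^{\mathbf x}:=\bigotimes_{i=1}^k A_{a_i}^{x_i},
  \qquad
  \widetilde B_{\mathbf b}^{\mathbf y}:=\bigotimes_{i=1}^k B_{b_i}^{y_i},
\]
with the tensor factors ordered so that the $i$-th factor acts on the $i$-th copy of $\rho$. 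Summing over $\mathbf a$ factorizes into $\bigotimes_i\sum_{a_i}A_{a_i}^{x_i}=\id$, and positivity is preserved under tensor products, so these are valid POVMs. Bob's transpose convention is harmless because $(\bigotimes_i B_{b_i}^{y_i})^{\mathsf T}=\bigotimes_i (B_{b_i}^{y_i})^{\mathsf T}$.

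The main computation comes next. Since $\mathsf p_{\sfG}^{\otimes k}$ is a product distribution, the predicate $V_\wedge^{(k)}=\prod_i V_i$ is a product, and
\[
  \Tr\Bigl[\bigl(\widetilde A_{\mathbf a}^{\mathbf x}\otimes(\widetilde B_{\mathbf b}^{\mathbf y})^{\mathsf T}\bigr)\rho^{\otimes k}\Bigr]
  =\prod_{i=1}^k\Tr\bigl[(A_{a_i}^{x_i}\otimes(B_{b_i}^{y_i})^{\mathsf T})\rho\bigr]
\]
after the appropriate regrouping of Alice's and Bob's registers. With these three facts, the sum defining the value in \eqref{eq:strategy-value} factorizes coordinatewise into $\prod_{i=1}^k v=v^k$. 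Taking the supremum over $(\rho,\mathscr A,\mathscr B)$ then gives $\omega^*(\sfG^{\otimes k})\ge\sup v^k=\omega^*(\sfG)^k$, because $v\mapsto v^k$ is monotone and continuous on $[0,1]$.

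For the fixed-resource statement, take $\rho=\psi^{\otimes n}$. The construction above uses $\psi^{\otimes nk}$ and is a legal strategy for that many copies, so $\omega^*_{nk}(\sfG^{\otimes k},\psi)\ge\omega_n^*(\sfG,\psi)^k$. Bounding the left side by $\omega_\infty^*(\sfG^{\otimes k},\psi)$ and taking the supremum over $n$ on the right gives the claim, again using monotonicity and continuity of $v\mapsto v^k$ to pass the supremum through the power.

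There is no genuine obstacle. The only point requiring care is the register-reordering identity: Alice's $k$ subsystems and Bob's $k$ subsystems must be regrouped so that the global state $\rho^{\otimes k}$ on $(AB)^{\otimes k}$ matches the bipartite cut $A^{\otimes k}\otimes B^{\otimes k}$. This is a fixed permutation of tensor factors under which the trace of tensor products factorizes.
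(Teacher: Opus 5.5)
Your proposal is correct and matches the paper's proof: product POVMs on $\rho^{\otimes k}$ give AND value exactly $v^k$, and taking $\rho=\psi^{\otimes n}$ (which uses $nk$ copies) yields $\omega_{nk}^*(\sfG^{\otimes k},\psi)\geq\omega_n^*(\sfG,\psi)^k$ before passing to suprema. Your remarks on the transpose convention and the regrouping of registers only spell out details that the paper leaves implicit.
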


\begin{proof}
  Let $(\rho,\mathscr A,\mathscr B)$ be a single-game strategy with value
  $v=\Val_{\sfG}^{\rho}(\mathscr A,\mathscr B)$.  Use $\rho^{\otimes k}$
  and the product POVMs constructed above.  Independence across
  coordinates gives
  \[
    \Val_{\sfG^{\otimes k}}^{\rho^{\otimes k}}
      (\widetilde{\mathscr A},\widetilde{\mathscr B})
    =\prod_{i=1}^k v=v^k.
  \]
  Taking the supremum over single-game strategies proves the first
  inequality.  If $\rho=\psi^{\otimes n}$, the construction uses $nk$
  copies of $\psi$, so
  \[
    \omega_{nk}^*(\sfG^{\otimes k},\psi)
    \geq\omega_n^*(\sfG,\psi)^k.
  \]
  Taking the supremum over $n$ proves the second inequality.
\end{proof}

\paragraph{Soundness amplification.}
For an upper bound on the AND repetition, we use the following parallel repetition theorem for general entangled games.

\begin{theorem}[{\cite[Theorem~1]{Yuen16}}]
  \label{thm:entangled-parallel-repetition}
  There is an absolute constant $C>0$ with the following property.
  Let $\sfG=(\mathcal X,\mathcal Y,\mathcal A,\mathcal B,\mathsf p_{\sfG},V)$
  be a finite entangled game, let $0<\epsilon<1$, and set
  $s:=\max\{\log|\mathcal A\times\mathcal B|,1\}$.
  If $\omega^*(\sfG)\leq1-\epsilon$, then for every integer $r\geq2$,
  \begin{equation*}
    \omega^*(\sfG^{\otimes r})
    \leq\frac{C s\log r}{\epsilon^{17}r^{1/4}}.
  \end{equation*}
\end{theorem}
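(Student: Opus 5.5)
Since this theorem is quoted from~\cite{Yuen16} and used only as a black box, the paper needs no new proof. If one had to reconstruct it, I would follow the information-theoretic route of Raz and Holenstein in its quantum form. Fix an optimal strategy for $\sfG^{\otimes r}$ on a state $\rho$, with value $v$. Grow a set $C\subseteq[r]$ of coordinates, and let $W_C$ be the event that all coordinates in $C$ win. The goal is an inequality of the form
\[
  \Pr(W_{C\cup\{i\}})\leq\Pr(W_C)\bigl(1-\epsilon/2\bigr)
\]
for some $i\notin C$, as long as $\Pr(W_C)$ is not already below the target bound. Iterating this gives the decay. The contrapositive is what has to be proved: if every $i\notin C$ wins with conditional probability greater than $1-\epsilon/2$, then a single-copy strategy for $\sfG$ with value greater than $1-\epsilon$ exists, contradicting $\omega^*(\sfG)\leq1-\epsilon$.

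The embedding has three stages.

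\emph{Dependency-breaking variables.} Introduce a variable $\Omega$ that, for each coordinate outside $C\cup\{i\}$, fixes either Alice's or Bob's question (chosen at random). Conditioned on $W_C$, the questions in $C$, and $\Omega$, the conditional state and the answers then depend only weakly on the questions $(x_i,y_i)$ at a typical coordinate $i$.

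\emph{Information bounds.} Quantify this weak dependence using the chain rule for quantum mutual information together with the bound
\[
  \log\frac{1}{\Pr(W_C)}\geq \RelEnt\bigl(\text{conditioned}\,\|\,\text{unconditioned}\bigr).
\]
Averaged over $i$, this shows the following. The conditional joint question distribution at coordinate $i$ is close to $\mathsf p_{\sfG}$. Alice's reduced state is nearly independent of $y_i$, and Bob's of $x_i$. The errors are of order
\[
  \sqrt{\frac{|C|\,s+\log(1/\Pr(W_C))}{r}},
\]
where the factor $s$ comes from conditioning on the answers in $C$.

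\emph{Quantum correlated sampling.} Let the two players share the purified conditional state associated with a reference question pair. On receiving $x_i$, Alice applies an Uhlmann unitary on her side that moves the shared state close to the one conditioned on $x_i$. Bob does the same with $y_i$. The near-independence guarantees from the previous stage, converted into fidelity via Pinsker (\cref{prop:quantum-pinsker}) and Fuchs--van de Graaf, show that the joint state after both local unitaries is close to the state conditioned on $(x_i,y_i)$. The players then measure using the original strategy restricted to coordinate $i$, so they win with probability close to the conditional winning probability at $i$.

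I expect the third stage to be the main obstacle. The Uhlmann unitaries are chosen separately for each party, and showing that the two independent local corrections compose to a nearly correct joint state requires a two-step hybrid argument: first correct Alice's side, then Bob's, controlling each step through the conditional independence of the other party's question. Each step costs a square root of the relative-entropy deficit. These nested square roots, together with the need to handle the public part of $\Omega$ without an anchoring trick, are what reduce the decay rate to $r^{-1/4}$ and produce the large power $\epsilon^{17}$, rather than exponential decay.

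The final accounting is as follows. Choose the stopping size $|C|$ of order $\epsilon^{O(1)}r^{1/2}/s$ (up to logarithmic factors), so that the per-step error stays below $\epsilon/4$ while $\Pr(W_C)$ is still larger than the claimed bound. The multiplicative decay accumulated over $|C|$ steps, balanced against this error budget, then yields
\[
  \omega^*(\sfG^{\otimes r})\leq\frac{Cs\log r}{\epsilon^{17}r^{1/4}},
\]
with the $\log r$ factor absorbing the losses incurred when conditioning on events of small probability.
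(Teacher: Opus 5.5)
\paragraph{Review.} Your opening sentence matches the paper exactly. \cref{thm:entangled-parallel-repetition} is imported from \cite{Yuen16} as a black box, and nothing more is needed for its use in \cref{thm:constant-answer-epr-re}.

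The reconstruction you then sketch, however, has a genuine gap in its third stage. It is precisely the gap that left parallel repetition for general entangled games open before \cite{Yuen16}. Pre-sharing one reference state, letting each player apply an Uhlmann unitary, and controlling the composition by a two-step hybrid is the mechanism for free games (Jain--Pereszl\'enyi--Yao, Chailloux--Scarpa). Your ``reference question pair'' is in effect an anchor, which gives the anchored-games version (Bavarian--Vidick--Yuen); a general game has no such anchor. The argument breaks at the following points.
\begin{itemize}
\item Uhlmann's theorem gives Alice a local unitary taking the reference state to the state conditioned on $X_i=x$ only if Bob's reduced state is nearly unchanged by that conditioning. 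But Bob's conditional state depends on his own question, and conditioning on $x$ moves the law of $y_i$ from $\mathsf p_Y$ to $\mathsf p_{\sfG}(\cdot\mid x)$.
\item Your information bounds say that each player's reduced state is nearly independent of the \emph{other} player's question. They do not make $x_i$ and $y_i$ independent, and the ``correct Alice, then Bob'' hybrid needs exactly that. In a purified picture, transferring Bob's correction onto Alice's already-corrected state costs the density ratio $\mathsf p_{\sfG}(x,y)/(\mathsf p_X(x)\mathsf p_Y(y))$. This ratio is $1$ for product distributions and unbounded for a general $\mathsf p_{\sfG}$.
\item You never say how the players obtain $\Omega$. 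Given $W_C$, its law depends on $x_i$ for Alice and on $y_i$ for Bob, so it must be sampled jointly. The state they then need depends on the sampled value, so there is no single pre-shared state to correct.
\end{itemize}

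The missing idea is the one \cite{Yuen16} is built around. Classical correlated sampling lets the players agree on the dependency-breaking variable. This is combined with the quantum correlated sampling lemma of Dinur, Steurer and Vidick: an embezzlement-based protocol in which two players, each holding a classical description of one of two \emph{close} bipartite states, produce without communication a shared state close to the target. What you call ``quantum correlated sampling'' is Uhlmann correction of a pre-shared state, which is a different and weaker tool.

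Without this ingredient your sketch only establishes decay for product or anchored question distributions. The exponents $\epsilon^{17}$, $r^{-1/4}$ and the factor $\log r$ in your last paragraph are asserted rather than derived from anything in the argument. None of this affects the paper, which relies only on the cited statement.
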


This theorem bounds arbitrary joint entangled strategies for the AND
repetition.  Its dependence on the game is through the soundness gap and
the answer-alphabet sizes, not the question-alphabet sizes.  When the gap
and answer lengths are fixed, a constant number of repetitions therefore
reduces the soundness to any prescribed positive constant. 
\footnote{Very recently, OpenAI has released a proof of a parallel repetition theorem for all entangled games with exponential decay\cite{OpenAI2026QuantumParallelRepetition}. However, the polynomial decay is sufficient for our purpose.}

\subsection{Maximal Correlation and Correlation Channels}
\label{subsec:maximal-correlation-and-correlation-channels}

We use Beigi's quantum maximal correlation \cite{Beigi13}, restricted here
to states with maximally mixed marginals.

\begin{definition}[Quantum maximal correlation]
  Let $\psi_{AB}\in \Matrix_{d_A}\otimes \Matrix_{d_B}$ be a bipartite state with
  maximally mixed marginals,
  \(
    \psi_A=\frac{\id}{d_A},
    \psi_B=\frac{\id}{d_B}.
  \)
  Equivalently, the marginal densities relative to the normalized traces are
  both $\id$.  The \emph{quantum maximal correlation} of $\psi_{AB}$ is
  \begin{equation*}
    \rho_{\max}(\psi_{AB})
    :=\max\left\{
      \abs{\Tr\!\left[\psi_{AB}(X\otimes Y^\dagger)\right]}
      \;:\;
      \substack{
        X\in \Matrix_{d_A},\ Y\in \Matrix_{d_B},\\
        \tau_{d_A}(X)=\tau_{d_B}(Y)=0,\quad
        \norm X_2=\norm Y_2=1
      }
    \right\}.
  \end{equation*}
\end{definition}

\begin{fact}[{\cite[Theorems~1 and~2(a)]{Beigi13}}]
  Let $\psi_{AB}\in \Matrix_{d_A}\otimes \Matrix_{d_B}$ be a bipartite state with
  maximally mixed marginals.  Its maximal correlation satisfies
$0\leq\rho_{\max}(\psi_{AB})\leq1$ and tensorizes:
\(
  \rho_{\max}(\psi_{AB}^{\otimes n})
  =\rho_{\max}(\psi_{AB}).
\)
\end{fact}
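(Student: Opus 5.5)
The plan is to recast $\rho_{\max}$ as the norm of a linear map between $L_2$ spaces, so that both claims become statements about singular values. The bound $\rho_{\max}\geq0$ is immediate, since the quantity being maximized is an absolute value. For $\rho_{\max}\leq1$ I will use Cauchy--Schwarz in the Hilbert--Schmidt inner product. Write
\[
\Tr\!\left[\psi(X\otimes Y^\dagger)\right]
=\Tr\!\left[\psi^{1/2}(X\otimes\id)(\id\otimes Y^\dagger)\psi^{1/2}\right]
=\left\langle (X^\dagger\otimes\id)\psi^{1/2},\,(\id\otimes Y^\dagger)\psi^{1/2}\right\rangle_{\mathrm{HS}} .
\]
Cauchy--Schwarz then bounds its modulus by $\Tr[\psi(XX^\dagger\otimes\id)]^{1/2}\,\Tr[\psi(\id\otimes YY^\dagger)]^{1/2}$. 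Because the marginals are maximally mixed, these two factors equal $\tau_{d_A}(XX^\dagger)^{1/2}=\norm X_2$ and $\tau_{d_B}(YY^\dagger)^{1/2}=\norm Y_2$. Under the normalization in the definition both are $1$, so $\rho_{\max}\leq1$.

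For tensorization, define $T\colon L_2(\tau_{d_B})\to L_2(\tau_{d_A})$ by requiring $\langle X,T(Y)\rangle_2=\Tr[\psi(X^\dagger\otimes Y)]$ for all $X$. Concretely, $T(Y)=d_A\,\Tr_B[\psi(\id\otimes Y)]$, up to a transpose or conjugation convention that I will fix carefully; some care is needed because the definition of $\rho_{\max}$ pairs $X$ with $Y^\dagger$, and I would check that this does not affect moduli. The marginal conditions give three facts: $T(\id)=\id$, $T^*(\id)=\id$, and both $T$ and $T^*$ send traceless operators to traceless operators. So $T$ is block diagonal with respect to $\mathbb C\id\oplus\{\tau=0\}$. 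By the variational characterization of operator norms, $\rho_{\max}(\psi)$ is exactly the operator norm of the traceless block $T_0$, and the first paragraph shows $\norm T\leq1$.

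For $\psi^{\otimes n}$, the associated map is $T^{\otimes n}$ under the identification $L_2(\tau_{d^n})=L_2(\tau_d)^{\otimes n}$. Splitting each factor as $\mathbb C\id\oplus\{\tau=0\}$ decomposes the traceless part of $L_2(\tau_{d_B^n})$ orthogonally into $2^n-1$ subspaces indexed by nonempty sets $S\subseteq[n]$. On the subspace for $S$, $T^{\otimes n}$ acts as the identity on the coordinates outside $S$ and as $T_0^{\otimes S}$ on $S$. These pieces map orthogonal subspaces into orthogonal subspaces, so the norm of the traceless block of $T^{\otimes n}$ is the maximum over $S$ of $\norm{T_0}^{|S|}=\rho_{\max}^{|S|}\leq\rho_{\max}$. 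Here I use multiplicativity of operator norms under tensor products, and $\rho_{\max}\leq1$. The reverse inequality follows by testing the definition with $X\otimes\id^{\otimes(n-1)}$ and $Y\otimes\id^{\otimes(n-1)}$, which are traceless and normalized.

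I expect the main obstacle to be the bookkeeping in identifying $\rho_{\max}$ with $\norm{T_0}$. Specifically, I need to confirm that the supremum over $X,Y$ of the pairing equals the operator norm of the restricted map, which needs invariance of the traceless subspace under both $T$ and $T^*$. I also need to check that the orthogonal decomposition of $T^{\otimes n}$ is valid, i.e.\ that the cross terms between different $S$ vanish. Both checks rest entirely on the maximally mixed marginal hypothesis.
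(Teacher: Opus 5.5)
Your proof is correct and follows essentially the same route as the source the paper cites. Beigi's Theorem~1 identifies $\rho_{\max}$ with the second operator-Schmidt coefficient (second singular value) of the map you call $T$, i.e.\ with the norm of its traceless block $T_0$. Tensorization then follows because the singular values of $T^{\otimes n}$ are products of those of $T$, which is exactly what your orthogonal decomposition over nonempty $S\subseteq[n]$ verifies; the paper itself adds nothing beyond the citation, and in the two-qubit case your $T$ is $Y\mapsto\mathcal K_\psi(Y^{\mathsf T})$, so your identity $\rho_{\max}=\norm{T_0}$ coincides with \cref{prop:maximal-correlation-channel-norm}.
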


\begin{fact}[Depolarized EPR pair]~\cite[Lemma 3.9]{QY21}
Let $  \ket{\Phi_2}:=\frac{1}{\sqrt2}(\ket{00}+\ket{11}),$
\begin{equation*}
  \psi_\varepsilon^{\mathrm{depol}}
  :=(1-\varepsilon)\ketbra{\Phi_2}
    +\varepsilon\frac{\id}{2}\otimes\frac{\id}{2},
  \qquad 0<\varepsilon<1.
\end{equation*}
Then $\rho_{\max}(\psi_\varepsilon^{\mathrm{depol}})=1-\varepsilon$.
\end{fact}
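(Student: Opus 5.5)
The plan is to compute the maximal correlation directly from the definition, since both $\ketbra{\Phi_2}$ and the maximally mixed state act very simply on product operators $X\otimes Y^\dagger$ with traceless factors. First I will check that the definition applies: the marginals of $\psi_\varepsilon^{\mathrm{depol}}$ are maximally mixed, because both $\ketbra{\Phi_2}$ and $\frac{\id}{2}\otimes\frac{\id}{2}$ have marginals $\id/2$ and the set of such states is convex.

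Next, fix $X,Y\in\Matrix_2$ with $\tau_2(X)=\tau_2(Y)=0$ and $\norm X_2=\norm Y_2=1$. By linearity,
\begin{equation*}
  \Tr\!\left[\psi_\varepsilon^{\mathrm{depol}}(X\otimes Y^\dagger)\right]
  =(1-\varepsilon)\bra{\Phi_2}X\otimes Y^\dagger\ket{\Phi_2}
   +\varepsilon\,\tau_2(X)\,\tau_2(Y^\dagger).
\end{equation*}
The second term vanishes because $X$ is traceless. For the first term I will use the transpose trick $(\id\otimes B)\ket{\Phi_2}=(B^{\mathsf T}\otimes\id)\ket{\Phi_2}$ in the computational basis. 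Combined with $\bra{\Phi_2}(C\otimes\id)\ket{\Phi_2}=\tau_2(C)$, it gives
\begin{equation*}
  \bra{\Phi_2}X\otimes Y^\dagger\ket{\Phi_2}
  =\tau_2\!\left(X(Y^\dagger)^{\mathsf T}\right)
  =\tau_2\!\left(\overline{Y}^{\,\dagger}X\right)\cdot\text{(up to ordering)}
  =\langle \overline{Y},X\rangle_2 ,
\end{equation*}
where $\overline{Y}$ is the entrywise complex conjugate. Here I use $(Y^\dagger)^{\mathsf T}=\overline{Y}$ and $\tau_2(X\overline{Y})=\tau_2(\overline{Y}X)=\langle \overline{Y}^{\,\dagger},X\rangle_2$. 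In every ordering, the Cauchy--Schwarz inequality in $L_2(\tau_2)$ applies, together with the fact that entrywise conjugation and adjoints preserve $\norm{\cdot}_2$. This bounds the modulus by $(1-\varepsilon)\norm X_2\norm Y_2=1-\varepsilon$, so $\rho_{\max}(\psi_\varepsilon^{\mathrm{depol}})\le 1-\varepsilon$.

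For the matching lower bound I will take $X=Y=\pauliZ$. This choice is traceless, real symmetric, and satisfies $\norm{\pauliZ}_2=1$. Then $\overline{Y}=\pauliZ$ and the inner product equals $\tau_2(\pauliZ^2)=1$. Equivalently, $\bra{\Phi_2}\pauliZ\otimes\pauliZ\ket{\Phi_2}=1$ because $\ket{\Phi_2}$ is a $+1$ eigenvector of $\pauliZ\otimes\pauliZ$. This attains the value $1-\varepsilon$.

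The only step needing care is the transpose/conjugation bookkeeping in the EPR identity. The remaining steps are routine, and the conclusion does not depend on the precise ordering, because every variant is an $L_2(\tau_2)$ inner product of operators whose normalized $2$-norm is $1$.
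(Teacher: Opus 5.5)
Your argument is correct and self-contained. The paper does not prove this fact; it cites \cite[Lemma~3.9]{QY21}. The route its own machinery suggests goes through the canonical form. Since $\pauliX^{\mathsf T}=\pauliX$, $\pauliY^{\mathsf T}=-\pauliY$ and $\pauliZ^{\mathsf T}=\pauliZ$, one has $\ketbra{\Phi_2}=\frac14\sum_{s\in G}P_s\otimes P_s^{\mathsf T}$. Hence $\psi_\varepsilon^{\mathrm{depol}}$ is already canonical, with $\lambda_s=1-\varepsilon$ for every $s\neq0$. Its correlation channel is the depolarizing map $\mathcal D_u$ with $e^{-u}=1-\varepsilon$, and \cref{prop:pauli-canonicalization} (or \cref{prop:maximal-correlation-channel-norm}) gives $\rho_{\max}=\max_{s\neq0}\abs{\lambda_s}=1-\varepsilon$.

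Your computation is this same fact unpacked by hand. The identity $\Tr[\psi_\varepsilon^{\mathrm{depol}}(X\otimes Y^\dagger)]=(1-\varepsilon)\tau_2(X\overline Y)$ for traceless $X,Y$ is exactly the Choi pairing $\tau_2(X\mathcal K(\overline Y))$ with $\mathcal K=(1-\varepsilon)\operatorname{id}$ on traceless operators. Your version needs nothing beyond the definition and the transpose trick. The canonical-form version buys generality: it gives $\rho_{\max}=\max_{s\neq0}\abs{\lambda_s}$ for every unital-noise state at once. In either route attainment must be checked separately, because the displayed chain in the proof of \cref{prop:pauli-canonicalization} only yields the upper bound. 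Your witness $X=Y=\pauliZ$ supplies it.

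One bookkeeping fix: the middle of your displayed chain is wrong as written. In general $\tau_2(X\overline Y)$ is not equal to $\tau_2(\overline Y^{\,\dagger}X)=\langle\overline Y,X\rangle_2$. For the admissible pair $X=\sqrt2\,\ketbratwo{1}{0}$, $Y=\sqrt2\,\ketbratwo{0}{1}$, the first equals $1$ and the second equals $0$. The correct identification, which you do state in the next sentence, is $\tau_2(X\overline Y)=\tau_2(\overline Y X)=\langle Y^{\mathsf T},X\rangle_2$. Cauchy--Schwarz with $\norm{Y^{\mathsf T}}_2=\norm{Y}_2$ then gives the bound. Drop the ``up to ordering'' hedge and write this single equality.
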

It is worth noticing that not all noise decreases the maximal correlation.
\begin{example}[Dephased EPR pair]
For $0\leq\varepsilon\leq1$, let
\begin{equation*}
  \psi_\varepsilon^{\mathrm{deph}}
  :=(1-\varepsilon)\ketbra{\Phi_2}
    +\frac{\varepsilon}{2}
      \bigl(\ketbra{00}+\ketbra{11}\bigr).
\end{equation*}
Take the Pauli observable
$\pauliZ=\ketbra{0}-\ketbra{1}$.  Since
$\tau_2(\pauliZ)=0$, $\norm{\pauliZ}_2=1$, and
$
  \Tr\!\left[
    \psi_\varepsilon^{\mathrm{deph}}(\pauliZ\otimes\pauliZ)
  \right]=1,
$
the definition of maximal correlation and the upper bound
$\rho_{\max}\leq1$ force
$
  \rho_{\max}(\psi_\varepsilon^{\mathrm{deph}})=1.
$
\end{example}

In this work, we associate with every two-qubit state $\psi$ having maximally
mixed marginals a \emph{correlation channel} $\mathcal K_\psi$, which recasts
its maximal correlation in operator-theoretic terms.  Under the normalized
Choi isomorphism~\cite[Section~2.2.2]{tqi}, the shared state $\psi$ is
precisely the Choi state of $\mathcal K_\psi$.  This formulation is closely
related to Beigi's
operator-Schmidt and super-operator description of quantum maximal
correlation~\cite[Theorem~1 and Section~III]{Beigi13}.

\begin{definition}[Correlation channel]\label{def:correlation channel}
  \label{def:correlation-channel}
  Let $\psi\in \Matrix_2\otimes \Matrix_2$ be a two-qubit state with maximally mixed
  marginals.  The \emph{correlation channel} associated with $\psi$ is the
  unique linear map $\mathcal K_\psi\colon \Matrix_2\to \Matrix_2$ satisfying
  \begin{equation*}
    \frac12\Tr\Br{X\mathcal K_\psi(Y)}
    =\Tr\!\left[(X\otimes Y^{\mathsf T})\psi\right],
    \qquad X,Y\in \Matrix_2.
  \end{equation*}
  For $n$ copies, we write
  \(
    \mathcal K_{\psi^{\otimes n}}=\mathcal K_\psi^{\otimes n}.
  \)
\end{definition}

\begin{prop}
  \label{prop:correlation-channel-choi-properties}
  Let $\psi$ and $\mathcal K_\psi$ be as in
  \cref{def:correlation-channel}.  Then:
  \begin{enumerate}
    \item[(i)] $\psi$ is the normalized Choi state of $\mathcal K_\psi$, i.e.
    \(
      \psi
      =
      (\mathcal K_\psi\otimes\operatorname{Id})(\ketbra{\Phi_2}).
    \)

    \item[(ii)] $\mathcal K_\psi$ is completely positive.

    \item[(iii)] $\mathcal K_\psi$ is unital and trace-preserving.
  \end{enumerate}
\end{prop}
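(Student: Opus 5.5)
The plan is to verify (i) directly from the defining identity and then read off (ii) and (iii) from it. For (i), set $\omega:=(\mathcal K_\psi\otimes\operatorname{Id})(\ketbra{\Phi_2})$ and show that $\omega$ and $\psi$ have the same expectation against every product operator $X\otimes Y^{\mathsf T}$. Since the operators $X\otimes Y^{\mathsf T}$ span $\Matrix_2\otimes\Matrix_2$, this forces $\omega=\psi$. Write $\ketbra{\Phi_2}=\frac12\sum_{i,j}|i\rangle\langle j|\otimes|i\rangle\langle j|$, so that
\[
  \Tr[(X\otimes Y^{\mathsf T})\omega]
  =\tfrac12\sum_{i,j}\Tr\bigl[X\mathcal K_\psi(|i\rangle\langle j|)\bigr]\,\langle j|Y^{\mathsf T}|i\rangle
  =\tfrac12\sum_{i,j}Y_{ij}\Tr\bigl[X\mathcal K_\psi(|i\rangle\langle j|)\bigr].
\]
By linearity of $\mathcal K_\psi$ this equals $\frac12\Tr[X\mathcal K_\psi(Y)]$, which is $\Tr[(X\otimes Y^{\mathsf T})\psi]$ by \cref{def:correlation-channel}. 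The only point needing care is the transpose bookkeeping, namely $\langle j|Y^{\mathsf T}|i\rangle=Y_{ij}$. The same computation, run in reverse, shows that $\mathcal K_\psi$ is well defined and unique: the pairing $(X,Z)\mapsto\frac12\Tr[XZ]$ is nondegenerate, so the right-hand side of the definition determines $\mathcal K_\psi(Y)$ for each $Y$.

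For (ii), use Choi's theorem: a linear map is completely positive if and only if its (normalized) Choi operator is positive semidefinite. By (i) the Choi operator of $\mathcal K_\psi$ is $\psi\geq0$, so $\mathcal K_\psi$ is completely positive.

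For (iii), first take $Y=\id$ in the defining identity. This gives $\frac12\Tr[X\mathcal K_\psi(\id)]=\Tr[(X\otimes\id)\psi]=\Tr[X\psi_A]=\frac12\Tr X$ for all $X$, using $\psi_A=\id/2$. Hence $\mathcal K_\psi(\id)=\id$, i.e.\ the map is unital. Next take $X=\id$. This gives $\frac12\Tr\mathcal K_\psi(Y)=\Tr[(\id\otimes Y^{\mathsf T})\psi]=\Tr[Y^{\mathsf T}\psi_B]=\frac12\Tr Y$, so $\mathcal K_\psi$ is trace-preserving.

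I do not expect any step to be a real obstacle. The only place where errors can creep in is the index and transpose convention in (i). Everything else follows directly from Choi's theorem and from the marginal conditions on $\psi$.
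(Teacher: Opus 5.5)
Your proposal is correct and follows essentially the same route as the paper: you identify $(\mathcal K_\psi\otimes\operatorname{Id})(\ketbra{\Phi_2})$ with $\psi$ by testing against the spanning family $X\otimes Y^{\mathsf T}$, with correct transpose bookkeeping. You then invoke Choi's theorem for complete positivity, and set $Y=\id$ and $X=\id$ in the defining pairing, using $\psi_A=\psi_B=\id/2$ and nondegeneracy of the trace pairing, to get unitality and trace preservation.
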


\begin{proof}
  Set
  $J=(\mathcal K_\psi\otimes\operatorname{Id})(\ketbra{\Phi_2})$.
  Expanding $\ketbra{\Phi_2}$ in matrix units and using
  \cref{def:correlation-channel}, we obtain, for all
  $X,Y\in \Matrix_2$,
  \[
    \Tr\!\left[(X\otimes Y^{\mathsf T})J\right]
    =\tau_2\!\left(X\mathcal K_\psi(Y)\right)
    =\Tr\!\left[(X\otimes Y^{\mathsf T})\psi\right].
  \]
  Such tensor products span $\Matrix_2\otimes \Matrix_2$, so $J=\psi$, proving (i).

  Since $J=\psi\geq0$, the Choi representation theorem
  \cite[Theorem~2.22]{tqi} implies that $\mathcal K_\psi$ is completely
  positive, which proves (ii).

  Finally, using $\psi_A=\psi_B=\id/2$ in the defining pairing yields
  \[
    \tau_2\!\left(X\mathcal K_\psi(\id)\right)
    =\Tr(X\psi_A)=\tau_2(X),
    \qquad
    \tau_2\!\left(\mathcal K_\psi(Y)\right)
    =\Tr(Y^{\mathsf T}\psi_B)=\tau_2(Y).
  \]
  The normalized trace pairing is nondegenerate, so the first identity,
  valid for every $X$, forces $\mathcal K_\psi(\id)=\id$.  The second
  identity is exactly trace preservation.  This proves (iii).
\end{proof}

Let
\begin{equation}
\label{eq:traceless-qubit-space}
  L_2^0(\tau_2):=\{X\in \Matrix_2:\tau_2(X)=0\}
\end{equation}
be the traceless subspace of $L_2(\tau_2)$, with the inherited inner product.

\begin{prop}
  \label{prop:maximal-correlation-channel-norm}
  If $\psi\in \Matrix_2\otimes \Matrix_2$ has maximally mixed marginals, then
  \begin{equation*}
    \rho_{\max}(\psi)
    =
    \norm{\mathcal K_\psi|_{L_2^0(\tau_2)}}_{2\to2}
    =\sup_{\substack{X\in L_2^0(\tau_2)\\ \norm{X}_{2}=1}}\norm{\mathcal K_\psi(X)}_{2}.
  \end{equation*}
\end{prop}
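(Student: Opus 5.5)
We express the bilinear form in the definition of $\rho_{\max}$ through the defining identity of the correlation channel in \cref{def:correlation-channel}, and then identify the supremum with an operator norm on the traceless subspace. Concretely, for $X,Y\in\Matrix_2$ the definition gives
\[
  \Tr\!\left[\psi(X\otimes Y^\dagger)\right]
  =\tau_2\!\left(X\,\mathcal K_\psi\bigl((Y^\dagger)^{\mathsf T}\bigr)\right)
  =\tau_2\!\left(X\,\mathcal K_\psi(\overline Y)\right),
\]
where $\overline Y$ is the entrywise complex conjugate in the computational basis. Write $Z:=\overline Y$. Then $\tau_2(Z)=\overline{\tau_2(Y)}$ and $\norm Z_2=\norm Y_2$. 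Hence $Y\mapsto Z$ is a bijection of the unit sphere of $L_2^0(\tau_2)$ onto itself. Similarly $X\mapsto W:=X^\dagger$ is such a bijection, and $\tau_2(X\mathcal K_\psi(Z))=\langle W,\mathcal K_\psi(Z)\rangle_2$. Therefore
\[
  \rho_{\max}(\psi)=\sup\Bigl\{\abs{\langle W,\mathcal K_\psi(Z)\rangle_2}\;:\;W,Z\in L_2^0(\tau_2),\ \norm W_2=\norm Z_2=1\Bigr\}.
\]

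Next we show that $\mathcal K_\psi$ maps $L_2^0(\tau_2)$ into itself. This follows from trace preservation in \cref{prop:correlation-channel-choi-properties}(iii): $\tau_2(\mathcal K_\psi(Z))=\tau_2(Z)=0$. So the restriction $\mathcal K_\psi|_{L_2^0}$ is a linear operator on the Hilbert space $L_2^0(\tau_2)$.

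By Cauchy--Schwarz, $\abs{\langle W,\mathcal K_\psi(Z)\rangle_2}\le\norm{\mathcal K_\psi(Z)}_2$ for unit $W$, so the supremum is at most $\norm{\mathcal K_\psi|_{L_2^0}}_{2\to2}$. For the reverse inequality, take any unit traceless $Z$ with $\mathcal K_\psi(Z)\ne0$. Since $\mathcal K_\psi(Z)$ is itself traceless, we may choose $W:=\mathcal K_\psi(Z)/\norm{\mathcal K_\psi(Z)}_2$, which attains $\norm{\mathcal K_\psi(Z)}_2$. This gives equality of the two suprema. They are maxima because the unit sphere of $L_2^0(\tau_2)$ is compact.

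The only delicate step is the bookkeeping of the transpose and adjoint conventions: one must check that $(Y^\dagger)^{\mathsf T}=\overline Y$ and that complex conjugation preserves tracelessness and the normalized $2$-norm. The invariance of the traceless subspace, which is what allows $W$ to be chosen as above, is supplied by trace preservation.
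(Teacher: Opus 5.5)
Your proposal is correct and follows the paper's proof: you rewrite the pairing via the defining identity of $\mathcal K_\psi$ as $\langle X^\dagger,\mathcal K_\psi(\overline Y)\rangle_2$, note that adjoint and entrywise conjugation map the traceless unit sphere onto itself, and use trace preservation to identify the supremum with the operator norm on $L_2^0(\tau_2)$. The only difference is that you write out the Cauchy--Schwarz bound and the attaining choice $W=\mathcal K_\psi(Z)/\norm{\mathcal K_\psi(Z)}_2$, which the paper leaves implicit.
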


\begin{proof}
  For every admissible pair $X,Y$ in the definition of maximal correlation,
  the Choi pairing gives
  \[
    \Tr\!\left[\psi(X\otimes Y^\dagger)\right]
    =\tau_2\!\left(X\mathcal K_\psi(\overline Y)\right)
    =\left\langle X^\dagger,\mathcal K_\psi(\overline Y)\right\rangle_2.
  \]
  Adjoint and entrywise conjugation each map the traceless unit sphere
  onto itself.  Since $\mathcal K_\psi$ preserves trace, maximizing the
  absolute value of this inner product gives its operator norm on
  $L_2^0(\tau_2)$.
\end{proof}

\paragraph{Canonicalization in Pauli coordinates.}
\begin{definition}
  \label{def:canonical-form}
  A state $\psi_{\boldsymbol\lambda}\in \Matrix_2\otimes \Matrix_2$ is in
  \emph{canonical form}, or is a \emph{canonical state}, if
  \begin{equation*}
    \psi_{\boldsymbol\lambda}
    =\frac14\sum_{s\in G}\lambda_sP_s\otimes P_s^{\mathsf T},
    \qquad \lambda_s\in\mathbb R,\quad \lambda_0=1.
  \end{equation*}
  Its \emph{canonical correlation channel} is the correlation channel
  $\mathcal K:=\mathcal K_{\psi_{\boldsymbol\lambda}}\colon \Matrix_2\to \Matrix_2$;
  equivalently, $\mathcal K(P_s)=\lambda_sP_s$ for every $s\in G$.
  A canonical form of a state $\psi$ is a canonical state locally
  unitarily equivalent to $\psi$; the canonical form of a state need not be unique.
\end{definition}


For the Pauli basis $(P_s)_{s\in\mathbb F_2^2}$, we call a state
$\psi_{\boldsymbol\lambda}=\frac14\sum_{s\in\mathbb F_2^2}
\lambda_sP_s\otimes P_s^{\mathsf T}$, with real $\lambda_s$ and
$\lambda_0=1$, a \emph{canonical state}; this expression is its
\emph{canonical form}.  Its \emph{canonical correlation channel} is
$\mathcal K_{\psi_{\boldsymbol\lambda}}$.
The following proposition shows that local unitaries transform any
$\psi\in \Matrix_2\otimes \Matrix_2$ with maximally mixed marginals into this form
without changing game values.
This technique has been used extensively in the literature~\cite{904522,BETHRUSKAI2002159,PhysRevA.83.052108,king2014hypercontractivity}.
\begin{restatable}[Local-unitary canonical form]{prop}{localunitarycanonicalform}
  \label{prop:pauli-canonicalization}
  Let $\psi\in \Matrix_2\otimes \Matrix_2$ satisfy
  $\psi_A=\psi_B=\id/2$, and let
  $(P_s)_{s\in\mathbb F_2^2}$ be the Pauli basis.  There
  are single-qubit local unitaries $U_A,U_B$, and real numbers
  $\boldsymbol{\lambda}=(\lambda_s)$, $s\in\mathbb F_2^2$, with
  $\lambda_0=1$, such that
  \begin{enumerate}
    \item The state $\psi$ is locally equivalent to a canonical state
    \[
    \psi_{\boldsymbol\lambda}
    :=(U_A\otimes U_B)\psi(U_A\otimes U_B)^\dagger
    =\frac14\sum_{s\in\mathbb F_2^2}\lambda_s P_s\otimes P_s^{\mathsf T}.
    \]
  \item For every $n\geq1$ and every
  $\alpha,\beta\in(\mathbb F_2^2)^n$, we have
  \(
    \Tr\!\left[
      (P_\alpha\otimes P_\beta^{\mathsf T})
      \psi_{\boldsymbol\lambda}^{\otimes n}
    \right]
    =\mathbf 1_{\{\alpha=\beta\}}
      \prod_{i=1}^n\lambda_{\alpha_i}.
  \)
  \item The canonical correlation channel
  $\mathcal K:=\mathcal K_{\psi_{\boldsymbol\lambda}}$ satisfies
  \[
    \mathcal K(P_s)=\lambda_sP_s,
    \qquad
    \rho_{\max}(\psi)=\max_{s\neq0}|\lambda_s|.
  \]
  \item The game value is invariant under the corresponding local-unitary
  change of basis.  More precisely, for every game
  $\sfG=(\mathcal X,\mathcal Y,\mathcal A,\mathcal B,\mathsf p_{\sfG},V)$,
  every $n\geq1$, and every POVM strategy $(\mathscr A,\mathscr B)$ for
  $\sfG$ on $\psi^{\otimes n}$, define
  $C_a^x
    :=U_A^{\otimes n}A_a^x(U_A^{\otimes n})^\dagger,$
  $D_b^y
    :=\overline{U_B}^{\otimes n}B_b^y
      (\overline{U_B}^{\otimes n})^\dagger.$
  Write $\mathscr C=(C_a^x)_{x,a}$ and
  $\mathscr D=(D_b^y)_{y,b}$.  Then $(\mathscr C,\mathscr D)$ is a POVM
  strategy on $\psi_{\boldsymbol\lambda}^{\otimes n}$, and
  \(
    \Val_{\sfG}^{\psi_{\boldsymbol\lambda}^{\otimes n}}
      (\mathscr C,\mathscr D)
    =
    \Val_{\sfG}^{\psi^{\otimes n}}(\mathscr A,\mathscr B).
  \)
  \end{enumerate}

\end{restatable}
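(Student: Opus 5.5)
Write $\psi=\frac14\sum_{s,r\in G}T_{sr}P_s\otimes P_r^{\mathsf T}$, with real coefficients because $\psi$ and every $P_s\otimes P_r^{\mathsf T}$ are Hermitian. The conditions $\psi_A=\psi_B=\id/2$ force $T_{00}=1$ and $T_{s0}=T_{0r}=0$ for $s,r\neq0$. The remaining data is a real $3\times3$ correlation matrix $T=(T_{sr})_{s,r\neq0}$.

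\emph{Step 1: diagonalize $T$.} A one-qubit unitary $U$ acts on the traceless Paulis by $UP_sU^\dagger=\sum_{s'}R_{s's}P_{s'}$ with $R\in\mathcal{SO}(3)$, and every rotation arises this way through the adjoint map $\U(2)\to\mathcal{SO}(3)$. On Bob's side, conjugating by $U_B$ and taking transposes gives $(U_B P_r U_B^\dagger)^{\mathsf T}=\overline{U_B}P_r^{\mathsf T}U_B^{\mathsf T}$. So conjugation of $\psi$ by $U_A\otimes U_B$ replaces $T$ by $R_ATR_B'^{\mathsf T}$, where $R_A$ and $R_B'$ range over all of $\mathcal{SO}(3)$; here $R_B'$ is the rotation associated to $\overline{U_B}$ or $U_B$, whichever the bookkeeping gives, since both range over $\mathcal{SO}(3)$. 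Take a real singular value decomposition $T=O_1\Sigma O_2^{\mathsf T}$ with $O_i\in O(3)$. Flip signs of columns so that $O_i\in\mathcal{SO}(3)$, absorbing the signs into $\Sigma$; this makes $\Sigma$ a real diagonal matrix with possibly negative entries. Choosing $R_A=O_1^{\mathsf T}$ and $R_B'=O_2^{\mathsf T}$ makes the new $T$ diagonal. Its entries are the $\lambda_s$ with $\lambda_0=1$, which proves item 1. The main care point is this sign and transpose bookkeeping: the twist is that Bob's operators appear transposed, and one must check that $P_s^{\mathsf T}=\pm P_s$ (only $\pauliY^{\mathsf T}=-\pauliY$) does not spoil the form $\sum\lambda_sP_s\otimes P_s^{\mathsf T}$. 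It does not, because the form is written with $P_s^{\mathsf T}$ and the relevant map $X\mapsto \overline U X U^{\mathsf T}$ preserves transposed Paulis as rotations.

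\emph{Items 2 and 3.} Item 2 follows by tensorizing. For one copy, $\Tr[(P_a\otimes P_b^{\mathsf T})\psi_{\bll}]=\frac14\sum_s\lambda_s\Tr(P_aP_s)\Tr(P_b^{\mathsf T}P_s^{\mathsf T})=\lambda_a\delta_{ab}$ by \cref{fact:pauli-orthonormal-basis}, and taking products over the $n$ copies gives the claim. For item 3, \cref{def:correlation-channel} together with item 2 gives $\tau_2(P_a\mathcal K(P_b))=\lambda_b\delta_{ab}$. Hence $\mathcal K(P_b)=\lambda_bP_b$. Then \cref{prop:maximal-correlation-channel-norm} yields $\rho_{\max}(\psi_{\bll})=\max_{s\neq0}|\lambda_s|$, since $\mathcal K$ is diagonal in the orthonormal traceless Pauli basis. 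Finally, $\rho_{\max}$ is invariant under local unitaries: the admissible sets in its definition are invariant under $X\mapsto U^\dagger XU$, using $\norm{\cdot}_2$ and trace invariance. So $\rho_{\max}(\psi)=\rho_{\max}(\psi_{\bll})$.

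\emph{Item 4.} Unitary conjugation preserves POVMs, so $(\mathscr C,\mathscr D)$ is a POVM strategy. For the values, compute
\[
\Tr[(C_a^x\otimes (D_b^y)^{\mathsf T})\psi_{\bll}^{\otimes n}]
=\Tr[(C_a^x\otimes (D_b^y)^{\mathsf T})(U_A\otimes U_B)^{\otimes n}\psi^{\otimes n}(U_A\otimes U_B)^{\dagger\otimes n}].
\]
Here $(D_b^y)^{\mathsf T}=(\overline{U_B}^{\otimes n}B\,U_B^{\mathsf T\otimes n})^{\mathsf T}=U_B^{\otimes n}B^{\mathsf T}U_B^{\dagger\otimes n}$. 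Cyclicity of the trace then reduces the right-hand side to $\Tr[(A_a^x\otimes(B_b^y)^{\mathsf T})\psi^{\otimes n}]$. Summing against $\mathsf p_{\sfG}V$ in \eqref{eq:strategy-value} gives equality of the values.
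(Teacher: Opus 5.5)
Your proof is correct and follows essentially the same route as the paper. You bring $\psi$ to a local-unitary normal form with a diagonal correlation matrix, then check items 2--4 directly from Pauli orthogonality, the defining pairing of $\mathcal K$, and cyclicity of the trace. The one difference is that you derive the normal form yourself, via a signed real SVD and surjectivity of $\U(2)\to\mathcal{SO}(3)$, with Bob's transposes handled through $\overline{U_B}$. The paper instead cites the Bloch normal form of \cite{PhysRevA.83.052108} and uses $\pauliY^{\mathsf T}=-\pauliY$ to rewrite $\sum c_sP_s\otimes P_s$ as $\sum\lambda_sP_s\otimes P_s^{\mathsf T}$.
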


\begin{proof}
    By \cite[Equation 15]{PhysRevA.83.052108},
    using local unitaries $U_A$ and $U_B$,
    we can recast the density matrix for an arbitrary two–qubit state
    in the Bloch normal form
    \begin{equation*}
    \psi_{\boldsymbol\lambda}
    =(U_A\otimes U_B)\psi(U_A\otimes U_B)^\dagger
    =\frac14\br{\id_{4} 
      + \sum_{s\neq 0}a_sP_s\otimes\id_2
      + \sum_{s\neq 0}b_s\id_2\otimes P_s
      + \sum_{s\neq 0}c_sP_s\otimes P_s}.
    \end{equation*}
    Since $\psi$ has maximally mixed marginals, so does $\psi_{\boldsymbol\lambda}$.
    This implies that $a_s = b_s = 0$ for all $s\in\mathbb{F}_2^2$ and $s\neq 0$.
    Since $X=X^T, Y=-Y^T$, and $Z=Z^T$,
    this concludes item 1.
    We can then verify
    item 2, item 3, and item 4 easily by their definition.
    
    In particular,
    for item 2, we have
    for $s,t\in\mathbb F_2^2$, the diagonal form reads
$\Tr[(P_s\otimes P_t^{\mathsf T})\psi_{\boldsymbol\lambda}]
=\mathbf1_{\{s=t\}}\lambda_s$.  Taking the product of this identity over
the $n$ tensor factors proves the second assertion.

For item 3, for any $s, t\in\mathbb{F}^2_2$.
By the definition of correlation channel \cref{def:correlation channel},
\begin{equation*}
    \frac{1}{2}\Tr\Br{P_t\mathcal{K}(P_s)} = \Tr\Br{\br{P_t\otimes P_s^T}\psi_{\boldsymbol{\lambda}}} = \mathbf1_{\{s=t\}}\lambda_s,
\end{equation*}
which implies $\mathcal{K}(P_s) = \lambda_sP_s$.
The maximal correlation is invariant under local transformations.
So, by
\cref{prop:maximal-correlation-channel-norm},
\begin{equation*}
  \rho_{\max}(\psi) = \rho_{\max}(\psi_{\boldsymbol{\lambda}}) =\sup_{\substack{X: \tau\Br{X}=0\\ \norm{X}_{2}=1}}\norm{\mathcal K(X)}_{2} = \sup_{\substack{X: \tau\Br{X}=0\\ \norm{X}_{2}=1}}\sqrt{\sum_{s\neq 0}\abs{\lambda_s}^2\abs{\widehat{X}(s)}^2} \le \max_{s\neq 0}\abs{\lambda_s}.
\end{equation*}

For item 4,
fix $n\geq1$ and a POVM strategy $(\mathscr A,\mathscr B)$ on
$\psi^{\otimes n}$, and define $C_a^x,D_b^y$ as in the statement.  Unitary
conjugation preserves positivity, and
$\sum_aC_a^x=U_A^{\otimes n}\id(U_A^{\otimes n})^\dagger=\id$.
Similarly, $\sum_bD_b^y=\id$.
Moreover,
$(D_b^y)^{\mathsf T}
=U_B^{\otimes n}(B_b^y)^{\mathsf T}(U_B^{\otimes n})^\dagger$.
Thus
$(\mathscr C,\mathscr D)$ is valid a POVM strategy.  For every $x,y,a,b$,
  the definitions of $\psi_{\boldsymbol\lambda}$ and cyclicity of the trace
  imply
\[
  \begin{aligned}
    \Tr\!\left[
      (C_a^x\otimes(D_b^y)^{\mathsf T})\psi_{\boldsymbol\lambda}^{\otimes n}
    \right]
    &=\Tr\!\left[
      (U_A^{\otimes n}\otimes U_B^{\otimes n})
      (A_a^x\otimes(B_b^y)^{\mathsf T})\psi^{\otimes n}
      (U_A^{\otimes n}\otimes U_B^{\otimes n})^\dagger
    \right] \\
    &=\Tr\!\left[
      (A_a^x\otimes(B_b^y)^{\mathsf T})\psi^{\otimes n}
    \right].
  \end{aligned}
\]
over $x,y,a,b$ to obtain the fourth assertion.
\end{proof}



\subsection{Correlation Function}
\label{subsec:bob-coordinate-convention}

Fix the canonical state $\psi_{\boldsymbol\lambda}$ from
\cref{prop:pauli-canonicalization}, put
$\mathcal K:=\mathcal K_{\psi_{\boldsymbol\lambda}}$, and write
\begin{equation}
\label{eq:canonical-correlation-weights}
  \mathcal K_n:=\mathcal K^{\otimes n},
  \qquad
  w_{\boldsymbol\lambda}(\alpha)
  :=\prod_{i=1}^n\lambda_{\alpha_i}.
\end{equation}
The canonical form and the diagonal action of the correlation channel are
\begin{equation*}
  \psi_{\boldsymbol\lambda}^{\otimes n}
  =\frac{1}{4^n}\sum_{\alpha\in G^n}
    w_{\boldsymbol\lambda}(\alpha)
    P_\alpha\otimes P_\alpha^{\mathsf T},
  \qquad
  \mathcal K_n(P_\alpha)
  =w_{\boldsymbol\lambda}(\alpha)P_\alpha.
\end{equation*}

\begin{definition}
\label{def:correlation-functional}
For $X,Y\in\Matrix_2^{\otimes n}$, define
the \emph{correlation function} 
\begin{align*}
  \Corr_{\psi,n}(X,Y)
  &:=\tau_{2^n}\!\left(X\mathcal K_n(Y)\right) \notag\\
  &=\Tr\!\left[(X\otimes Y^{\mathsf T})
    \psi_{\boldsymbol\lambda}^{\otimes n}\right] \notag\\
  &=\sum_{\alpha\in G^n}
    w_{\boldsymbol\lambda}(\alpha)
    \widehat X(\alpha)\widehat Y(\alpha).
\end{align*}
\end{definition}
The second expression is the tensorized Choi representation, and the final
one follows from the canonical Pauli expansion above and Pauli
orthogonality.

\begin{prop}
  \label{prop:corr-contraction}
  Let $n\geq1$, and let $\Corr_{\psi,n}$ be the correlation function
  defined in \cref{def:correlation-functional} for $n$ copies of the
  canonical state $\psi_{\boldsymbol\lambda}$.
  Then, for
  every $X,Y\in \Matrix_2^{\otimes n}$,
\[
  \abs{\Corr_{\psi,n}(X,Y)}
  \leq\norm X_2\norm Y_2.
\]
\end{prop}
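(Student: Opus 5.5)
The plan is to use the Pauli-coefficient formula for the correlation function from \cref{def:correlation-functional} and then apply Cauchy--Schwarz together with Parseval. The only ingredient beyond bookkeeping is that every weight satisfies $|w_{\boldsymbol\lambda}(\alpha)|\leq1$.

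First, I would establish that $|\lambda_s|\leq1$ for every $s\in G$. For $s=0$ this holds because $\lambda_0=1$. For $s\neq0$, item~3 of \cref{prop:pauli-canonicalization} gives $|\lambda_s|\leq\max_{s'\neq0}|\lambda_{s'}|=\rho_{\max}(\psi)$, and Beigi's bound gives $\rho_{\max}(\psi)\leq1$. For a self-contained check, one can also take $X=Y=P_s$ with $s\neq0$ in the definition of $\rho_{\max}$ applied to the canonical state. These are traceless with unit normalized $2$-norm, and item~2 of \cref{prop:pauli-canonicalization} gives $\Tr[(P_s\otimes P_s^{\mathsf T})\psi_{\boldsymbol\lambda}]=\lambda_s$. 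Since $P_s^\dagger=P_s$ and $P_s^{\mathsf T}=\pm P_s$, this value is $\pm\lambda_s$ up to the transpose convention, so $|\lambda_s|\leq\rho_{\max}\leq1$. Taking products over coordinates then gives $|w_{\boldsymbol\lambda}(\alpha)|=\prod_i|\lambda_{\alpha_i}|\leq1$ for all $\alpha\in G^n$.

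Second, I would write
\[
  \abs{\Corr_{\psi,n}(X,Y)}
  \leq\sum_{\alpha\in G^n}\abs{\widehat X(\alpha)}\,\abs{\widehat Y(\alpha)}
  \leq\Bigl(\sum_\alpha\abs{\widehat X(\alpha)}^2\Bigr)^{1/2}
      \Bigl(\sum_\alpha\abs{\widehat Y(\alpha)}^2\Bigr)^{1/2},
\]
using the weight bound in the first step and Cauchy--Schwarz in the second. Parseval from \cref{prop:scalar-pauli-identities} identifies the right-hand side with $\norm X_2\norm Y_2$. Note that no conjugation is needed on $\widehat X$: the absolute values already handle this, so the bilinear rather than sesquilinear form of the sum causes no trouble.

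The main obstacle is the first step, specifically ensuring that the transpose and adjoint conventions in the definition of $\rho_{\max}$ match the canonical pairing. The simplest route is to cite item~3 of \cref{prop:pauli-canonicalization} together with $0\leq\rho_{\max}\leq1$ rather than recomputing. A second option bypasses $\rho_{\max}$ entirely: $\mathcal K$ is a unital trace-preserving channel by \cref{prop:correlation-channel-choi-properties}, so it is a contraction in the $2$-norm, and $\mathcal K(P_s)=\lambda_sP_s$ then forces $|\lambda_s|\leq1$.
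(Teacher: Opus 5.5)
Your proposal is correct and follows the paper's proof: bound $|w_{\boldsymbol\lambda}(\alpha)|\leq1$, then apply Cauchy--Schwarz and Pauli Parseval. The only difference is how you obtain $|\lambda_s|\leq1$. The paper argues directly that $|\lambda_s|=\abs{\Tr[(P_s\otimes P_s^{\mathsf T})\psi_{\boldsymbol\lambda}]}\leq\norm{P_s\otimes P_s^{\mathsf T}}_\infty=1$ because $\psi_{\boldsymbol\lambda}$ is a state. This is simpler than going through $\rho_{\max}\leq1$ or the $2$-norm contractivity of the unital channel $\mathcal K$, though both of your routes are valid.
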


\begin{proof}
  Each canonical coefficient satisfies
  \[
    |\lambda_s|
    =\abs{\Tr\!\left[
      (P_s\otimes P_s^{\mathsf T})\psi_{\boldsymbol\lambda}
    \right]}
    \leq\norm{P_s\otimes P_s^{\mathsf T}}_\infty
    =1,
  \]
  because $\psi_{\boldsymbol\lambda}$ is a state and every Pauli matrix is
  unitary.  Consequently,
  $|w_{\boldsymbol\lambda}(\alpha)|
   =\prod_i|\lambda_{\alpha_i}|\leq1$.

  By the Pauli expansion of the correlation function,
  \[
    \Corr_{\psi,n}(X,Y)
    =\sum_{\alpha\in G^n}
      w_{\boldsymbol\lambda}(\alpha)
      \widehat X(\alpha)\widehat Y(\alpha).
  \]
  We estimate this sum in two steps:
  \begin{align*}
    \abs{\Corr_{\psi,n}(X,Y)}
    &\leq
      \left(\sum_{\alpha\in G^n}|\widehat X(\alpha)|^2\right)^{1/2}
      \left(\sum_{\alpha\in G^n}
        |w_{\boldsymbol\lambda}(\alpha)|^2
        |\widehat Y(\alpha)|^2\right)^{1/2}\\
    &\leq
      \left(\sum_{\alpha\in G^n}|\widehat X(\alpha)|^2\right)^{1/2}
      \left(\sum_{\alpha\in G^n}|\widehat Y(\alpha)|^2\right)^{1/2},
  \end{align*}
  The first inequality is Cauchy--Schwarz, and the second uses
  $|w_{\boldsymbol\lambda}(\alpha)|\leq1$.  Pauli Parseval identifies the
  two factors in the last line with $\norm X_2$ and $\norm Y_2$,
  respectively.
\end{proof}

By \eqref{eq:strategy-value}, we have
\begin{prop}
  \label{prop:game-value-correlation-functional}
  Fix a game
  $\sfG=(\mathcal X,\mathcal Y,\mathcal A,\mathcal B,\mathsf p_{\sfG},V)$.
  Let $n\geq1$, and let $\mathscr A=(A_a^x)_{x,a}$ and
  $\mathscr B=(B_b^y)_{y,b}$ be a POVM strategy for $\sfG$ on
  $\psi_{\boldsymbol\lambda}^{\otimes n}$.  Then
  \begin{equation*}
    \Val_{\sfG}^{\psi_{\boldsymbol\lambda}^{\otimes n}}
      (\mathscr A,\mathscr B)
    =\sum_{x,y,a,b}\mathsf p_{\sfG}(x,y)V(a,b\mid x,y)
      \Corr_{\psi,n}(A_a^x,B_b^y).
  \end{equation*}
\end{prop}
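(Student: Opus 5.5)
This identity follows by substituting the correlation function of \cref{def:correlation-functional} into the definition \eqref{eq:strategy-value} of the strategy value, term by term. Concretely, for fixed $(x,y,a,b)$, the value formula \eqref{eq:strategy-value} evaluated on the state $\psi_{\boldsymbol\lambda}^{\otimes n}$ contains the summand
\[
  \mathsf p_{\sfG}(x,y)V(a,b\mid x,y)\,
  \Tr\!\left[(A_a^x\otimes(B_b^y)^{\mathsf T})\psi_{\boldsymbol\lambda}^{\otimes n}\right].
\]
The second line of \cref{def:correlation-functional} identifies this trace exactly with $\Corr_{\psi,n}(A_a^x,B_b^y)$. Summing over $x,y,a,b$ then gives the claimed formula.

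That second line is itself justified in the paper, and I would make this explicit. The tensorized correlation channel is $\mathcal K_n=\mathcal K^{\otimes n}$, as in \cref{def:correlation-channel}. Its defining pairing with the normalized trace $\tau_{2^n}$ follows by tensorizing the one-qubit identity
\[
  \tfrac12\Tr\!\left[X\mathcal K(Y)\right]
  =\Tr\!\left[(X\otimes Y^{\mathsf T})\psi_{\boldsymbol\lambda}\right]
\]
on product operators and extending by linearity, since product operators span $\Matrix_2^{\otimes n}$ and transposition is compatible with tensor products. As an alternative check, one can use the Pauli expansion of $\psi_{\boldsymbol\lambda}^{\otimes n}$ together with item 2 of \cref{prop:pauli-canonicalization}.

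No genuine obstacle is expected. The only point requiring care is the transpose convention, namely that Bob's operator enters as $(B_b^y)^{\mathsf T}$. This matches both \eqref{eq:strategy-value} and the Choi pairing, so no extra transpose or conjugation appears.
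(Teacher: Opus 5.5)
Your proposal is correct and is the same argument as the paper's. The paper simply cites \eqref{eq:strategy-value} and identifies each trace $\Tr[(A_a^x\otimes(B_b^y)^{\mathsf T})\psi_{\boldsymbol\lambda}^{\otimes n}]$ with $\Corr_{\psi,n}(A_a^x,B_b^y)$ via the second expression in \cref{def:correlation-functional}. Your extra justification of that expression, by tensorizing the one-qubit Choi pairing, spells out a step the paper leaves implicit.
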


\subsection{POVM block operators}
\label{subsec:game-value-for-block-operators}


In this work we will be working heavily on POVM measurements
and their approximations.
A POVM measurement is a set of positive semidefinite operators.
To, handle the operators in a POVM measurement as a single mathematical object,
and to enforce the positive semidefinite property,
in this work we will represent POVM measurements by \emph{block operators}:
For a block operator $Q=\sum_a|a\rangle\otimes Q_a$, the operators
$Q_a^\dagger Q_a$ are positive semidefinite and satisfy
\begin{equation*}
  \sum_aQ_a^\dagger Q_a=Q^\dagger Q.
\end{equation*}
Thus $(Q_a^\dagger Q_a)_a$ is a POVM measurement whenever $Q^\dagger Q=\id$.  We call the
block operators $Q$ arising in the approximation steps below \emph{block
approximants}.
Fix a game
$\sfG=(\mathcal X,\mathcal Y,\mathcal A,\mathcal B,\mathsf p_{\sfG},V)$.
For a POVM strategy $(\mathscr A,\mathscr B)$ for $\sfG$,
its \emph{block operator} representation is thus
\begin{equation}
\label{eq:povm-block-operators}
  W_{\mathscr A}^x
  :=\sum_{a\in\mathcal A}|a\rangle\otimes\sqrt{A_a^x},
  \qquad
  W_{\mathscr B}^y
  :=\sum_{b\in\mathcal B}|b\rangle\otimes\sqrt{B_b^y}.
\end{equation}
They satisfy
$(W_{\mathscr A}^x)^\dagger W_{\mathscr A}^x=\id$ and
$(W_{\mathscr B}^y)^\dagger W_{\mathscr B}^y=\id$.
For block operators
$Q^x=\sum_{a\in\mathcal A}|a\rangle\otimes Q_a^x$ and
$R^y=\sum_{b\in\mathcal B}|b\rangle\otimes R_b^y$, they define the POVM strategy
\begin{equation}
\label{eq:block-operator-families}
  \begin{aligned}
  \mathscr Q&=\bigl(\bigl((Q_a^x)^\dagger Q_a^x\bigr)_{a\in\mathcal A}\bigr)_{x\in\mathcal X},\\
  \mathscr R&=\bigl(\bigl((R_b^y)^\dagger R_b^y\bigr)_{b\in\mathcal B}\bigr)_{y\in\mathcal Y}.
  \end{aligned}
\end{equation}
The value formula in \eqref{eq:strategy-value} gives
\begin{equation}
\label{eq:block-game-value}
  \begin{aligned}
  \Val_{\sfG}^\psi(\mathscr Q,\mathscr R)
  =\sum_{x,y,a,b}&\mathsf p_{\sfG}(x,y)V(a,b\mid x,y)\\
  &\cdot\Tr\Br{\psi\bigl((Q_a^x)^\dagger Q_a^x\otimes
       ((R_b^y)^\dagger R_b^y)^{\mathsf T}\bigr)}.
  \end{aligned}
\end{equation}
If $(Q^x)^\dagger Q^x=\id$ and $(R^y)^\dagger R^y=\id$ questionwise, these
families are POVMs.  In particular, taking $Q^x=W_{\mathscr A}^x$ and
$R^y=W_{\mathscr B}^y$ gives
\begin{equation*}
  \mathscr Q=\mathscr A,\qquad \mathscr R=\mathscr B.
\end{equation*}
For approximately normalized families $\mathscr Q$ and $\mathscr R$,
the same value formula applies, although its value need not be a probability.


For fixed questions $(x,y)\in\mathcal X\times\mathcal Y$ and block operators
$Q_1,Q_2,R_1,R_2$ on $n$ copies, define the \emph{block correlation function} as
\begin{equation}
\label{eq:block-correlation-functional}
  \cG_{x,y}^{\psi^{\otimes n}}(Q_1,Q_2;R_1,R_2)
  :=\sum_{a\in\mathcal A,\,b\in\mathcal B}V(a,b\mid x,y)
  \Tr\!\left[
    \psi^{\otimes n}
    \bigl(Q_{1,a}^\dagger Q_{2,a}\otimes
          (R_{1,b}^\dagger R_{2,b})^{\mathsf T}\bigr)
  \right].
\end{equation}
Then
\begin{equation*}
  \Val_{\sfG}^{\psi^{\otimes n}}(\mathscr Q,\mathscr R)
  =\mathbb E_{(x,y)\sim\mathsf p_{\sfG}}\cG_{x,y}^{\psi^{\otimes n}}(Q^x,Q^x;R^y,R^y).
\end{equation*}
Recall that, for a block operator $Q$ on $n$ qubits,
\begin{equation*}
  \norm Q_4=\tau_{2^n}\!\left((Q^\dagger Q)^2\right)^{1/4}.
\end{equation*}
Similar to \cref{prop:corr-contraction}, we have the following bound for the
block correlation function:
\begin{lemma}
  \label{lem:game-expression-holder}
  Fix a game
  $\sfG=(\mathcal X,\mathcal Y,\mathcal A,\mathcal B,\mathsf p_{\sfG},V)$.
  Let $\psi\in\Matrix_2\otimes\Matrix_2$ be a state with
  $\psi_A=\psi_B=\id/2$.
  Let $n\geq1$ and $(x,y)\in\mathcal X\times\mathcal Y$.  For $j=1,2$, let
  $Q_j\colon(\mathbb C^2)^{\otimes n}\to
    \mathbb C^{\mathcal A}\otimes(\mathbb C^2)^{\otimes n}$
  and $R_j\colon(\mathbb C^2)^{\otimes n}\to
    \mathbb C^{\mathcal B}\otimes(\mathbb C^2)^{\otimes n}$
  be block operators with blocks $(Q_{j,a})_{a\in\mathcal A}$ and
  $(R_{j,b})_{b\in\mathcal B}$, respectively.
   For the function 
  $\cG_{x,y}^{\psi^{\otimes n}}$ in
  \eqref{eq:block-correlation-functional}, one has
  \begin{equation*}
    \abs{\cG_{x,y}^{\psi^{\otimes n}}(Q_1,Q_2;R_1,R_2)}
    \leq\prod_{j=1}^2\norm{Q_j}_4\norm{R_j}_4,
  \end{equation*}
\end{lemma}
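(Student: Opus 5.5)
The plan is to write the block correlation function as a single matrix element of the state $\psi^{\otimes n}$, sandwiched between two block operators and a diagonal $0/1$ operator encoding $V$. Cauchy--Schwarz with respect to the positive functional $X\mapsto\Tr[\psi^{\otimes n}X]$ then separates the four block operators. Each resulting factor has the form $\Tr[\psi^{\otimes n}(A\otimes B^{\mathsf T})]$ with $A,B\geq0$, and we bound it by normalized $2$-norms.

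\emph{Step 1 (factorization).} For each $a,b$ we have $(R_{1,b}^\dagger R_{2,b})^{\mathsf T}=R_{2,b}^{\mathsf T}\,\overline{R_{1,b}}$. Hence
\[
Q_{1,a}^\dagger Q_{2,a}\otimes(R_{1,b}^\dagger R_{2,b})^{\mathsf T}
=(Q_{1,a}\otimes \overline{R_{2,b}})^\dagger(Q_{2,a}\otimes\overline{R_{1,b}}).
\]
Define the conjugate block operators $\overline{R_j}:=\sum_b|b\rangle\otimes\overline{R_{j,b}}$, and the block operators
\[
Z:=\sum_{a,b}|a,b\rangle\otimes(Q_{1,a}\otimes\overline{R_{2,b}}),\qquad W:=\sum_{a,b}|a,b\rangle\otimes(Q_{2,a}\otimes\overline{R_{1,b}}).
\]
Let $\Delta_V:=\sum_{a,b}V(a,b\mid x,y)\ketbra{a,b}\otimes\id$. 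Then
\[
\cG_{x,y}^{\psi^{\otimes n}}(Q_1,Q_2;R_1,R_2)=\Tr\!\left[\psi^{\otimes n}Z^\dagger\Delta_V W\right].
\]
We can check $Z^\dagger Z=Q_1^\dagger Q_1\otimes\overline{R_2}^\dagger\overline{R_2}=Q_1^\dagger Q_1\otimes(R_2^\dagger R_2)^{\mathsf T}$. Similarly $W^\dagger W=Q_2^\dagger Q_2\otimes(R_1^\dagger R_1)^{\mathsf T}$.

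\emph{Step 2 (Cauchy--Schwarz).} Since $\psi^{\otimes n}\geq0$, the sesquilinear form $(Z,W)\mapsto\Tr[\psi^{\otimes n}Z^\dagger W]$ is positive semidefinite. Applying Cauchy--Schwarz to the pair $Z$ and $\Delta_V W$, and using $\Delta_V^2=\Delta_V\leq\id$, gives
\[
|\cG|\leq \Tr[\psi^{\otimes n}Z^\dagger Z]^{1/2}\,\Tr[\psi^{\otimes n}W^\dagger W]^{1/2}.
\]

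\emph{Step 3 (2-norm bound).} For $A,B\in\Matrix_2^{\otimes n}$, \cref{def:correlation-channel} tensorized gives $\Tr[\psi^{\otimes n}(A\otimes B^{\mathsf T})]=\tau_{2^n}(A\,\mathcal K_\psi^{\otimes n}(B))$. To bound this by $\norm A_2\norm B_2$, conjugate by the local unitaries of \cref{prop:pauli-canonicalization}: this reduces to the canonical state and \cref{prop:corr-contraction}, and the conjugations preserve normalized Schatten norms. (Alternatively, the unital trace-preserving channel $\mathcal K_\psi^{\otimes n}$ is a $2$-norm contraction by \cref{prop:correlation-channel-choi-properties}.) Applying this with $A=Q_1^\dagger Q_1$ and $B=R_2^\dagger R_2$ yields
\[
\Tr[\psi^{\otimes n}Z^\dagger Z]\leq\norm{Q_1^\dagger Q_1}_2\norm{R_2^\dagger R_2}_2=\norm{Q_1}_4^2\norm{R_2}_4^2,
\]
by the identity $\norm W_4^2=\norm{W^\dagger W}_2$. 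Likewise $\Tr[\psi^{\otimes n}W^\dagger W]\leq\norm{Q_2}_4^2\norm{R_1}_4^2$. Taking square roots gives the claim.

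The only delicate point is Step 1: the transpose convention for Bob and the ordering of factors. One must pair $Q_1$ with $R_2$, not $R_1$, so that $Z^\dagger Z$ and $W^\dagger W$ come out as tensor products of positive operators to which Step 3 applies. The remaining steps are routine.
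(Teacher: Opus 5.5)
Your proof is correct and follows the paper's argument: the same factorization $(Q_{1,a}\otimes\overline{R_{2,b}})^\dagger(Q_{2,a}\otimes\overline{R_{1,b}})$ pairing $Q_1$ with $R_2$, followed by Cauchy--Schwarz with respect to $\psi^{\otimes n}$. The differences are cosmetic. The paper applies Cauchy--Schwarz termwise and then once more to the $V$-weighted sum, which your projector $\Delta_V$ does in one step. It also bounds $\Tr[\psi^{\otimes n}(Q^\dagger Q\otimes(R^\dagger R)^{\mathsf T})]$ by a further Cauchy--Schwarz against the maximally mixed marginals, giving $\norm{Q}_4^2\norm{R}_4^2$ directly, whereas you go through the correlation channel.
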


\begin{proof}
  For each answer pair $(a,b)\in\mathcal A\times\mathcal B$,
  \begin{align*}
    Q_{1,a}^\dagger Q_{2,a}\otimes
      (R_{1,b}^\dagger R_{2,b})^{\mathsf T}
    &=Q_{1,a}^\dagger Q_{2,a}\otimes
      R_{2,b}^{\mathsf T}\overline{R_{1,b}}\\
    &=(Q_{1,a}\otimes\overline{R_{2,b}})^\dagger
      (Q_{2,a}\otimes\overline{R_{1,b}}).
  \end{align*}
  Thus Cauchy--Schwarz gives
  \begin{align*}
    &\left|\Tr\!\left[\psi^{\otimes n}
      \bigl(Q_{1,a}^\dagger Q_{2,a}\otimes
        (R_{1,b}^\dagger R_{2,b})^{\mathsf T}\bigr)\right]\right|\\
    &\qquad\leq
      \Tr\!\left[\psi^{\otimes n}
        \bigl(Q_{1,a}^\dagger Q_{1,a}\otimes
          (R_{2,b}^\dagger R_{2,b})^{\mathsf T}\bigr)\right]^{1/2}\\
    &\qquad\quad\times
      \Tr\!\left[\psi^{\otimes n}
        \bigl(Q_{2,a}^\dagger Q_{2,a}\otimes
          (R_{1,b}^\dagger R_{1,b})^{\mathsf T}\bigr)\right]^{1/2}.
  \end{align*}
  Multiply by $V(a,b\mid x,y)$ and sum over $(a,b)$.  The triangle
  inequality and another application of Cauchy--Schwarz yield
  \begin{align*}
    \abs{\cG_{x,y}^{\psi^{\otimes n}}(Q_1,Q_2;R_1,R_2)}
    &\leq
    \left(
      \sum_{a,b}V(a,b\mid x,y)
      \Tr\!\left[\psi^{\otimes n}
        \bigl(Q_{1,a}^\dagger Q_{1,a}\otimes
          (R_{2,b}^\dagger R_{2,b})^{\mathsf T}\bigr)\right]
    \right)^{1/2}\\
    &\quad\times
    \left(
      \sum_{a,b}V(a,b\mid x,y)
      \Tr\!\left[\psi^{\otimes n}
        \bigl(Q_{2,a}^\dagger Q_{2,a}\otimes
          (R_{1,b}^\dagger R_{1,b})^{\mathsf T}\bigr)\right]
    \right)^{1/2}\\
    &\leq
    \Tr\!\left[\psi^{\otimes n}
      \bigl(Q_1^\dagger Q_1\otimes
        (R_2^\dagger R_2)^{\mathsf T}\bigr)\right]^{1/2}
    \Tr\!\left[\psi^{\otimes n}
      \bigl(Q_2^\dagger Q_2\otimes
        (R_1^\dagger R_1)^{\mathsf T}\bigr)\right]^{1/2},
  \end{align*}
  where the second inequality uses $0\leq V(a,b\mid x,y)\leq1$.

  For any block operators $Q$ and $R$, Cauchy--Schwarz yields
  \begin{align*}
    &\Tr\!\left[\psi^{\otimes n}
      \bigl(Q^\dagger Q\otimes(R^\dagger R)^{\mathsf T}\bigr)\right]\\
    &\qquad\leq
      \Tr\!\left[\psi_A^{\otimes n}(Q^\dagger Q)^2\right]^{1/2}
      \Tr\!\left[\psi_B^{\otimes n}
        \bigl((R^\dagger R)^{\mathsf T}\bigr)^2\right]^{1/2}\\
    &\qquad=
      \norm Q_4^2\norm R_4^2,
  \end{align*}
  where the last equality uses
  $\psi_A=\psi_B=\id/2$ and invariance of the trace under transposition.
  Applying this estimate to the two factors above proves
  \cref{lem:game-expression-holder}.
\end{proof}

\subsection{Quantum Markov Semigroups, Dirichlet Forms, and Entropy}

We follow the terminology of \cite[Section~2.3]{BDR20}, specialized here
to the tracial setting.

\begin{definition}[Tracial quantum Markov semigroup]
  Let $\mathcal M=\Matrix_d$ carry its normalized trace $\tau=\tau_d$.
  A \emph{tracial quantum Markov semigroup} (QMS) on $\mathcal M$ is a
  family $(\mathcal T_u)_{u\geq0}$ of completely positive, unital, and
  trace-preserving maps of the form
  \begin{equation*}
    \mathcal T_u=e^{-u\cL},\qquad u\geq0,
  \end{equation*}
  where the linear map $\cL\colon\mathcal M\to\mathcal M$ is called its
  \emph{Lindblad generator}.  In particular,
  $\mathcal T_0=\operatorname{id}$ and
  $\mathcal T_{u+v}=\mathcal T_u\circ\mathcal T_v$ for $u,v\geq0$.
  The generator is uniquely determined by
  \begin{equation*}
    \cL(X)
    =-\left.\frac{\dd}{\dd u}\mathcal T_u(X)\right|_{u=0^+}
    =\lim_{u\downarrow0}\frac{X-\mathcal T_u(X)}{u},
    \qquad X\in\mathcal M.
  \end{equation*}
  For every $u\geq0$ and $X\in\mathcal M$,
  \begin{equation*}
    \frac{\dd}{\dd u}\mathcal T_u(X)
    =-\cL\mathcal T_u(X)
    =-\mathcal T_u(\cL X).
  \end{equation*}
\end{definition}

\begin{definition}
  Let $(\mathcal T_u)_{u\geq0}$ be a tracial QMS on $\mathcal M$ with
  normalized trace $\tau$ and generator $\cL$.
  It is \emph{reversible} if every $\mathcal T_u$, equivalently $\cL$,
  is self-adjoint on $L_2(\tau)$: for all $X,Y\in\mathcal M$ and $u\geq0$,
  \begin{equation*}
    \tau\!\left(X^\dagger\mathcal T_u(Y)\right)
    =
    \tau\!\left(\mathcal T_u(X)^\dagger Y\right).
  \end{equation*}
\end{definition}

\paragraph{The depolarizing QMS.}
We now describe the semigroup structure of the depolarizing maps in
\cref{def:depolarizing-maps}.  On one qubit, set
$\cL_1:=\operatorname{id}_{\Matrix_2}-\Pi$.  For $n$ qubits, set
\begin{equation}
\label{eq:depolarizing-generator}
  \cL:=\sum_{i=1}^n
    \operatorname{id}_{\Matrix_2}^{\otimes(i-1)}
    \otimes\cL_1\otimes
    \operatorname{id}_{\Matrix_2}^{\otimes(n-i)}.
\end{equation}

\begin{prop}
  \label{prop:depolarizing-qms}
  The families $(\mathcal D_u)_{u\geq0}$ and $(\mathcal T_u)_{u\geq0}$ are
  reversible tracial QMSs, with positive generators $\cL_1$ and $\cL$,
  respectively.  In particular,
  \begin{equation*}
    \mathcal D_u=e^{-u\cL_1},
    \qquad
    \mathcal T_u=e^{-u\cL},
    \qquad
    \cL(P_\alpha)=|\alpha|P_\alpha.
  \end{equation*}
\end{prop}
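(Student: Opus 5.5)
The plan is to reduce everything to the one-qubit generator $\cL_1=\operatorname{id}_{\Matrix_2}-\Pi$ and then tensorize. First observe that $\Pi$ is idempotent: $\Pi(\Pi(X))=\tau_2(X)\Pi(\id)=\tau_2(X)\id$. Hence $\cL_1$ is also idempotent, and $\Pi$ and $\cL_1$ are complementary projections. Expanding the exponential series then gives
\[
  e^{-u\cL_1}=\Pi+\sum_{k\geq0}\frac{(-u)^k}{k!}\cL_1=\Pi+e^{-u}\cL_1=\mathcal D_u ,
\]
where the $k=0$ term $\id=\Pi+\cL_1$ supplies the $\Pi$. In Pauli coordinates this reads $\cL_1(P_0)=0$ and $\cL_1(P_s)=P_s$ for $s\neq0$, because $\tau_2(P_s)=\delta_{s,0}$ by \cref{fact:pauli-orthonormal-basis}. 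So $\cL_1$ is diagonal in the orthonormal Pauli basis with eigenvalues in $\{0,1\}$. It is therefore self-adjoint on $L_2(\tau_2)$ and positive, and each $\mathcal D_u$ is self-adjoint as well. The semigroup law $\mathcal D_{u+v}=\mathcal D_u\mathcal D_v$ follows directly from the exponential form.

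Next, verify the quantum Markov semigroup axioms for $\mathcal D_u$. Unitality holds because $\mathcal D_u(\id)=\id$. Trace preservation holds because $\tau_2\circ\Pi=\tau_2$ and $\tau_2\circ\cL_1=0$. For complete positivity, write $\mathcal D_u=(1-e^{-u})\Pi+e^{-u}\operatorname{id}$. This is a convex combination, with $e^{-u}\in(0,1]$, of the identity channel and the completely depolarizing channel $X\mapsto\tau_2(X)\id=\frac12\Tr(X)\id$. The latter is completely positive: its Choi matrix is $\id\otimes\id/2\geq0$, or one can write it as $\frac14\sum_s P_sXP_s$.

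For $n$ qubits, the summands of $\cL$ in \eqref{eq:depolarizing-generator} act on distinct tensor factors and therefore commute. So $e^{-u\cL}=\bigotimes_i e^{-u\cL_1}=\mathcal D_u^{\otimes n}=\mathcal T_u$. Tensor products of completely positive, unital, trace-preserving maps retain all three properties, which gives the QMS structure of $\mathcal T_u$. Applying $\cL$ to $P_\alpha$ term by term, the $i$-th summand returns $P_\alpha$ if $\alpha_i\neq0$ and $0$ otherwise, so $\cL(P_\alpha)=|\alpha|P_\alpha$. This is consistent with \cref{prop:depolarizing-pauli-action}. Finally, $\cL$ is diagonal in the orthonormal basis $(P_\alpha)$ with nonnegative real eigenvalues, so $\cL$ and every $\mathcal T_u$ are self-adjoint and positive on $L_2(\tau_{2^n})$, which proves reversibility.

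I do not expect a genuine obstacle. The only step needing care is complete positivity, and it is settled by the convex-combination decomposition above rather than by any generator criterion.
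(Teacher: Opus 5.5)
Your proof is correct and follows the paper's route. Idempotence of $\Pi$ gives $e^{-u\cL_1}=\Pi+e^{-u}(\operatorname{id}-\Pi)$, the convex decomposition $e^{-u}\operatorname{id}+(1-e^{-u})\Pi$ yields the channel properties, and exponentiating the commuting summands gives $\mathcal T_u=\mathcal D_u^{\otimes n}$ and $\cL(P_\alpha)=|\alpha|P_\alpha$; the only cosmetic differences are that you read off self-adjointness and positivity from the Pauli diagonalization and add a Choi-matrix certificate alongside the Pauli-average form of $\Pi$.
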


\begin{proof}
  The averaging map $\Pi$ is a self-adjoint projection on $L_2(\tau_2)$.
  It is also completely positive, unital, and trace preserving, as is clear
  from its Pauli-average representation
  $\Pi(X)=\frac14\sum_{s\in G}P_sXP_s^\dagger$.
  Since $\Pi^2=\Pi$, the exponential of $\cL_1$ is
  \begin{equation*}
    e^{-u\cL_1}=\Pi+e^{-u}(\operatorname{id}_{\Matrix_2}-\Pi)=\mathcal D_u.
  \end{equation*}
  Thus $\mathcal D_0=\operatorname{id}_{\Matrix_2}$ and $\mathcal D_{u+v}=\mathcal D_u\mathcal D_v$.
  The expression $\mathcal D_u=e^{-u}\operatorname{id}_{\Matrix_2}+(1-e^{-u})\Pi$
  shows that $\mathcal D_u$ is completely positive, unital, trace preserving,
  and self-adjoint.

  The summands of $\cL$ commute because they act on distinct tensor factors.
  Hence
  \begin{equation*}
    e^{-u\cL}=(e^{-u\cL_1})^{\otimes n}
    =\mathcal D_u^{\otimes n}=\mathcal T_u.
  \end{equation*}
  Tensor products preserve the stated channel properties
  and self-adjointness, so $(\mathcal T_u)_{u\geq0}$ is a reversible
  tracial QMS.  On a Pauli string $P_\alpha$, the $i$th summand of $\cL$
  acts as the identity exactly when $\alpha_i\neq0$ and otherwise vanishes.
  Therefore
  $\cL(P_\alpha)=|\alpha|P_\alpha$, which also proves that $\cL$ is
  positive semidefinite on $L_2(\tau_{2^n})$.
\end{proof}

\begin{prop}
  \label{prop:depolarizing-product-unitaries}
  Let $n\geq1$, $u\geq0$, and let $U_1,\ldots,U_n\in\U(2)$.
  Set $U:=U_1\otimes\cdots\otimes U_n$.  For every
  $X\in \Matrix_2^{\otimes n}$,
  \begin{equation*}
    \mathcal T_u(UXU^\dagger)=U\mathcal T_u(X)U^\dagger.
  \end{equation*}
\end{prop}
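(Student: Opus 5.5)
The plan is to reduce to the single-qubit case and then use the tensor-product structure of both $\mathcal T_u$ and $U$. Since $\mathcal T_u=\mathcal D_u^{\otimes n}$ and $U=U_1\otimes\cdots\otimes U_n$, the map $X\mapsto UXU^\dagger$ is the tensor product of the one-qubit conjugations $\mathrm{Ad}_{U_i}\colon Y\mapsto U_iYU_i^\dagger$. Consequently both sides of the claimed identity are linear in $X$. It therefore suffices to check the identity on elementary tensors $X=X_1\otimes\cdots\otimes X_n$, which span $\Matrix_2^{\otimes n}$.

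On such a tensor, $\mathcal T_u(UXU^\dagger)=\bigotimes_i\mathcal D_u(U_iX_iU_i^\dagger)$ and $U\mathcal T_u(X)U^\dagger=\bigotimes_i U_i\mathcal D_u(X_i)U_i^\dagger$. So the statement reduces to the one-qubit identity $\mathcal D_u\circ\mathrm{Ad}_{U_i}=\mathrm{Ad}_{U_i}\circ\mathcal D_u$.

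For the one-qubit case, use \cref{def:depolarizing-maps}, which gives $\mathcal D_u=e^{-u}\operatorname{id}_{\Matrix_2}+(1-e^{-u})\Pi$ with $\Pi(Y)=\tau_2(Y)\id$. The identity part trivially commutes with conjugation. For the $\Pi$ part, the only input is cyclicity of the trace and unitarity:
\[
\Pi(U_iYU_i^\dagger)=\tau_2(U_iYU_i^\dagger)\id=\tau_2(Y)\id=U_i\bigl(\tau_2(Y)\id\bigr)U_i^\dagger=U_i\Pi(Y)U_i^\dagger .
\]
Combining the two parts gives the one-qubit commutation, and tensoring over $i$ then finishes the proof by linearity.

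There is no real obstacle here. The only point requiring care is the bookkeeping that the tensor product of the maps $\mathcal D_u\circ\mathrm{Ad}_{U_i}$ equals $\mathcal T_u\circ\mathrm{Ad}_U$. This holds because tensor products of linear maps compose factorwise: $(\bigotimes_i\Phi_i)\circ(\bigotimes_i\Psi_i)=\bigotimes_i(\Phi_i\circ\Psi_i)$.

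Note that the argument genuinely needs $U$ to be a product unitary. For a general $n$-qubit unitary, $\mathcal T_u$ does not commute with conjugation, since conjugation can change Pauli weight.
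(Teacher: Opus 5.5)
Your proposal is correct and matches the paper's proof essentially step for step. The paper likewise writes $\mathcal D_u(VAV^\dagger)=e^{-u}VAV^\dagger+(1-e^{-u})\tau_2(A)\id=V\mathcal D_u(A)V^\dagger$ using unitary invariance of the trace, applies this in each tensor factor for product operators, and concludes by linearity since product tensors span $\Matrix_2^{\otimes n}$.
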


\begin{proof}
  For $V\in\U(2)$ and $A\in \Matrix_2$, invariance of the trace under
  unitary conjugation implies
  \begin{equation*}
    \mathcal D_u(VAV^\dagger)
    =e^{-u}VAV^\dagger+(1-e^{-u})\tau_2(A)\id
    =V\mathcal D_u(A)V^\dagger.
  \end{equation*}
  Applying this identity in each tensor factor proves the assertion for
  $X=X_1\otimes\cdots\otimes X_n$.  Such tensors span
  $\Matrix_2^{\otimes n}$, so the general case follows by linearity.
\end{proof}

\begin{prop}
  \label{prop:depolarizing-contractivity}
  Let $n,t\geq1$, $u\geq0$, and let
  $W\colon\mathbb C^{2^n}\to\mathbb C^t\otimes\mathbb C^{2^n}$ be a
  block operator as in \eqref{eq:block-operator-form}.
  Then
  \begin{equation*}
    \norm{\mathcal T_u(W)}_\infty\leq\norm W_\infty,
  \end{equation*}
  where $\mathcal T_u$ acts entrywise on $W$.
\end{prop}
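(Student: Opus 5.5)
The plan is to realize $\mathcal T_u$, acting entrywise on the block operator, as an average of unitary conjugations of $W$ tensored with the identity on the block register, and then use convexity of the operator norm.

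First, $\Pi$ has the Pauli-average representation $\Pi(X)=\frac14\sum_{s\in G}P_sXP_s^\dagger$ from the proof of \cref{prop:depolarizing-qms}. Hence
\[
\mathcal D_u(X)=e^{-u}X+(1-e^{-u})\tfrac14\sum_{s\in G}P_sXP_s^\dagger
\]
is a convex combination of unitary conjugations $X\mapsto VXV^\dagger$ with $V\in\{P_s\}_{s\in G}$. Tensorizing over the $n$ qubits, I will write
\[
\mathcal T_u(X)=\sum_{j}p_jU_jXU_j^\dagger ,
\]
where the $p_j\geq0$ sum to one and each $U_j$ is a product of single-qubit Paulis. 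This is a finite family, indexed by the choice of term in each factor.

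Next, for a block operator $W=\sum_a|a\rangle\otimes W_a$, the entrywise action gives
\[
\mathcal T_u(W)=\sum_a|a\rangle\otimes\sum_jp_jU_jW_aU_j^\dagger=\sum_jp_j(\id_t\otimes U_j)\,W\,U_j^\dagger .
\]
Here $W\colon\mathbb C^{2^n}\to\mathbb C^t\otimes\mathbb C^{2^n}$, and $\id_t\otimes U_j$ and $U_j^\dagger$ are unitaries on the output and input spaces respectively. So each term has operator norm exactly $\norm W_\infty$. The triangle inequality together with $\sum_jp_j=1$ then gives $\norm{\mathcal T_u(W)}_\infty\leq\norm W_\infty$.

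There is no real obstacle. The only point needing care is checking that the entrywise convention \eqref{eq:coefficientwise-map-action} matches left multiplication by $\id_t\otimes U_j$ and right multiplication by $U_j^\dagger$. This follows because $(\id_t\otimes U)(|a\rangle\otimes W_a)U^\dagger=|a\rangle\otimes UW_aU^\dagger$, so the identity holds by linearity.
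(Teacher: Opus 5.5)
Your proof is correct, but it follows a different route from the paper's. You use the fact that $\mathcal T_u$ is a mixed-unitary (Pauli) channel, $\mathcal T_u(X)=\sum_j p_jU_jXU_j^\dagger$. Then $\mathcal T_u(W)=\sum_jp_j(\id_t\otimes U_j)WU_j^\dagger$ is a convex combination of two-sided unitary rotations of $W$, and the triangle inequality finishes the argument.

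The paper uses only that $\mathcal T_u$ is unital and completely positive. It applies $\mathcal T_u$ to the positive block matrix $\begin{pmatrix}W_a^\dagger W_a&W_a^\dagger\\W_a&\id\end{pmatrix}$ to obtain the Kadison--Schwarz inequality $\mathcal T_u(W_a)^\dagger\mathcal T_u(W_a)\le\mathcal T_u(W_a^\dagger W_a)$. Summing over $a$ gives $\mathcal T_u(W)^\dagger\mathcal T_u(W)\le\mathcal T_u(W^\dagger W)\le\norm W_\infty^2\id$.

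Your argument is shorter. Because every term is a two-sided unitary rotation of $W$, it also gives contractivity in every unitarily invariant norm at once, including the normalized Schatten norms of block operators in \eqref{eq:block-schatten-norm}. It does, however, depend on the specific mixed-unitary structure of the depolarizing channel. The paper's argument yields a stronger, operator-level inequality. It also applies verbatim to any unital completely positive map, including unital channels that are not mixed-unitary.
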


\begin{proof}
  Write $W=\sum_a|a\rangle\otimes W_a$.  For every $a$,
  \begin{equation*}
    \begin{pmatrix}W_a^\dagger W_a&W_a^\dagger\\W_a&\id\end{pmatrix}
    =\begin{pmatrix}W_a^\dagger\\\id\end{pmatrix}
      \begin{pmatrix}W_a&\id\end{pmatrix}\geq0.
  \end{equation*}
  Since $\mathcal T_u$ is unital and completely positive, applying it
  to each block yields
  \begin{equation*}
    \begin{pmatrix}
      \mathcal T_u(W_a^\dagger W_a)&\mathcal T_u(W_a)^\dagger\\
      \mathcal T_u(W_a)&\id
    \end{pmatrix}\geq0.
  \end{equation*}
  Evaluating its quadratic form on $(v,-\mathcal T_u(W_a)v)$, for
  arbitrary $v$, shows that
  $\mathcal T_u(W_a)^\dagger\mathcal T_u(W_a)
    \leq\mathcal T_u(W_a^\dagger W_a)$.
  Summing over $a$, we obtain
  \begin{align*}
    \mathcal T_u(W)^\dagger\mathcal T_u(W)
    &=\sum_a\mathcal T_u(W_a)^\dagger\mathcal T_u(W_a)\\
    &\leq\sum_a\mathcal T_u(W_a^\dagger W_a)
      =\mathcal T_u(W^\dagger W)\\
    &\leq\norm W_\infty^2\id.
  \end{align*}
  The last inequality follows from $W^\dagger W\leq\norm W_\infty^2\id$
  and positivity and unitality of $\mathcal T_u$.
  Hence $\norm{\mathcal T_u(W)}_\infty\leq\norm W_\infty$.
\end{proof}

\paragraph{Tracial entropy and Dirichlet forms.}
For $X\geq0$, define
\begin{equation}
\label{eq:tracial-entropy}
  \Ent(X)
  :=
  \tau(X\log X)-\tau(X)\log\tau(X),
\end{equation}
with $0\log0=0$.  For a reversible tracial QMS with generator $\cL$, the two
Dirichlet forms used below are
\begin{equation}
\label{eq:tracial-dirichlet-forms}
  \cE_{1,\cL}(X)
  :=
  \frac14\tau\!\left((\cL X)\log X\right),
  \qquad
  \cE_{2,\cL}(Y)
  :=
  \tau\!\left(Y^\dagger\cL(Y)\right),
\end{equation}
where $X>0$, $Y\in\mathcal M$, and $\tau=\tau_d$ for $\mathcal M=\Matrix_d$.

We need only the following tracial specialization of the quantum
Stroock--Varopoulos inequality.  The result of
\cite[Theorem~14]{BDR20} is more general: it treats arbitrary faithful
reference states using weighted noncommutative $L_p$ spaces.

\begin{theorem}[Stroock--Varopoulos inequality]
  \label{thm:quantum-stroock-varopoulos}
  Let $\mathcal T_u=e^{-u\cL}$ be a reversible tracial QMS on a
  finite-dimensional matrix algebra.  Then every positive-definite
  $X\in\mathcal M$ satisfies
  \begin{equation*}
    \cE_{1,\cL}(X)
    \geq
    \cE_{2,\cL}(\sqrt X)
    =
    \tau\!\left(\sqrt X\,\cL(\sqrt X)\right).
  \end{equation*}
\end{theorem}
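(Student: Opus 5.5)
The plan is to avoid the general weighted-$L_p$ machinery of \cite{BDR20}. Instead, reduce the inequality to a scalar two-point inequality through the spectral decomposition of $X$, using only that each $\mathcal T_u$ is positive, unital, trace preserving, and self-adjoint on $L_2(\tau)$. We then differentiate at $u=0$.

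\textbf{Step 1: spectral representation of the Dirichlet form.} Write $X=\sum_i \lambda_i E_i$ with $\lambda_i>0$ and spectral projections $E_i$ summing to $\id$. For real functions $f,g$ on $(0,\infty)$, set
\[
w_{ik}(u):=\tau\bigl(E_i\,\mathcal T_u(E_k)\bigr).
\]
These weights have three properties:
\begin{itemize}
\item They are nonnegative, since $\mathcal T_u(E_k)\geq0$ and $E_i\geq0$.
\item They are symmetric, $w_{ik}=w_{ki}$, by reversibility.
\item They satisfy $\sum_k w_{ik}(u)=\tau(E_i)$, by unitality.
\end{itemize}
Expanding $f(X)$ and $g(X)$ in the $E_i$ then gives the identity
\[
\tau\bigl(f(X)\,(g(X)-\mathcal T_u g(X))\bigr)
=\tfrac12\sum_{i,k} w_{ik}(u)\,\bigl(f(\lambda_i)-f(\lambda_k)\bigr)\bigl(g(\lambda_i)-g(\lambda_k)\bigr).
\]
To obtain it, write $\tau(f(X)g(X))=\sum_{i,k}f(\lambda_i)g(\lambda_i)w_{ik}(u)$ using the row sums, and $\tau(f(X)\mathcal T_u g(X))=\sum_{i,k}f(\lambda_i)g(\lambda_k)w_{ik}(u)$. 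Then symmetrize in $i\leftrightarrow k$.

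\textbf{Step 2: scalar inequality and passage to the generator.} For $a,b>0$,
\[
(a-b)(\log a-\log b)\geq 4(\sqrt a-\sqrt b)^2.
\]
This follows by writing $a=x^2$ and $b=y^2$: the inequality becomes $(x+y)\log(x/y)\geq 2(x-y)$, i.e.\ $\log r\geq 2(r-1)/(r+1)$ for $r\geq1$. That holds because both sides vanish at $r=1$ and their derivatives compare as $1/r\geq 4/(r+1)^2$.

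Apply Step 2 termwise in the Step 1 identity with nonnegative weights, once with $(f,g)=(\log,\mathrm{id})$ and once with $(f,g)=(\sqrt{\cdot},\sqrt{\cdot})$. This yields, for every $u>0$,
\[
\tau\bigl((X-\mathcal T_uX)\log X\bigr)\geq 4\,\tau\bigl(\sqrt X(\sqrt X-\mathcal T_u\sqrt X)\bigr).
\]
Here we use self-adjointness to move $\mathcal T_u$ onto the appropriate factor. Dividing by $u$ and letting $u\downarrow0$, the generator formula $\cL(Y)=\lim_{u\downarrow 0}(Y-\mathcal T_uY)/u$ gives $\tau((\cL X)\log X)\geq 4\tau(\sqrt X\,\cL\sqrt X)$. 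This is exactly $\cE_{1,\cL}(X)\geq\cE_{2,\cL}(\sqrt X)$.

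\textbf{Main obstacle.} The only delicate point is Step 1: the double-sum representation with \emph{nonnegative symmetric} weights. This is where reversibility and positivity of $\mathcal T_u$ enter, and the argument must be arranged so that it uses only the diagonal blocks $\tau(E_i\mathcal T_u(E_k))$, not off-diagonal matrix entries. Working at finite $u$ and taking the limit only at the end avoids any need for a Lindblad/derivation form of $\cL$. Finite dimensionality makes the limit interchange trivial.
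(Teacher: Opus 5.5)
Your argument is correct, but it is not what the paper does. The paper gives no proof of its own and simply quotes the tracial case of \cite[Theorem~14]{BDR20}, a result formulated for arbitrary faithful reference states via weighted noncommutative $L_p$ Dirichlet forms.

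You instead observe that in the tracial setting both sides are pairings $\tau\bigl(f(X)(g(X)-\mathcal T_u g(X))\bigr)$ of functions of the single operator $X$. Hence only the compressions $w_{ik}(u)=\tau(E_i\mathcal T_u(E_k))$ enter. Positivity, unitality and reversibility make $w_{ik}/\tau(E_i)$ a Markov kernel on the spectrum of $X$, reversible with respect to the weights $\tau(E_i)$. The claim then reduces to the classical two-point inequality $(a-b)(\log a-\log b)\geq4(\sqrt a-\sqrt b)^2$, which you verify correctly.

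The symmetrization, the nonnegativity of the weights, the limit $u\downarrow0$, and the final identification with $\tfrac14\tau((\cL X)\log X)\geq\tau(\sqrt X\,\cL\sqrt X)$ all check out. When you ``move $\mathcal T_u$'' at the end, cyclicity of the trace already suffices; reversibility is only needed in Step~1.

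Your route gives a short self-contained proof. It needs only positivity (not complete positivity) and symmetry of each $\mathcal T_u$, and no structure of $\cL$. It would make \cref{cor:depolarizing-stroock-varopoulos}, and hence the low-degree approximation step, independent of \cite{BDR20}.

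The cited theorem buys generality. For a non-tracial reference state that does not commute with $X$, the relevant Dirichlet forms are no longer pairings of functions of one operator, so your spectral reduction fails and the weighted $L_p$ machinery is genuinely needed. The paper, however, only uses the tracial case.
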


The derivative of entropy along a tracial QMS has the following form;
see \cite[Section~3, Eq.~(27)]{GR22}.
\begin{prop}
  \label{prop:tracial-entropy-derivative}
  Let $\mathcal T_u=e^{-u\cL}$ be a reversible tracial QMS on a finite
  matrix algebra $\mathcal M$ with normalized trace $\tau$.  If
  $X\in\mathcal M$ is positive and $X_u:=\mathcal T_u(X)>0$ for $u\geq0$,
  then
  \begin{equation*}
    -\frac{\dd}{\dd u}\Ent(X_u)
    =
    4\cE_{1,\cL}(X_u).
  \end{equation*}
\end{prop}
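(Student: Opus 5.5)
The plan is to differentiate $\Ent(X_u)$ directly and use reversibility of the semigroup to express the derivative through $\cL$. Write $X_u=\mathcal T_u(X)$. Since $\mathcal T_u$ is trace preserving, $\tau(X_u)=\tau(X)$ for all $u$, so the second term $\tau(X_u)\log\tau(X_u)$ in \eqref{eq:tracial-entropy} is constant in $u$. Only $u\mapsto\tau(X_u\log X_u)$ needs to be differentiated.

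\textbf{Step 1 (trace functional calculus).} For a smooth curve $u\mapsto X_u$ of positive-definite matrices and a $C^1$ function $f$ on $(0,\infty)$, one has
\[
\tfrac{\dd}{\dd u}\tau(f(X_u))=\tau\bigl(f'(X_u)\,\dot X_u\bigr).
\]
This holds even though $X_u$ and $\dot X_u$ need not commute. To see it, verify it first for monomials $f(x)=x^m$: by cyclicity of the trace, each of the $m$ terms in the Leibniz expansion equals $\tau(X_u^{m-1}\dot X_u)$. Then extend to $f$ by polynomial approximation, uniformly together with the derivative, on a compact interval containing the spectra. Such an interval exists locally in $u$, because $X_u>0$ varies continuously.

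\textbf{Step 2 (the derivative).} Apply Step 1 with $f(x)=x\log x$, so $f'(x)=\log x+1$, and $\dot X_u=-\cL X_u$. This gives
\[
\tfrac{\dd}{\dd u}\tau(X_u\log X_u)=-\tau\bigl((\log X_u+\id)\cL X_u\bigr).
\]
The identity term vanishes. The reason is reversibility: $\tau(\cL Y)=\langle\id,\cL Y\rangle_2=\langle\cL\id,Y\rangle_2=0$, because $\mathcal T_u$ is unital and hence $\cL\id=0$. (Equivalently, trace preservation gives $\tau(\cL Y)=0$ directly.) Therefore
\[
-\tfrac{\dd}{\dd u}\Ent(X_u)=\tau\bigl((\cL X_u)\log X_u\bigr)=4\cE_{1,\cL}(X_u),
\]
by the definition \eqref{eq:tracial-dirichlet-forms}. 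Notice that the order of the factors inside the trace is harmless, by cyclicity.

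\textbf{Main obstacle.} The only delicate point is the noncommutative chain rule in Step 1, together with the requirement $X_u>0$, which keeps $\log$ smooth on the spectrum. The hypothesis $X_u>0$ is exactly what is needed here. At $u=0$ the derivative should be read as one-sided. An alternative to polynomial approximation is the integral representation
\[
\log X=\int_0^\infty\bigl((1+s)^{-1}-(X+s)^{-1}\bigr)\dd s.
\]
Combining it with the resolvent derivative $\tfrac{\dd}{\dd u}(X_u+s)^{-1}=-(X_u+s)^{-1}\dot X_u(X_u+s)^{-1}$ and cyclicity gives $\tau(X_u\,\tfrac{\dd}{\dd u}\log X_u)=\tau(\dot X_u)$, which is $0$. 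Hence $\tfrac{\dd}{\dd u}\tau(X_u\log X_u)=\tau(\dot X_u\log X_u)$, as required.
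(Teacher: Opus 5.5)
Your proposal is correct and follows the paper's proof: both use trace preservation to kill the $\tau(X_u)\log\tau(X_u)$ term and the $\tau(\cL X_u)$ contribution, then differentiate $\tau(X_u\log X_u)$ to get $-\tau\bigl((\cL X_u)\log X_u\bigr)$. Your only addition is an explicit justification, by polynomial approximation or by the resolvent integral, of the noncommutative chain rule $\tfrac{\dd}{\dd u}\tau(f(X_u))=\tau\bigl(f'(X_u)\dot X_u\bigr)$, which the paper summarizes as ``differentiation under the trace.''
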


\begin{proof}
  Trace preservation implies $\tau(X_u)=\tau(X)$ for every $u$.
  Differentiating this identity and using $\dot X_u=-\cL X_u$ shows that
  $\tau(\cL X_u)=0$.  Differentiation under the
  trace yields
  \[
    \frac{\dd}{\dd u}\Ent(X_u)
    =\tau(\dot X_u\log X_u)
    =-\tau\!\left((\cL X_u)\log X_u\right),
  \]
  which is the asserted identity.
\end{proof}

\begin{cor}
  \label{cor:depolarizing-stroock-varopoulos}
  Let $n\geq1$, and let $\cL(P_\alpha)=|\alpha|P_\alpha$ and
  $\mathcal T_u=e^{-u\cL}$ be the tensorized depolarizing generator and
  semigroup on $\Matrix_2^{\otimes n}$.  Every positive
  $X\in \Matrix_2^{\otimes n}$ satisfies
  \begin{equation*}
    \cE_{1,\cL}(X)
    \ge
    \cE_{2,\cL}(\sqrt X)
    =
    \tau_{2^n}\!\left(\sqrt X\,\cL(\sqrt X)\right).
  \end{equation*}
  Consequently, if $u\geq0$ and $X_u=\mathcal T_u(X)>0$, then
  \begin{equation*}
    -\frac{\dd}{\dd u}\Ent(X_u)
    \geq
    4\tau_{2^n}\!\left(\sqrt{X_u}\,\cL(\sqrt{X_u})\right).
  \end{equation*}
\end{cor}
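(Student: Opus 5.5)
The corollary follows by specializing \cref{thm:quantum-stroock-varopoulos} and \cref{prop:tracial-entropy-derivative} to the depolarizing semigroup. The only extra work is checking hypotheses, and dealing with $X$ that is positive semidefinite but not invertible.

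First, \cref{prop:depolarizing-qms} shows that $(\mathcal T_u)_{u\geq0}$ is a reversible tracial QMS on $\Matrix_2^{\otimes n}$ with generator $\cL$, where $\cL(P_\alpha)=|\alpha|P_\alpha$. So for positive-definite $X$, \cref{thm:quantum-stroock-varopoulos} gives $\cE_{1,\cL}(X)\geq\cE_{2,\cL}(\sqrt X)$ directly. The stated equality $\cE_{2,\cL}(\sqrt X)=\tau_{2^n}(\sqrt X\,\cL(\sqrt X))$ is the definition \eqref{eq:tracial-dirichlet-forms}, using $\sqrt X^\dagger=\sqrt X$.

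If $X\geq0$ is singular, $\cE_{1,\cL}(X)$ is only defined through a convention, since $\log X$ is undefined on the kernel. I would interpret it as the limit along $X+\epsilon\id$ as $\epsilon\downarrow0$. The right side $\tau(\sqrt{X+\epsilon\id}\,\cL\sqrt{X+\epsilon\id})$ is continuous in $\epsilon$ because the square root is continuous. Moreover $\cL(\id)=0$, so the left side equals $\frac14\tau(\cL X\,\log(X+\epsilon\id))$. Since $\cL$ is self-adjoint, this can be rewritten as $\frac14\tau(X\,\cL\log(X+\epsilon\id))$. If the limit is $+\infty$, the inequality is trivial. In any case I expect the paper only uses this with $X_u>0$, so I would state this reduction briefly.

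For the consequence, take $u\geq0$ with $X_u=\mathcal T_u(X)>0$. \cref{prop:tracial-entropy-derivative} gives $-\frac{\dd}{\dd u}\Ent(X_u)=4\cE_{1,\cL}(X_u)$. Applying the first part to the positive-definite operator $X_u$ gives $4\cE_{1,\cL}(X_u)\geq4\tau_{2^n}(\sqrt{X_u}\,\cL(\sqrt{X_u}))$, which is the claim.

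The only delicate point is the singular case in the first inequality; everything else is direct substitution.
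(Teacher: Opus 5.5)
Your proposal is correct and follows the paper's proof: reversibility of the depolarizing semigroup lets you invoke \cref{thm:quantum-stroock-varopoulos} for positive-definite $X$, you handle singular $X$ by passing to $X+\varepsilon\id$ and letting $\varepsilon\downarrow0$, and you obtain the consequence by combining this with \cref{prop:tracial-entropy-derivative} at the positive-definite operator $X_u$. Your remarks on reading $\cE_{1,\cL}(X)$ as a limit when $X$ is singular only spell out a convention the paper leaves implicit.
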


\begin{proof}
  The depolarizing QMS is reversible by its Pauli diagonalization, so the
  first inequality is \cref{thm:quantum-stroock-varopoulos}.  The displayed
  equality follows from \eqref{eq:tracial-dirichlet-forms}.  For $X\geq0$, apply the
  inequality to $X+\varepsilon\id$ and let $\varepsilon\downarrow0$.  Applying
  \cref{prop:tracial-entropy-derivative} then proves the second inequality.
\end{proof}

\begin{prop}
  \label{prop:depolarizing-dirichlet-pauli-expansion}
  Let $n\geq1$.  For the tensorized depolarizing generator
  $\cL(P_\alpha)=|\alpha|P_\alpha$ on $\Matrix_2^{\otimes n}$, let
  $Y\in \Matrix_2^{\otimes n}$ have Pauli expansion
  $Y=\sum_{\alpha\in G^n}\widehat Y(\alpha)P_\alpha$.  Then
  \begin{equation*}
    \cE_{2,\cL}(Y)
    =
    \tau_{2^n}\!\left(Y^\dagger\cL(Y)\right)
    =
    \sum_{\alpha\in G^n}|\alpha|\,|\widehat Y(\alpha)|^2.
  \end{equation*}
\end{prop}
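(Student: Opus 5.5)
The identity in \cref{prop:depolarizing-dirichlet-pauli-expansion} is a direct computation in the Pauli basis, so the plan is to expand both factors and use orthonormality. First, write $Y=\sum_{\alpha}\widehat Y(\alpha)P_\alpha$. By linearity of $\cL$ and the eigenrelation $\cL(P_\alpha)=|\alpha|P_\alpha$ from \cref{prop:depolarizing-qms},
\[
  \cL(Y)=\sum_{\alpha\in G^n}|\alpha|\,\widehat Y(\alpha)P_\alpha .
\]
This says that the Pauli coefficients of $\cL(Y)$ are $\widehat{\cL(Y)}(\alpha)=|\alpha|\widehat Y(\alpha)$.

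Second, the quantity $\tau_{2^n}(Y^\dagger\cL(Y))$ is precisely the normalized Hilbert--Schmidt inner product $\langle Y,\cL(Y)\rangle_2$. We then apply the inner-product identity of \cref{prop:scalar-pauli-identities} (equivalently, \cref{fact:pauli-orthonormal-basis}, since $\tau_{2^n}(P_\alpha^\dagger P_\beta)=\delta_{\alpha,\beta}$). This gives
\[
  \tau_{2^n}\!\left(Y^\dagger\cL(Y)\right)
  =\sum_{\alpha}\overline{\widehat Y(\alpha)}\,|\alpha|\,\widehat Y(\alpha)
  =\sum_{\alpha}|\alpha|\,|\widehat Y(\alpha)|^2 .
\]
The first equality in the statement is just the definition \eqref{eq:tracial-dirichlet-forms} of $\cE_{2,\cL}$, so this completes the argument.

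There is no real obstacle. The only points requiring care are that the Pauli strings are self-adjoint, which makes the conjugation in the inner product land on the coefficients, and that $|\alpha|$ is real, so no extra conjugate appears.
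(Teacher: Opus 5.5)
Your proposal is correct and follows essentially the same route as the paper: expand $\cL(Y)$ via the eigenrelation $\cL(P_\alpha)=|\alpha|P_\alpha$, then use Pauli orthonormality $\tau_{2^n}(P_\alpha^\dagger P_\beta)=\delta_{\alpha,\beta}$ (packaged by you as the inner-product identity) to collapse the double sum to $\sum_\alpha|\alpha|\,|\widehat Y(\alpha)|^2$. The first equality is, as you say, just the definition of $\cE_{2,\cL}$.
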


\begin{proof}
  The first equality follows from \eqref{eq:tracial-dirichlet-forms}.
  Since $\cL(P_\beta)=|\beta|P_\beta$, the two Pauli expansions needed below
  are
  \[
    \cL(Y)
    =
    \sum_{\beta\in G^n}|\beta|\,\widehat Y(\beta)P_\beta,
    \qquad
    Y^\dagger
    =
    \sum_{\alpha\in G^n}\overline{\widehat Y(\alpha)}P_\alpha^\dagger.
  \]
  Substituting these expansions into the Dirichlet form yields
  \begin{align*}
    \tau_{2^n}\!\left(Y^\dagger\cL(Y)\right)
    &=
    \sum_{\alpha,\beta\in G^n}
    \overline{\widehat Y(\alpha)}\,
    |\beta|\,\widehat Y(\beta)
    \tau_{2^n}(P_\alpha^\dagger P_\beta)\\
    &=
    \sum_{\alpha\in G^n}|\alpha|\,
    |\widehat Y(\alpha)|^2.
  \end{align*}
  The second equality follows from \cref{fact:pauli-orthonormal-basis}, which
  removes every term with
  $\alpha\neq\beta$.
\end{proof}

\section{Proof of the Main Theorem}
\label{sec:proof-main-theorem}

\subsection{Proof Outline}

The two alternatives in \cref{thm:main} require separate arguments.
When $\rho_{\max}(\psi)<1$, the measurements are smoothed by the
depolarizing semigroup, their square-root block operators are approximated by
low-degree operators, and these operators are compressed directly in the
Pauli basis before being rounded back to POVMs.  At
$\rho_{\max}(\psi)=1$, positivity forces the canonical form
$\Psi_\lambda$.  A perfect Hamming-code partition and bilateral Pauli
twirling then produce an exact, unflagged maximally entangled component of
weight at least $\lambda^2$.  We now state the principal results and
intermediate statements underlying these two branches.  Their proofs appear
in the subsequent sections; together, they imply \cref{thm:main}.

\begin{figure}[H]
  \centering
  \begin{tikzpicture}[
      x=1cm,
      y=1cm,
      process box/.style={draw=blue!55!black,fill=blue!3,
        rounded corners=2pt,align=center,font=\footnotesize,
        text width=46mm,minimum height=10.5mm,
        inner xsep=3.5pt,inner ysep=3pt},
      wide process box/.style={process box,text width=62mm},
      branch box/.style={diamond,aspect=2.25,
        draw=orange!75!black,fill=orange!12,very thick,
        align=center,font=\footnotesize\bfseries,
        minimum width=28mm,minimum height=9mm,
        inner xsep=1.5pt,inner ysep=1pt},
      route arrow/.style={-{Latex[length=1.6mm]},semithick},
      route label/.style={font=\scriptsize,fill=white,inner sep=0.8pt},
      nexp result/.style={draw=blue!65!black,fill=blue!8,
        rounded corners,very thick,font=\small\bfseries,
        minimum width=18mm,minimum height=9mm},
      re result/.style={draw=red!65!black,fill=red!8,
        rounded corners,very thick,font=\small\bfseries,
        minimum width=18mm,minimum height=9mm}
    ]
    \node[wide process box] (canonical) at (6.7,0)
      {\textbf{Local-Unitary Canonicalization}\\
       $\psi\mapsto\psi_{\boldsymbol\lambda}$ in Diagonal Pauli Form\\[-1pt]
       {\scriptsize \cref{prop:pauli-canonicalization}}};

    \node[branch box] (noisy) at (2.5,-1.55)
      {$\rho_{\max}(\psi)<1$};
    \node[process box] (smooth) at (2.5,-3.05)
      {\textbf{Smoothing}\\[-1pt]
       {\scriptsize \cref{thm:whole-smoothing}}};
    \node[process box] (degree) at (2.5,-4.45)
      {\textbf{Low-Degree Approximation}\\[-1pt]
       {\scriptsize \cref{thm:low-degree-approximation-new}}};
    \node[process box] (reduce) at (2.5,-5.85)
      {
       \textbf{Dimension Reduction}\\[-1pt]
       {\scriptsize \cref{thm:folded-block-output}}};
    \node[process box] (round) at (2.5,-7.35)
      {\textbf{Polar Rounding}\\[-1pt]
       {\scriptsize \cref{lem:polar-soundness}}};
    \node[process box] (verification) at (2.5,-8.95)
      {\textbf{Succinct Verification}\\[-1pt]
       {\scriptsize \cref{prop:succinct-verifier}}};
    \draw[route arrow] (canonical) -- (noisy);
    \draw[route arrow] (smooth) -- (degree);
    \draw[route arrow] (noisy) -- (smooth);
    \draw[route arrow] (degree) -- (reduce);
    \draw[route arrow] (reduce) -- (round);
    \draw[route arrow] (round) -- (verification);

    \node[branch box] (boundary) at (10.9,-1.55)
      {$\rho_{\max}(\psi)=1$};
    \node[process box] (boundary-form) at (10.9,-3.05)
      {\textbf{Boundary Canonical Form}\\
       Positivity Forces $\psi_{\boldsymbol\lambda}=\Psi_\lambda$\\[-1pt]
       {\scriptsize \cref{prop:boundary-canonical}}};
    \draw[route arrow] (canonical) -- (boundary);
    \draw[route arrow] (boundary) -- (boundary-form);

    \node[branch box] (separable) at (7.0,-4.5) {Separable};
    \draw[route arrow] (boundary-form) --
      node[route label,above] {$\lambda=0$} (separable);

    \node[branch box] (entangled) at (10.9,-4.5) {Entangled};
    \node[process box] (hamming) at (10.9,-6.05)
      {\textbf{Hamming Partition}\\[-1pt]
       {\scriptsize \cref{lem:boundary-hamming-partition,lem:boundary-local-extraction}}};
    \node[process box] (twirl) at (10.9,-7.65)
      {\textbf{Bilateral Pauli Twirling}\\
       {\scriptsize \cref{thm:boundary-epr-component}}};
    \node[process box] (simulation) at (10.9,-9.25)
      {\textbf{One-Sided Simulation}\\
       \textbf{and Amplification}\\[-1pt]
       {\scriptsize \cref{cor:boundary-epr-simulation,lem:or-repetition-bounds,thm:boundary-re-lower-bound}}};
    \draw[route arrow] (boundary-form) --
      node[route label,right] {$\lambda>0$} (entangled);
    \draw[route arrow] (entangled) -- (hamming);
    \draw[route arrow] (hamming) -- (twirl);
    \draw[route arrow] (twirl) -- (simulation);

    \node[nexp result] (nexp) at (2.5,-10.95) {$\NEXP$};
    \node[re result] (re) at (10.9,-10.95) {$\RE$};
    \draw[route arrow] (verification) -- (nexp);
    \draw[route arrow] (separable) -- (7.0,-10.95) -- (nexp.east);
    \draw[route arrow] (simulation) -- (re);
  \end{tikzpicture}
  \caption{Roadmap for the proof of \cref{thm:main}}
  \label{fig:technical-contribution-routes}
\end{figure}

\subsection{\texorpdfstring{$\rho_{\max}<1$ Implies $\NEXP$}
  {rho-max < 1 Implies NEXP}}

\begin{theorem}[Finite-copy approximation]
  \label{thm:finite-copy-approximation}
  There are absolute constants $C,\gamma>0$ such that the following
holds. Let $\psi\in \Matrix_2\otimes \Matrix_2$ be a fixed state with maximally
mixed marginals $\psi_A=\psi_B=\id/2$ and $\rho_{\max}(\psi)<1$.
There exists a constant $\alpha_\psi\in[0,1)$, depending only on
$\psi$, with the following property.

Let $\sfG=(\mathcal X,\mathcal Y,\mathcal A,\mathcal B,
\mathsf p_{\sfG},V)$ be a finite entangled game, set
$t:=\max\{|\mathcal A|,|\mathcal B|\}$, and let
$0<\varepsilon<1/10$. Define
\[
  k=\left\lceil
    \left(\frac{C(\log(2t))^2}
    {(1-\alpha_\psi)\varepsilon^{14}}\right)^{1/\gamma}
  \right\rceil,
  \qquad
  D=\left\lceil
    Ck^{64}\max\left\{
      \frac{t^2}{\varepsilon^2},
      \frac{t}{\varepsilon^6}
    \right\}
  \right\rceil.
\]
Then, for every $n\geq1$ and every local POVM strategy
$\mathscr A,\mathscr B$ on $\psi^{\otimes n}$, there exists
a local POVM strategy $\mathscr A^\sharp,\mathscr B^\sharp$
on $\psi^{\otimes D}$ satisfying
\[
  \Val_{\sfG}^{\psi^{\otimes D}}
    (\mathscr A^\sharp,\mathscr B^\sharp)
  \geq
  \Val_{\sfG}^{\psi^{\otimes n}}
    (\mathscr A,\mathscr B)-C\varepsilon.
\]
\end{theorem}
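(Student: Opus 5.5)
We follow the four-stage pipeline of the roadmap (smoothing, low-degree approximation, Pauli folding, polar rounding), chaining the additive losses of each stage. First we apply \cref{prop:pauli-canonicalization} to replace $\psi$ by its canonical form $\psi_{\boldsymbol\lambda}$. This changes no game value, and afterwards $\rho_{\max}(\psi)=\max_{s\neq0}|\lambda_s|<1$. We then set $\alpha_\psi$ to be the contraction constant of the canonical correlation channel $\mathcal K$ on traceless operators, possibly raised to a power dictated by the complete strong data-processing inequality of Gao--Rouz\'e. From then on all strategies are represented by their square-root block operators $W_{\mathscr A}^x,W_{\mathscr B}^y$ from \eqref{eq:povm-block-operators}, and the value is written as $\mathbb E_{(x,y)}\cG_{x,y}$ as in \eqref{eq:block-correlation-functional}.

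\emph{Step 1 (smoothing).} We replace $A_a^x$ by $\mathcal T_u(A_a^x)$ and $B_b^y$ by $\mathcal T_u(B_b^y)$ for a small $u$ depending on $\varepsilon,t,\alpha_\psi$; these families are still POVMs. To bound the change in value, we view the POVM as the normalized block-diagonal state $\bigoplus_a A_a^x/t$, whose relative entropy to $\bigoplus_a \id/t$ is at most $\log t$. The Gao--Rouz\'e inequality, applied to $\mathcal K$ composed with the depolarizing semigroup, shows that the part of the value carried by high-degree Pauli components is controlled by the entropy decrease, and these decreases sum to at most $\log t$. Combining this with Pinsker's inequality (\cref{prop:quantum-pinsker}) gives a value loss of at most $\varepsilon$ once $u\approx \varepsilon^{O(1)}(1-\alpha_\psi)/\log t$. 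We also need that each smoothed square-root column $\sqrt{\mathcal T_u(A^x)}$ is close, in normalized $4$-norm, to its own slight depolarization. This follows from \cref{cor:depolarizing-stroock-varopoulos} together with \cref{prop:depolarizing-dirichlet-pauli-expansion}, integrated in $u$ against the entropy budget $\log t$.

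\emph{Step 2 (positive low-degree approximation).} For each $x$ we apply the matrix-valued Mendel--Naor-type theorem (\cref{thm:low-degree-approximation-new}) to the whole column $W^x=\sum_a|a\rangle\otimes\sqrt{M_a^x}$. This yields a block operator $Q^x$ of degree $k$ with $\norm{Q^x-W^x}_4\le\varepsilon$, and $k$ has the stated form $(\log^2(2t)/((1-\alpha_\psi)\varepsilon^{14}))^{1/\gamma}$. The resulting operators $(Q_a^x)^\dagger Q_a^x$ are positive and have degree at most $2k$, and $\norm{(Q^x)^\dagger Q^x-\id}_2\lesssim\varepsilon$. The value changes by $O(\varepsilon)$ by multilinearity of $\cG$ and \cref{lem:game-expression-holder}. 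We do the same for Bob.

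\emph{Step 3 (dimension reduction).} We draw a random Pauli folding $\Phi$ from $n$ qubits to $D$ qubits. By the almost-multiplicativity theorem, $\mathbb E\norm{\Phi(Q^\dagger Q)-\Phi(Q)^\dagger\Phi(Q)}_2^2\lesssim \poly(k)\norm Q_4^4/D$. By the correlation-preservation estimate, $\mathbb E\,\Corr_{\psi,D}(\Phi X,\Phi Y)=\Corr_{\psi,n}(X,Y)$ up to error $\poly(k)/D$. Summing these errors over all $t$ answers, and averaging over questions so that Markov's inequality produces one good $\Phi$ for all questions simultaneously, we choose $D=Ck^{64}\max\{t^2/\varepsilon^2,t/\varepsilon^6\}$. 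This gives a family $\widetilde Q^x=\Phi(Q^x)$ on $D$ qubits with value loss $O(\varepsilon)$ and $\mathbb E_x\norm{(\widetilde Q^x)^\dagger\widetilde Q^x-\id}_2\lesssim\varepsilon$. This step is the main obstacle. The expected error must be converted into bounds that hold \emph{simultaneously} for normalization and value, uniformly enough in $t$, and the $4$-norm of $\Phi(Q)$ must itself be controlled (via hypercontractivity for low-degree operators or a direct second-moment bound) in order to apply \cref{lem:game-expression-holder} after folding.

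\emph{Step 4 (polar rounding).} We set $A^{\sharp,x}_a:=S^{-1/2}(\widetilde Q_a^x)^\dagger\widetilde Q_a^x S^{-1/2}$ with $S=(\widetilde Q^x)^\dagger\widetilde Q^x$, after regularizing $S$ so that it is invertible. This is exactly a POVM. By \cref{lem:polar-soundness}, the value loss is $O(\norm{S-\id}_1^{1/2})$, that is, $O(\varepsilon)$ after averaging. The losses from the four steps add up to $C\varepsilon$, which completes the proof.
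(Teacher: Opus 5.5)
\textbf{There is a genuine gap.} Your pipeline is the paper's, but the error budget does not close at the rounding step.

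\Cref{lem:polar-soundness} charges $15\nu$ with $\nu^4\geq\mathbb E_x\norm{(\widetilde Q^x)^\dagger\widetilde Q^x-\id}_2^2$. Its core estimate is $\norm{Q-Q^\sharp}_4\leq\norm{Q^\dagger Q-\id}_2^{1/2}$, so the rounding loss scales like the \emph{square root} of the normalization error. You carry a normalization error of order $\varepsilon$ out of Steps~2 and~3, and then assert that $O(\norm{S-\id}_1^{1/2})$ is $O(\varepsilon)$. It is $O(\varepsilon^{1/2})$.

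Seed selection makes this worse. On bad seeds the folded value is only bounded by $(1+\sqrt{N_A})(1+\sqrt{N_B})$. Choosing one seed for all questions by Markov with threshold $\beta$ therefore costs about $\mathbb E N/\beta$ in value. With input normalization $\delta\approx\varepsilon$ you have $\mathbb E N\approx\varepsilon^2$, which forces $\beta\gg\varepsilon^2$ and hence $\nu=\beta^{1/4}\gg\varepsilon^{1/2}$. As written, the argument proves the theorem only with a fractional power of $\varepsilon$ in place of $C\varepsilon$.

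\textbf{The fix.} Run Steps~2 and~3 at accuracy $\varepsilon^3$; this is exactly what the stated parameters encode.
\begin{itemize}
\item In \cref{thm:low-degree-approximation-new}, take $s\approx(1-\alpha_\psi)\varepsilon^2/\log(2t)$, so that \cref{thm:whole-smoothing} costs $O(\varepsilon)$, and take $\eta\approx\varepsilon^3$. The requirement $k^\gamma\gtrsim\log(2t)/(s\eta^4)$ is then precisely the factor $\log^2(2t)/((1-\alpha_\psi)\varepsilon^{14})$ in $k$. With $\eta=\varepsilon$ you would only need $\varepsilon^{6}$ there, so your quoted $k$ already pays for the better accuracy you do not use.
\item Fold with input normalization $\delta=\varepsilon^3$. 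The stated $D$ gives $tk^{64}/D\lesssim\varepsilon^6=\delta^2$ and $tk^{32}/\sqrt D\lesssim\varepsilon=\delta^{1/3}$. Hence the expected mean-square normalization is $O(\varepsilon^6)$ and the expected value loss is $O(\varepsilon)$.
\item \Cref{lem:common-seed} with $\beta\approx\varepsilon^4$ then yields one seed with mean-square normalization $O(\varepsilon^4)$ and value loss $O(\varepsilon)$. Rounding with $\nu=O(\varepsilon)$ costs $O(\varepsilon)$.
\end{itemize}

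\textbf{Two smaller points.}
\begin{itemize}
\item The good time obtained by integrating the entropy derivative over $[s,2s]$ depends on the question. You therefore need question-dependent smoothing times $u_x,v_y\in[s,2s]$, which \cref{thm:whole-smoothing} permits, rather than a single $u$.
\item Do not control $\norm{\Phi(Q)}_4$ by hypercontractivity. That costs $e^{O(k)}$, which with $k=\poly(\log t)$ is incompatible with $D=\poly(t)$. It is also unnecessary: $\norm{\Phi(Q)}_4^4=\norm{\Phi(Q)^\dagger\Phi(Q)}_2^2$ is the normalization quantity itself, and the value comparison after folding can be done entirely in $2$-norms via \cref{prop:corr-contraction,prop:folded-norm-expectation,lem:multiplication-error}.
\end{itemize}
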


Below we state the necessary steps to prove \cref{thm:finite-copy-approximation}, and provide the proof of \cref{thm:finite-copy-approximation} at the end of this subsection.

\paragraph{Smoothing.}


First we show that adding a small amount of noise to the original strategies, i.e., \emph{smoothing},
will not alter the game value too much.
\begin{restatable}{theorem}{wholesmoothing}
  \label{thm:whole-smoothing}
  Fix a game
  $\sfG=(\mathcal X,\mathcal Y,\mathcal A,\mathcal B,\mathsf p_{\sfG},V)$
  and an integer $n\geq1$, and use the
  resource state $\psi^{\otimes n}$, where $\psi\in \Matrix_2\otimes \Matrix_2$ satisfies
  $\psi_A=\psi_B=\id/2$ and $\rho_{\max}(\psi)<1$.
  Let $\mathscr A=(\mathscr A^x)_{x\in\mathcal X}$ and
  $\mathscr B=(\mathscr B^y)_{y\in\mathcal Y}$ be local POVM strategies on
  $(\mathbb C^2)^{\otimes n}$, and put
  $t=\max\{|\mathcal A|,|\mathcal B|\}$.  Let
  $0<s\leq1/4$.  Let $\mathcal T_u$ be the
  $n$-qubit depolarizing semigroup.  For arbitrary choices
  $u_x\in[s,2s]$ and $v_y\in[s,2s]$, define the smoothed POVMs
  \begin{align}\label{eq:smoothed-definition}
    A_a^{x,u_x}:=\mathcal T_{u_x}(A_a^x),
    &\qquad
    \mathscr A^{x,u_x}:=(A_a^{x,u_x})_{a\in\mathcal A},
    \notag \\
    B_b^{y,v_y}:=\mathcal T_{v_y}(B_b^y),
    &\qquad
    \mathscr B^{y,v_y}:=(B_b^{y,v_y})_{b\in\mathcal B}.
  \end{align}
  Write $\mathscr A^{(u)}=(\mathscr A^{x,(u)})_x$ and
  $\mathscr B^{(v)}=(\mathscr B^{y,(v)})_y$.  Then
  \begin{equation*}
    \abs{
      \Val_{\sfG}(\mathscr A,\mathscr B)
      -\Val_{\sfG}(\mathscr A^{(u)},\mathscr B^{(v)})
    }
    \leq
    2\sqrt{
      \frac{\alpha_\psi}{1-\alpha_\psi}\,
      s\log(2t)
    }.
  \end{equation*}
  Here $0\leq\alpha_\psi<1$ is the complete-SDPI constant from
  \cref{lem:complete-sdpi}.
\end{restatable}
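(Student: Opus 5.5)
The plan is a hybrid argument: smooth Alice's side first, then Bob's, and bound each step by $\sqrt{\frac{\alpha_\psi}{1-\alpha_\psi}s\log(2t)}$. By \cref{prop:pauli-canonicalization} we may assume $\psi=\psi_{\boldsymbol\lambda}$ is canonical, since local unitaries commute with $\mathcal T_u$ (\cref{prop:depolarizing-product-unitaries}). Then \cref{prop:game-value-correlation-functional} writes the value as $\sum p_{\sfG}V\,\tau(A_a^x\mathcal K_n(B_b^y))$. Both $\mathcal T_u$ and $\mathcal K_n$ are diagonal in the Pauli basis, so they commute and are self-adjoint up to conjugation.

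\textbf{Step 1 (reduce to a trace distance after $\mathcal K_n$).} For a fixed question pair, put $\Delta_a:=A_a^x-\mathcal T_{u_x}(A_a^x)$. The operators $E_a:=\sum_b V(a,b\mid x,y)B_b^y$ satisfy $0\le E_a\le\id$. Hence
\[
\Bigl|\sum_{a}\tau\bigl(\mathcal K_n(\Delta_a)E_a\bigr)\Bigr|\le\sum_a\norm{\mathcal K_n(\Delta_a)}_1 .
\]
Let $R^x:=\frac1t\sum_a\ketbra a\otimes A_a^x$, a normalized density for $\tau_{t2^n}$ up to scaling. Then $\sum_a\norm{\mathcal K_n(\Delta_a)}_1$ is $t$ times the normalized trace distance between $\mathcal K_n(R^x)$ and $\mathcal K_n\mathcal T_{u_x}(R^x)$, with both maps acting blockwise. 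By Pinsker (\cref{prop:quantum-pinsker}) this is at most $\sqrt{2\RelEnt_\tau(\mathcal K_nR^x\|\mathcal K_n\mathcal T_{u_x}R^x)}$, after accounting for the normalization by $t$. By concavity of the square root, averaging over $x$ reduces the step to a relative-entropy bound for each fixed $x$.

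\textbf{Step 2 (complete SDPI plus entropy decay).} I will use the complete strong data-processing inequality of \cref{lem:complete-sdpi} for $\mathcal K_n$ relative to the depolarizing semigroup. The form I expect is that $\RelEnt(\mathcal K_n\rho\|\mathcal K_n\mathcal T_u\rho)$ is bounded by $\alpha_\psi$ times a relative-entropy quantity for $\rho$ itself, tensorized over the $n$ qubits and stable under the classical register $\ket a$. Summing the geometric series in $\alpha_\psi$ produces the factor $\frac{\alpha_\psi}{1-\alpha_\psi}$. What remains is to show that $\rho=R^x$ loses at most $O(s\log(2t))$ relative entropy along the semigroup on the window $[0,2s]$.

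For this I will use \cref{cor:depolarizing-stroock-varopoulos} and \cref{prop:tracial-entropy-derivative}. The function $u\mapsto\Ent(\mathcal T_uR^x)$ is nonincreasing, and its derivative is controlled by the Dirichlet form $\sum_a\tau(\sqrt{A_a}\cL\sqrt{A_a})$. The total drop from $u=0$ to $\infty$ is at most $\Ent(R^x)-\Ent(\id/t)\le\log t$, because the block operator satisfies $0\le R^x$ and $\sum_aA_a=\id$. Convexity, or monotonicity of the dissipation along the flow, then converts this total drop into a bound of order $s\log(2t)$ over a window of length at most $2s$. The case $t=1$ is the reason for the $2t$ in the bound.

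\textbf{Step 3 (Bob's side and assembly).} Repeat the argument for Bob with $\mathscr A$ replaced by the already smoothed $\mathscr A^{(u)}$, which is still a POVM because $\mathcal T_u$ is unital and CP. Symmetry of $\mathcal K_n$ under the transpose convention handles Bob's side. Adding the two contributions gives the factor $2$.

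The main obstacle is Step 2: converting the complete SDPI into a bound that is linear in $s$ and $\log t$ and independent of $n$. In particular, it must hold uniformly for an arbitrary $u_x\in[s,2s]$ rather than for an average over $u$. I would first try to prove a pointwise-in-$u$ inequality of the form $\RelEnt(\mathcal K_n\rho\|\mathcal K_n\mathcal T_u\rho)\le\frac{\alpha_\psi}{1-\alpha_\psi}\bigl(\Ent(\rho)-\Ent(\mathcal T_{u}\rho)\bigr)$, combined with the Stroock--Varopoulos bound and $u\le 2s\le1/2$. If that fails, the fallback is to use the monotone decreasing dissipation rate to move from an average over the window to a fixed $u$.
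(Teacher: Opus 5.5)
There is a genuine gap in Step~2. You reduce to the claim that the block density $Z=t\sum_a\ketbra{a}\otimes A_a^x$ loses only $O(s\log(2t))$ entropy along $\mathcal T_u$ on $[0,2s]$. That claim is false, so no bound of the form $\frac{\alpha_\psi}{1-\alpha_\psi}\bigl(\Ent(Z)-\Ent(\mathcal T_uZ)\bigr)$ can produce a factor of $s$.

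Take $t=2$ and the POVM $\{\tfrac12(\id\pm\pauliZ^{\otimes n})\}$. Then $\Ent(Z)=\log 2$ while $\Ent(\mathcal T_uZ)\approx\tfrac12e^{-2un}$, so once $n\gg1/s$ essentially the whole drop $\log 2$ happens before time $s$. Even granting convexity of $u\mapsto\Ent(\mathcal T_uZ)$, it points the wrong way: the initial window carries \emph{at least} its share of the total drop. Your fallback only bounds the dissipation rate at times $u\geq s$. The quantity you need compares time $0$ with time $u$, so the uncontrolled layer $[0,s]$ cannot be avoided, and in this example that is where everything happens.

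In the example the value change is small only because $\K_n$ multiplies $\pauliZ^{\otimes n}$ by a factor of modulus at most $\rho_{\max}(\psi)^n$. The smallness must therefore be extracted after applying $\K_n$, not from the entropy flow of the measurement itself. Stroock--Varopoulos plays no role here; the paper uses it only to choose a good time in the low-degree step. Relatedly, \cref{lem:complete-sdpi} is a single-qubit contraction relative to $\Pi$. It does not compare $\K_n\rho$ with $\K_n\mathcal T_u\rho$, so it cannot be invoked in the form you ``expect''.

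The missing idea is a random-subset decomposition plus a total-influence bound. Write $\mathcal T_u=\mathbb E_{\bfS}\Pi_{\bfS}$, where each qubit lies in $\bfS$ independently with probability $1-e^{-u}$. For $Y=\K_nZ$ and every fixed $u$, convexity of $\RelEnt_\tau(Y\|\cdot)$, the chain rule for partial depolarizations, and data processing give
\[
\RelEnt_\tau(Y\|\mathcal T_uY)\leq(1-e^{-u})\sum_{i=1}^n\RelEnt_\tau(Y\|\Pi_iY).
\]
This is where the factor $u\leq2s$ comes from, pointwise in $u$.

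The complete SDPI enters only to bound the total influence uniformly in $n$. Set $Z_0=Z$ and $Z_i=\K_iZ_{i-1}$, and apply the one-qubit SDPI with the answer register and all other qubits as ancilla. Using $\Pi_i\K_i=\Pi_i$ and $\RelEnt_\tau(R\|\Pi_{[n]}R)=\RelEnt_\tau(R\|\Pi_iR)+\RelEnt_\tau(\Pi_iR\|\Pi_{[n]}R)$, you get
\[
\RelEnt_\tau(Z_i\|\Pi_iZ_i)\leq\frac{\alpha_\psi}{1-\alpha_\psi}\Bigl[\RelEnt_\tau(Z_{i-1}\|\Pi_{[n]}Z_{i-1})-\RelEnt_\tau(Z_i\|\Pi_{[n]}Z_i)\Bigr].
\]
Data processing under $\K_{i+1},\dots,\K_n$ and telescoping then give $\sum_i\RelEnt_\tau(Y\|\Pi_iY)\leq\frac{\alpha_\psi}{1-\alpha_\psi}\RelEnt_\tau(Z\|\Pi_{[n]}Z)\leq\frac{\alpha_\psi}{1-\alpha_\psi}\log t$.

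Finally, to reach the stated constant $2$, sharpen your Step~1. The operator $\K_n(Z-\mathcal T_uZ)$ is Hermitian and traceless, so with $0\leq F\leq\id$ you may replace $F$ by $F-\tfrac12\id$. This gains a factor $\tfrac12$ over your bound $\sum_a\norm{\K_n(\Delta_a)}_1$.
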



Note that the map $\mathcal{T}_u$ is completely positive and trace preserving,
so after applying $\mathcal{T}_u$,
the output operators automatically form POVM measurements.
Also, \cref{thm:whole-smoothing} allows us to choose the noise parameters $u_x$ (resp. $v_y$) freely in the regime $[s, 2s]$ for each individual $x\in\mathcal{X}$ (resp. $y\in\mathcal{Y}$).
This facilitates further processing.

\paragraph{Low-Degree Approximation.}

The following theorem shows that our smoothed strategies can further be approximated by low-degree operators.
However, the low-degree truncation process does not automatically preserve positivity.
So we adapt the \emph{block operator} representation of POVM's defined in \cref{subsec:game-value-for-block-operators}.
In particular, we will construct low-degree operators $Q^x_a$ and $R^y_b$,
so that the approximate POVM's formed by the PSD operators $\br{Q^x_a}^\dagger Q^x_a$
and $\br{R^y_b}^\dagger R^y_b$ preserve the game value.

For an $x\in\mathcal{X}$,
we will represent the operators $\br{Q^x_a}_a$
succinctly by a block operator
$Q^x = \sum_{a}\ket{a}\otimes Q^x_a$
as in \cref{subsec:game-value-for-block-operators},
and keep using this representation in the future steps,
to preserve positivity of the POVM in the game strategies.

\begin{restatable}[Low-degree approximation]{theorem}{lowdegreeapproximationnew}
\label{thm:low-degree-approximation-new}
  There are absolute constants $C, \gamma$ with the following property.
  Let $n \ge 1, t\ge 2, s > 0, \eta\in(0,1)$.
  Fix a game $\sfG=(\mathcal X,\mathcal Y,\mathcal A,\mathcal B,\mathsf p_{\sfG},V)$ such that $t=\max\set{\abs{\mathcal A}, \abs{\mathcal B}}$, a two-qubit state $\psi$ satisfying $\psi_A = \psi_B = \id/2$.
  Let $\mathscr A=(\mathscr A^x)_{x\in\mathcal X}$ and
  $\mathscr B=(\mathscr B^y)_{y\in\mathcal Y}$ be local POVM strategies on
  $(\mathbb C^2)^{\otimes n}$.

  Let $k$ be an integer greater than $\br{\frac{C\log\br{2t}}{s\eta^4}}^{1/\gamma}$.
  There exist smoothing-time families $u=(u_x)_{x\in\mathcal X}$ and
  $v=(v_y)_{y\in\mathcal Y}$ with $u_x,v_y\in[s,2s]$ for all $x,y$,
  and families of block operators
  $Q^x=\sum_a\ket{a}\otimes Q_a^x$ and
  $R^y=\sum_b\ket{b}\otimes R_b^y$, such that, for the smoothed strategies
  $\mathscr A^{(u)}=(\mathscr A^{x,u_x})_x$ and
  $\mathscr B^{(v)}=(\mathscr B^{y,v_y})_y$ defined in
  \cref{eq:smoothed-definition}, the following hold.
  \begin{itemize}
      \item Each operator $Q^x_a$ and $R^y_b$ has Pauli degree at most $k$.
      \item
        For any $x\in\mathcal X$ and $y\in\mathcal Y$, we have
        \begin{equation*}
          \norm{\br{Q^x}^\dagger Q^x - \id}_2 \le \br{2+\eta}\eta
          \quad\text{and}\quad
          \norm{\br{R^y}^\dagger R^y - \id}_2 \le \br{2+\eta}\eta.
        \end{equation*}
      \item
        Define approximate strategies $\mathscr Q=\br{\mathscr{Q}^x}_x$ and $\mathscr R=\br{\mathscr{R}^y}_y$,
  where $\mathscr{Q}^x = \br{\br{Q^x_a}^\dagger Q^x_a}_a$
  and $\mathscr{R}^y = \br{\br{R^y_b}^\dagger R^y_b}_b$, then
        \begin{equation*}
        \abs{\Val_\sfG\br{\mathscr Q, \mathscr R} - \Val_\sfG\br{\mathscr{A}^{(u)}, \mathscr{B}^{(v)}}} \le 15\eta.
        \end{equation*}
  \end{itemize}

\end{restatable}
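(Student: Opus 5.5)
The plan is to work questionwise with the square-root block operators of the smoothed POVMs, find low-degree approximants in normalized Schatten $4$-norm, and then convert the $4$-norm error into the normalization and value bounds using Hölder and \cref{lem:game-expression-holder}. Fix $x$. Consider the block-diagonal positive operator $\mathbf A^x:=\sum_a\ketbra{a}\otimes A_a^x$ on $\mathbb C^t\otimes(\mathbb C^2)^{\otimes n}$, normalized by the trace $\tau_t\otimes\tau_{2^n}$. Then $\tau(\mathbf A^x)=1/t$, and since $\mathbf A^x/\tau(\mathbf A^x)\le t\,\id$, its relative entropy to the identity is at most $\log t$. Evolve by $\operatorname{id}\otimes\mathcal T_u$, which is again a reversible tracial QMS with generator $\id\otimes\cL$; on $\mathbf A^x_u$ its Dirichlet form splits over blocks.

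\textbf{Step 1: choosing $u_x$.} Since $\Ent\ge0$ and $\Ent(\mathbf A^x_0)\le\log(2t)/t$, the entropy decrease over $[s,2s]$ is at most $\log(2t)/t$. By the mean value theorem there is $u_x\in[s,2s]$ at which $-\frac{\dd}{\dd u}\Ent(\mathbf A^x_u)\le \log(2t)/(ts)$. (Use $\mathbf A^x_u>0$ for $u>0$, or perturb by $\varepsilon\id$ if needed.) Applying \cref{cor:depolarizing-stroock-varopoulos} blockwise, which is legitimate since the generator acts blockwise, gives
\[
\sum_a\tau_{2^n}\bigl(\sqrt{A_a^{x,u_x}}\,\cL\sqrt{A_a^{x,u_x}}\bigr)\le\frac{\log(2t)}{4s}.
\]
For $W^x:=\sum_a\ket a\otimes\sqrt{A_a^{x,u_x}}$, \cref{prop:depolarizing-dirichlet-pauli-expansion} then gives $\sum_\alpha|\alpha|\,\|\widehat{W^x}(\alpha)\|^2\le\log(2t)/(4s)$. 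Since $1-e^{-\delta m}\le\delta m$, Parseval yields $\|W^x-\mathcal T_\delta W^x\|_2^2\le\delta\log(2t)/(4s)$. To pass to the $4$-norm, use $(W^x)^\dagger W^x=\id$ and \cref{prop:depolarizing-contractivity}, so $\|W^x-\mathcal T_\delta W^x\|_\infty\le2$. Hence $\|Z\|_4^4\le\|Z\|_\infty^2\|Z\|_2^2$ gives $\|W^x-\mathcal T_\delta W^x\|_4\lesssim(\delta\log(2t)/s)^{1/4}$. The same construction produces $v_y$ and $R$ for Bob.

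\textbf{Step 2: low-degree approximation (main obstacle).} I would establish and invoke a matrix-valued, block-operator version of the Mendel--Naor high-degree decay theorem \cite[Theorem~5.1]{MN14}. The statement I want is: if $\|W\|_\infty\le1$ and $\|W-\mathcal T_\delta W\|_4\le\epsilon$, then $\|W-Q\|_4\le\eta$ for some $Q$ of Pauli degree at most $k$, where $k$ depends only on $\delta,\epsilon,\eta$ polynomially, with the exponent $\gamma$ absorbed into the constants, and the bound is independent of $n$ and $t$. The route is to transfer the Boolean-cube estimate for $\mathcal T_\delta$ on $L_4^{>k}$ to Pauli space via the identification $G^n\cong\Omega_{2n}$, using the Pauli-conjugation multipliers of \cref{lem:pauli-multiplier-implementation} to realize characters unitarily, and then to use Schatten duality. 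The candidate is $Q:=(\mathcal T_\delta W)^{\le k}$, with the tail $(\mathcal T_\delta W)^{>k}$ controlled by that decay estimate. Choosing $\delta\asymp s\eta^4/\log(2t)$ makes the Step 1 error at most $\eta/2$, which forces $k\ge(C\log(2t)/(s\eta^4))^{1/\gamma}$. I expect this transfer to be the hardest part: Mendel--Naor's bound must be made uniform in the UMD/Schatten-$4$ setting for $t$-column blocks without picking up a dependence on $t$. Failing that, I would first try a direct proof via a Stein-type or Pisier-type inequality for the depolarizing multiplier on $L_4(\tau)$.

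\textbf{Step 3: normalization and value.} Write $(Q^x)^\dagger Q^x-\id=(Q^x-W^x)^\dagger Q^x+(W^x)^\dagger(Q^x-W^x)$. Hölder with $\|W^x\|_4=1$ and $\|Q^x\|_4\le1+\eta$ gives $\|(Q^x)^\dagger Q^x-\id\|_2\le\eta(1+\eta)+\eta=(2+\eta)\eta$, and the same for $R^y$. For the value, note that the smoothed strategy's value is $\mathbb E_{x,y}\cG_{x,y}(W^x,W^x;W'^y,W'^y)$, where $W'^y$ is Bob's square-root block. Telescope $\cG(W,W;W',W')-\cG(Q,Q;R,R)$ into four differences, each with one argument replaced by $W-Q$ or $W'-R$. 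Each term is bounded via \cref{lem:game-expression-holder} by $\eta(1+\eta)^3$, so the total is at most $4\eta(1+\eta)^3\le15\eta$ for $\eta<1$.
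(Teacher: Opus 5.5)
Your Steps~1 and~3 follow the paper's argument: entropy integration over $[s,2s]$, blockwise Stroock--Varopoulos, Parseval with $\norm{Z}_4^4\le\norm{Z}_\infty^2\norm{Z}_2^2$, and then H\"older with \cref{lem:game-expression-holder}. Step~2, however, has a genuine gap as you describe it. You take the truncation $Q=(\mathcal T_\delta W)^{\le k}$ and want to control the tail $(\mathcal T_\delta W)^{>k}=\mathcal T_\delta(W^{>k})$ by the Mendel--Naor decay estimate. That estimate applies only to inputs already in the high-degree subspace, so it gives $\norm{\mathcal T_\delta(W^{>k})}_4\le Ce^{-ak\min\{\delta,\delta^B\}}\norm{W^{>k}}_4$. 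You would then need $\norm{W^{>k}}_4\lesssim\norm{W}_4$ (or $\lesssim\norm{W}_\infty$) uniformly in $k$ and $n$, and no such bound exists. The degree-$k$ truncation is not bounded on $L_4$ uniformly in $k$. Already for commuting $\pauliZ$-type operators, i.e.\ Boolean functions, the level projections have $L_4$ norms growing exponentially in the degree. Because your $\delta\asymp s\eta^4/\log(2t)$ is small, the factor $e^{-ak\delta^B}$ cannot absorb a $c^k$ loss for any $k$. Dualizing does not rescue the truncation either; it only moves the same unbounded projection to $S_{4/3}$.

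The missing idea is to avoid projecting at all and work with the distance to $E_{\le k}$. By \cref{lem:pauli-distance-duality}, $\operatorname{dist}_4(A,E_{\le k})$ is the supremum of $|\tau(B^\dagger A)|$ over $B\in E_{>k}$ with $\norm{B}_{4/3}\le1$. Since $\mathcal T_\theta$ is self-adjoint and maps $E_{>k}$ into itself, Mendel--Naor can be applied to $B$ directly in $L_{4/3}$, where no projection is needed. This gives $\operatorname{dist}_4(\mathcal T_\theta A,E_{\le k})\le\frac12\operatorname{dist}_4(A,E_{\le k})$ for $\theta=bk^{-\gamma}$. The triangle inequality $\operatorname{dist}_4(W,E_{\le k})\le\norm{W-\mathcal T_\theta W}_4+\operatorname{dist}_4(\mathcal T_\theta W,E_{\le k})$ then produces a \emph{best} degree-$k$ approximant, not the truncation, within $2\norm{W-\mathcal T_\theta W}_4$ (\cref{thm:square-tail}). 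The paper handles your worry about uniformity in $t$ in three ways:
\begin{itemize}
\item It places $W$ in a Hermitian dilation inside $\Matrix_{t+1}\otimes\Matrix_{2^n}$. The Schatten-$4$ norm rescales by the fixed factor $(2/(t+1))^{1/4}$, and the off-diagonal block is extracted contractively (\cref{lem:rectangular-tail}).
\item It uses dimension-free $K$-convexity of normalized $S_{4/3}$.
\item It factors the Pauli depolarizing semigroup as the Boolean heat semigroup on $\Omega_{2n}$ followed by a contraction. This is needed because the Pauli weight counts nonzero pairs, not bits.
\end{itemize}
Finally, there is a small arithmetic slip in Step~3: $4\eta(1+\eta)^3$ can approach $32\eta$, so it is not at most $15\eta$. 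Bounding the four telescoping terms in order by $\eta(1+\eta)^3$, $\eta(1+\eta)^2$, $\eta(1+\eta)$, and $\eta$ gives $(1+\eta)^4-1\le15\eta$.
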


\paragraph{Dimension Reduction.}

For a hash $h\colon[n]\to[D]$ and $\alpha\in(\mathbb F_2^2)^n$, set
\begin{equation}
\label{eq:hashed-pauli-label}
  (H\alpha)_j:=\sum_{i:h(i)=j}\alpha_i\in\mathbb F_2^2.
\end{equation}
Let $r_1,\ldots,r_n$ be independent uniform elements of $\mathbb F_2^2$.
For $r_i=(r_{i,1},r_{i,2})$ and $\alpha_i=(\alpha_{i,1},\alpha_{i,2})$, define
\begin{equation}
\label{eq:hashed-pauli-map}
  \chi_r(\alpha):=(-1)^{\sum_{i=1}^n(r_{i,1}\alpha_{i,1}+r_{i,2}\alpha_{i,2})},
  \qquad
  \Phi_{h,r}(P_\alpha):=\chi_r(\alpha)P_{H\alpha}.
\end{equation}

The following theorem shows that the map $\Phi_{h,r}$ can reduce the dimension of block approximants while approximately preserving the game value.

\begin{restatable}{theorem}{foldedblockoutput}
  \label{thm:folded-block-output}
  There is an absolute constant $C>0$ with the following property.  Fix a
  game $\sfG=(\mathcal X,\mathcal Y,\mathcal A,\mathcal B,\mathsf p_{\sfG},V)$,
  let $n,k,D\geq1$, and use $n$ copies of a canonical state
  $\psi_{\boldsymbol\lambda}$, whose marginals satisfy
  $(\psi_{\boldsymbol\lambda})_A=(\psi_{\boldsymbol\lambda})_B=\id/2$.  Let
  $0<\delta<1$, and put
  $t=\max\{|\mathcal A|,|\mathcal B|,2\}$.  Let
  $(Q^x)_{x\in\mathcal X}$ and
  $(R^y)_{y\in\mathcal Y}$ be collections of block approximants
  $Q^x\colon(\mathbb C^2)^{\otimes n}\to
  \mathbb C^{\mathcal A}\otimes(\mathbb C^2)^{\otimes n}$ and
  $R^y\colon(\mathbb C^2)^{\otimes n}\to
  \mathbb C^{\mathcal B}\otimes(\mathbb C^2)^{\otimes n}$.
  Write $\mathscr Q,\mathscr R$ for the families of measurement elements
  defined in \eqref{eq:block-operator-families}.
  Each block $Q_a^x$ and $R_b^y$ has Pauli degree at most $k$, and the
  common random map $\Phi_{h,r}$ is applied to these blocks. Suppose they satisfy
  \begin{equation*}
    \norm{(Q^x)^\dagger Q^x-\id}_2,
    \norm{(R^y)^\dagger R^y-\id}_2
    \leq\delta
  \end{equation*}
  for every question $x,y$.  If
  \(\,D\geq
    Ck^{64}\max\left\{
      \frac{t^2}{\delta^{2/3}},
      \frac{t}{\delta^2}
    \right\},
  \)
  then there is one deterministic seed $\omega_0=(h,r)$ for which the
  following holds.  Define
  \(
    \widetilde Q_a^x:=\Phi_{\omega_0}(Q_a^x),
    \widetilde R_b^y:=\Phi_{\omega_0}(R_b^y)
  \)
  and form the block operators $\widetilde Q^x=\sum_a|a\rangle\otimes\widetilde Q_a^x$
  and $\widetilde R^y=\sum_b|b\rangle\otimes\widetilde R_b^y$.
  Define the approximate strategies
  \begin{equation*}
    \widetilde{\mathscr Q}
    :=\bigl(\bigl((\widetilde Q_a^x)^\dagger\widetilde Q_a^x\bigr)_a\bigr)_x,
    \qquad
    \widetilde{\mathscr R}
    :=\bigl(\bigl((\widetilde R_b^y)^\dagger\widetilde R_b^y\bigr)_b\bigr)_y.
  \end{equation*}
  These operators satisfy
  \begin{equation*}
    \mathbb E_{x\sim\mathsf p_X}
      \norm{(\widetilde Q^x)^\dagger\widetilde Q^x-\id}_2^2,
      \quad
      \mathbb E_{y\sim\mathsf p_Y}
      \norm{(\widetilde R^y)^\dagger\widetilde R^y-\id}_2^2
    \leq C\delta^{4/3},
  \end{equation*}
  and
  \begin{equation*}
    \Val_{\sfG}(\widetilde{\mathscr Q},\widetilde{\mathscr R})
    \geq
    \Val_{\sfG}(\mathscr Q,\mathscr R)
    -C\delta^{1/3}.
  \end{equation*}
\end{restatable}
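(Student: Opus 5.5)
We will prove \cref{thm:folded-block-output} by the probabilistic method: we bound the expectations, over the random seed $\omega=(h,r)$ with $h$ a uniform hash and $r$ uniform signs, of the normalization errors and of the value loss, and then use Markov's inequality to fix one seed. The first step is to record how $\Phi=\Phi_{h,r}$ interacts with the game. The map is linear and sends $P_\alpha$ to $\pm P_{H\alpha}$, so its only nontrivial property is behavior under products. For the correlation we use $\Corr_{\psi,n}(X,Y)=\sum_\alpha w_{\boldsymbol\lambda}(\alpha)\widehat X(\alpha)\widehat Y(\alpha)$. Write $\Phi(X)=\sum_\alpha\chi_r(\alpha)\widehat X(\alpha)P_{H\alpha}$, and similarly for $Y$. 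Averaging over $r$ kills all cross terms with $\alpha\neq\beta$, because $\mathbb E_r\chi_r(\alpha)\chi_r(\beta)=\delta_{\alpha\beta}$. This gives
\[
\mathbb E_r\Corr_{\psi,D}(\Phi X,\Phi Y)=\sum_\alpha w_{\boldsymbol\lambda}(H\alpha)\widehat X(\alpha)\widehat Y(\alpha).
\]
This is not exactly the original correlation, since $w_{\boldsymbol\lambda}(H\alpha)\neq w_{\boldsymbol\lambda}(\alpha)$ when coordinates of $\alpha$ collide under $h$. For $|\alpha|\leq 2k$, a collision occurs with probability at most $\binom{2k}{2}/D$. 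The correlation error is then controlled by $\sum_{|\alpha|\le 2k}\Pr[\text{collision}]\,|\widehat X(\alpha)\widehat Y(\alpha)|\le \frac{4k^2}{D}\|X\|_2\|Y\|_2$. Here $X=Q_a^{x\dagger}Q_a^x$ and $Y=R_b^{y\dagger}R_b^y$ have degree at most $2k$ by \cref{prop:pauli-degree-subadditivity}.

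The central ingredient is \emph{almost multiplicativity}. For $A,B$ of degree at most $k$, we want
\[
\mathbb E_\omega\|\Phi(AB)-\Phi(A)\Phi(B)\|_2^2\le \frac{\poly(k)}{D}\|A\|_4^2\|B\|_4^2.
\]
We will prove this as a separate lemma and expect it to be the main obstacle. Expanding by \cref{prop:pauli-product-convolution}, the product $\Phi(A)\Phi(B)$ carries the folded cocycle $\kappa_D(H\alpha,H\beta)$, whereas $\Phi(AB)$ carries $\kappa_n(\alpha,\beta)$. These agree unless $\supp\alpha\cup\supp\beta$ has a collision under $h$. The plan is as follows. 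First, we condition on the collision pattern, that is, the partition of the touched coordinates induced by $h$. Second, we use M\"obius inversion over partitions to write the error as a sum over nontrivial patterns. Each pattern costs roughly $(k^2/D)^{\#\text{merges}}$. Third, inside a pattern, the phase discrepancy is a function on $\mathcal H_4$-type constraint groups, as in \cref{def:four-input-constraint-group}. We expand it in characters, and each character acts as a Pauli multiplier. By \cref{lem:pauli-multiplier-implementation}, such a multiplier is conjugation by a unitary $U_\rho$. This lets us bound each term by Schatten H\"older as $\|A\|_4\|B\|_4$, and not coefficientwise, which would lose factors of $n$ or $e^k$. The polynomial $k^{64}$ comes from counting derivatives or patterns. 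What we would try first is to restrict to single pairwise collisions, which should dominate, and then bound higher-order patterns crudely by powers of $k^2/D$.

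Given these two tools, we assemble the argument as follows. Set $\widetilde M_a=\Phi(Q_a)^\dagger\Phi(Q_a)$. For normalization, we write
\[
\widetilde Q^\dagger\widetilde Q-\id=\sum_a\bigl[\Phi(Q_a)^\dagger\Phi(Q_a)-\Phi(Q_a^\dagger Q_a)\bigr]+\Phi(Q^\dagger Q-\id),
\]
using $\Phi(Q_a^\dagger)=\Phi(Q_a)^\dagger$, which holds since $\Phi$ has real signs and $P_{H\alpha}$ is Hermitian. The last term has $\|\cdot\|_2\le\delta$ in expectation, because $\Phi$ is an $L_2$ isometry on average: $\mathbb E_r\|\Phi X\|_2^2=\|X\|_2^2$. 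The sum over $a$ contributes at most $t\cdot\poly(k)/D\cdot\sum_a\|Q_a\|_4^4$ by Cauchy--Schwarz over $a$. Since $\|Q\|_4^4=\tau((Q^\dagger Q)^2)\le(1+\delta)^2$, the fourth norms are bounded by $O(1)$ (for instance $\sum_a\|Q_a\|_4^4\le\|Q\|_4^4$).

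For the value, we compare $\mathcal G_{x,y}$ before and after folding. We replace $\widetilde M_a$ by $\Phi(M_a)$ up to the multiplicativity error, using \cref{lem:game-expression-holder}-style H\"older bounds with the $\|\cdot\|_4$ norms. We then use the expected-correlation identity above, together with $\sum_{a,b}|\widehat M_a\widehat N_b|$ controlled via Cauchy--Schwarz summed over answers, which gives a factor of $t$.

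Choosing $D\ge Ck^{64}\max\{t^2\delta^{-2/3},t\delta^{-2}\}$ makes the expected squared normalization error $O(\delta^{4/3})$ and the expected value loss $O(\delta^{1/3})$. Markov's inequality applied to the three nonnegative quantities then yields one seed $\omega_0$ satisfying all the bounds simultaneously, after adjusting $C$.
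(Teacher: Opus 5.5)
Your outer structure matches the paper's: the folding map, $\mathbb E_r\norm{\Phi(X)}_2^2=\norm X_2^2$, the $4k^2/D$ collision bound for the expected correlation, and the normalization decomposition with $\sum_a\norm{Q_a}_4^4\le\norm{Q^\dagger Q}_2^2$. You also correctly identify almost multiplicativity as the crux. Two steps, however, would fail as written.

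\textbf{Gap 1: the multiplicativity lemma.} Fix a collision pattern and a character. Schatten H\"older then bounds the contribution of a single \emph{placement} of the collided coordinates in $[n]$. A pattern on $v$ coordinates has on the order of $n^v$ placements, so summing these bounds loses powers of $n$. This happens already for the single-pair collisions you want to treat first. Falling back on ``crude'' coefficientwise bounds for the other patterns reintroduces $\sum_{\alpha_1+\cdots+\alpha_4=0}\prod_j|\widehat A_j(\alpha_j)|$. That sum is controlled only with $n$-dependent or $e^{O(k)}$ losses, which is incompatible with $D\geq Ck^{64}\max\{\cdots\}$. What is missing is a way to sum over placements without discarding cancellations. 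The paper does this as follows.
\begin{itemize}
\item It rewrites the cumulant products through factors $\varphi_a(\bm\alpha_{i_a})-1$, which vanish on inactive coordinates.
\item It introduces $F(z)=\sum_{\boldsymbol\alpha}c(\boldsymbol\alpha)\prod_{a,i}\bigl(1-z_a+z_a\varphi_a(\bm\alpha_i)\bigr)$. For $z\in[0,1]^v$, $F(z)$ is a Bernoulli average of $\tau\bigl((A_3^{(\varepsilon)}A_4^{(\varepsilon)})^\dagger A_1^{(\varepsilon)}A_2^{(\varepsilon)}\bigr)$ with Pauli-conjugated inputs. Hence $\abs{F(z)}\le\prod_j\norm{A_j}_4$.
\item $F$ has degree at most $4k$ in each variable, and its mixed derivative at $0$ is exactly the sum over all coordinate tuples.
\item The Markov brothers' inequality then gives an $n$-free factor $(32k^2)^v$.
\item A second M\"obius inversion passes from pairwise-distinct coordinates to arbitrary ones.
\end{itemize}
Without some such device, your plan does not yield $\poly(k)/D$.

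\textbf{Gap 2: choosing the seed.} You cannot finish by applying Markov's inequality to ``three nonnegative quantities.'' The value loss $\Val_{\sfG}(\mathscr Q,\mathscr R)-V(\omega)$ is not nonnegative. Moreover, your correlation identity relies on orthogonality in $r$, so it controls only $\mathbb E_\omega V(\omega)$, not $\mathbb E_\omega\abs{V(\omega)-\Val_{\sfG}(\mathscr Q,\mathscr R)}$. A priori, the seeds with bad normalization could carry a large part of $\mathbb E_\omega V$.

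The paper closes this gap with the pointwise bound $0\le V(\omega)\le\bigl(1+\sqrt{N_A(\omega)}\bigr)\bigl(1+\sqrt{N_B(\omega)}\bigr)$. This follows from positivity, $\abs{\Corr_{\psi,D}(X,Y)}\le\norm X_2\norm Y_2$, and Cauchy--Schwarz over questions. It then applies Markov at threshold $\beta\asymp\delta^{4/3}$, together with Cauchy--Schwarz, to show that the bad event contributes only $O(\delta^{2/3})$ to $\mathbb E_\omega V$. Hence some good seed keeps value at least $\Val_{\sfG}(\mathscr Q,\mathscr R)-O(\delta^{1/3})$. This is also where the $\delta^{4/3}$ in the conclusion comes from: under the stated $D$, the expected normalization error is actually $O(\delta^2)$.

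\textbf{A smaller point.} The error $\Phi(Q_a)^\dagger\Phi(Q_a)-\Phi(Q_a^\dagger Q_a)$ is not of the form $Q_1^\dagger Q_2$, so \cref{lem:game-expression-holder} does not apply to it. The value comparison should instead use the $L_2$ bound on $\Corr_{\psi,D}$, together with the fact that the predicate matrix has operator norm at most $t$.
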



\paragraph{Rounding.}

Finally, we round the dimension-reduced block approximants to exact POVMs.

\begin{restatable}{lemma}{polarsoundness}
  \label{lem:polar-soundness}
  Fix a game
  $\sfG=(\mathcal X,\mathcal Y,\mathcal A,\mathcal B,\mathsf p_{\sfG},V)$,
  an integer $n\geq1$, and the resource state
  $\psi^{\otimes n}$ satisfying $\psi_A=\psi_B=\id/2$.  Let
  $(Q^x)_{x\in\mathcal X}$ and
  $(R^y)_{y\in\mathcal Y}$ be collections of block approximants
  $Q^x\colon(\mathbb C^2)^{\otimes n}\to
    \mathbb C^{\mathcal A}\otimes(\mathbb C^2)^{\otimes n}$ and
  $R^y\colon(\mathbb C^2)^{\otimes n}\to
    \mathbb C^{\mathcal B}\otimes(\mathbb C^2)^{\otimes n}$.
  Write $\mathscr Q,\mathscr R$ for the families of measurement elements
  defined in \eqref{eq:block-operator-families}.
  Suppose, for some $0\leq\nu\leq1$, that
  \begin{equation*}
    \mathbb E_{x\sim\mathsf p_X}\norm{(Q^x)^\dagger Q^x-\id}_2^2\leq\nu^4,
    \qquad
    \mathbb E_{y\sim\mathsf p_Y}\norm{(R^y)^\dagger R^y-\id}_2^2\leq\nu^4.
  \end{equation*}
  Then there exist an Alice POVM strategy $\mathscr A^\sharp$ and a Bob
  POVM strategy $\mathscr B^\sharp$, both on $(\mathbb C^2)^{\otimes n}$, such that
  \begin{equation*}
    \abs{
      \Val_{\sfG}(\mathscr Q,\mathscr R)
      -\operatorname{Val}_{\sfG}(\mathscr A^\sharp,\mathscr B^\sharp)
    }
    \leq15\nu.
  \end{equation*}
\end{restatable}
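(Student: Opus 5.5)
Fix a question $x$ and set $S^x:=(Q^x)^\dagger Q^x$ and $E^x:=S^x-\id$. The rounding is polar: on the support of $S^x$ put $\widehat Q^x:=Q^x(S^x)^{-1/2}$, an isometry on that support. Let $\Pi^x$ be the projection onto $\ker S^x$ and add the block $\Pi^x$ to a fixed answer $a_0$. This yields a block operator $W^x$ with $(W^x)^\dagger W^x=\id$, so $\mathscr A^{\sharp,x}:=((W^x_a)^\dagger W^x_a)_a$ is an exact POVM; Bob's side is built the same way from $R^y$. The task is then to bound $\|W^x-Q^x\|_4$ and to convert that bound into a game-value estimate through the multilinearity of the block correlation function.

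For the perturbation estimate, write $W^x-Q^x=Q^x\bigl((S^x)^{-1/2}-\id\bigr)$ on the support, plus the kernel term. Then
\[
(W^x-Q^x)^\dagger(W^x-Q^x)=\bigl((S^x)^{1/2}-\id\bigr)^2
\]
on the support, and it equals $\Pi^x$ on the kernel. By spectral calculus, $(\sqrt\mu-1)^2\le|\mu-1|$ for $\mu\ge0$, and on the kernel $|\mu-1|=1$. Hence $(W^x-Q^x)^\dagger(W^x-Q^x)\le|E^x|$, and therefore
\[
\|W^x-Q^x\|_4^4=\tau\bigl(((W^x-Q^x)^\dagger(W^x-Q^x))^2\bigr)\le\|E^x\|_2^2.
\]
Alongside this I need norm bounds. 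From $\|S^x\|_2\le1+\|E^x\|_2$ we get $\|Q^x\|_4^4=\|S^x\|_2^2\le(1+\|E^x\|_2)^2$, while $\|W^x\|_4=1$.

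For the value estimate, telescope
\[
\cG_{x,y}(Q^x,Q^x;R^y,R^y)-\cG_{x,y}(W^x,W^x;W'^y,W'^y),
\]
where $W'^y$ is Bob's rounded block operator, by replacing one argument at a time. Each step is an instance of $\cG$ with one argument equal to a difference, so by \cref{lem:game-expression-holder} it is bounded by $\|W^x-Q^x\|_4$ (or $\|W'^y-R^y\|_4$) times a product of three $4$-norms, each at most $(1+\|E\|_2)^{1/2}$. This uses the bilinearity of $\cG$ in each slot, which holds because $Q_{1,a}^\dagger Q_{2,a}$ is antilinear in $Q_1$ and linear in $Q_2$, and \cref{lem:game-expression-holder} applies to arbitrary block operators. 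Writing $e_x:=\|E^x\|_2$, each step contributes at most a constant times $e_x^{1/2}(1+e_x)(1+e_y)^{1/2}$, or the analogous expression with $x$ and $y$ exchanged.

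The remaining work is to take the expectation over $(x,y)\sim\mathsf p_{\sfG}$ and use the hypothesis $\mathbb E\, e_x^2\le\nu^4$. Hölder and Jensen give $\mathbb E\, e_x^{1/2}\le(\mathbb E\, e_x^2)^{1/4}\le\nu$, which is the source of the linear $\nu$ in the conclusion. The difficulty is the cross factors such as $e_x^{1/2}e_y$, because $e_x,e_y$ are not individually bounded (only their second moments are) and $x,y$ are correlated. I would handle this with Hölder's inequality under the joint distribution: pairing $\mathbb E[e_x^{1/2}e_y]\le(\mathbb E\, e_x^{2})^{1/4}(\mathbb E\, e_y^{4/3})^{3/4}\le\nu\cdot\nu^2$ uses only the marginal moments, which is legitimate since $x$ and $y$ each have marginal $\mathsf p_X,\mathsf p_Y$. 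Higher products like $e_x^{3/2}$ are controlled by the cruder bound $e_x^{3/2}\le e_x^2+e_x^{1/2}$. Collecting terms with $\nu\le1$ then yields a total of at most a constant times $\nu$. This bookkeeping is the main obstacle: the constant $15$ is met only if the telescoping is arranged carefully, for example by rounding Alice first and then Bob so that the rounded operators, with $4$-norm exactly $1$, appear in as many factors as possible.
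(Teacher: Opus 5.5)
\textbf{The completion on $\ker S^x$ fails, but the repair is easy.} Write $W^x=\widehat Q^x+\ket{a_0}\otimes\Pi^x$, with $(S^x)^{-1/2}$ the pseudo-inverse on the support. Then
\[
(W^x)^\dagger W^x=\id+(S^x)^{-1/2}(Q^x_{a_0})^\dagger\Pi^x+\Pi^xQ^x_{a_0}(S^x)^{-1/2}.
\]
The cross terms vanish only when $\operatorname{ran}Q^x_{a_0}\perp\ker S^x$, and nothing forces this. The kernel $\ker S^x=\ker Q^x$ tells you where $Q^x_{a_0}$ vanishes, not where its range lies.

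Concretely, take $n=1$, $Q_{a_0}=\ketbratwo{0}{1}$, and all other blocks zero. Then $S=\ketbra{1}$, $\Pi=\ketbra{0}$, and $W=\ket{a_0}\otimes\ket{0}\bigl(\bra{0}+\bra{1}\bigr)$. So $W^\dagger W$ is the all-ones $2\times2$ matrix, and $((W_a)^\dagger W_a)_a$ is not a POVM. The same cross terms also invalidate your identity $(W^x-Q^x)^\dagger(W^x-Q^x)=((S^x)^{1/2}-\id)^2$. The lemma is asserted for arbitrary block approximants, and the folded operators it is applied to can be singular, so this step must be repaired.

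The repair is what the paper does when it extends the partial isometry of $Q^x=(Q^x)^\sharp\sqrt{S^x}$. Map $\ker S^x$ isometrically into $(\operatorname{ran}Q^x)^\perp$. This is possible because $\dim(\operatorname{ran}Q^x)^\perp=|\mathcal A|2^n-\operatorname{rank}S^x\geq\dim\ker S^x$. The resulting isometry satisfies $Q^x=W^x(S^x)^{1/2}$, hence $Q^x-W^x=W^x\bigl((S^x)^{1/2}-\id\bigr)$. Your spectral bound $\norm{W^x-Q^x}_4^4\leq\norm{E^x}_2^2$ then goes through as you wrote it; it is the paper's inequality $(\sqrt s-1)^4\leq(s-1)^2$.

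\textbf{With that fixed, your value estimate is correct but organized differently.} The paper writes each of the four slots of $\cG_{x,y}(Q^x,Q^x;R^y,R^y)$ as isometry plus difference. Every non-difference factor then has $4$-norm exactly one. Four-fold H\"older over $(x,y)$ bounds a term with $j$ differences by $\nu^j$, giving $4\nu+6\nu^2+4\nu^3+\nu^4\leq15\nu$. No bounds on $\norm{Q^x}_4$ and no mixed moments are needed.

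Your telescoping has only four error terms, but it carries un-rounded factors, controlled by $\norm{Q^x}_4\leq(1+e_x)^{1/2}$ and $\norm{R^y}_4^2\leq1+e_y$. Rounding Alice first, the steps are bounded by $e_x^{1/2}(1+e_x)^{1/2}(1+e_y)$, $e_x^{1/2}(1+e_y)$, $e_y^{1/2}(1+e_y)^{1/2}$, and $e_y^{1/2}$. The first of these is not the form you wrote, since only one Alice slot is left undifferenced. Using $(1+e)^{1/2}\leq1+e^{1/2}$ and your marginal H\"older estimates, the expectations sum to at most $4\nu+2\nu^2+2\nu^3+\nu^4\leq9\nu$.

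In fact, even the crudest expansion of any ordering yields at most fifteen terms, each with expectation at most $\nu$. So the constant is not delicate. The paper's expansion around the isometries simply avoids this moment bookkeeping altogether.
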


\paragraph{Finite-Copy Approximation.}

\begin{proof}[Proof of \cref{thm:finite-copy-approximation}]
  Since game values lie in $[0,1]$, after enlarging the absolute constants it
  suffices to prove the theorem for $0<\varepsilon\leq\varepsilon_0$, where
  $\varepsilon_0>0$ is a sufficiently small absolute constant.  We choose
  $\varepsilon_0$ below so that the parameter in the final application of
  \cref{lem:polar-soundness} is at most one.

  First apply \cref{prop:pauli-canonicalization} to obtain a canonical
  state $\psi_{\boldsymbol\lambda}$ and a corresponding POVM strategy
  $(\mathscr A,\mathscr B)$ on $\psi_{\boldsymbol\lambda}^{\otimes n}$ with
  the same value as the original strategy on $\psi^{\otimes n}$.  Use the
  complete-SDPI constant $0\leq\alpha_\psi<1$ for the canonical correlation
  channel supplied by \cref{lem:complete-sdpi}.

  Choose
  \begin{equation*}
    s=c_0\frac{(1-\alpha_\psi)\varepsilon^2}{\log(2t)}
  \end{equation*}
  with $c_0>0$ sufficiently small.  By
  \cref{thm:whole-smoothing}, smoothing Alice's and Bob's POVMs changes
  the game value by at most $\varepsilon/10$.

  Set $\eta=c_1\varepsilon^3$ with $0<c_1\leq1/3$.  Apply
  \cref{thm:low-degree-approximation-new}
  to $(\mathscr A,\mathscr B)$ to choose the smoothing times and block approximants.
  With this choice of $s$ and $\eta=c_1\varepsilon^3$, the lower bound on
  $k$ in \cref{thm:low-degree-approximation-new} becomes
  \begin{equation*}
    k^\gamma
    \geq
    \frac{C(\log(2t))^2}{(1-\alpha_\psi)\varepsilon^{14}},
  \end{equation*}
  which is ensured by the stated choice of $k$.  The resulting blocks
  have normalization error at most $(2+\eta)\eta\leq3\eta\leq\varepsilon^3$.
  Moreover,
  the game value is changed by at most $15\eta=O(\varepsilon^3)$.

  Apply \cref{thm:folded-block-output} with $\delta=\varepsilon^3$.
  The stated choice of $D$ meets its dimension bound.
  Hence one deterministic seed produces
  $D$-qubit operators whose value is at least the value before dimension
  reduction minus
  $C\varepsilon$ and whose two average squared normalization errors are
  $O(\varepsilon^4)$.

  The implicit constant in the $O(\varepsilon^4)$ normalization bound fixes
  an absolute constant $C_1$ such that the normalization hypothesis of
  \cref{lem:polar-soundness} holds with $\nu=C_1\varepsilon$.  Choose
  $\varepsilon_0\leq C_1^{-1}$, so that
  $\nu\leq1$, and apply \cref{lem:polar-soundness}.  The
  dimension-reduced operators become POVM families for Alice and Bob, with
  a further value loss at most $15C_1\varepsilon$.  Adding the losses and increasing the absolute constant
  proves the desired estimate.  The compressed strategy acts on
  $D$ copies of $\psi_{\boldsymbol\lambda}$.  Conjugation by the inverse fixed
  local unitaries transports this strategy to $D$ copies of the original
  state $\psi$ without changing its value.
\end{proof}

\paragraph{Complexity Consequence.}
\begin{restatable}{theorem}{succinctverifier}
  \label{prop:succinct-verifier}
  Let $\psi\in \Matrix_2\otimes \Matrix_2$ be a fixed state satisfying
  $\psi_A=\psi_B=\id/2$ and $\rho_{\max}(\psi)<1$.  Then
  \begin{equation*}
    \MIP^{\psi}[\poly,\poly]
    \subseteq\NEXP.
  \end{equation*}
\end{restatable}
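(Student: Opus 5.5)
Fix a language $L\in\MIP^{\psi}[\poly,\poly]$ with a protocol of completeness $c$ and soundness $s_0$, where the gap $\Delta:=c-s_0>0$ is a constant. On an input $z$ of length $N$, the protocol defines a game $\sfG_z$ with at most $2^{\poly(N)}$ questions and at most $t\leq2^{\poly(N)}$ answers per player. We decide $z\in L$ by checking whether $\omega_\infty^*(\sfG_z,\psi)\geq c$ or $\leq s_0$. For this we apply \cref{thm:finite-copy-approximation} with $\varepsilon$ a small constant, say $C\varepsilon\leq\Delta/4$. This gives $k=\poly(\log t)=\poly(N)$ and $D=\poly(k,t)=2^{\poly(N)}$. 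Then $\omega_D^*(\sfG_z,\psi)\geq\omega_\infty^*(\sfG_z,\psi)-\Delta/4$, and trivially $\omega_D^*\leq\omega_\infty^*$. So it suffices to approximate $\omega_D^*$ up to additive error well below $\Delta/2$ within $\NEXP$.

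The $\NEXP$ machine guesses a certificate and does not guess full $2^D\times2^D$ matrices, which would be doubly exponential. Instead it guesses the structured objects produced by the proof of \cref{thm:finite-copy-approximation} before rounding. These are the folded block approximants $\widetilde Q^x,\widetilde R^y$ on $D$ qubits. Each block $\Phi_{\omega_0}(Q^x_a)$ has Pauli degree at most $k$, because folding does not increase weight. So each block is given by its Pauli coefficients on labels of weight $\leq k$, of which there are $\sum_{j\leq k}\binom Dj3^j\leq D^{O(k)}=2^{\poly(N)}$. The coefficients are written to $\poly(N)$ bits of precision, so the total certificate size is $|\mathcal X||\mathcal A|D^{O(k)}\cdot\poly(N)=2^{\poly(N)}$. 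The verifier then computes two quantities from these coefficients.

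\begin{itemize}
\item The normalization error $\mathbb E_x\norm{(\widetilde Q^x)^\dagger\widetilde Q^x-\id}_2^2$. By \cref{prop:pauli-product-convolution}, the coefficients of $(\widetilde Q_a^x)^\dagger\widetilde Q_a^x$ are supported on weight $\leq2k$, and each is a sum of at most $D^{O(k)}$ products. Parseval (\cref{prop:scalar-pauli-identities}) then gives the normalized $2$-norm.
\item The value $\Val_{\sfG}(\widetilde{\mathscr Q},\widetilde{\mathscr R})$. By \cref{prop:game-value-correlation-functional} and \cref{def:correlation-functional}, this equals $\sum\mathsf p(x,y)V(a,b|x,y)\sum_\alpha w_{\boldsymbol\lambda}(\alpha)\widehat M(\alpha)\widehat N(\alpha)$, where $M,N$ are the products of blocks. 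We use the canonical form $\psi_{\boldsymbol\lambda}$, whose parameters are fixed constants by \cref{prop:pauli-canonicalization}.
\end{itemize}

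All of this is arithmetic on $2^{\poly(N)}$ numbers, so it runs in exponential time. The verifier accepts iff the normalization errors are at most $\nu^4$, for the constant $\nu$ from the proof, and the computed value is at least $c-\Delta/2$.

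For completeness, suppose $z\in L$. Some finite-copy strategy has value $\geq c-\Delta/8$. The chain of smoothing (\cref{thm:whole-smoothing}), low-degree approximation (\cref{thm:low-degree-approximation-new}) and folding (\cref{thm:folded-block-output}) yields a certificate with value $\geq c-\Delta/4$ and small normalization error. Rounding its coefficients to $\poly(N)$ bits changes each computed quantity by at most $2^{-\poly(N)}\cdot2^{\poly(N)}$, and we keep this small by taking enough bits. The norms involved are $O(1)$ by the $\norm\cdot_4$ control, and \cref{lem:game-expression-holder} bounds the value perturbation. For soundness, suppose a certificate is accepted. \cref{lem:polar-soundness} converts it into a genuine POVM strategy on $\psi^{\otimes D}$ of value $\geq c-\Delta/2-15\nu$. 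Choosing $\nu$ (equivalently $\varepsilon$) small makes this exceed $s_0$, which forces $z\in L$. So the guessed object never needs to be a genuine strategy; rounding certifies it. Finally, the matching inclusion $\NEXP\subseteq\MIP^\psi[\poly,\poly]$ is not needed for this statement.

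The main obstacle is the soundness direction. \cref{lem:polar-soundness} is stated for arbitrary block approximants, so it applies to any guessed certificate; the real issue is that its hypothesis is on $\mathbb E_x$ of squared $2$-norm errors. The verifier must check exactly that average quantity, with error tolerance $\nu^4$ and $\nu$ a constant, and must control the precision-induced errors in both the normalization check and the value. I would handle the precision carefully by bounding every guessed coefficient in absolute value; an accepted normalization check forces $\norm{\widetilde Q^x}_2=O(1)$ on average. This avoids amplification of the rounding errors when forming products.
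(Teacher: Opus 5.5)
Your proposal is correct and follows essentially the same route as the paper. The certificate consists of the degree-$k$ Pauli coefficients of the folded block approximants on $D$ qubits, retained just before polar rounding. The verifier checks the two averaged normalization errors and the value via Pauli convolution, Parseval, and the diagonal canonical form; completeness follows the chain in \cref{thm:finite-copy-approximation}, and soundness invokes \cref{lem:polar-soundness} only in the analysis.

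One small correction: after folding, normalization is controlled only on average over questions, so the per-question norms are not $O(1)$. They satisfy only $\norm{Q^x}_4\le(2+2/\mathsf p_X(x))^{1/4}\le 2^{O(R(N))}$, as in \cref{lem:questionwise-block-norm-bound}, which still suffices for $\poly(N)$-bit precision.
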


The proof of \cref{prop:succinct-verifier} is given in
\cref{subsec:succinct-certificates}.

\begin{cor}[$\NEXP$ characterization]
  \label{cor:noisy-mip-equality}
  Fix a two-qubit state $\psi$ with maximally mixed marginals and
  $\rho_{\max}(\psi)<1$.  Then
  \begin{equation*}
    \grayhighlight{
      \MIP^{\psi}[\poly,O(1)]
      =\MIP^{\psi}[\poly,\poly]
      =\NEXP=\MIP.}
  \end{equation*}
\end{cor}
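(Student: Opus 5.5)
The corollary asserts a chain of equalities, and I will obtain it by sandwiching: $\MIP=\NEXP\subseteq\MIP^{\psi}[\poly,O(1)]\subseteq\MIP^{\psi}[\poly,\poly]\subseteq\NEXP$.

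The rightmost inclusion is exactly \cref{prop:succinct-verifier}, which applies because $\psi$ has maximally mixed marginals and $\rho_{\max}(\psi)<1$.

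The middle inclusion $\MIP^{\psi}[\poly,O(1)]\subseteq\MIP^{\psi}[\poly,\poly]$ is syntactic. Constant-length answers are in particular polynomially bounded, and the protocol, completeness, and soundness are unchanged under \cref{def:polynomial-answer-noisy-mip}.

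The leftmost inclusion $\NEXP\subseteq\MIP^{\psi}[\poly,O(1)]$ carries the content. It is the lower bound of \cite[Corollary~2]{DongEtAl24}, which gives $\NEXP\subseteq\MIP^{\psi}[\poly,O(1)]$ for noisy maximally entangled states with $\rho_{\max}(\psi)<1$. That result is stated for states with maximally mixed marginals and $\rho_{\max}<1$, which is our hypothesis. If one only has it for a normalized family, I would first pass to the canonical form via \cref{prop:pauli-canonicalization}; item 4 there shows local unitaries do not change achievable values, so membership in $\MIP^\psi$ is invariant under local unitary equivalence.

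If I had to argue the leftmost inclusion directly, I would use the fact that the known constant-answer, polynomial-question $\NEXP$ protocols are sound against entangled provers. One such protocol is the two-prover one-round low-degree/PCP-based protocol with a constant gap, made robust against entanglement; this is precisely what \cite{DongEtAl24} supply. Completeness is retained with classical deterministic strategies, which ignore the shared $\psi$. Soundness against any shared state holds a fortiori against $\psi^{\otimes n}$. The main obstacle is only the soundness of a classical $\NEXP$ protocol against entangled strategies with constant answers, and I would import it rather than reprove it.

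Finally, $\NEXP=\MIP$ is \cite{BFL91} together with the two-prover one-round reduction of \cite{FL92}. Combining the four inclusions closes the chain, and all the classes in the corollary coincide.
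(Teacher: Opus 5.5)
Your proof is correct and matches the paper's argument: you use the sandwich $\NEXP\subseteq\MIP^{\psi}[\poly,O(1)]\subseteq\MIP^{\psi}[\poly,\poly]\subseteq\NEXP$, take the lower inclusion from \cite[Corollary~2]{DongEtAl24}, the upper one from \cref{prop:succinct-verifier}, and $\NEXP=\MIP$ from \cite{BFL91,FL92}, exactly as the paper does. Your extra remarks on local-unitary invariance and on entanglement-sound classical protocols are harmless, but the paper does not need them.
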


\begin{proof}
  By \cref{prop:succinct-verifier}, polynomial-answer protocols are contained
  in $\NEXP$.  Conversely, the lower inclusion in
  \cite[Corollary~2]{DongEtAl24} applies to $\psi$: its maximally mixed
  marginals and maximal correlation strictly below one are precisely the
  noisy-MES conditions of that result.  Hence
  \begin{equation*}
    \NEXP\subseteq\MIP^{\psi}[\poly,O(1)]
    \subseteq\MIP^{\psi}[\poly,\poly]
    \subseteq\NEXP.
  \end{equation*}
  Together with the classical identity $\MIP=\NEXP$ \cite{BFL91,FL92},
  these inclusions prove the claimed equalities.
\end{proof}

\subsection{\texorpdfstring{The Case $\rho_{\max}=1$}
  {The Case rho-max = 1}}
\label{sec:maxcorr-one}

At the boundary $\rho_{\max}=1$, the canonical form is governed by a
single parameter $\lambda\in[0,1]$.  The value $\lambda=0$ corresponds
to a separable, purely classical resource, whereas every $\lambda>0$
retains enough coherence to simulate finite-EPR strategies with a loss
that is independent of the number of EPR pairs.

\paragraph{The Boundary Canonical Form.}
For $0\leq\lambda\leq1$, define
\begin{equation}
\label{eq:canonical-boundary-family}
  \Psi_\lambda
  :=
  \frac12\left(
    \ketbra{00}+\ketbra{11}
    +\lambda\ket{00}\!\bra{11}
    +\lambda\ket{11}\!\bra{00}
  \right).
\end{equation}
Equivalently, in the Pauli basis,
\begin{equation*}
  \Psi_\lambda
  =
  \frac14\left(
    \id\otimes\id
    +\lambda\pauliX\otimes\pauliX^{\mathsf T}
    +\lambda\pauliY\otimes\pauliY^{\mathsf T}
    +\pauliZ\otimes\pauliZ^{\mathsf T}
  \right).
\end{equation*} 

\begin{prop}[Boundary canonical form]
  \label{prop:boundary-canonical}
  Let $\psi\in \Matrix_2\otimes \Matrix_2$ have maximally mixed marginals and
  satisfy $\rho_{\max}(\psi)=1$.  Then $\psi$ is locally unitarily
  equivalent to $\Psi_\lambda$ for some $\lambda\in[0,1]$.
  Moreover,
  \[
    \psi \text{ is separable}
    \quad\Longleftrightarrow\quad
    \lambda=0.
  \]
\end{prop}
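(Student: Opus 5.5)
Start from \cref{prop:pauli-canonicalization}: $\psi$ is locally unitarily equivalent to $\psi_{\boldsymbol\lambda}=\frac14\sum_s\lambda_sP_s\otimes P_s^{\mathsf T}$ with real $\lambda_s$, and $\rho_{\max}(\psi)=\max_{s\neq0}|\lambda_s|=1$. Write $\psi_{\boldsymbol\lambda}$ in the Bell basis. With the transposes, the eigenvalues of $\psi_{\boldsymbol\lambda}$ are $\frac14(1+\lambda_X-\lambda_Y+\lambda_Z)$ and its three sign variants: $\frac14(1+\lambda_X+\lambda_Y-\lambda_Z)$, $\frac14(1-\lambda_X-\lambda_Y-\lambda_Z)$, $\frac14(1-\lambda_X+\lambda_Y+\lambda_Z)$, up to the sign convention from $\pauliY^{\mathsf T}=-\pauliY$. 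I will verify the exact signs by a direct check on $\ketbra{\Phi_2}$, which has $\lambda=(1,1,1)$. Positivity means $(\lambda_X,-\lambda_Y,\lambda_Z)$, suitably signed, lies in the standard tetrahedron of Bell-diagonal states, with vertices at the four sign patterns of $(\pm1,\pm1,\pm1)$ having product $+1$.

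Suppose some $|\lambda_s|=1$, say after relabeling. The one-qubit Clifford local unitaries $C\otimes\overline C$ permute the nonidentity Paulis with signs. Composing with a Pauli on one side also flips two signs. I use these to move the unit coordinate to $\lambda_Z=1$, with the remaining two entries adjusted in sign, while staying in canonical form. This step uses item 4 of \cref{prop:pauli-canonicalization}, so game values are preserved. With $\lambda_Z=1$, the positivity inequalities $\frac14(1\pm\lambda_X\pm\lambda_Y\pm1)\geq0$ reduce to $\frac14(\pm\lambda_X\mp\lambda_Y)\geq 0$ in both sign patterns. This forces $\lambda_X=\lambda_Y$ in the convention where $\Phi_2$ has $\lambda=(1,1,1)$. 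A final conjugation by $\pauliZ$ on one side flips the signs of $\lambda_X$ and $\lambda_Y$ together, so I can take $\lambda:=\lambda_X=\lambda_Y\in[0,1]$. Comparing with the Pauli expansion of $\Psi_\lambda$ displayed above then gives $\psi_{\boldsymbol\lambda}=\Psi_\lambda$. The main bookkeeping obstacle is this sign/permutation step: I need to check that the local Clifford symmetry group acts on $(\lambda_X,\lambda_Y,\lambda_Z)$ by all permutations together with even numbers of sign flips. I would confirm that it is exactly the tetrahedral symmetry of the Bell-diagonal simplex.

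For separability, I use the Peres--Horodecki criterion, which is necessary and sufficient for two qubits. In the computational basis, $\Psi_\lambda$ is supported on $\mathrm{span}\{\ket{00},\ket{11}\}$, and its partial transpose is
\[\tfrac12\bigl(\ketbra{00}+\ketbra{11}+\lambda\ket{01}\!\bra{10}+\lambda\ket{10}\!\bra{01}\bigr).\]
This has eigenvalue $-\lambda/2$ on $(\ket{01}-\ket{10})/\sqrt2$. Hence $\Psi_\lambda$ is NPT, and therefore entangled, whenever $\lambda>0$. When $\lambda=0$, $\Psi_0=\frac12(\ketbra{00}+\ketbra{11})$ is manifestly separable. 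Local unitary equivalence preserves separability, which completes the equivalence.
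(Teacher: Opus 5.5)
Your proposal is correct and follows the paper's proof essentially step for step. You canonicalize, use local Cliffords and one-sided Paulis to place the unit coefficient at $+1$ on the $\pauliZ$ coordinate, read off $\lambda_X=\lambda_Y$ from the two Bell eigenvalues $\pm(\lambda_X-\lambda_Y)/4$, fix the sign with a one-sided $\pauliZ$, and certify entanglement for $\lambda>0$ by the $-\lambda/2$ eigenvalue of the partial transpose. The only slip is that your first eigenvalue list belongs to the untransposed expansion: with the transposes, $\Phi^+$ carries $\frac14(1+\lambda_X+\lambda_Y+\lambda_Z)$, so the sign of $\lambda_Y$ in your list should be flipped. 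The check against $\Phi_2$ that you planned corrects this, and the conclusion is unaffected.
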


The full proof of~\cref{prop:boundary-canonical} appears in
\cref{sec:maximal-correlation-one}: Maximal correlation being one forces the state to possess a perfectly
correlated binary observable.  After suitable local changes of basis,
the only remaining freedom is the coherence between $\ket{00}$ and
$\ket{11}$, measured by $\lambda$.  Thus $\lambda$ provides the exact
dichotomy needed below: $\lambda=0$ is classical, while every
$\lambda>0$ is entangled.

\paragraph{Complexity Consequence.}

\begin{restatable}[Separable resources are classical]{prop}{separableresourceclassical}
  \label{prop:separable-resource-classical}
  Let $\psi$ be a fixed finite-dimensional separable bipartite state.  For
  every finite entangled game
  $\sfG=(\mathcal X,\mathcal Y,\mathcal A,\mathcal B,\mathsf p_{\sfG},V)$,
  optimization over arbitrary finite tensor powers of $\psi$ has exactly
  the classical value.  Consequently,
  \begin{equation*}
    \MIP^{\psi}[\poly,O(1)]
    =\MIP^{\psi}[\poly,\poly]
    =\NEXP.
  \end{equation*}
\end{restatable}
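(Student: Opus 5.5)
The proof has two parts: the value statement, then the complexity equalities.

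\textbf{Step 1: the value equals the classical value.} Write the separable state as
\[
  \psi=\sum_{j}p_j\,\sigma_j\otimes\tau_j,
\]
a finite convex combination of product states. Then
\[
  \psi^{\otimes n}=\sum_{\mathbf j}p_{\mathbf j}\,\sigma_{\mathbf j}\otimes\tau_{\mathbf j}
\]
is again a convex combination of product states, indexed by $\mathbf j=(j_1,\dots,j_n)$, with $p_{\mathbf j}=\prod_i p_{j_i}$.

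Fix any POVM strategy $(\mathscr A,\mathscr B)$ on $\psi^{\otimes n}$. By linearity of \eqref{eq:strategy-value}, its value is
\[
  \sum_{\mathbf j}p_{\mathbf j}\sum_{x,y,a,b}\mathsf p_{\sfG}(x,y)V(a,b\mid x,y)\,\pi_{\mathbf j}(a\mid x)\,\varpi_{\mathbf j}(b\mid y),
\]
where $\pi_{\mathbf j}(a\mid x)=\Tr(A_a^x\sigma_{\mathbf j})$ and $\varpi_{\mathbf j}(b\mid y)=\Tr((B_b^y)^{\mathsf T}\tau_{\mathbf j})$. These are probability distributions over answers for each question, because the POVM elements are positive and sum to the identity. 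The transpose is still a POVM, as noted in the footnote to the strategy definition.

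Each inner sum is therefore the value of a classical strategy using independent local randomness. Such a value is at most the classical value, since randomized local strategies are convex combinations of deterministic ones. Averaging over $\mathbf j$ gives
\[
  \omega_n^*(\sfG,\psi)\le\omega_{\mathrm{cl}}(\sfG)\quad\text{for every } n.
\]

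For the reverse inequality, the players ignore the shared state and play an optimal deterministic classical strategy. They can do this with measurements of the form $A_a^x=\mathbf 1_{\{a=f(x)\}}\id$, which achieves $\omega_{\mathrm{cl}}(\sfG)$ already with $n=1$. Hence $\omega_\infty^*(\sfG,\psi)=\omega_{\mathrm{cl}}(\sfG)$.

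\textbf{Step 2: $\MIP^\psi[\poly,\poly]\subseteq\MIP=\NEXP$.} Take any $\MIP^\psi[\poly,\poly]$ protocol. On every input its acceptance value coincides with the classical two-prover one-round value of the same verifier. So completeness and soundness thresholds, and the gap, transfer verbatim, and the protocol is a classical $\MIP$ protocol with the same question and answer lengths. Therefore
\[
  \MIP^\psi[\poly,\poly]\subseteq\MIP=\NEXP
\]
by \cite{BFL91,FL92}.

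\textbf{Step 3: $\NEXP\subseteq\MIP^\psi[\poly,O(1)]$.} We need $\NEXP$ to have classical two-prover one-round protocols with polynomial-length questions, constant-length answers and a constant gap. This is the classical PCP-based characterization, e.g.\ via the constant-query PCP for $\NEXP$ and the standard oracularization into two provers. We invoke this as known.

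By Step 1 the classical values of such protocols are unchanged under the resource $\psi$, so the same protocols witness $\NEXP\subseteq\MIP^\psi[\poly,O(1)]$. Combining this with $\MIP^\psi[\poly,O(1)]\subseteq\MIP^\psi[\poly,\poly]\subseteq\NEXP$ from Step 2 gives the chain of equalities in the statement.

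The only delicate point is Step 3, namely citing a classical two-prover one-round constant-answer, constant-gap protocol for $\NEXP$. If a direct citation is awkward, one could instead note that \cite[Corollary~2]{DongEtAl24} supplies such protocols that are sound even against noisy entangled provers; their classical soundness follows a fortiori. The rest is routine linearity.
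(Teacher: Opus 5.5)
Your proof is correct and follows essentially the same route as the paper: you decompose $\psi^{\otimes n}$ into a convex combination of product states, so every induced correlation is local and the value equals the classical value (attained by ignoring the state), and you then invoke the classical characterization $\MIP[\poly,O(1)]=\MIP[\poly,\poly]=\NEXP$ via the scaled-up PCP theorem and oracularization. One small remark: your fallback via \cite[Corollary~2]{DongEtAl24} would also need the honest (completeness) strategies of those protocols to be classical, which you do not address, but your main argument does not depend on that fallback.
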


The whole proof of \cref{prop:separable-resource-classical} is given in \cref{sec:maximal-correlation-one}. When $\lambda=0$, the resource state is separable.  Arbitrary tensor
powers therefore generate only local correlations, so the
resource-restricted model coincides with classical MIP and therefore with
$\NEXP$.

\begin{restatable}[$\RE$ Lowerbound]{theorem}{boundaryrelowerbound}

  \label{thm:boundary-re-lower-bound}
  For every fixed $0<\lambda\leq1$, consider the boundary canonical form $\Psi_\lambda$, we have
  \begin{equation*}
    \RE\subseteq
    \MIP^{\Psi_\lambda}[\poly,O(1)].
  \end{equation*}
\end{restatable}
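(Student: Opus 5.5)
The plan is to start from the constant-answer $\RE$ protocols of \cite{DongEtAl24}, which satisfy $\MIP^{\mathrm{EPR}}[\poly,O(1)]=\RE$, and to simulate them with copies of $\Psi_\lambda$. Fix $L\in\RE$ and such a protocol. On input $z$ it defines a game $\sfG_z$ with constant answer length, completeness $1$ (or at least some constant $c$), and soundness $s<c$ against arbitrary entangled strategies. For yes-instances the value $c$ is approached by strategies on finitely many EPR pairs $\Phi_{2^m}$.

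\textbf{Soundness amplification.} Apply \cref{thm:entangled-parallel-repetition} to $\sfG_z$ with gap $\epsilon=1-s$ and $r$ repetitions. Since the answer alphabet has constant size, a constant $r$, independent of $|z|$, gives $\omega^*(\sfG_z^{\otimes r})\leq\sigma$ for any prescribed constant $\sigma>0$. Completeness only degrades to $c^r$ (\cref{lem:and-repetition-lower-bounds}), which is still a positive constant, and the strategy still uses finitely many EPR pairs. If $c$ is not close to $1$, I would first replace it by $1$ by citing the perfect-completeness version of the $\RE$ protocol.

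\textbf{EPR simulation.} This is where I rely on \cref{thm:boundary-epr-component} and \cref{cor:boundary-epr-simulation}. For every $m$, local operations on finitely many copies of $\Psi_\lambda$ produce a state $F\Phi_{2^m}+(1-F)\zeta$ with $F\geq\lambda^2$, where $\zeta$ is a state. Running the EPR strategy on this state therefore wins $\sfG_z^{\otimes r}$ with probability at least $\lambda^2c^r-o(1)$. Local processing can be absorbed into the measurements, so this is a strategy on $\Psi_\lambda^{\otimes N}$. This gives $\omega^*_\infty(\sfG_z^{\otimes r},\Psi_\lambda)\geq\lambda^2c^r=:\kappa$ for yes-instances. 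For no-instances, $\omega^*_\infty\leq\omega^*\leq\sigma$, since copies of $\Psi_\lambda$ are a particular entangled state.

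\textbf{OR amplification.} Now play $\operatorname{OR}_k(\sfG_z^{\otimes r})$. By \cref{lem:or-repetition-bounds}, yes-instances have value at least $1-(1-\kappa)^k$, which is at least $3/4$ once $k=\lceil\log 4/\kappa\rceil$, a constant. No-instances have value at most $k\sigma$ even against joint strategies. Choosing $\sigma\leq1/(4k)$ bounds this by $1/4$.

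The main obstacle is the circular dependence: $k$ depends on $\kappa=\lambda^2c^r$, which depends on $r$, which is chosen from $\sigma$, which in turn depends on $k$. I would break this using perfect completeness, so that $c=1$ and $\kappa=\lambda^2$ regardless of $r$. Then I fix $k$ first, set $\sigma=1/(4k)$, and choose $r$ from Yuen's bound. All of $r$ and $k$ are constants, so questions stay polynomial and answers stay $O(rk)=O(1)$ bits, and the verifier remains polynomial time. This shows $L\in\MIP^{\Psi_\lambda}[\poly,O(1)]$.
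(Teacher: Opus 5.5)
Your proposal is correct and follows the paper's proof. The paper also relies on the perfect-completeness constant-answer protocols of \cite{DongEtAl24}. It fixes $k$ first, sets $\delta=1/(6k)$, picks a constant $r$ from Yuen's theorem, simulates with loss factor $\lambda^2$ via \cref{cor:boundary-epr-simulation}, and finishes with $\operatorname{OR}_k$ and the bound of \cref{lem:or-repetition-bounds}. The only differences are constants: the paper uses a concrete EPR strategy of value $\geq 1/2$, sets $q=\underline\lambda^2/2$ for a hard-coded rational $\underline\lambda\leq\lambda$, and targets $2/3$ versus $1/6$.
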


The proof of \cref{thm:boundary-re-lower-bound} is given in \cref{sec:maximal-correlation-one}. When $\lambda>0$, the main ingredient of this theorem,
\cref{thm:boundary-epr-component}, establishes a uniform one-sided simulation
of every finite-EPR strategy, retaining at least $\lambda^2$ times its
acceptance probability, independently of the number of EPR pairs.
Combining this simulation with the constant-answer EPR
characterization of $\RE$, Yuen's parallel-repetition theorem, and a
constant-size OR amplification yields
\[
  \RE
  \subseteq
  \MIP^{\psi}[\poly,O(1)].
\]

\begin{restatable}[$\RE$ upperbound]{theorem}{fixedresourcereupperbound}
  \label{lem:fixed-resource-re-upper-bound}
  For every fixed finite-dimensional bipartite state $\psi$,
  \begin{equation*}
    \MIP^{\psi}[\poly,\poly]\subseteq\RE.
  \end{equation*}
\end{restatable}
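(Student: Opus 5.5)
We show $\MIP^{\psi}[\poly,\poly]\subseteq\RE$ by giving, for a protocol in the class, a procedure that halts and accepts exactly on yes-instances. Fix such a protocol with completeness $c$ and soundness $s'$, $c-s'\geq\Delta$ for a constant $\Delta>0$. On input $z$, the verifier defines a finite game $\sfG_z$ (finite question and answer sets, an explicit distribution $\mathsf p_{\sfG_z}$ and predicate $V$), computable from $z$. The relevant quantity is $\omega_\infty^*(\sfG_z,\psi)=\sup_n\omega_n^*(\sfG_z,\psi)$. Yes-instances have $\omega_\infty^*\geq c$, while no-instances have $\omega_n^*\leq s'$ for every $n$. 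It therefore suffices to semi-decide the predicate ``$\omega_\infty^*(\sfG_z,\psi)>s'+\Delta/2$'' from below.

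The procedure enumerates triples $(n,\mathscr A,\mathscr B)$, where $n\geq1$ and $(\mathscr A,\mathscr B)$ is a strategy on $(\mathbb C^2)^{\otimes n}$ whose POVM entries lie in a countable dense set, say Gaussian rationals. Since $\psi$ is fixed with known parameters, we assume its entries are computable, for instance rational approximations available to any precision. For each candidate, the procedure checks exactly or to sufficient precision that the operators are PSD and sum to $\id$. It then computes $\Val_{\sfG_z}^{\psi^{\otimes n}}(\mathscr A,\mathscr B)$ using \eqref{eq:strategy-value} to precision $\Delta/8$ with interval arithmetic, and accepts if the computed lower bound exceeds $s'+\Delta/2$.

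It is cleaner to enumerate general complex matrices $A'_a$ and normalize them via $A_a=S^{-1/2}A'^\dagger_aA'_aS^{-1/2}$ with $S=\sum_aA'^\dagger_aA'_a$. The map from parameters to exact POVMs is then continuous and surjective onto POVMs with invertible sum. The strategy itself is only computable rather than rational, but its value is computable to arbitrary precision.

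Correctness has two directions. For soundness, every accepted candidate is, up to the certified error $\Delta/8$, a genuine POVM strategy on finitely many copies of $\psi$. Its value is therefore at most $\omega_n^*\leq s'$, and it cannot be accepted. For completeness, if $\omega_\infty^*\geq c$, some $n$ and some exact strategy achieve value at least $c-\Delta/8$. The value is a continuous function of the POVM elements: it is linear, and bounded by $\sum|\Val$-coefficients$|$ since each term is bounded by operator norms. Moreover, the dimension $2^n$ and the answer sets are finite. Hence a sufficiently close enumerated candidate achieves value at least $c-\Delta/4>s'+\Delta/2$ and is eventually found.

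The main care point is this density and continuity step. For fixed $n$, the set of POVMs is compact. The parametrization through $S^{-1/2}$ shows that every POVM with $\sum_a A_a=\id$ is approximated by normalized rational data, for instance taking $A'_a$ to be a rational approximation of $\sqrt{A_a}$, which keeps $S$ near $\id$ and hence invertible. Polynomial question and answer lengths only make the game exponentially large, which is irrelevant for $\RE$. The sup over $n$ is handled by the dovetailed enumeration.
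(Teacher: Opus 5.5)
Your proposal is correct and is essentially the paper's own argument: dovetail over copy numbers $n$ and a computable dense family of strategies (the paper uses exactly your normalization $A_a=S^{-1/2}R_a^\dagger R_aS^{-1/2}$ of rational matrices when $S$ is positive definite), certify each value by rational interval enclosures, and halt once a certified lower bound exceeds a threshold strictly between soundness and completeness. One cosmetic point: for a general finite-dimensional $\psi$ the local spaces are $(\mathbb C^{d_A})^{\otimes n}$ and $(\mathbb C^{d_B})^{\otimes n}$ rather than $(\mathbb C^2)^{\otimes n}$, and this changes nothing in the argument.
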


The proof of \cref{lem:fixed-resource-re-upper-bound} is given in \cref{sec:maximal-correlation-one}. The proof of this theorem follows from the general recursive-enumerability upper bound for fixed
resource states.

Using a combination of the above theorems and propositions, we can prove the following key result when $\rho=1$:

\begin{theorem}[The case $\rho_{\max}=1$]
  \label{thm:maxcorr-one-boundary}
  Let $\psi$ be a two-qubit state with maximally mixed marginals and
  $\rho_{\max}(\psi)=1$.  Then
  \begin{equation*}
    \MIP^{\psi}[\poly,O(1)]
    =
    \MIP^{\psi}[\poly,\poly]
    =
    \begin{cases}
      \NEXP, & \text{if $\psi$ is separable},\\[1mm]
      \RE,   & \text{if $\psi$ is entangled}.
    \end{cases}
  \end{equation*}
\end{theorem}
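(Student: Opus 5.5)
The proof is an assembly of results already stated in this subsection, split according to the dichotomy of \cref{prop:boundary-canonical}. First apply \cref{prop:boundary-canonical} to write $\psi$ as locally unitarily equivalent to $\Psi_\lambda$ for some $\lambda\in[0,1]$, with $\psi$ separable exactly when $\lambda=0$. By item~4 of \cref{prop:pauli-canonicalization}, or directly, conjugating the provers' POVMs by the fixed local unitaries, acting on each copy, is a bijection between strategies on $\psi^{\otimes n}$ and on $\Psi_\lambda^{\otimes n}$ that preserves value. Hence every game has the same $\omega_n^*$ under both resources, and $\MIP^{\psi}[\mathsf q,\mathsf a]=\MIP^{\Psi_\lambda}[\mathsf q,\mathsf a]$ for all length bounds.

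\emph{Separable case ($\lambda=0$).} Here \cref{prop:separable-resource-classical} directly gives $\MIP^{\psi}[\poly,O(1)]=\MIP^{\psi}[\poly,\poly]=\NEXP$, and nothing further is needed.

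\emph{Entangled case ($\lambda>0$).} We chain inclusions:
\[
\RE\subseteq\MIP^{\Psi_\lambda}[\poly,O(1)]=\MIP^{\psi}[\poly,O(1)]\subseteq\MIP^{\psi}[\poly,\poly]\subseteq\RE.
\]
The first inclusion is \cref{thm:boundary-re-lower-bound}, and the equality is the local-unitary invariance above. The middle inclusion is trivial, since constant-length answers are in particular polynomial-length and the completeness--soundness gap is unchanged. The last inclusion is \cref{lem:fixed-resource-re-upper-bound} applied to the fixed two-qubit state $\psi$. Since the chain begins and ends at $\RE$, all its terms are equal.

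No step is a real obstacle at this level, because the substance lies in \cref{thm:boundary-re-lower-bound} and \cref{prop:boundary-canonical}. The one point to state explicitly is the transfer between $\psi$ and $\Psi_\lambda$. The canonical-form result is stated for $\Psi_\lambda$, but membership in $\MIP^{\psi}$ is defined via the optimal values over tensor powers of $\psi$ itself. We must therefore note that the product unitary $U_A^{\otimes n}\otimes U_B^{\otimes n}$ maps $\psi^{\otimes n}$ to $\Psi_\lambda^{\otimes n}$, so completeness and soundness thresholds transfer verbatim for every $n$.
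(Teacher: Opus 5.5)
Your proposal is correct and matches the paper's proof step for step: reduce to $\Psi_\lambda$ via \cref{prop:boundary-canonical}, settle $\lambda=0$ with \cref{prop:separable-resource-classical}, and for $\lambda>0$ sandwich both classes between \cref{thm:boundary-re-lower-bound} and \cref{lem:fixed-resource-re-upper-bound}, using local-unitary invariance to pass between $\Psi_\lambda$ and $\psi$. Your only addition is spelling out that local-unitary transfer for every $n$, which the paper states in a single sentence.
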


\begin{proof}[Proof of \cref{thm:maxcorr-one-boundary}]
  By \cref{prop:boundary-canonical}, local unitaries reduce $\psi$ to
  $\Psi_\lambda$.  If $\psi$ is separable, then $\lambda=0$, and
  \cref{prop:separable-resource-classical} proves the first conclusion.  If
  $\psi$ is entangled, then $\lambda>0$.
  The lower inclusion for constant-answer protocols follows from
  \cref{thm:boundary-re-lower-bound}, whereas
  \cref{lem:fixed-resource-re-upper-bound} establishes the upper inclusion
  even for polynomial answers.  Local-unitary invariance transfers both statements
  back to $\psi$.
\end{proof}

It's also sufficient to prove \cref{thm:main}.

\begin{proof}[Proof of \cref{thm:main}]
  If $\psi$ is separable, \cref{prop:separable-resource-classical} applies.
  If $\psi$ is entangled and $\rho_{\max}(\psi)<1$, the equality with
  $\NEXP=\MIP$ is \cref{cor:noisy-mip-equality}.  The remaining case is
  entangled with $\rho_{\max}(\psi)=1$, which is exactly
  \cref{thm:maxcorr-one-boundary}.
\end{proof}

\section{The Case \texorpdfstring{$\rho_{\max}<1$}
  {rho-max < 1 Implies NEXP}}
\label{sec:strict-noise-proof}

\subsection{Step 1: Smoothing}

In this subsection, we prove \cref{thm:whole-smoothing}, restated below.

\wholesmoothing*

\subsubsection{Complete SDPI}

We give here the complete strong data processing inequality of the correlation channel, which is a direct application of \cite[Corollary 4.3]{GR22}.

\begin{restatable}[Complete SDPI]{lemma}{completesdpi}
  \label{lem:complete-sdpi}
Let $\psi\in \Matrix_2\otimes \Matrix_2$ satisfy
$\psi_A=\psi_B=\id/2$ and $\rho_{\max}(\psi)<1$.  Choose a canonical form
$\psi_{\boldsymbol\lambda}$ of $\psi$ as in
\cref{prop:pauli-canonicalization}, let
$\mathcal K_{\psi_{\boldsymbol\lambda}}$ be its canonical correlation
channel, and let
$\Pi(X)=\tau_2(X)\id$.  There is a constant $0\leq\alpha_\psi<1$, depending
only on $\psi$, such that for every $m\geq1$ and every density matrix
$\rho\in \Matrix_m\otimes \Matrix_2$ with $\Tr\rho=1$,
\begin{equation*}
\RelEnt\!\left(
(\operatorname{id}_{\Matrix_m}\otimes\K_{\psi_{\boldsymbol\lambda}})(\rho)
\,\middle\|\,
(\operatorname{id}_{\Matrix_m}\otimes\Pi)(\rho)
\right)
\le
\alpha_\psi
\RelEnt\!\left(
\rho
\,\middle\|\,
(\operatorname{id}_{\Matrix_m}\otimes\Pi)(\rho)
\right).
\end{equation*}
The same coefficient works for every ancilla dimension $m$.
\end{restatable}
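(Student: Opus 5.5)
The plan is to invoke \cite[Corollary~4.3]{GR22} directly. That result gives a complete strong data-processing inequality for a unital quantum channel $\mathcal K$ relative to its fixed-point projection. Concretely, for a primitive unital channel with conditional expectation $E$ onto the fixed-point algebra, there is $\alpha<1$ such that $\RelEnt\bigl((\operatorname{id}\otimes\mathcal K)(\rho)\,\|\,(\operatorname{id}\otimes E)(\rho)\bigr)\le\alpha\,\RelEnt\bigl(\rho\,\|\,(\operatorname{id}\otimes E)(\rho)\bigr)$ for all ancilla dimensions. The coefficient is controlled by the spectral gap of $\mathcal K^\dagger\mathcal K$, or more precisely by a complete modified log-Sobolev type constant of the associated semigroup, which is finite in finite dimension.

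The proof then has three steps.

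\textbf{Step 1.} By \cref{prop:pauli-canonicalization}, $\mathcal K:=\mathcal K_{\psi_{\boldsymbol\lambda}}$ acts diagonally as $\mathcal K(P_s)=\lambda_sP_s$ with $\lambda_0=1$ and $\max_{s\ne0}|\lambda_s|=\rho_{\max}(\psi)<1$. Hence $\mathcal K$ is a self-adjoint, unital, trace-preserving channel on $L_2(\tau_2)$ (by \cref{prop:correlation-channel-choi-properties}). Its only fixed points are scalars, so $\mathcal K^n\to\Pi$, and $\Pi$ is the conditional expectation onto the fixed-point algebra $\mathbb C\id$. Moreover $\mathcal K\Pi=\Pi\mathcal K=\Pi$.

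\textbf{Step 2.} I will check that the hypotheses of \cite[Corollary~4.3]{GR22} hold. That corollary is stated for channels $\Phi$ whose composition $\Phi^\dagger\Phi$ (here $\mathcal K^2$, with symbol $\lambda_s^2$) has a spectral gap on the orthogonal complement of the fixed-point algebra. Our gap is $1-\rho_{\max}(\psi)^2>0$.

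\textbf{Step 3.} The corollary then provides a constant $\alpha_\psi<1$, depending only on $\mathcal K$ and hence only on $\psi$, valid for every ancilla $\Matrix_m$. The right-hand reference state $(\operatorname{id}\otimes\Pi)(\rho)$ is exactly the one used there. I will also note that $(\operatorname{id}\otimes\mathcal K)(\operatorname{id}\otimes\Pi)(\rho)=(\operatorname{id}\otimes\Pi)(\rho)$, so the left-hand side is the relative entropy to the image of the reference state. This is precisely the form the corollary bounds.

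The independence of the canonical form choice is harmless: any choice yields some $\alpha_\psi<1$, and we fix one.

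The main obstacle is matching conventions. \cite{GR22} may phrase the statement through a quantum Markov semigroup $e^{-t\mathcal L}$ rather than a single channel. In that case I will write $\mathcal K$, or better $\mathcal K^2=\mathcal K^\dagger\mathcal K$, as lying below a depolarizing-type semigroup time, using the Pauli diagonal form: $\lambda_s^2\le e^{-t_0}$ for $s\ne0$, with $t_0=-2\log\rho_{\max}$. This would likely use that each $\mathcal K$ with real Pauli eigenvalues in $[-1,1]$ factors through such a semigroup up to a channel.

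A simpler fallback handles that case. I would bound the entropy contraction of $\mathcal K$ by that of the $n$-independent qubit depolarizing channel $\mathcal D_{t_0}$ via a factorization $\mathcal K=\mathcal D_{t_0/2}\circ\mathcal K'$. Here $\mathcal K'$ has symbol $\lambda_s e^{t_0/2}$, and the gap $1-\rho_{\max}$ must be chosen small enough to keep $\mathcal K'$ completely positive. I would then apply the complete MLSI of qubit depolarizing, whose constant is uniform in the ancilla, together with data processing (\cref{prop:relative-entropy-data-processing}) for $\mathcal K'$, noting that $\mathcal K'$ commutes with $\Pi$. If complete positivity of $\mathcal K'$ fails for the boundary of the allowed $\boldsymbol\lambda$-tetrahedron, I would instead rely on the general finite-dimensional statement of \cite[Corollary~4.3]{GR22}, which only needs primitivity. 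Primitivity is guaranteed by $\rho_{\max}<1$.
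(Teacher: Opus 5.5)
Your proposal is correct and follows the paper's proof. Like the paper, you apply \cite[Corollary~4.3]{GR22} to the self-adjoint (hence detailed-balanced) unital channel $\K$, identify the conditional expectation with $\Pi$, read off $\lambda(\K)=\rho_{\max}(\psi)^2<1$, and use $\K\circ\Pi=\Pi$ to match the reference states. The only refinements are these: the subalgebra in that corollary is the multiplicative domain $\operatorname{MD}(\K)$ rather than the fixed-point algebra, and it equals $\mathbb C\id$ here because a traceless $X_0\in\operatorname{MD}(\K)$ would satisfy $\norm{\K(X_0)}_2=\norm{X_0}_2$, contradicting your gap; the constant also depends on the complete Pimsner--Popa index $C_{\tau,\mathrm{cb}}(\Matrix_2:\mathbb C\id)=4$; and your depolarizing-factorization fallback is unnecessary.
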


The proof is given in \cref{app:complete-sdpi}.  The normalized-trace
form used below follows by rescaling.

\begin{cor}
\label{cor:complete-sdpi-normalized}
Let $\K,\Pi$, and $\alpha_\psi$ be as in \cref{lem:complete-sdpi}.
Every positive $Z\in \Matrix_m\otimes \Matrix_2$ with $\tau(Z)=1$ satisfies
\begin{equation*}
\RelEnt_{\tau}\!\left(
(\operatorname{id}_{\Matrix_m}\otimes\K)(Z)
\,\middle\|\,
(\operatorname{id}_{\Matrix_m}\otimes\Pi)(Z)
\right)
\le
\alpha_\psi
\RelEnt_{\tau}\!\left(
Z
\,\middle\|\,
(\operatorname{id}_{\Matrix_m}\otimes\Pi)(Z)
\right).
\end{equation*}
\end{cor}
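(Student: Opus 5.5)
The corollary is a rescaling of \cref{lem:complete-sdpi}, in the same way that the normalized-trace data-processing and Pinsker statements follow from their unnormalized versions. Write $d:=2m$ for the dimension of $\Matrix_m\otimes\Matrix_2$ and $\tau=\tau_d$. Given a positive $Z$ with $\tau(Z)=1$, set $\rho:=Z/d$. Then $\rho\geq0$ and $\Tr\rho=\tau(Z)=1$, so $\rho$ is a density matrix to which \cref{lem:complete-sdpi} applies, with the same constant $\alpha_\psi$ for every $m$.

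The plan has three steps.

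\textbf{Step 1: the reference and output operators have unit normalized trace.} Both maps $\operatorname{id}_{\Matrix_m}\otimes\K$ and $\operatorname{id}_{\Matrix_m}\otimes\Pi$ are linear. The channel $\K$ is trace preserving by \cref{prop:correlation-channel-choi-properties}(iii). The map $\Pi$ is trace preserving since $\tau_2(\Pi(X))=\tau_2(X)$. Hence
\[
(\operatorname{id}\otimes\K)(Z)=d\,(\operatorname{id}\otimes\K)(\rho),\qquad (\operatorname{id}\otimes\Pi)(Z)=d\,(\operatorname{id}\otimes\Pi)(\rho).
\]
Each of these operators is positive with normalized trace $1$. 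Positivity holds because $\K$ is completely positive by \cref{prop:correlation-channel-choi-properties}(ii), and $\Pi$ is completely positive by its Pauli-average form. So both sides of the desired inequality are well defined as $\RelEnt_\tau$ quantities.

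\textbf{Step 2: convert each relative entropy.} Apply the identity $\RelEnt(R/d\|S/d)=\RelEnt_{\tau_d}(R\|S)$ from the preliminaries, valid for positive $R,S$ with $\tau_d(R)=\tau_d(S)=1$, to the two pairs
\[
\bigl((\operatorname{id}\otimes\K)(Z),(\operatorname{id}\otimes\Pi)(Z)\bigr)
\quad\text{and}\quad
\bigl(Z,(\operatorname{id}\otimes\Pi)(Z)\bigr).
\]
This turns both sides of the inequality in \cref{lem:complete-sdpi} for $\rho$ into exactly the two sides of the corollary for $Z$.

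\textbf{Step 3: support conventions.} The support convention (value $+\infty$) is the same in both formulations, and scaling by $d$ does not change supports. So the inequality also transfers verbatim in degenerate cases, where it holds as stated or trivially as $+\infty\le+\infty$.

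There is no genuine obstacle here. The only point needing care is checking that the rescaling identity is applied to pairs with equal normalized trace, which is what Step 1 secures through trace preservation of $\K$ and $\Pi$.
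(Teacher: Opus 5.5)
Your proposal is correct and is the paper's argument: rescale $Z$ to the density matrix $Z/(2m)$, apply \cref{lem:complete-sdpi}, and convert both sides using linearity of the channels and the identity $\RelEnt(R/(2m)\|S/(2m))=\RelEnt_{\tau}(R\|S)$. Your Steps 1 and 3 spell out the positivity, unit-normalized-trace, and support checks that the paper leaves implicit.
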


\begin{proof}
Since $\tau=\tau_{2m}$, the operator $\rho=Z/(2m)$ is a density
matrix.  Apply \cref{lem:complete-sdpi} to $\rho$ and use linearity of
the channels together with
$\RelEnt(R/(2m)\|S/(2m))=\RelEnt_{\tau}(R\|S)$ for positive $R,S$ with
$\tau(R)=\tau(S)=1$.  Thus the same coefficient $\alpha_\psi$ applies.
\end{proof}

\subsubsection{Partial Depolarizing Maps}

\begin{definition}
  \label{def:partial-depolarizing-maps}
  For $n\geq1$ and $i\in[n]$, let $\Pi_i$
  apply $\Pi=\tau_2(\cdot) \id$ to the $i$th tensor factor of $\Matrix_2^{\otimes n}$ and
  act as the identity on all other factors.  For
  $S=\{i_1<\cdots<i_k\}\subseteq[n]$, define the \emph{partial depolarizing map}
  \begin{equation*}
    \Pi_S:=\Pi_{i_1}\circ\cdots\circ\Pi_{i_k},
    \qquad
    \Pi_\varnothing:=\operatorname{id}_{\Matrix_2^{\otimes n}}.
  \end{equation*}
  For $R\in\Matrix_2^{\otimes n}$, $\Pi_S(R)$ is obtained by applying $\Pi$
  to each qubit in $S$, leaving the other tensor factors unchanged.
\end{definition}

The map $\Pi_{[n]}=\Pi^{\otimes n}$ completely depolarizes all $n$ qubits.
Expanding
$\mathcal T_u=(e^{-u}\operatorname{id}_{\Matrix_2}+(1-e^{-u})\Pi)^{\otimes n}$
gives
\begin{equation*}
  \mathcal T_u
  =\sum_{S\subseteq[n]}
    e^{-u(n-|S|)}(1-e^{-u})^{|S|}\Pi_S.
\end{equation*}
Thus $\mathcal T_u$ is the average of $\Pi_S$ when each qubit is included
in $S$ independently with probability $1-e^{-u}$.

\begin{prop}
  \label{prop:correlation-channel-depolarizing-commutation}
  Let $\mathcal K=\mathcal K_{\psi_{\boldsymbol\lambda}}$ be the canonical
  correlation channel of a canonical state $\psi_{\boldsymbol\lambda}$. Then
  \begin{equation*}
    \mathcal K\circ\Pi=\Pi\circ\mathcal K=\Pi.
  \end{equation*}
  Moreover, let
  $\mathcal K_n:=\mathcal K^{\otimes n}$, for every $u\geq0$,
  \(
    \mathcal K_n\mathcal T_u=\mathcal T_u\mathcal K_n.
  \)
\end{prop}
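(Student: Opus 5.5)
We check both identities on the Pauli basis, where every map involved is diagonal or nearly so. By \cref{prop:pauli-canonicalization}, the canonical correlation channel acts by $\mathcal K(P_s)=\lambda_sP_s$ with $\lambda_0=1$. The averaging map acts by $\Pi(P_s)=\tau_2(P_s)\id=\delta_{s,0}\id$, using \cref{fact:pauli-orthonormal-basis}.

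For the single-qubit identity, take $s\in G$. First, $\mathcal K(\Pi(P_s))=\delta_{s,0}\mathcal K(\id)=\delta_{s,0}\id$, since $\mathcal K$ is unital by \cref{prop:correlation-channel-choi-properties}(iii) (equivalently, since $\lambda_0=1$). Second, $\Pi(\mathcal K(P_s))=\lambda_s\delta_{s,0}\id=\delta_{s,0}\id$, again because $\lambda_0=1$. Both compositions therefore agree with $\Pi(P_s)=\delta_{s,0}\id$ on the basis $(P_s)_{s\in G}$, and linearity gives $\mathcal K\circ\Pi=\Pi\circ\mathcal K=\Pi$. A basis-free alternative also works: $\mathcal K\Pi(X)=\tau_2(X)\mathcal K(\id)$ uses unitality, and $\Pi\mathcal K(X)=\tau_2(\mathcal K(X))\id$ uses trace preservation.

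For the $n$-qubit statement, the one-qubit map $\mathcal D_u=\Pi+e^{-u}(\operatorname{id}-\Pi)$ is a linear combination of $\operatorname{id}$ and $\Pi$. Both of these commute with $\mathcal K$, the second by the first part, so $\mathcal K\mathcal D_u=\mathcal D_u\mathcal K$. Tensoring gives
\[
\mathcal K_n\mathcal T_u=(\mathcal K\mathcal D_u)^{\otimes n}=(\mathcal D_u\mathcal K)^{\otimes n}=\mathcal T_u\mathcal K_n.
\]
One can also see this directly on Pauli strings: by \cref{prop:depolarizing-pauli-action} and the canonical diagonal action, both sides send $P_\alpha$ to $e^{-u|\alpha|}w_{\boldsymbol\lambda}(\alpha)P_\alpha$.

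The argument has no real obstacle. The only point needing care is the normalization $\lambda_0=1$, equivalently unitality and trace preservation of $\mathcal K$, which is exactly what \cref{prop:correlation-channel-choi-properties} and the canonical form supply.
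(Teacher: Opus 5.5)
Your proof is correct and matches the paper's: the paper uses exactly your basis-free variant (unitality of $\mathcal K$ for $\mathcal K\circ\Pi=\Pi$, trace preservation for $\Pi\circ\mathcal K=\Pi$), then substitutes into $\mathcal D_u=\Pi+e^{-u}(\operatorname{id}-\Pi)$ and tensors, just as you do. Your primary Pauli-basis check using $\lambda_0=1$ is simply the coordinate form of those same two facts.
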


\begin{proof}
  Since $\mathcal K$ is unital and trace preserving, for every $X\in \Matrix_2$,
  \[
    \mathcal K(\Pi(X))
    =\tau_2(X)\mathcal K(\id)
    =\tau_2(X)\id
    =\Pi(X)
  \]
  and
  \[
    \Pi(\mathcal K(X))
    =\tau_2(\mathcal K(X))\id
    =\tau_2(X)\id
    =\Pi(X).
  \]
  This proves the first assertion.
  Substituting $\mathcal K\circ\Pi=\Pi\circ\mathcal K=\Pi$ into
  $\mathcal D_u=\Pi+e^{-u}(\operatorname{id}-\Pi)$ shows directly that
  $\mathcal K \mathcal D_u=\mathcal D_u\mathcal K$.  Taking tensor powers proves the second
  assertion.
\end{proof}

We next prove the relative-entropy chain rule for partial depolarizing maps.

\begin{lemma}
\label{lem:partial-depolarizing-chain-rule}
Let $m,n\geq1$, and equip $\Matrix_m\otimes\Matrix_2^{\otimes n}$ with the
normalized trace $\tau:=\tau_m\otimes\tau_{2^n}$.  Use the maps $\Pi_S$ from
\cref{def:partial-depolarizing-maps}, extended by the identity on $\Matrix_m$.
For every positive $R\in\Matrix_m\otimes\Matrix_2^{\otimes n}$ with
$\tau(R)=1$ and every $S\subseteq T\subseteq[n]$,
\begin{equation*}
\RelEnt_\tau(R\|\Pi_T(R))
=
\RelEnt_\tau(R\|\Pi_S(R))
+
\RelEnt_\tau(\Pi_S(R)\|\Pi_T(R)).
\end{equation*}
\end{lemma}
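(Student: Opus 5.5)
The plan is to expand the three relative entropies and reduce the identity to a single trace equality:
\[
  \tau\bigl(R\log\Pi_S(R)\bigr)=\tau\bigl(\Pi_S(R)\log\Pi_S(R)\bigr),
  \qquad
  \tau\bigl(R\log\Pi_T(R)\bigr)=\tau\bigl(\Pi_S(R)\log\Pi_T(R)\bigr).
\]
Assume these for the moment. By \eqref{eqn:relent}, the right-hand side of the claim is
\[
  \tau(R\log R)-\tau(R\log\Pi_S(R))+\tau(\Pi_S(R)\log\Pi_S(R))-\tau(\Pi_S(R)\log\Pi_T(R)).
\]
The middle two terms cancel by the first equality. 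The last term equals $\tau(R\log\Pi_T(R))$ by the second. What remains is exactly $\RelEnt_\tau(R\|\Pi_T(R))$.

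Both equalities follow from one structural fact. For any $U\subseteq[n]$, the map $\Pi_U$ is the trace-preserving conditional expectation onto the subalgebra
\[
  \mathcal N_U:=\Matrix_m\otimes\bigotimes_{i\notin U}\Matrix_2\otimes\bigotimes_{i\in U}\mathbb C\id,
\]
so $\tau(X\,Y)=\tau(\Pi_U(X)\,Y)$ for every $Y\in\mathcal N_U$. This bimodule property is immediate from $\Pi(X)=\tau_2(X)\id$ on each factor, and one can verify it on product operators and extend by linearity.

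For the first equality, note that $\Pi_S(R)\in\mathcal N_S$. The logarithm (on its support) is a functional calculus of an element of $\mathcal N_S$, so it again lies in $\mathcal N_S$. Applying the bimodule identity with $U=S$ and $Y=\log\Pi_S(R)$ gives the first equality.

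For the second equality, $S\subseteq T$ implies $\mathcal N_T\subseteq\mathcal N_S$, so $\log\Pi_T(R)\in\mathcal N_T$. The bimodule identity with $U=T$ gives $\tau(R\log\Pi_T(R))=\tau(\Pi_T(R)\log\Pi_T(R))$. Since $\Pi_T=\Pi_{T\setminus S}\circ\Pi_S$ (the maps $\Pi_i$ commute), applying the identity with $U=T$ to $\Pi_S(R)$ instead of $R$ gives the same value $\tau(\Pi_T(R)\log\Pi_T(R))$. Hence both sides of the second equality agree.

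The only delicate point is the support convention. Each $\Pi_U$ is unital, positive, and obtained by averaging Pauli conjugations, so $\supp R\subseteq\supp\Pi_S(R)\subseteq\supp\Pi_T(R)$. To see this, note that $\Pi_U(R)\geq c\,R$ for some $c>0$: indeed $\Pi_U(R)$ is an average of $P R P^\dagger$ over Pauli strings on $U$, and the identity term contributes $4^{-|U|}R$. Therefore all three relative entropies are finite, and the logarithms can be interpreted on the relevant supports, with the convention $0\log0=0$.

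With supports handled this way, the traces above are well defined, and the computation goes through unchanged. I expect this support bookkeeping to be the only real obstacle. If it becomes messy, I would first prove the identity for $R+\varepsilon\id$ (normalized) and then let $\varepsilon\downarrow0$, using continuity of all the terms.
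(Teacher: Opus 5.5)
Your proof is correct and follows essentially the same route as the paper. Both arguments expand the three relative entropies and use that $\log\Pi_S(R)$ and $\log\Pi_T(R)$ lie in the fixed-point algebra of the partial depolarizing map, on which $\tau(R\,\cdot\,)=\tau(\Pi_S(R)\,\cdot\,)$. The only cosmetic difference is how support inclusion is obtained: you use the operator inequality $\Pi_U(R)\geq4^{-|U|}R$ from the Pauli-twirl form of $\Pi$, whereas the paper shows $\tau(PRP)=\tau(P\,\Pi_S(R))=0$ for the kernel projector $P$ of $\Pi_S(R)$. The $\varepsilon$-regularization fallback is never needed.
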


\begin{proof}
Fix $S\subseteq[n]$ and a positive
$R\in\Matrix_m\otimes\Matrix_2^{\otimes n}$ with $\tau(R)=1$.
By definition, $\Pi_S$ is positive and trace preserving, and for all
$X,Y\in\Matrix_m\otimes\Matrix_2^{\otimes n}$,
\[
\tau\!\left(X\Pi_S(Y)\right)
=\tau\!\left(\Pi_S(X)Y\right).
\]
Without loss of generality, assume that the qubits in $S$ come last.  Then
$\Pi_S(R)=B\otimes\id_{2^{|S|}}$, where
$B\in\Matrix_{m2^{n-|S|}}$ is positive.  Let $P$ be the orthogonal
projector onto $\ker\Pi_S(R)$.  Since
$\ker\Pi_S(R)=(\ker B)\otimes\mathbb C^{2^{|S|}}$,
$P$ also has an identity factor on the qubits in $S$, so $\Pi_S(P)=P$.
Consequently,
\begin{align*}
\norm{\sqrt{R}P}_2^2
&=\tau(PRP)=\tau(PR)\\
&=\tau\!\left(\Pi_S(P)R\right)
=\tau\!\left(P\Pi_S(R)\right)=0.
\end{align*}
Thus $RP=0$, which gives $\ker\Pi_S(R)\subseteq\ker R$, or equivalently
$\supp(R)\subseteq\supp(\Pi_S(R))$.
Define each logarithm on the support of its argument and extend it by
zero on the kernel.  The same tensor form gives
\[
\log\Pi_S(R)=(\log B)\otimes\id_{2^{|S|}},
\qquad
\Pi_S\!\left(\log\Pi_S(R)\right)=\log\Pi_S(R).
\]
For $S\subseteq T$, the identity $\Pi_T\circ\Pi_S=\Pi_T$ and the
support inclusion proved above also give
$\supp(\Pi_S(R))\subseteq\supp(\Pi_T(R))$.
Moreover, the tensor form of $\log\Pi_T(R)$ gives
$\Pi_S(\log\Pi_T(R))=\log\Pi_T(R)$.
Using the trace identity above, we obtain
\begin{align*}
&\RelEnt_\tau(R\|\Pi_T(R))-\RelEnt_\tau(R\|\Pi_S(R))\\
&\quad=\tau\!\left(R\bigl(\log\Pi_S(R)-\log\Pi_T(R)\bigr)\right)\\
&\quad=\tau\!\left(\Pi_S(R)\bigl(\log\Pi_S(R)-\log\Pi_T(R)\bigr)\right)\\
&\quad=\RelEnt_\tau(\Pi_S(R)\|\Pi_T(R)),
\end{align*}
which proves the claim.
\end{proof}

\subsubsection{Relative Entropy Bounds}

\begin{lemma}
\label{lem:output-influence}
Let $m,n\geq1$, and equip $\Matrix_m\otimes \Matrix_2^{\otimes n}$ with the normalized
trace $\tau:=\tau_m\otimes\tau_{2^n}$.  Let $\K\colon \Matrix_2\to \Matrix_2$ be the
canonical correlation channel from \cref{lem:complete-sdpi}.  For $i\in[n]$,
let $\K_i$ act as $\K$ on the $i$th qubit and as the identity on $\Matrix_m$
and all other qubits.  Use the maps $\Pi_i$ and $\Pi_S$ from
\cref{def:partial-depolarizing-maps}, extended by the identity on $\Matrix_m$.  If
$Z\in \Matrix_m\otimes \Matrix_2^{\otimes n}$ is positive and $\tau(Z)=1$, define
\begin{equation*}
Y:=(\K_1\circ\cdots\circ\K_n)(Z)
=(\operatorname{id}_{\Matrix_m}\otimes\K^{\otimes n})(Z).
\end{equation*}
 Then
\begin{equation*}
\sum_{i=1}^n
\RelEnt_{\tau}(Y\|\Pi_i(Y))
\le
\frac{\alpha_\psi}{1-\alpha_\psi}
\RelEnt_{\tau}(Z\|\Pi_{[n]}(Z)),
\end{equation*}
where $\RelEnt_{\tau}$ is the normalized-trace relative entropy in~\eqref{eqn:relent}, and
$\alpha_\psi<1$ is the complete-SDPI constant given in~\cref{lem:complete-sdpi}.
\end{lemma}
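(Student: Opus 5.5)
The plan is a hybrid (telescoping) argument that applies $\K$ one qubit at a time and charges each term $\RelEnt_\tau(Y\|\Pi_i(Y))$ to the decrease of a single potential function. Set $Z_0:=Z$ and $Z_j:=\K_j(Z_{j-1})$ for $j\in[n]$, so that $Z_n=Y$. Each $\K_j$ is unital, completely positive and trace preserving, so every $Z_j$ is positive with $\tau(Z_j)=1$. The potential is $\Phi_j:=\RelEnt_\tau(Z_j\|\Pi_{[n]}(Z_j))$. Since $\K\circ\Pi=\Pi\circ\K=\Pi$ by \cref{prop:correlation-channel-depolarizing-commutation}, we have $\Pi_{[n]}(Z_j)=\Pi_{[n]}(Z)$ for all $j$ and $\Pi_j(Z_j)=\Pi_j(Z_{j-1})$. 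Moreover, $\K_l$ commutes with $\Pi_i$ for $l\neq i$, because they act on different tensor factors.

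The key step is a per-qubit decrement estimate. I would apply the chain rule \cref{lem:partial-depolarizing-chain-rule} with $S=\{j\}\subseteq T=[n]$, once to $Z_{j-1}$ and once to $Z_j$. Because $\Pi_j(Z_j)=\Pi_j(Z_{j-1})$, the two second terms coincide:
\[
\Phi_{j-1}-\Phi_j=\RelEnt_\tau(Z_{j-1}\|\Pi_j(Z_{j-1}))-\RelEnt_\tau(Z_j\|\Pi_j(Z_j)).
\]
Next I would regard all factors other than qubit $j$ (the $\Matrix_m$ factor and the other $n-1$ qubits) as the ancilla, and apply \cref{cor:complete-sdpi-normalized} to $Z_{j-1}$. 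Using $\Pi_j(Z_j)=\Pi_j(Z_{j-1})$ again, this gives
\[
\RelEnt_\tau(Z_j\|\Pi_j(Z_j))\le\alpha_\psi\,\RelEnt_\tau(Z_{j-1}\|\Pi_j(Z_{j-1})).
\]
Combining the two displays yields
\[
\Phi_{j-1}-\Phi_j\ge(1-\alpha_\psi)\,\RelEnt_\tau(Z_{j-1}\|\Pi_j(Z_{j-1}))\ge\frac{1-\alpha_\psi}{\alpha_\psi}\,\RelEnt_\tau(Z_j\|\Pi_j(Z_j)).
\]
If $\alpha_\psi=0$, both sides of the lemma are handled trivially, since each $\RelEnt_\tau(Z_j\|\Pi_j(Z_j))$ vanishes.

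To pass from $Z_j$ to $Y$, note that $Y=(\K_{j+1}\circ\cdots\circ\K_n)(Z_j)$, and this channel commutes with $\Pi_j$. Hence $\Pi_j(Y)$ is the image of $\Pi_j(Z_j)$ under the same channel. The data-processing inequality \cref{prop:relative-entropy-data-processing}, in normalized form on the full algebra, then gives
\[
\RelEnt_\tau(Y\|\Pi_j(Y))\le\RelEnt_\tau(Z_j\|\Pi_j(Z_j))\le\frac{\alpha_\psi}{1-\alpha_\psi}(\Phi_{j-1}-\Phi_j).
\]
Summing over $j$ telescopes the right-hand side to $\frac{\alpha_\psi}{1-\alpha_\psi}(\Phi_0-\Phi_n)\le\frac{\alpha_\psi}{1-\alpha_\psi}\Phi_0$, since $\Phi_n\ge0$. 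This is the claimed bound.

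The main obstacle is to choose the hybrid so that the terms telescope. The naive route bounds each $\RelEnt_\tau(Y\|\Pi_i(Y))$ by $\alpha_\psi\RelEnt_\tau(Z\|\Pi_i(Z))$, but the resulting sum $\sum_i\RelEnt_\tau(Z\|\Pi_i(Z))$ is not controlled by $\RelEnt_\tau(Z\|\Pi_{[n]}(Z))$. The sequential hybrid avoids this because the chain rule turns each single-qubit step into an exact decrement of one fixed quantity. Beyond that, the points to verify are the support conditions needed in the chain rule and in the SDPI; both follow from the support inclusion established inside the proof of \cref{lem:partial-depolarizing-chain-rule}.
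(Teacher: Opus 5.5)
Your proposal is correct and follows essentially the same route as the paper: the hybrid sequence $Z_i=\K_i(Z_{i-1})$, the chain rule with $S=\{i\}\subseteq T=[n]$ turning each step into an exact decrement of $\RelEnt_\tau(Z_i\|\Pi_{[n]}(Z_i))$, the complete SDPI with all other registers as ancilla, data processing to pass from $Z_i$ to $Y$, and telescoping. The only cosmetic difference is the case $\alpha_\psi=0$: you handle it by noting that each $\RelEnt_\tau(Z_j\|\Pi_j(Z_j))$ vanishes, whereas the paper observes directly that then $\K=\Pi$ and hence $Y=\Pi_{[n]}(Z)$.
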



\begin{proof}
If $\alpha_\psi=0$, \cref{cor:complete-sdpi-normalized} implies
$\K=\Pi$, so $Y=\Pi_{[n]}(Z)$ and the assertion is immediate.  Assume
$\alpha_\psi>0$ for the rest of the proof.
Define
\[
Z_0=Z,
\qquad
Z_i=\K_i(Z_{i-1})
\quad(1\le i\le n).
\]
The channels $\K_i$ commute with all $\Pi_j$, and
\(
\Pi_i\circ\K_i=\Pi_i.
\)
Hence
\[
\Pi_i(Z_i)=\Pi_i(Z_{i-1}),
\qquad
\Pi_{[n]}(Z_i)=\Pi_{[n]}(Z_{i-1}).
\]

Applying~\cref{cor:complete-sdpi-normalized} to $Z_{i-1}$ with every
other register treated as an ancilla, we have 
\begin{align*}
\RelEnt_\tau(Z_i\|\Pi_i(Z_i))
&=\RelEnt_\tau(\K_i(Z_{i-1})\|\Pi_i(Z_{i-1}))\\
&\leq\alpha_\psi\RelEnt_\tau(Z_{i-1}\|\Pi_i(Z_{i-1})).
\end{align*}
Apply \cref{lem:partial-depolarizing-chain-rule} with $S=\{i\}$ and
$T=[n]$ first to $Z_{i-1}$ and then to $Z_i$.  Subtracting the two
resulting equalities produces
\begin{align*}
&\RelEnt_{\tau}(Z_{i-1}\|\Pi_{[n]}(Z_{i-1}))
-\RelEnt_{\tau}(Z_i\|\Pi_{[n]}(Z_i))
\\
&\quad=
\Bigl[
\RelEnt_{\tau}(Z_{i-1}\|\Pi_i(Z_{i-1}))
+
\RelEnt_{\tau}(\Pi_i(Z_{i-1})\|\Pi_{[n]}(Z_{i-1}))
\Bigr]
\\
&\qquad-
\Bigl[
\RelEnt_{\tau}(Z_i\|\Pi_i(Z_i))
+
\RelEnt_{\tau}(\Pi_i(Z_i)\|\Pi_{[n]}(Z_i))
\Bigr]
\\
&\quad=
\RelEnt_{\tau}(Z_{i-1}\|\Pi_i(Z_{i-1}))
-\RelEnt_{\tau}(Z_i\|\Pi_i(Z_i))\\
&\quad\ge
\left(\frac1{\alpha_\psi}-1\right)
\RelEnt_{\tau}(Z_i\|\Pi_i(Z_i))\\
&\quad=
\frac{1-\alpha_\psi}{\alpha_\psi}
\RelEnt_{\tau}(Z_{i}\|\Pi_i(Z_{i})),
\end{align*}
where the second equality uses both
$\Pi_i(Z_i)=\Pi_i(Z_{i-1})$ and
$\Pi_{[n]}(Z_i)=\Pi_{[n]}(Z_{i-1})$.  The inequality uses the preceding
SDPI bound.

After the $i$th step, apply the remaining channels
$\K_{i+1},\ldots,\K_n$ to both arguments.
\Cref{prop:relative-entropy-data-processing} and the preceding estimate imply
\[
\RelEnt_{\tau}(Y\|\Pi_i(Y))
\le
\RelEnt_{\tau}(Z_i\|\Pi_i(Z_i))
\le
\frac{\alpha_\psi}{1-\alpha_\psi}
\left[
\RelEnt_{\tau}(Z_{i-1}\|\Pi_{[n]}(Z_{i-1}))
-\RelEnt_{\tau}(Z_i\|\Pi_{[n]}(Z_i))
\right].
\]
Summing over $i$ cancels the intermediate relative-entropy terms and gives
\begin{align*}
\sum_{i=1}^n\RelEnt_{\tau}(Y\|\Pi_i(Y))
&\le
\frac{\alpha_\psi}{1-\alpha_\psi}
\sum_{i=1}^n\left[
\RelEnt_{\tau}(Z_{i-1}\|\Pi_{[n]}(Z_{i-1}))
-\RelEnt_{\tau}(Z_i\|\Pi_{[n]}(Z_i))
\right]\\
&=
\frac{\alpha_\psi}{1-\alpha_\psi}
\left[
\RelEnt_{\tau}(Z\|\Pi_{[n]}(Z))
-\RelEnt_{\tau}(Y\|\Pi_{[n]}(Y))
\right]\\
&\le
\frac{\alpha_\psi}{1-\alpha_\psi}
\RelEnt_{\tau}(Z\|\Pi_{[n]}(Z)),
\end{align*}
where the last step uses nonnegativity of relative entropy.
\end{proof}


\begin{lemma}
\label{lem:entropy-cost-heat}
Let $m,n\geq1$,
and use the maps $\Pi_i$ and $\Pi_S$ from \cref{def:partial-depolarizing-maps},
extended by the identity on $\Matrix_m$.  Define
$\mathcal T_u:=\operatorname{id}_{\Matrix_m}\otimes\mathcal T_u^{(n)}$, where
$\mathcal T_u^{(n)}(P_\alpha)=e^{-u|\alpha|}P_\alpha$ is the $n$-qubit
depolarizing semigroup.
For every positive
$Y\in \Matrix_m\otimes \Matrix_2^{\otimes n}$ with $\tau(Y)=1$ and every $u\geq0$,
\begin{equation*}
\RelEnt_{\tau}(Y\|\mathcal T_u(Y))
\le
(1-e^{-u})
\sum_{i=1}^n\RelEnt_{\tau}(Y\|\Pi_i(Y)).
\end{equation*}
Consequently,
\begin{equation*}
\norm{Y-\mathcal T_u(Y)}_1
\le
\sqrt{
2(1-e^{-u})
\sum_{i=1}^n\RelEnt_{\tau}(Y\|\Pi_i(Y))
}.
\end{equation*}
\end{lemma}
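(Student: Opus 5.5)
Write $p:=1-e^{-u}$. The plan is to use the expansion of $\mathcal T_u$ as a mixture of partial depolarizing maps and then apply convexity, the chain rule, and subadditivity. From the display following \cref{def:partial-depolarizing-maps},
\[
\mathcal T_u=\mathbb E_S\,\Pi_S ,
\]
where $S$ contains each qubit independently with probability $p$. Relative entropy is jointly convex, and in particular convex in its second argument. Hence
\[
\RelEnt_\tau(Y\|\mathcal T_u(Y))\le\mathbb E_S\,\RelEnt_\tau(Y\|\Pi_S(Y)).
\]
This convexity step is legitimate in the normalized-trace form, because $\RelEnt_\tau(R\|S)=\RelEnt(R/d\|S/d)$. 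Support conditions cause no trouble, since \cref{lem:partial-depolarizing-chain-rule} shows $\supp(Y)\subseteq\supp(\Pi_S(Y))$.

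It therefore suffices to prove the subadditivity bound
\begin{equation*}
\RelEnt_\tau(Y\|\Pi_S(Y))\le\sum_{i\in S}\RelEnt_\tau(Y\|\Pi_i(Y)).
\end{equation*}
Taking expectations then finishes, because each $i$ lies in $S$ with probability $p$. To prove subadditivity, order $S=\{i_1<\cdots<i_k\}$ and set $S_j=\{i_1,\dots,i_j\}$. Applying \cref{lem:partial-depolarizing-chain-rule} repeatedly to the nested sets $S_{j-1}\subseteq S_j$ gives
\[
\RelEnt_\tau(Y\|\Pi_S(Y))=\sum_{j=1}^k\RelEnt_\tau\!\left(\Pi_{S_{j-1}}(Y)\,\middle\|\,\Pi_{S_j}(Y)\right).
\]
Since $\Pi_{S_j}=\Pi_{S_{j-1}}\circ\Pi_{i_j}$ and the $\Pi_i$ commute, the $j$-th term equals $\RelEnt_\tau\bigl(\Pi_{S_{j-1}}(Y)\,\|\,\Pi_{S_{j-1}}(\Pi_{i_j}(Y))\bigr)$. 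The map $\Pi_{S_{j-1}}$ is a unital trace-preserving channel on the same algebra, so \cref{prop:relative-entropy-data-processing} bounds this term by $\RelEnt_\tau(Y\|\Pi_{i_j}(Y))$. Summing over $j$ proves the subadditivity bound.

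The second assertion follows from the first. Trace preservation gives $\tau(\mathcal T_u(Y))=1$, so both $Y$ and $\mathcal T_u(Y)$ are normalized positive operators. Applying \cref{prop:quantum-pinsker} to $R=Y$ and $S=\mathcal T_u(Y)$ then yields the stated $\norm{\cdot}_1$ bound.

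The main obstacle is the subadditivity step, specifically the rewriting of each chain-rule increment into a form where data processing applies. This needs the commutation of the partial depolarizing maps; I would verify it first on tensor products, where it is clear, and extend by linearity.
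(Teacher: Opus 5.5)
Your proof is correct and follows the paper's argument essentially step for step. Both use convexity of $Z\mapsto\RelEnt_\tau(Y\|Z)$ on the mixture $\mathcal T_u=\mathbb E_{\bfS}\Pi_{\bfS}$, the iterated chain rule of \cref{lem:partial-depolarizing-chain-rule}, data processing through the commuting maps $\Pi_{S_{j-1}}$ to bound each increment by $\RelEnt_\tau(Y\|\Pi_{i_j}(Y))$, averaging over $\bfS$, and finally Pinsker (\cref{prop:quantum-pinsker}) for the trace-norm bound.
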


\begin{proof}
For fixed $Y$, the function $Z\mapsto\RelEnt_\tau(Y\|Z)$ is convex.
Therefore,
\begin{align*}
\RelEnt_{\tau}(Y\|\mathcal T_u(Y))
&=\RelEnt_{\tau}\!\left(Y\,\middle\|\,\mathbb E_\bfS\Pi_{\bfS}(Y)\right)\\
&\le\mathbb E_\bfS \RelEnt_{\tau}(Y\|\Pi_{\bfS}(Y)).
\end{align*}
For a fixed set $S=\{i_1<\cdots<i_m\}$, iterating
\cref{lem:partial-depolarizing-chain-rule} decomposes this relative entropy as
\[
\RelEnt_{\tau}(Y\|\Pi_S(Y))
=
\sum_{j=1}^m
\RelEnt_{\tau}\!\left(
\Pi_{\{i_1,\ldots,i_{j-1}\}}(Y)
\,\middle\|\,
\Pi_{\{i_1,\ldots,i_j\}}(Y)
\right)
\leq
\sum_{j=1}^m \RelEnt_{\tau}(Y\|\Pi_{i_j}(Y)),
\]
where $\{i_1,\ldots,i_{j-1}\}=\varnothing$ when $j=1$.
The $j$th summand is at most
$\RelEnt_{\tau}(Y\|\Pi_{i_j}(Y))$ by
\cref{prop:relative-entropy-data-processing}, applied to the preceding
commuting averaging maps.  Taking the expectation
over $\bfS$ proves the first assertion of \cref{lem:entropy-cost-heat}.
The second assertion follows from Pinsker \cref{prop:quantum-pinsker} applied to
$Y$ and $\mathcal T_u(Y)$.
\end{proof}

For a POVM $\mathscr A=(A_a)_{a=1}^t$, define its block density
\begin{equation}
\label{eq:povm-block-density}
Z_{\mathscr A}
=
t\sum_{a=1}^t|a\rangle\langle a|\otimes A_a.
\end{equation}
Then
\(
\tau(Z_{\mathscr A})=1\),
and
\(0\leq Z_{\mathscr A}\leq t\id
\).
A crucial observation is that we can upper bound the relative entropy between $Z_{\mathscr A}$ and $\Pi_{[n]}(Z_{\mathscr A})$ by $\log t$, instead of $\poly(t)$.
\begin{lemma}\label{lem:entropy-upper-bound}
  \begin{equation}\label{eq:entopy-upper-bound}
    \RelEnt_{\tau}(Z_{\mathscr A}\|\Pi_{[n]}(Z_{\mathscr A})) \le \log t.
  \end{equation}
\end{lemma}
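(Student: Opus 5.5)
The plan is to compute $\Pi_{[n]}(Z_{\mathscr A})$ explicitly and then bound the relative entropy by a direct operator inequality. Here $\Pi_{[n]}$ acts as the identity on the $\Matrix_t$ register and completely depolarizes the qubits. Since $\Pi_{[n]}(X)=\tau_{2^n}(X)\id$ on the qubit factor, we get
\[
\Pi_{[n]}(Z_{\mathscr A})=t\sum_{a=1}^t\tau_{2^n}(A_a)\,|a\rangle\langle a|\otimes\id=:D\otimes\id,
\]
where $D$ is the diagonal matrix with entries $p_a:=t\,\tau_{2^n}(A_a)$. We also have $\tau(Z_{\mathscr A})=\tau_t(D)=\sum_a\tau_{2^n}(A_a)=\tau_{2^n}(\id)=1$, so both arguments are admissible. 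The support condition holds because, whenever $p_a=0$, the positive operator $A_a$ has zero trace and hence vanishes.

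The second step is to split the relative entropy. Because $Z_{\mathscr A}$ and $\Pi_{[n]}(Z_{\mathscr A})$ are both block diagonal in $a$, the logarithms act blockwise, and
\[
\RelEnt_\tau(Z_{\mathscr A}\|\Pi_{[n]}(Z_{\mathscr A}))=\tau(Z_{\mathscr A}\log Z_{\mathscr A})-\tau\!\left(Z_{\mathscr A}\,(\log D\otimes\id)\right).
\]
The second term equals $\frac1t\sum_a p_a\log p_a$. This is $\sum_a q_a\log(tq_a)$ with $q_a:=p_a/t$ a probability vector, so it equals $\log t-H(q)$.

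For the first term, we use $0\le Z_{\mathscr A}\le t\id$, which gives the operator inequality $Z_{\mathscr A}\log Z_{\mathscr A}\le (\log t)\,Z_{\mathscr A}$; this follows from functional calculus because $x\log x\le x\log t$ on $[0,t]$. Hence $\tau(Z_{\mathscr A}\log Z_{\mathscr A})\le\log t$.

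Combining the two terms gives $\RelEnt_\tau\le\log t-(\log t-H(q))=H(q)$, and $H(q)\le\log t$. This bound is enough. An even cleaner route is to note that the first-term inequality already gives $\RelEnt_\tau\le \log t-\frac1t\sum_a p_a\log p_a$, but then one must bound $-\sum_a q_a\log(tq_a)=-\log t+H(q)\le 0$; the two computations agree.

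The main obstacle is bookkeeping the normalizations, namely the factor $t$ in $Z_{\mathscr A}$ and the normalized traces $\tau_t\otimes\tau_{2^n}$, together with handling the kernel of $D$ in the logarithm using the zero-on-kernel convention.
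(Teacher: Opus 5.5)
Your proposal is correct and uses the same two ingredients as the paper. The first is the bound $\tau(Z_{\mathscr A}\log Z_{\mathscr A})\le\log t$, which follows from $0\le Z_{\mathscr A}\le t\id$. The second is the identity $\tau\bigl(Z_{\mathscr A}\log\Pi_{[n]}(Z_{\mathscr A})\bigr)=\log t-H(q)\ge0$, which follows from the entropy bound for a $t$-point distribution. The paper simply drops this cross term as nonnegative, whereas you keep it to obtain the intermediate bound $H(q)$; the difference is cosmetic, and your explicit check of the support condition fills in a detail the paper leaves implicit.
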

\begin{proof}
\begin{align*}
\RelEnt_{\tau}(Z_{\mathscr A}\|\Pi_{[n]}(Z_{\mathscr A}))
&=
\tau\Br{Z_{\mathscr A}\log Z_{\mathscr A}} - \tau\Br{Z_{\mathscr A}\log \Pi_{[n]}(Z_{\mathscr A})} \\
&\le
\tau\Br{Z_{\mathscr A}\log Z_{\mathscr A}} \\
&\le
\tau\Br{Z_{\mathscr A}}\cdot\log t \\
&=
\log t.
\end{align*}
The first inequality follows since
\begin{align*}
    \tau\Br{Z_{\mathscr A}\log\Pi_{[n]}(Z_{\mathscr A})} &= \sum_{a=1}^t\tau\Br{A_a}\cdot\log\br{t\cdot\tau\Br{A_a}} \\
    &= \log t - \sum_{a=1}^t\tau\Br{A_a}\log\br{1/\tau\Br{A_a}}
    \ge 0,
\end{align*}
where we can see the term $\sum_{a=1}^t\tau\Br{A_a}\log\br{1/\tau\Br{A_a}}$ as the entropy of the $t$-element distribution, in which the $a$ element occurs with probability $\tau\Br{A_a}$.
The second inequality follows since $0\leq Z_{\mathscr A} \leq t\id$.
\end{proof}


\subsubsection{\texorpdfstring{Proof of \cref{thm:whole-smoothing}}
  {Proof of the Smoothing Theorem}}
\begin{proof}
The assumption $\rho_{\max}(\psi)<1$ ensures the existence of the
complete-SDPI constant $\alpha_\psi<1$ in \cref{lem:complete-sdpi}.
By \cref{prop:pauli-canonicalization}, we may assume the resource state in
canonical form without changing the original game value.  The conjugations
by $U_A^{\otimes n}$ and $\overline{U_B}^{\otimes n}$ commute with smoothing by
\cref{prop:depolarizing-product-unitaries}, so this reduction also preserves
the smoothed game value.

First smooth Alice only.  For fixed $(x,y)$, put
\begin{equation}\label{eqn:Faxy}
F_a^{x,y}
=
\sum_bV(a,b\mid x,y)B_b^y,
\qquad
F^{x,y}
=
\sum_a|a\rangle\langle a|\otimes F_a^{x,y}.    
\end{equation}

Each $F_a^{x,y}$ is a sum of a subset of the elements of the POVM
family $(B_b^y)_b$.  Since $\sum_bB_b^y=\id$,
\(
0\le F^{x,y}\le\id.
\)

Let
\(
Y_x=(\operatorname{id}_{\mathbb C^t}\otimes\K_n)(Z_{\mathscr A^x}).
\)
For fixed $(x,y)$, define
\begin{equation}
\label{eq:alice-smoothing-change}
  \Delta_A(x,y):=\sum_{a,b}V(a,b\mid x,y)
  \Tr\!\left[
    \bigl((A_a^x-A_a^{x,(u)})\otimes(B_b^y)^{\mathsf T}\bigr)\psi^{\otimes n}
  \right].
\end{equation}
Using the self-adjointness of $\K_n$ and
\cref{prop:correlation-channel-depolarizing-commutation}, we obtain
\begin{align*}
\Delta_A(x,y)
&=
\sum_{a,b}V(a,b\mid x,y)
\tau_{2^n}\!\left(
(\id-\mathcal T_{u_x})A_a^x\,\K_n(B_b^y)
\right)\\
&=
\tau\!\left(
(Y_x-\mathcal T_{u_x}Y_x)F^{x,y}
\right),
\end{align*}
where $F^{x,y}$ is defined in Eq.~\eqref{eqn:Faxy}. The operator $Y_x-\mathcal T_{u_x}Y_x$ is Hermitian and has trace zero. By the fact that $0\leq F^{x,y}\leq \id$,
\[
|\Delta_A(x,y)|
\le
\frac12\norm{Y_x-\mathcal T_{u_x}Y_x}_1.
\]
Together, \cref{lem:output-influence,lem:entropy-cost-heat} and the entropy
bound \cref{eq:entopy-upper-bound} established immediately above imply
\begin{align*}
|\Delta_A(x,y)|
&\le
\sqrt{
\frac{\alpha_\psi}{2(1-\alpha_\psi)}
(1-e^{-u_x})\log t
}\le
\sqrt{
\frac{\alpha_\psi}{1-\alpha_\psi}
s\log(2t)
},
\end{align*}
because $1-e^{-u_x}\leq u_x\leq2s$ and $\log t\leq\log(2t)$.

To smooth Bob next, keep Alice's smoothed POVM
$\mathscr A^{x,(u)}=(A_a^{x,(u)})_a$ fixed.
For fixed $(x,y)$, put
\begin{equation*}
  E_b^{x,y}:=\sum_aV(a,b\mid x,y)A_a^{x,(u)}.
\end{equation*}
Since $A_a^{x,(u)}\geq0$ and $V(a,b\mid x,y)\in\{0,1\}$,
\begin{equation*}
  0\leq E_b^{x,y}\leq\sum_a A_a^{x,(u)}
  =\mathcal T_{u_x}\!\left(\sum_a A_a^x\right)=\id.
\end{equation*}

With Alice's strategy $\mathscr A^{(u)}=(A_a^{x,(u)})_{x,a}$, define
\begin{equation}
\label{eq:bob-smoothing-change}
  \Delta_B(x,y):=\sum_{a,b}V(a,b\mid x,y)
  \Tr\!\left[
    \bigl(A_a^{x,(u)}\otimes(B_b^y-B_b^{y,(v)})^{\mathsf T}\bigr)\psi^{\otimes n}
  \right].
\end{equation}
Using the definition of the correlation channel and
$\K_n(\id-\mathcal T_{v_y})=(\id-\mathcal T_{v_y})\K_n$, we obtain
\begin{align*}
  \Delta_B(x,y)
  &=\sum_{a,b}V(a,b\mid x,y)\tau_{2^n}\!\left(
    A_a^{x,(u)}\K_n\!\left(B_b^y-B_b^{y,(v)}\right)\right)\\
  &=\sum_b\tau_{2^n}\!\left(
    E_b^{x,y}(\id-\mathcal T_{v_y})\K_n(B_b^y)
    \right).
\end{align*}
Applying the same estimates to the block density of the POVM
$(B_b^y)_b$, with $0\leq E_b^{x,y}\leq\id$, gives
\begin{equation*}
  |\Delta_B(x,y)|
  \leq\sqrt{\frac{\alpha_\psi}{2(1-\alpha_\psi)}
    (1-e^{-v_y})\log t}
  \leq\sqrt{\frac{\alpha_\psi}{1-\alpha_\psi}s\log(2t)},
\end{equation*}
where $1-e^{-v_y}\leq v_y\leq2s$.
The quantities $\Delta_A(x,y)$ and $\Delta_B(x,y)$ are signed differences
and need not be nonnegative.  By \eqref{eq:strategy-value},
\eqref{eq:alice-smoothing-change}, and \eqref{eq:bob-smoothing-change},
\begin{align*}
  \Val_{\sfG}(\mathscr A,\mathscr B)
    -\Val_{\sfG}(\mathscr A^{(u)},\mathscr B)
  &=\mathbb E_{(x,y)\sim\mathsf p_{\sfG}}\Delta_A(x,y),\\
  \Val_{\sfG}(\mathscr A^{(u)},\mathscr B)
    -\Val_{\sfG}(\mathscr A^{(u)},\mathscr B^{(v)})
  &=\mathbb E_{(x,y)\sim\mathsf p_{\sfG}}\Delta_B(x,y).
\end{align*}
Adding these identities cancels the intermediate value
$\Val_{\sfG}(\mathscr A^{(u)},\mathscr B)$.  Taking the absolute value
therefore gives the error in game value:
\begin{align*}
  &\abs{\Val_{\sfG}(\mathscr A,\mathscr B)
    -\Val_{\sfG}(\mathscr A^{(u)},\mathscr B^{(v)})}\\
  &\quad=\abs{\mathbb E_{(x,y)\sim\mathsf p_{\sfG}}
    \bigl(\Delta_A(x,y)+\Delta_B(x,y)\bigr)}\\
  &\quad\leq\mathbb E_{(x,y)\sim\mathsf p_{\sfG}}
    \bigl(|\Delta_A(x,y)|+|\Delta_B(x,y)|\bigr)\\
  &\quad\leq2\sqrt{\frac{\alpha_\psi}{1-\alpha_\psi}s\log(2t)}.
\end{align*}
\end{proof}

\subsection{Step 2: Approximation by Positive Low-Degree Operators}
In this subsection, we prove \cref{thm:low-degree-approximation-new}, restated below.
\lowdegreeapproximationnew*
We break the proof of \cref{thm:low-degree-approximation-new} into two parts.
First, 
we show that any POVM block operator can be approximated by low-degree operators,
in the normalized Schatten $4$-norm.
\begin{restatable}[Low-degree block approximation]{lemma}{lowdegreeblockapproximationnew}
  \label{lem:low-degree-block-approximationnew}
  There are absolute constants $C,\gamma>0$ with the following property.
  Let $n\geq1$, $t\geq2$, $s>0$, and $\eta\in(0,1)$, and let
  $\mathscr A=(A_a)_{a=1}^t$ be a POVM on $(\mathbb C^2)^{\otimes n}$.
  Let $(\mathcal T_u)_{u\geq0}$ be the tensorized depolarizing semigroup.  For $u\geq0$,
  define the smoothed POVM and block operator
  \begin{equation*}
    M_a(u):=\mathcal T_u(A_a),
    \qquad
    W_u:=\sum_{a=1}^t|a\rangle\otimes\sqrt{M_a(u)}.
  \end{equation*}
  Let $k$ be an integer greater than
  \(
    \left(
      \frac{C\log(2t)}{s\eta^4}
    \right)^{1/\gamma}.
  \)
  There is a $u\in[s,2s]$ and operators
  $Q_1,\ldots,Q_t\in \Matrix_2^{\otimes n}$, each of Pauli degree at most $k$, such that
  \(
    \norm{W_u-Q}_4\leq\eta,
  \)
  where
  \(
    Q=\sum_{a=1}^t|a\rangle\otimes Q_a.
  \)
  Moreover,
  \(
    \norm{Q^\dagger Q-\id}_2\leq(2+\eta)\eta.
  \)
\end{restatable}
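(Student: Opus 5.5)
The plan has three parts: reduce to a block-valued high-degree estimate, use the entropy/Dirichlet-form machinery to pick a good smoothing time, and then control the normalization error algebraically. The target is a statement of the form ``if $\norm{W-\mathcal T_\varepsilon W}_4$ is small, then $W$ is close to $W^{\leq k}$ in $\norm{\cdot}_4$.'' Here $W$ is a block operator in $\mathbb C^t\otimes\Matrix_2^{\otimes n}$, and the estimate should be independent of $n$ and $t$.

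\textbf{Step 1: transfer the Mendel--Naor estimate.} Identify Pauli expansions with functions on $G^n\cong\Omega_{2n}$. The map $X\mapsto(\rho\mapsto U_\rho X U_\rho^\dagger)$ is a Pauli-multiplier representation by \cref{lem:pauli-multiplier-implementation}. I will transfer \cite[Theorem~5.1]{MN14} for the Schatten class $S_4$, which is uniformly convex and K-convex, to matrix-valued Pauli functions. The resulting estimate should read
\[
\norm{\mathcal T_\varepsilon F}_4\le e^{-c\varepsilon^{A}k}\norm F_4
\qquad\text{for } F\in L_4^{>k}.
\]
Schatten duality then lets me dualize: applying the estimate to the high-degree part gives
\[
\norm{W-W^{\le k}}_4\lesssim \norm{W-\mathcal T_\varepsilon W}_4+e^{-c\varepsilon^A k}\norm{W}_4 .
\]
Equivalently, I can bound the high part of $W$ by that of $W-\mathcal T_\varepsilon W$ plus an exponentially small remainder.

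\textbf{Step 2: choose the smoothing time.} Set $W=W_u$ and bound $\norm{W_u-\mathcal T_\varepsilon W_u}_2$ via the Dirichlet form. \cref{cor:depolarizing-stroock-varopoulos} applied to each $M_a(u)$ gives
\[
-\tfrac{d}{du}\Ent\bigl(Z(u)\bigr)\ge 4\sum_a\tau\bigl(\sqrt{M_a}\,\cL\sqrt{M_a}\bigr),
\]
where $Z$ is the block density \eqref{eq:povm-block-density} scaled so that $\Ent\le\log t$. Integrating over $[s,2s]$ yields a $u\in[s,2s]$ with
\[
\cE_{2,\cL}(W_u)\le\frac{\log t}{4s}.
\]
By \cref{prop:depolarizing-dirichlet-pauli-expansion} in block form and the bound $1-e^{-\varepsilon|\alpha|}\le\varepsilon|\alpha|$, this gives $\norm{W_u-\mathcal T_\varepsilon W_u}_2^2\le\varepsilon\log t/(4s)$. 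To pass from the $2$-norm to the $4$-norm, interpolate against $\norm{\cdot}_\infty\le 2$, using \cref{prop:depolarizing-contractivity} and $\norm{W_u}_\infty=1$:
\[
\norm{X}_4\le\norm X_2^{1/2}\norm X_\infty^{1/2}.
\]
Taking $\varepsilon\approx s\eta^4/\log(2t)$ and $k\ge (C\log(2t)/(s\eta^4))^{1/\gamma}$ then makes both terms at most $\eta/2$. Set $Q:=W_u^{\le k}$; each $Q_a$ has degree at most $k$.

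\textbf{Step 3: normalization error.} Since $W_u^\dagger W_u=\sum_a M_a(u)=\id$, write $\Delta=Q-W_u$ and expand
\[
Q^\dagger Q-\id=W_u^\dagger\Delta+\Delta^\dagger W_u+\Delta^\dagger\Delta .
\]
By Hölder, $\norm{W_u^\dagger\Delta}_2\le\norm{W_u}_4\norm\Delta_4\le\eta$, using $\norm{W_u}_4=1$, and likewise for the adjoint term. For the last term, $\norm{\Delta^\dagger\Delta}_2=\norm\Delta_4^2\le\eta^2$. This gives $(2+\eta)\eta$.

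\textbf{Main obstacle.} I expect Step 1 to be the hard part. It requires showing that the Mendel--Naor high-degree decay holds for Pauli (noncommutative) expansions with block-valued coefficients, with constants independent of $t$. The block space $\mathbb C^t\otimes S_4$ embeds isometrically into $S_4$ of a larger algebra, so $t$-independence should follow. I will first try conjugation by the random Paulis $U_\rho$ to realize $P_\alpha$-coefficients as Walsh coefficients of a function $\Omega_{2n}\to S_4$. After that I need to handle the mismatch between the $G$-weight and the Boolean weight, which differ by a factor of at most $2$, and this costs only constants in $\gamma$.
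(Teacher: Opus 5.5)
\textbf{Steps 2 and 3.} These agree with the paper: the entropy-derivative choice of $u\in[s,2s]$, Stroock--Varopoulos, the $L_2$ Dirichlet bound, interpolation against $\norm{W_u-\mathcal T_\varepsilon W_u}_\infty\le2$, and H\"older for $Q^\dagger Q-\id$. There is one slip: to land on the Dirichlet form you need $(1-e^{-x})^2\le x$, not $1-e^{-x}\le x$.

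\textbf{The gap is in Step 1.} There you take $Q:=W_u^{\le k}$, the Pauli truncation, and claim
\[
\norm{W-W^{\le k}}_4\lesssim\norm{W-\mathcal T_\varepsilon W}_4+e^{-c\varepsilon^Ak}\norm W_4 .
\]
Since $\mathcal T_\varepsilon$ is contractive on $S_4$, the right-hand side is at most $3\norm W_4$ times the implicit constant. So this display, with an absolute constant, would force the high-degree projection $P_{>k}\colon W\mapsto W-W^{\le k}$ to be bounded on $S_4$ uniformly in $k$ and $n$. It is not. Consider the commutative algebra generated by $\pauliZ$-strings: it is $L_4$ of the Boolean cube, with Pauli degree equal to Hamming weight. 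On it, the level-$d$ projection $P_{>d-1}-P_{>d}$ has $L_4\to L_4$ norm at least $3^{d/2}/\poly(d)$. To see this, test it on $x\mapsto e^{\iu s\sum_i x_i/\sqrt N}$ with $s^2=d$ and $N\to\infty$; hypercontractivity is essentially sharp on Hermite chaos. Your decomposition $W^{>k}=(W-\mathcal T_\varepsilon W)^{>k}+\mathcal T_\varepsilon W^{>k}$ therefore loses a factor exponential in $k$. This cannot be compensated: making $\norm{W_u-\mathcal T_\varepsilon W_u}_4$ exponentially small in $k$ requires $\varepsilon$ exponentially small, which kills the Mendel--Naor factor $e^{-c\varepsilon^Ak}$.

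\textbf{The missing idea.} Use duality to bound the distance $\operatorname{dist}_4(W,E_{\le k})$ to the low-degree subspace, not the truncation error, and take $Q$ to be a best approximant.
\begin{itemize}
\item By \cref{lem:pauli-distance-duality}, $\operatorname{dist}_4(A,E_{\le k})$ is the supremum of $\abs{\tau(B^\dagger A)}$ over $B\in E_{>k}$ with $\norm B_{4/3}\le1$. So the decay estimate is needed on $E_{>k}$ in $S_{4/3}$, which is $K$-convex by Ball--Carlen--Lieb, not in $S_4$.
\item Self-adjointness of $\mathcal T_\theta$ then gives $\operatorname{dist}_4(\mathcal T_\theta A,E_{\le k})\le\frac12\operatorname{dist}_4(A,E_{\le k})$ for $\theta=bk^{-\gamma}$, with no projection anywhere.
\item The triangle inequality yields $\operatorname{dist}_4(W,E_{\le k})\le2\norm{W-\mathcal T_\theta W}_4$. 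The infimum is attained because $E_{\le k}$ is finite dimensional.
\item For block operators, pass through the Hermitian dilation and the off-diagonal compression, which is contractive and preserves degree (\cref{lem:rectangular-tail}).
\end{itemize}

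\textbf{A smaller point.} Noting that the weights differ by a factor of at most $2$ does not by itself let you invoke Mendel--Naor, which concerns the Boolean heat semigroup $\mathsf B_u$. The paper factors $\mathsf H_\varepsilon^{(4,n)}=\mathsf C_q^{\otimes n}\mathsf B_{\varepsilon/2}$, with $\mathsf C_q$ an averaging contraction. It then uses that $G$-degree $>k$ implies Boolean degree $>k$.
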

Second, we show that an approximation in the normalized Schatten $4$-norm results to a good approximation to the original game value.

\begin{restatable}{lemma}{blockvaluestabilitynew}
  \label{lem:block-value-stabilitynew}
  Fix a game
  $\sfG=(\mathcal X,\mathcal Y,\mathcal A,\mathcal B,\mathsf p_{\sfG},V)$
  and a two-qubit state $\psi$ satisfying $\psi_A=\psi_B=\id/2$.
  Let integer $n\geq1$.
  Let
  $(A^x)_{x\in\mathcal X}$,
  $(B^y)_{y\in\mathcal Y}$,
  $(Q^x)_{x\in\mathcal X}$, and
  $(R^y)_{y\in\mathcal Y}$ be collections of block operators with
  \begin{align*}
    &A^x = \sum_{a\in\mathcal{A}}\ket{a}\otimes A^x_a,\quad A^x_a\in\Matrix_{2^n}, \\
    &B^y = \sum_{b\in\mathcal{B}}\ket{b}\otimes B^y_b,\quad B^y_b\in\Matrix_{2^n}, \\
    &Q^x = \sum_{a\in\mathcal{A}}\ket{a}\otimes Q^x_a,\quad Q^x_a\in\Matrix_{2^n}, \\
    &R^y = \sum_{b\in\mathcal{B}}\ket{b}\otimes R^y_b,\quad R^y_b\in\Matrix_{2^n}, \\
  \end{align*}
  Assume that, for every $x\in\mathcal X$ and $y\in\mathcal Y$,
  \begin{equation*}
    (A^x)^\dagger A^x=\id,
    \qquad
    (B^y)^\dagger B^y=\id.
  \end{equation*}
  Suppose that for some $\eta\in[0,1]$,
  \begin{equation*}
    \norm{Q^x-A^x}_4\leq\eta,
    \qquad
    \norm{R^y-B^y}_4\leq\eta,\qquad
    \forall x,y.
  \end{equation*}
  Define the strategies $\mathscr A = \br{\mathscr{A}^x}_x$, $\mathscr B = \br{\mathscr{B}^y}_y$, 
  and
  $\mathscr Q = \br{\mathscr{Q}^x}_x$, $\mathscr R = \br{\mathscr{R}^y}_y$
  such that
  $\mathscr{A}^x = \br{\br{A^x_a}^\dagger A^x_a}_a$,
  $\mathscr{B}^y = \br{\br{B^y_b}^\dagger B^y_b}_b$,
  $\mathscr{Q}^x = \br{\br{Q^x_a}^\dagger Q^x_a}_a$,
  $\mathscr{R}^y = \br{\br{R^y_b}^\dagger R^y_b}_b$.
  Then
  \begin{equation*}
    \abs{
      \Val_{\sfG}\br{\mathscr Q,\mathscr R}
      -\Val_{\sfG}\br{\mathscr A, \mathscr B}
    }
    \leq15\eta.
  \end{equation*}
\end{restatable}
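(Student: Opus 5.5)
We plan a telescoping argument built on the block correlation function $\cG_{x,y}$ and its Hölder-type bound, \cref{lem:game-expression-holder}. Fix $(x,y)$. Since the game value is $\mathbb E_{(x,y)}\cG_{x,y}(Q^x,Q^x;R^y,R^y)$ and $\cG$ is sesquilinear in each of its four slots, we change one slot at a time.

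Write $\Delta_Q:=Q^x-A^x$ and $\Delta_R:=R^y-B^y$. We will use the chain
\begin{align*}
  &\cG(Q,Q;R,R)-\cG(A,A;B,B)\\
  &\quad=\cG(\Delta_Q,Q;R,R)+\cG(A,\Delta_Q;R,R)\\
  &\qquad+\cG(A,A;\Delta_R,R)+\cG(A,A;B,\Delta_R).
\end{align*}
Each of the four terms has one slot filled by a $\Delta$ of normalized $4$-norm at most $\eta$. Applying \cref{lem:game-expression-holder} to each term bounds it by $\eta$ times the product of the other three normalized $4$-norms.

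Those remaining norms are easy to control. The normalization $(A^x)^\dagger A^x=\id$ gives $\norm{A^x}_4=\tau(\id)^{1/4}=1$, and similarly $\norm{B^y}_4=1$. For the approximants, the triangle inequality gives $\norm{Q^x}_4\le\norm{A^x}_4+\norm{\Delta_Q}_4\le1+\eta$, and likewise $\norm{R^y}_4\le1+\eta$.

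We need the triangle inequality for the block $4$-norm, $\norm W_4=\tau((W^\dagger W)^2)^{1/4}$. This is simply the normalized Schatten $4$-norm of $W$ regarded as a rectangular operator from $\mathbb C^{2^n}$ to $\mathbb C^t\otimes\mathbb C^{2^n}$, so it is a genuine norm. One way to see this is to pad $W$ to a square operator on $\mathbb C^t\otimes\mathbb C^{2^n}$ with zeros: the padding rescales the normalized trace by the constant factor $1/t$, and the triangle inequality is preserved under that rescaling.

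Summing the four contributions gives
\[
  \eta\bigl[(1+\eta)^3+(1+\eta)^2+(1+\eta)+1\bigr]\le\eta(8+4+2+1)=15\eta,
\]
using $\eta\le1$. Averaging over $(x,y)\sim\mathsf p_{\sfG}$ then yields the claim.

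The only step needing care is the slot bookkeeping in \cref{lem:game-expression-holder}. That lemma was proved for arbitrary block operators, with no normalization hypothesis, so it applies to the mixed tuples above directly. The difference blocks $Q_{1,a}^\dagger Q_{2,a}$ expand sesquilinearly: for example,
\[
  Q_a^\dagger Q_a-A_a^\dagger A_a=\Delta_a^\dagger Q_a+A_a^\dagger\Delta_a,
\]
and the analogous expansion holds on Bob's side inside the transpose. Since the argument never needs anything beyond the $4$-norm triangle inequality and this expansion, I expect no substantive obstacle.
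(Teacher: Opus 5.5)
Your proof is correct and is essentially the paper's argument. The paper writes $Q^x=S_0^x+S_1^x$ and $R^y=T_0^y+T_1^y$ with $S_0^x=A^x$, $T_0^y=B^y$, expands $\cG$ in all four slots at once, and bounds each of the $15$ terms containing a difference by $\eta^{\varepsilon_1+\cdots+\varepsilon_4}$ via \cref{lem:game-expression-holder}; your slot-by-slot telescoping yields exactly the same total $\sum_{j=0}^{3}\eta(1+\eta)^j=(1+\eta)^4-1$, and your one extra ingredient, $\norm{Q^x}_4,\norm{R^y}_4\le1+\eta$ (the triangle inequality for the block $4$-norm, which the full expansion sidesteps), is correctly justified by your padding argument.
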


\begin{proof}[Proof of \cref{thm:low-degree-approximation-new}]
We combine \cref{lem:low-degree-block-approximationnew} and \cref{lem:block-value-stabilitynew}.
For each $x\in\mathcal X$, apply \cref{lem:low-degree-block-approximationnew}
to the POVM $\mathscr A^x=(A_a^x)_a$.
This gives a time $u_x\in[s,2s]$ and operators $Q_1^x,\dots,Q_t^x$,
each of Pauli degree at most $k$, such that, writing
$A^{x,u_x}=\sum_{a=1}^t\ket{a}\otimes\sqrt{\mathcal T_{u_x}(A_a^x)}$
and $Q^x=\sum_{a=1}^t\ket{a}\otimes Q_a^x$,
\begin{equation*}
    \norm{A^{x,u_x} -  Q^x}_4 \le \eta
\quad\text{and}\quad\norm{\br{Q^x}^\dagger Q^x - \id}_2\le\br{2+\eta}\eta.
\end{equation*}
Similarly, applying the same lemma to each $\mathscr B^y$ gives a time
$v_y\in[s,2s]$ and a block operator $R^y$ of Pauli degree at most $k$.
With $B^{y,v_y}$ defined analogously, we have
\begin{equation*}
    \norm{B^{y, v_y} - R^y}_4 \le \eta
\quad\text{and}\quad\norm{\br{R^y}^\dagger R^y-\id}_2\le(2+\eta)\eta.
\end{equation*}
Since the smoothed measurements are POVMs,
\begin{equation*}
  (A^{x,u_x})^\dagger A^{x,u_x}
  =\sum_a\mathcal T_{u_x}(A_a^x)=\id,
  \qquad
  (B^{y,v_y})^\dagger B^{y,v_y}
  =\sum_b\mathcal T_{v_y}(B_b^y)=\id.
\end{equation*}
For the approximate strategies
$\mathscr{Q} = \br{\mathscr{Q}^x}_x$,
$\mathscr{R} = \br{\mathscr{R}^y}_y$,
by \cref{lem:block-value-stabilitynew},
we have
\begin{equation*}
    \abs{\Val_{\sfG}\br{\mathscr Q, \mathscr R} - \Val_{\sfG}\br{\mathscr{A}^{(u)}, \mathscr{B}^{(v)}}}\le 15\eta.
\end{equation*}

\end{proof}



The remainder of this subsection will be devoted to prove \cref{lem:low-degree-block-approximationnew}.
The proof of \cref{lem:block-value-stabilitynew} is deferred to \cref{sec:appendix:block-value-stability}.

\paragraph{A short preview of the proof.}
Fix $s>0$ and an integer $k\geq1$.  Starting from a POVM
$\mathscr A=(A_a)_{a=1}^t$, set
\[
M_a(u)=\mathcal T_u(A_a),
\qquad
W_u=\sum_{a=1}^t|a\rangle\otimes\sqrt{M_a(u)}.
\]
The goal is to choose $u\in[s,2s]$ and replace the rectangular block operator
$W_u$ by an operator $Q=\sum_a\ket{a}\otimes Q_a$ of Pauli degree at most $k$, while controlling both
$\norm{W_u-Q}_4$ and the normalization error $\norm{Q^\dagger Q-\id}_2$.
Then $Q_a^\dagger Q_a$ is a positive low-degree approximation of $M_a(u)$.

We first prove the following approximation bound for Hermitian matrices.
For every Hermitian matrix
$S\in\Matrix_{d_0}\otimes\Matrix_{2^n}$, setting $\theta=bk^{-\gamma}$ gives
\begin{equation*}
\operatorname{dist}_4(S,E_{\leq k})
\leq
2\norm{S-\mathcal T_\theta S}_4,
\end{equation*}
where $b,\gamma>0$ are absolute constants independent of $d_0$ and $n$,
and $E_{\leq k}$ is the subspace of operators of Pauli degree at most $k$.
Here
\(
\operatorname{dist}_4(S,E_{\le k})
:=\inf_{R\in E_{\le k}}\norm{S-R}_4.
\)

Our proof is inspired by Mendel and Naor~\cite[Theorem~5.1]{MN14}, who
proved an $L_p$-norm decay bound for the heat semigroup on the Boolean cube,
applied to functions whose Fourier support is contained in degrees at least $k$.

To obtain the corresponding bound for quantum depolarization, we map each
operator to a function on $G^n$ by Pauli unitary conjugation.
By \cref{lem:pauli-four-point}, this preserves
the norm and converts Pauli terms into Fourier terms of the same degree,
while quantum depolarization becomes classical depolarization on $G^n$.
In \cref{lem:four-point-boolean}, we compare this classical depolarizing
semigroup with the heat semigroup on the $2n$-bit cube in the
$L_{4/3}$-norm.  The Mendel--Naor bound then gives
the Schatten $4/3$ decay estimate in \cref{cor:S43-Pauli-tail}.

The distance formula in \cref{lem:pauli-distance-duality} yields the
approximation bound above.  Finally, placing
a rectangular block operator and its adjoint in the off-diagonal blocks
of a Hermitian matrix gives the approximation bound for rectangular block
operators in \cref{lem:rectangular-tail}.

It remains to choose $u\in[s,2s]$ such that
$\norm{W_u-\mathcal T_\theta W_u}_4$ is small.
Integrating the entropy derivative over $[s,2s]$ gives a time $u$ for which
the Stroock--Varopoulos inequality implies
\[
\sum_a\cE_{2,\cL}\!\left(\sqrt{M_a(u)}\right)
\leq\sum_a\cE_{1,\cL}(M_a(u))
\leq\frac{\log t}{4s}.
\]
The Pauli expansions of the square roots then yield
\[
\norm{W_u-\mathcal T_\theta W_u}_4^4
\leq
b\frac{k^{-\gamma}\log t}{s}.
\]

Applying the rectangular approximation lemma with this bound produces $Q$;
Schatten H\"older then
controls $Q^\dagger Q-\id$ and completes the proof of
\cref{lem:low-degree-block-approximationnew}.

\subsubsection{Approximation in the Schatten 4-Norm}

Recall that we use $\Matrix_d$ to denote the set of $d\times d$ complex matrices.
In this subsection we prove a ``low-degree'' approximation result for matrices in $\Matrix_{d_0}\otimes\Matrix_{2^n}$, with respect to the normalized Schatten 4-norm,
where the ``low-degree''
applies to only the $\Matrix_{2^n}$ part.

\begin{restatable}{theorem}{thmsquaretail}
\label{thm:square-tail}
There are absolute constants $b,\gamma>0$ with the following
property.
Let $d_0,n\ge1$,
and let $S$ be a Hermitian operator in
\(
\Matrix_{d_0}\otimes\Matrix_{2^n}.
\)
For every integer $k\ge1$,
there is a Hermitian operator $R$,
with the low-degree expansion
\[
R=\sum_{|\alpha|\le k}\widehat R(\alpha)\otimes P_\alpha,
\quad \widehat R(\alpha)\in \Matrix_{d_0},
\]
such that
\begin{equation*}
\norm{S-R}_4
\le
2\norm{S-\br{\id_{d_0}\otimes\mathcal T_{bk^{-\gamma}}^{\otimes n}}(S)}_4.
\end{equation*}
The constants $b$ and $\gamma$ are independent of $d_0$ and $n$.
\end{restatable}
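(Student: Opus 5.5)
The plan follows the roadmap sketched before the statement: duality, transference to a classical cube, and the Mendel--Naor decay estimate. Write $E_{\le k}$ for the Hermitian operators of Pauli degree at most $k$ (operator-valued coefficients in $\Matrix_{d_0}$), and $\mathcal T_\theta$ for $\id_{d_0}\otimes\mathcal T_\theta^{\otimes n}$.

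\textbf{Step 1: distance by duality.} Schatten duality for the normalized trace pairs $L_4$ with $L_{4/3}$, so
\[
\operatorname{dist}_4(S,E_{\le k})=\sup\bigl\{\abs{\tau(Y^\dagger S)}:\ \norm Y_{4/3}\le1,\ Y\perp E_{\le k}\bigr\}.
\]
The annihilator of $E_{\le k}$ consists of the operators with $\widehat Y(\alpha)=0$ for $|\alpha|\le k$. For such $Y$, self-adjointness of $\mathcal T_\theta$ gives
\[
\tau(Y^\dagger S)=\tau\bigl(Y^\dagger(S-\mathcal T_\theta S)\bigr)+\tau\bigl((\mathcal T_\theta Y)^\dagger S\bigr).
\]
The first term is at most $\norm{S-\mathcal T_\theta S}_4$. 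For the second, $\mathcal T_\theta Y$ is again high degree, so by the duality formula we expect
\[
\abs{\tau((\mathcal T_\theta Y)^\dagger S)}\le\norm{\mathcal T_\theta Y}_{4/3}\,\operatorname{dist}_4(S,E_{\le k}).
\]
If the $L_{4/3}$ decay bound $\norm{\mathcal T_\theta Y}_{4/3}\le\tfrac12\norm Y_{4/3}$ holds on the high-degree subspace, rearranging gives $\operatorname{dist}_4\le2\norm{S-\mathcal T_\theta S}_4$. Hermiticity of the minimizer $R$ follows by replacing $R$ with $(R+R^\dagger)/2$, which keeps the degree and does not increase the distance because $S$ is Hermitian. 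Existence of a minimizer follows from finite dimensionality.

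\textbf{Step 2: transference to $G^n$.} This step and Step 3 prove the decay bound. Given $Y$, define the $\Matrix_{d_0}\otimes\Matrix_{2^n}$-valued function
\[
f(\rho)=U_\rho Y U_\rho^\dagger,\qquad \rho\in G^n,
\]
using \cref{lem:pauli-multiplier-implementation}. Each value is unitarily conjugate to $Y$, so $\norm{f}_{L_{4/3}(G^n;S_{4/3})}=\norm Y_{4/3}$. Also $f=\sum_\alpha\chi_\rho(\alpha)\widehat Y(\alpha)\otimes P_\alpha$, so the Fourier support of $f$ sits at $G$-degrees $|\alpha|>k$. Applying $\mathcal T_\theta$ to $Y$ corresponds to the classical multiplier $e^{-\theta|\alpha|}$ on $G^n$, which is itself conjugation-equivariant; hence $\norm{\mathcal T_\theta Y}_{4/3}$ equals the Bochner norm of the classical depolarization of $f$. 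This is the four-point lemma referred to in the preview.

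\textbf{Step 3: the Mendel--Naor decay (the main obstacle).} Identify $G^n$ with $\Omega_{2n}$. The symbol $e^{-\theta|\alpha|}$ is not the Boolean heat multiplier $e^{-\theta|\xi|}$, but on each coordinate it is a product of factors of the form $e^{-\theta\mathbf 1[\alpha_i\ne0]}$. Each such factor is a convex combination of the identity and averaging over $G$, i.e.\ a positive contraction that commutes with the heat semigroup. The plan is to write classical depolarization on $G^n$ as the Boolean heat semigroup at time comparable to $\theta$, composed with a contraction, on $L_{4/3}$. If an exact factorization fails, the fallback is to compare the two semigroups through the subordination and Markov-kernel structure in the Bochner $L_{4/3}$ norm.

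With this comparison in hand, \cite[Theorem~5.1]{MN14} applies to $S_{4/3}$-valued functions, since Schatten $4/3$ is uniformly convex, so K-convex with controlled constants. For Boolean degree greater than $k$, which is implied by $G$-degree greater than $k$ because $|\xi|\ge|\alpha|$, it gives
\[
\norm{e^{-t\Delta}f}\le e^{-c\,t^{\gamma'}k^{\gamma}}\norm f
\]
for some power law. Choosing $\theta=bk^{-\gamma}$ with $b$ a suitable absolute constant makes the factor at most $1/2$.

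The delicate points are that the Mendel--Naor constants depend only on the Schatten exponent, not on $d_0$ or $n$, and that the time rescaling between the $G^n$ and Boolean semigroups is absorbed into $b$ and $\gamma$.
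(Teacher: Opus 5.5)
Your route is the paper's. Step 1 is the distance-contraction argument of \cref{lem:pauli-distance-duality} combined with self-adjointness of $\mathcal T_\theta$ on the high-degree subspace, folded into one estimate. Step 2 is \cref{lem:pauli-four-point}. The $K$-convexity input is \cref{lem:schatten-boolean-decay}.

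The one step you have not established is the semigroup comparison in Step 3, and the fact you cite for it does not give it. That each one-coordinate factor of $\mathsf H_\theta$ is a positive contraction commuting with the heat semigroup yields no bound on $\mathsf H_\theta$ in terms of $\mathsf B_s$. What is needed is that $\mathsf H_\theta\mathsf B_s^{-1}$ is a contraction for some $s\asymp\theta$. Concretely, on each two-bit block, in the ordering $00,10,01,11$, the multiplier list $(1,e^{s-\theta},e^{s-\theta},e^{2s-\theta})$ must be the Fourier transform of a probability measure on $G$. This forces $s\le\theta/2$, so it has to be checked rather than hoped for.

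It does hold at $s=\theta/2$. Put $q=e^{-\theta/2}$. Then $\mathsf B_{\theta/2}$ has block multipliers $(1,q,q,q^2)$. Convolution with $\nu_q=\frac{1+q}{2}\delta_{00}+\frac{1-q}{2}\delta_{11}$ has multipliers $(1,q,q,1)$. Their product is $(1,e^{-\theta},e^{-\theta},e^{-\theta})$, so $\mathsf H^{(4,n)}_\theta=\mathsf C_q^{\otimes n}\mathsf B^{(2n)}_{\theta/2}$ exactly. Convolution with a probability measure is contractive on every Bochner $L_p$ by Minkowski, so no subordination fallback is needed.

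Also keep the prefactor $C(K,p)$ from \cite[Theorem~5.1]{MN14}. The decay reads $C_0e^{-a_1k\min\{\theta,\theta^B\}}$, not a pure exponential. Take $\gamma=1/B$ and $b$ large enough that $C_0e^{-a_1\min\{b,b^B\}}\le\tfrac12$. This works because $k\min\{\theta,\theta^B\}\ge\min\{b,b^B\}$ when $\theta=bk^{-1/B}$.
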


\paragraph{The Mendel--Naor estimate on the Boolean cube.}

Following \cref{def:coefficient-valued-boolean-fourier-analysis},
we use the Banach-space-valued Walsh expansion on $\Omega_N$.
For $u\geq0$, the
Boolean heat semigroup is the Fourier multiplier
\begin{equation}
\label{eq:boolean-heat-semigroup}
\mathsf B_u f
:=
\sum_{\xi\in\Omega_N}e^{-u|\xi|}\widehat f(\xi)\chi_\xi.
\end{equation}

\begin{definition}[$K$-convexity, {\cite[Section~5]{MN14}}]
Let $X$ be a Banach space.  For every integer $N\geq1$ and every
function $f:\Omega_N\to X$, define its \emph{Rademacher projection} by
\[
\operatorname{Rad}_N f
:=
\sum_{j=1}^N\widehat f(\{j\})\chi_{\{j\}}.
\]
The \emph{$K$-convexity constant} of $X$ is
\begin{equation*}
K(X)
:=
\sup_{N\ge1}
\sup_{\substack{f:\Omega_N\to X\\f\neq0}}
\frac{\norm{\operatorname{Rad}_N f}_{L_2(\Omega_N;X)}}
     {\norm{f}_{L_2(\Omega_N;X)}},
\end{equation*}
where the norm $\norm{\cdot}_{L_2(\Omega_N;X)}$ is defined in \cref{def:normalized-bochner-norm}.
We call $X$ \emph{$K$-convex} if $K(X)<\infty$.
\end{definition}

Mendel and Naor~\cite{MN14} prove a decay estimate for the heat semigroup on
the Boolean cube acting on $X$-valued functions when $X$ is $K$-convex.
We use this estimate to bound the Schatten $4/3$-norm after applying the
depolarizing semigroup to an operator.

\begin{lemma}[{
\cite[Theorem~5.1]{MN14}}]
\label{lem:mendel-naor-boolean-tail}
For every $K,p\in(1,\infty)$ there are constants $A(K,p)\in(0,1)$ and
$B(K,p),C(K,p)\in(2,\infty)$ such that, whenever $X$ is a $K$-convex
Banach space satisfying $K(X)\le K$, the following holds for every
$N,k\geq1$, every $u>0$, and every
$f\in L_p^{>k}(\Omega_N;X)$, with all Boolean-cube norms taken under
the uniform probability measure:
\begin{equation*}
\norm{\mathsf B_u f}_{L_p(\Omega_N;X)}
\le
C(K,p)\exp\!\left\{-A(K,p)k\min\{u,u^{B(K,p)}\}\right\}
\norm f_{L_p(\Omega_N;X) },
\end{equation*}
\end{lemma}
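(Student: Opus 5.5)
This is \cite[Theorem~5.1]{MN14}, and we would simply import it. For completeness, here is how we would reconstruct it. The argument is meant to use only two inputs: the $K$-convexity of $X$, and the fact that on $L_2$ the bound is trivial because $\mathsf B_u$ acts diagonally on Walsh characters. We flag in advance that the spectral projections it relies on are not uniformly bounded on $L_p(\Omega_N;X)$; this is the main obstacle and is addressed in the third paragraph.

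\emph{Step 1: analyticity.} The starting point is Pisier's theorem that $K(X)\le K$ is equivalent to the heat semigroup being holomorphic on $L_p(\Omega_N;X)$ for $1<p<\infty$. Concretely, there are an angle $\phi=\phi(K,p)\in(0,\pi/2)$ and a constant $M=M(K,p)$ such that $z\mapsto\mathsf B_z$ extends holomorphically to the sector $\Sigma_\phi:=\{|\arg z|<\phi\}$. On this sector we have $\norm{\mathsf B_z}_{L_p(\Omega_N;X)\to L_p(\Omega_N;X)}\le M$, uniformly in $N$.

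\emph{Step 2: damped family and two endpoint bounds.} We would build the analytic family $T_z:=e^{kz}\mathsf B_z$ on the high-degree subspace. Two endpoint estimates are available. On the Hilbertian endpoint $L_2(\Omega_N;L_2)$, Parseval gives
\[
\norm{e^{kz}\mathsf B_z f}_2\le\sup_{|\xi|>k}\abs{e^{k z-|\xi|z}}\,\norm f_2\le\norm f_2
\qquad\text{for }\operatorname{Re}z\ge0 .
\]
On a nearby $L_q(X)$ endpoint, Step~1 only gives the growing bound $Me^{k\operatorname{Re}z}$.

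\emph{Step 3: interpolation and the degree-$>k$ projection.} Stein's complex interpolation (or a Hadamard three-lines argument) on a sector or strip, with $u$ placed appropriately, should then give
\[
\norm{\mathsf B_u f}_p\le C e^{-Ak\min\{u,u^{B}\}}\norm f_p .
\]
The two regimes $u$ versus $u^B$ reflect the geometry of the sector: for small $u$ the admissible contour degenerates polynomially. Because $X$ is not Hilbert, the interpolation is done between $L_q(X)$ and an $L_2$ space obtained by also interpolating the target space (for example $X$ against a Hilbert space, since $K$-convexity is stable under this). The main obstacle is that Step~2 and this step require the projection $f\mapsto f^{>k}$ to act boundedly across the interpolation scale, and it is not uniformly bounded on $L_p(\Omega_N;X)$. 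Our first attempt would avoid the projection altogether. We would apply a Cauchy-integral representation $\mathsf B_u f=\frac1{2\pi\iu}\oint \frac{\mathsf B_z f}{z-u}\,\dd z$, or a Bernstein/Markov-type estimate, directly to the vector-valued analytic function $z\mapsto\mathsf B_zf$. In this representation the condition $f\in L_p^{>k}$ enters only through the vanishing of its first $k$ Taylor-type coefficients in the variable $e^{-z}$. This is the step where the exponent $A$ and the power $B$ are produced, and where the dependence on $(K,p)$ alone, independent of $N$, must be checked carefully.
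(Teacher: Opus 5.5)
Importing \cite[Theorem~5.1]{MN14} is exactly what the paper does. It gives no proof of this lemma and only checks, in \cref{lem:schatten-boolean-decay}, that the spaces $X_{n,d_0}$ have uniformly bounded $K$-convexity constant. Your reconstruction, however, does not work as written and should not be offered as a justification.

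\textbf{Steps 2--3.} Step~1 (Pisier's analyticity on a sector) is the right input, but Steps~2--3 never produce an estimate on $L_p(\Omega_N;X)$. Interpolating against a Hilbert-valued $L_2$ space bounds $e^{kz}\mathsf B_z$ only on an intermediate space $L_{p_\theta}(\Omega_N;[X,H]_\theta)$. The hypothesis $K(X)\le K$ gives no way to realize $L_p(\Omega_N;X)$ as such a space. The unbounded degree-$>k$ projection you flag is a further, separate obstruction.

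\textbf{Your last paragraph.} You isolate the right structural fact there, but a Cauchy integral $\frac1{2\pi\iu}\oint\frac{\mathsf B_zf}{z-u}\dd z$ with $\norm{\mathsf B_z}\le M$ only returns $\norm{\mathsf B_uf}_p\lesssim M\norm f_p$. It gives no decay.

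\textbf{The missing step} is a maximum-principle argument in the variable $w=e^{-z}$. Let $f_j$ be the degree-$j$ part of $f$. The $L_p(\Omega_N;X)$-valued polynomial $P(w):=\sum_{j>k}w^jf_j$ satisfies $P(e^{-z})=\mathsf B_zf$. It is divisible by $w^{k+1}$, and no projection is involved in seeing this. It obeys $\norm{P}\le M\norm f_p$ on
\[
\Gamma_\phi:=\{e^{-z}:z\in\Sigma_\phi\}\cup\{0\}
=\{\varrho e^{\iu\vartheta}:|\vartheta|\le\pi,\ 0\le\varrho<e^{-|\vartheta|\cot\phi}\}.
\]
Let $g$ be the Green function of $\Gamma_\phi$ with pole at $0$. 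Then $\log\norm P+(k+1)g$ is subharmonic on $\Gamma_\phi$ and at most $\log(M\norm f_p)$ on its boundary. Hence $\norm{\mathsf B_tf}_p\le Me^{-(k+1)g(e^{-t})}\norm f_p$.

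It remains to bound $g(e^{-t})$ from below in three ranges:
\begin{itemize}
\item \emph{Large $t$.} Since $\Gamma_\phi$ contains the disk of radius $e^{-\pi\cot\phi}$, domain monotonicity gives $g(e^{-t})\ge t-\pi\cot\phi\ge t/2$ for $t\ge2\pi\cot\phi$.
\item \emph{Small $t$.} Near $w=1$ the domain is a corner of opening $2\phi$. Comparing with harmonic measure in a slightly thinner sector gives $g(e^{-t})\gtrsim t^{B}$, with $B$ of order $\pi/(2\phi)$. This is where the power $u^{B}$ comes from.
\item \emph{Intermediate $t$.} This range follows from positivity of $g$ on a compact set.
\end{itemize}
The resulting $A,B,C$ depend only on $(\phi,M)$, hence only on $(K,p)$, and not on $N$ or $k$.
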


For each $d_0,n\ge1$, write
\begin{equation}
\label{eq:schatten-coefficient-space}
X_{n,d_0}
=
\left(\Matrix_{d_0}\otimes \Matrix_2^{\otimes n},\norm{\cdot}_{4/3}\right),
\end{equation}
and apply \cref{lem:mendel-naor-boolean-tail} to this space:
\begin{lemma}
\label{lem:schatten-boolean-decay}
There are absolute constants $a_0>0$, $C_0\ge1$, and $B>2$ such that, for
all integers $d_0,n,N,k\ge1$, every $u>0$, and every
$f\in L_{4/3}^{>k}(\Omega_N;X_{n,d_0})$,
\begin{equation*}
\norm{\mathsf B_u f}_{L_{4/3}(\Omega_N;X_{n,d_0})}
\le
C_0e^{-a_0k\min\{u,u^B\}}
\norm f_{L_{4/3}(\Omega_N;X_{n,d_0})}.
\end{equation*}
\end{lemma}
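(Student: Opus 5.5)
Since \cref{lem:mendel-naor-boolean-tail} holds uniformly over every Banach space with $K$-convexity constant at most $K$, it suffices to find one absolute constant $K_0<\infty$ with $K(X_{n,d_0})\le K_0$ for all $d_0,n\ge1$. We then apply \cref{lem:mendel-naor-boolean-tail} with $K=K_0$ and $p=4/3$, and set $a_0:=A(K_0,4/3)$, $C_0:=C(K_0,4/3)$, and $B:=B(K_0,4/3)$. These constants are absolute, $B>2$ holds, and $C_0\ge1$ holds after enlarging if necessary.

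To bound the $K$-convexity constant, we first note that normalization is harmless. The normalized norm $\norm{\cdot}_{4/3}$ on $\Matrix_{d_0}\otimes\Matrix_2^{\otimes n}\cong\Matrix_{d_02^n}$ is a fixed positive multiple, $(d_02^n)^{-3/4}$, of the unnormalized Schatten $4/3$-norm. Rescaling a norm does not change $K(X)$, since the ratio in the definition is scale invariant. Hence $K(X_{n,d_0})=K(S_{4/3}^{m})$ with $m=d_02^n$.

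It remains to show that the finite-dimensional Schatten classes $S_{4/3}^m$ have $K$-convexity constant bounded independently of $m$. This is classical. $K$-convexity is dual-invariant, with $K(X^*)=K(X)$, because the Rademacher projection is self-adjoint under the $L_2(\Omega_N;X)$--$L_2(\Omega_N;X^*)$ pairing. By Pisier's theorem, $K$-convexity is equivalent to nontrivial type. The Schatten class $S_p$ for $1<p<\infty$ is uniformly convex, of type $\min\{p,2\}$ with constants independent of dimension (Tomczak-Jaegermann), so $\sup_m K(S_{4/3}^m)\le K(S_{4/3})<\infty$, since each $S_{4/3}^m$ embeds isometrically and $1$-complementedly in $S_{4/3}$.

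A more self-contained route is also available. $L_2(\Omega_N;S_{4/3})$ is comparable, via Kahane's inequality, to $L_{4/3}(\Omega_N;S_{4/3})$, which is a noncommutative $L_{4/3}$-space. The Rademacher projection is bounded there by noncommutative Khintchine/Lust-Piquard inequalities, or by UMD properties of $S_p$. The constants depend only on $p=4/3$.

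The only real obstacle is citing a dimension-free statement correctly, that is, ensuring the bound is uniform in $m$ rather than holding merely for each finite-dimensional space, where it is trivial. Working inside the infinite-dimensional $S_{4/3}$ and passing to subspaces handles this. The remaining point is that the hypothesis $f\in L_{4/3}^{>k}$ and the heat semigroup are exactly those of \cref{lem:mendel-naor-boolean-tail} with $X=X_{n,d_0}$, so the estimate transfers verbatim.
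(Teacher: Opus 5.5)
Your proof is correct, and its skeleton matches the paper's. Both bound the $K$-convexity constants of all $X_{n,d_0}$ by one absolute $K_0$, invoke Pisier's theorem, and apply \cref{lem:mendel-naor-boolean-tail} with $p=4/3$ and $K=K_0$.

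Where you differ is the geometric input that feeds Pisier's theorem. The paper applies the Ball--Carlen--Lieb inequality at $p=4/3$ on each finite-dimensional space. This gives the explicit dimension-free uniform convexity estimate $\norm{(A+B)/2}_{4/3}^2\le1-\frac1{12}\norm{A-B}_{4/3}^2$, which the paper passes to Pisier's characterization. That step implicitly needs a quantitative form: the $K$-convexity bound must depend only on the modulus of convexity.

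You instead cite Tomczak-Jaegermann's result that the infinite-dimensional class $S_{4/3}$ has type $4/3>1$, and conclude $K(S_{4/3})<\infty$ from Pisier's type/$K$-convexity equivalence. You then transfer this to every $S_{4/3}^m$ using the correct observation that $K$-convexity constants do not increase on isometric subspaces, so complementation is not even needed. Because only one fixed space is involved, you need only the qualitative statement of Pisier's theorem. The price is a heavier citation (the type of Schatten classes) in place of one sharp elementary inequality. You also make explicit the rescaling factor $(d_02^n)^{-3/4}$, which the paper dismisses in a sentence.

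Two side remarks are not needed for the main argument. Dual invariance is true but unused. In your ``self-contained'' aside, Kahane's inequality compares $L_2$ and $L_{4/3}$ norms only for Rademacher sums, not for arbitrary $f$. That route works only if you apply Kahane to $\operatorname{Rad}_N f$ and H\"older to $f$.
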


\begin{proof}
It suffices to verify that these coefficient spaces have uniformly bounded
$K$-convexity constants.  The Ball--Carlen--Lieb inequality
\cite[Theorem~1]{BCL94} states that, for $1<p\le2$,
\begin{equation*}
\left(
\frac{\norm{A+B}_p^p+\norm{A-B}_p^p}{2}
\right)^{2/p}
\ge
\norm A_p^2+(p-1)\norm B_p^2.
\end{equation*}
For $A,B\in X_{n,d_0}$ with $\norm A_{4/3}=\norm B_{4/3}=1$, apply
this inequality at $p=4/3$ to $(A+B)/2$ and $(A-B)/2$:
\begin{equation*}
\norm{\frac{A+B}{2}}_{4/3}^2
\le1-\frac1{12}\norm{A-B}_{4/3}^2.
\end{equation*}
By Pisier's characterization of $K$-convexity
\cite[Theorem~5]{Pisier84}, the spaces $X_{n,d_0}$ therefore have
$K$-convexity constants bounded by one universal number $K_0>1$.
Normalizing the trace only rescales the norm and changes no
$K$-convexity constant.
Apply \cref{lem:mendel-naor-boolean-tail} with $p=4/3$ and $K=K_0$.
Its constants depend only on these fixed values, so the resulting
$a_0,C_0,B$ are independent of $N,n$, and $d_0$.
\end{proof}

\paragraph{From the Depolarizing Semigroup to $\mathbb F_2^2$}
Let $G=\mathbb F_2^2$, with the uniform probability measure.
The classical depolarizing semigroup on $G$ is
defined, for scalar- or Banach-space-valued functions $f$ on $G$, by 
\begin{equation}
\label{eq:four-point-heat-semigroup}
\mathsf H_t^{(4)}f
=e^{-t}f+(1-e^{-t})\mathbb E_G f.
\end{equation}
Equivalently, its action on the characters of $G$ is
\[
\mathsf H_t^{(4)}\chi_0=\chi_0,
\qquad
\mathsf H_t^{(4)}\chi_\xi=e^{-t}\chi_\xi
\quad(\xi\ne0).
\]
Its $n$-fold product satisfies
\begin{equation}
\label{eq:four-point-product-semigroup}
\mathsf H_t^{(4,n)}\chi_\alpha
=e^{-t|\alpha|}\chi_\alpha,
\qquad
\alpha\in G^n,
\end{equation}
where $|\alpha|=|\{i\in[n]:\alpha_i\ne0\}|$.

\begin{lemma}
\label{lem:pauli-four-point}
Let $d_0,n\ge1$ and let
\(
A\in \Matrix_{d_0}\otimes \Matrix_2^{\otimes n}.
\)
Using \cref{def:coefficient-valued-pauli-expansion} with $\V=\Matrix_{d_0}$,
write the operator-valued Pauli expansion
\(
A=\sum_{\alpha\in G^n}\widehat A(\alpha)\otimes P_\alpha,
\widehat A(\alpha)\in \Matrix_{d_0}.
\)
For every $t\geq0$, let the depolarizing semigroup act only on the $n$
qubits:
\begin{equation*}
\mathcal T_t(A)
:=\sum_{\alpha\in G^n}
e^{-t|\alpha|}\widehat A(\alpha)\otimes P_\alpha.
\end{equation*}
For $\rho\in G^n$, let $U_\rho=P_{\vartheta(\rho)}$ be the unitary matrix from
\cref{lem:pauli-multiplier-implementation}.  Define the operator-valued
function $J(A):G^n\to \Matrix_{d_0}\otimes \Matrix_2^{\otimes n}$ by
\begin{equation*}
J(A)(\rho)
:=
(\id\otimes U_\rho)A(\id\otimes U_\rho)^\dagger.
\end{equation*}
Then, for every $1\le p<\infty$,
\begin{equation*}
\left(\mathbb E_{\rho\in G^n}
  \norm{J(A)(\rho)}_p^p\right)^{1/p}
=\norm A_p.
\end{equation*}
Its Fourier degree satisfies $\deg_G(J(A))=\deg(A)$.
Moreover,
\begin{equation*}
J(\mathcal T_tA)=\mathsf H_t^{(4,n)}J(A).
\end{equation*}
\end{lemma}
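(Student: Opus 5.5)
The plan is to prove all three assertions by expanding $A$ in the Pauli basis and applying \cref{lem:pauli-multiplier-implementation} termwise. Since $U_\rho$ acts only on the $n$ qubits, conjugation by $\id\otimes U_\rho$ leaves the coefficient $\widehat A(\alpha)\in\Matrix_{d_0}$ untouched. The same lemma then gives
\[
J(A)(\rho)=\sum_{\alpha\in G^n}\chi_\rho(\alpha)\,\widehat A(\alpha)\otimes P_\alpha .
\]
This identity is the only real computation; everything else is read off from it.

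\emph{Norm identity.} For each fixed $\rho$, the map $X\mapsto(\id\otimes U_\rho)X(\id\otimes U_\rho)^\dagger$ is a unitary conjugation. It therefore preserves singular values, hence every normalized Schatten $p$-norm, so $\norm{J(A)(\rho)}_p=\norm A_p$ for each $\rho$. Averaging this constant over $\rho\in G^n$ and taking the $1/p$ power gives the claim.

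\emph{Degree identity.} Regard $J(A)$ as a $\V$-valued function on $G^n$ with $\V=\Matrix_{d_0}\otimes\Matrix_2^{\otimes n}$. Since $\chi_\rho(\alpha)$ is symmetric in $(\rho,\alpha)$, the displayed formula reads $J(A)=\sum_\alpha (\widehat A(\alpha)\otimes P_\alpha)\,\chi_\alpha$ as a function of $\rho$. Fourier inversion and uniqueness (\cref{def:coefficient-valued-fourier-expansion}) then identify the Fourier coefficient of $J(A)$ at $\chi_\alpha$ as $\widehat A(\alpha)\otimes P_\alpha$. This coefficient is nonzero exactly when $\widehat A(\alpha)\neq0$. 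With the weight $w(\chi_\alpha)=|\alpha|$ from \cref{def:pauli-label-fourier-analysis}, the Fourier support of $J(A)$ corresponds to the Pauli support of $A$ with matching weights, so $\deg_G(J(A))=\deg(A)$.

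\emph{Intertwining.} By \eqref{eq:four-point-product-semigroup}, $\mathsf H_t^{(4,n)}$ acts on vector-valued functions coefficientwise, multiplying the coefficient at $\chi_\alpha$ by $e^{-t|\alpha|}$. Hence
\[
\mathsf H_t^{(4,n)}J(A)=\sum_\alpha e^{-t|\alpha|}(\widehat A(\alpha)\otimes P_\alpha)\chi_\alpha .
\]
On the other side, $\mathcal T_tA$ has Pauli coefficients $e^{-t|\alpha|}\widehat A(\alpha)$, so the first displayed formula applied to $\mathcal T_tA$ gives the same expression. Thus $J(\mathcal T_tA)=\mathsf H_t^{(4,n)}J(A)$.

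The only point needing care is the character symmetry $\chi_\rho(\alpha)=\chi_\alpha(\rho)$, together with the fact that the tensor extension of a Pauli multiplier acts trivially on the $\Matrix_{d_0}$ factor. Both are immediate from the definitions, so no real obstacle is expected.
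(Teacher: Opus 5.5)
Your proposal is correct and follows the same route as the paper: unitary invariance of the normalized Schatten norms gives the isometry, and \cref{lem:pauli-multiplier-implementation} together with the symmetry $\chi_\rho(\alpha)=\chi_\alpha(\rho)$ identifies $\widehat{J(A)}(\alpha)=\widehat A(\alpha)\otimes P_\alpha$. Both the degree identity and the intertwining relation $J(\mathcal T_tA)=\mathsf H_t^{(4,n)}J(A)$ then follow from this formula, exactly as in the paper.
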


\begin{proof}
The isometry follows immediately from unitary invariance of Schatten norms:
\[
\mathbb E_{\rho\in G^n}\norm{J(A)(\rho)}_p^p
=
\mathbb E_\rho
\norm{(\id\otimes U_\rho)A(\id\otimes U_\rho)^\dagger}_p^p
=
\norm A_p^p.
\]
The Pauli-multiplier implementation in
\cref{lem:pauli-multiplier-implementation} rewrites $J(A)$ as
\begin{equation*}
J(A)(\rho)
=
\sum_{\alpha\in G^n}
\chi_\rho(\alpha)\widehat A(\alpha)\otimes P_\alpha
=
\sum_{\alpha\in G^n}
\chi_\alpha(\rho)\widehat A(\alpha)\otimes P_\alpha,
\end{equation*}
where the second equality uses the symmetry of the standard pairing on
$G^n$.  In the convention of
\cref{def:coefficient-valued-fourier-expansion}, the Fourier
coefficient of $J(A)$ is therefore
\begin{equation*}
\widehat{J(A)}(\alpha)=\widehat A(\alpha)\otimes P_\alpha.
\end{equation*}
Here $\widehat A(\alpha)$ is the Pauli coefficient of $A$, whereas
$\widehat{J(A)}(\alpha)$ lies in $\Matrix_{d_0}\otimes \Matrix_2^{\otimes n}$.
Since $P_\alpha$ is nonzero, the two coefficients vanish for exactly the
same labels, so $\deg_G(J(A))=\deg(A)$.
Finally, both $\mathcal T_t$ and
$\mathsf H_t^{(4,n)}$ multiply
the $\alpha$-summand by $e^{-t|\alpha|}$, which proves the last identity
in \cref{lem:pauli-four-point}.
\end{proof}

The identity $J\mathcal T_t=\mathsf H_t^{(4,n)}J$ gives the following
commutative diagram:
\begin{center}
\begin{tikzpicture}[
  >=Latex,
  node distance=18mm and 45mm,
  every node/.style={font=\normalsize}
]
  \node (A) {$A$};
  \node[right=of A] (TA) {$\mathcal T_tA$};
  \node[below=of A] (JA) {$J(A)$};
  \node[below=of TA] (JTA) {$J(\mathcal T_tA)$};
  \draw[->] (A) -- node[above] {$\mathcal T_t$} (TA);
  \draw[->] (A) -- node[left] {$J$} (JA);
  \draw[->] (TA) -- node[right] {$J$} (JTA);
  \draw[->] (JA) -- node[below] {$\mathsf H_t^{(4,n)}$} (JTA);
\end{tikzpicture}
\end{center}
For every $A\in\Matrix_{d_0}\otimes\Matrix_2^{\otimes n}$,
the right-then-down path equals
$J(\mathcal T_tA)$, whereas the down-then-right path equals
$\mathsf H_t^{(4,n)}J(A)$.  Thus commutativity of the diagram is exactly the
last identity in \cref{lem:pauli-four-point}.  More explicitly, the left vertical
  arrow sends the Pauli summand $\widehat A(\alpha)\otimes P_\alpha$ to the
function
\[
\rho\longmapsto
\chi_\rho(\alpha)\widehat A(\alpha)\otimes P_\alpha.
\]
Thus $J$ converts Pauli degree into Fourier degree while preserving the
multiplier $e^{-t|\alpha|}$.

\paragraph{Comparison with the Heat Semigroup on the Boolean Cube.}

We next compare $\mathsf H_t^{(4,n)}$ with the heat semigroup on the Boolean cube after identifying
$G^n$ with the $2n$-bit cube $\Omega_{2n}$.  Both use the Walsh characters
of \cref{def:binary-characters}.  Their weights differ: the former counts
nonzero two-bit blocks, while the latter counts individual nonzero bits.

\begin{lemma}
\label{lem:four-point-boolean}
There are absolute constants $a_1>0$, $C_0\ge1$, and $B>2$ such that, for every
$d_0,n,k\ge1$, every $t>0$, and every
\[
f\in L_{4/3}(G^n;X_{n,d_0})
\]
satisfying $\widehat f(\alpha)=0$ whenever $|\alpha|\leq k$,
\begin{equation*}
\norm{\mathsf H_t^{(4,n)}f}_{L_{4/3}(G^n;X_{n,d_0})}
\le
C_0e^{-a_1k\min\{t,t^B\}}
\norm f_{L_{4/3}(G^n;X_{n,d_0})}.
\end{equation*}
\end{lemma}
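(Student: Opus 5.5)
The plan is to factor the four-point semigroup through the Boolean heat semigroup on $\Omega_{2n}$, with $u=t/2$, times a positivity-preserving averaging operator. The averaging operator is a contraction on every Bochner space, so \cref{lem:schatten-boolean-decay} does the rest.

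\textbf{One block.} Identify $G=\mathbb F_2^2$ with $\Omega_2$ via \cref{def:pauli-label-fourier-analysis}. On a character $\chi_\xi$ with $\xi\neq0$, $\mathsf H_t^{(4)}$ multiplies by $e^{-t}$, while $\mathsf B_{t/2}$ multiplies by $e^{-t|\xi|/2}$, where $|\xi|\in\{1,2\}$. Hence $\mathsf H_t^{(4)}=\mathsf B_{t/2}\circ\mathsf M_t$, where $\mathsf M_t$ is the Fourier multiplier with symbol
\[
m_t(0)=1,\qquad m_t(\xi)=c:=e^{-t/2}\ (|\xi|=1),\qquad m_t(1,1)=1 .
\]
Inverting the Fourier transform, $\mathsf M_t$ is convolution with the probability measure $\nu_t$ on $G$ given by
\[
\nu_t(\{(0,0)\})=\tfrac{1+c}{2},\qquad \nu_t(\{(1,1)\})=\tfrac{1-c}{2},\qquad \nu_t=0 \text{ elsewhere}.
\]
To check this, note that $\mathbb E_{z\sim\nu_t}\chi_\xi(z)$ equals $1$ for $\xi\in\{0,(1,1)\}$ and equals $\tfrac{1+c}{2}-\tfrac{1-c}{2}=c$ for $|\xi|=1$.

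\textbf{Tensorizing.} All the multipliers involved are diagonal in the Walsh basis, so they commute. Therefore
\[
\mathsf H_t^{(4,n)}=\mathsf B_{t/2}\circ\mathsf M_t^{\otimes n},
\]
where $\mathsf B_{t/2}$ is now the heat semigroup on $\Omega_{2n}$. The operator $\mathsf M_t^{\otimes n}$ is convolution with the product probability measure $\nu_t^{\otimes n}$. By the triangle inequality and translation invariance of the uniform measure, it is a contraction on $L_{4/3}(G^n;X)$ for every Banach space $X$; in particular for $X=X_{n,d_0}$.

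\textbf{Degrees.} Let $\xi\in\Omega_{2n}$ be the bit string listing the bits of $\alpha\in G^n$. Then $|\xi|\ge|\alpha|$, so a function $f$ whose Pauli-label Fourier coefficients vanish for $|\alpha|\le k$ lies in $L^{>k}_{4/3}(\Omega_{2n};X_{n,d_0})$ for the Boolean weight. The multiplier $\mathsf M_t^{\otimes n}$ preserves Fourier support, so $g:=\mathsf M_t^{\otimes n}f$ also lies in this space.

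\textbf{Conclusion.} Apply \cref{lem:schatten-boolean-decay} with $N=2n$ and $u=t/2$:
\[
\norm{\mathsf H_t^{(4,n)}f}_{L_{4/3}}
=\norm{\mathsf B_{t/2}\,g}_{L_{4/3}}
\le C_0e^{-a_0k\min\{t/2,(t/2)^B\}}\norm g_{L_{4/3}}
\le C_0e^{-a_0k\min\{t/2,(t/2)^B\}}\norm f_{L_{4/3}} .
\]
Since $B>2$, we have $\min\{t/2,(t/2)^B\}\ge 2^{-B}\min\{t,t^B\}$. The lemma therefore holds with $a_1:=a_0 2^{-B}$ and the same $C_0$ and $B$, all independent of $d_0$, $n$ and $k$.

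The only step that needs care is checking that the correcting multiplier $\mathsf M_t$ is given by a genuinely nonnegative kernel. This is exactly what allows it to be bounded on vector-valued $L_{4/3}$ with no dependence on $X_{n,d_0}$. The choice $u=t/2$ is what makes the kernel nonnegative.
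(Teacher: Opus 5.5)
Your proof is correct and matches the paper's argument essentially step for step. The paper likewise writes $\mathsf H_t^{(4,n)}=\mathsf C_q^{\otimes n}\mathsf B_{t/2}^{(2n)}$, using the same kernel $\frac{1+q}{2}\delta_{00}+\frac{1-q}{2}\delta_{11}$ with $q=e^{-t/2}$; it proves contractivity by Minkowski and translation invariance, handles degrees via $|\xi|\ge|\alpha|$, and sets $a_1=2^{-B}a_0$. The only difference is cosmetic: you apply the averaging operator before the heat semigroup rather than after, which is harmless because the two multipliers commute.
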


\begin{proof}
Set $q=e^{-t/2}$.  On one two-bit block, the semigroup $\mathsf B_{t/2}$
from \eqref{eq:boolean-heat-semigroup} has, in the ordering
$00,10,01,11$, the multipliers
\begin{equation*}
(1,q,q,q^2).
\end{equation*}
Define the probability measure on $G$
\begin{equation}
\label{eq:four-point-smoothing-measure}
\nu_q
=
\frac{1+q}{2}\delta_{00}
+
\frac{1-q}{2}\delta_{11}
\end{equation}
where $\delta_z$ denotes the probability measure concentrated at $z$.
For a complex Banach space $X$ and $F:G\to X$, define $\mathsf C_q$ using
the convolution in \cref{def:finite-abelian-convolution}:
\begin{equation}
\label{eq:four-point-averaging-operator}
  (\mathsf C_qF)(x):=(\nu_q*F)(x)
  =\frac{1+q}{2}F(x)+\frac{1-q}{2}F(x+11).
\end{equation}
Here subtraction equals addition in $G$.  Its multiplier on $\chi_\xi$ is
\[
\sum_{z\in G}\nu_q(\{z\})\chi_\xi(z)
=
\frac{1+q}{2}
+
\frac{1-q}{2}(-1)^{\xi\cdot11},
\]
so its four multipliers are
\begin{equation*}
(1,q,q,1).
\end{equation*}
The entrywise product of the two multiplier lists above is
\[
(1,q^2,q^2,q^2)
=
(1,e^{-t},e^{-t},e^{-t}),
\]
the multipliers of $\mathsf H_t^{(4)}$.  Therefore, under the identification
$G^n\cong\mathbb F_2^{2n}$,
\begin{equation*}
\mathsf H_t^{(4,n)}
=
\mathsf C_q^{\otimes n}\mathsf B_{t/2}^{(2n)}.
\end{equation*}

For $1\leq p<\infty$, Minkowski's inequality and translation invariance imply
\[
\begin{aligned}
\norm{\mathsf C_qF}_{L_p(G;X)}
&=
\left(\mathbb E_x\left\|\mathbb E_{z\sim\nu_q}F(x+z)\right\|_X^p\right)^{1/p}\\
&\le
\mathbb E_{z\sim\nu_q}
\left(\mathbb E_x\norm{F(x+z)}_X^p\right)^{1/p}
=
\norm F_{L_p(G;X)}.
\end{aligned}
\]
The same holds for $\mathsf C_q^{\otimes n}$.

Each nonzero pair $\alpha_i\in G$ contains one or two nonzero bits.  Hence the
Boolean degree of a character is at least its $G$-degree.  Thus
$\widehat f(\alpha)=0$ for $|\alpha|\leq k$ implies that all Boolean
Fourier coefficients of degree at most $k$ vanish.
Apply \cref{lem:schatten-boolean-decay} at time $t/2$:
\begin{align*}
\norm{\mathsf H_t^{(4,n)}f}_{4/3}
&=\norm{\mathsf C_q^{\otimes n}\mathsf B_{t/2}^{(2n)}f}_{4/3}\\
&\le\norm{\mathsf B_{t/2}^{(2n)}f}_{4/3}\\
&\le
C_0
\exp\!\left\{-a_0k\min\left\{\frac t2,
\left(\frac t2\right)^B\right\}\right\}
\norm f_{4/3}.
\end{align*}
Since
\[
\min\left\{\frac t2,\left(\frac t2\right)^B\right\}
\ge
2^{-B}\min\{t,t^B\},
\]
the claimed estimate follows with $a_1=2^{-B}a_0$.
\end{proof}

\paragraph{The High-Degree Estimate in the Schatten \texorpdfstring{$4/3$}{4/3}-Norm.}

Combining \cref{lem:pauli-four-point,lem:four-point-boolean} gives the following
bound for operators whose Pauli coefficients vanish in degrees at most $k$.

\begin{cor}
\label{cor:S43-Pauli-tail}
Let $d_0,n,k\ge1$, and let $A\in \Matrix_{d_0}\otimes \Matrix_2^{\otimes n}$ be supported
on Pauli degrees strictly greater than $k$.  Then, for every $t>0$,
\begin{equation*}
\norm{\mathcal T_tA}_{4/3}
\le
C_0e^{-a_1k\min\{t,t^B\}}\norm A_{4/3},
\end{equation*}
where the constants are independent of $n$ and $d_0$.
Here $\mathcal T_t$ and the normalized Schatten norms are those in
\cref{lem:pauli-four-point}, and $a_1,C_0,B$ are the constants from
\cref{lem:four-point-boolean}.
\end{cor}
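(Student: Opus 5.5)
The plan is to transport the operator through the map $J$ of \cref{lem:pauli-four-point}, apply \cref{lem:four-point-boolean}, and transport back. Let $A\in\Matrix_{d_0}\otimes\Matrix_2^{\otimes n}$ be supported on Pauli degrees $>k$, and set $f:=J(A)\colon G^n\to X_{n,d_0}$.

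First, the Fourier coefficients of $f$ are $\widehat f(\alpha)=\widehat A(\alpha)\otimes P_\alpha$ by the computation in the proof of \cref{lem:pauli-four-point}. These vanish whenever $\widehat A(\alpha)=0$, in particular for all $|\alpha|\le k$. Hence $f$ satisfies the hypothesis of \cref{lem:four-point-boolean}.

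Second, the intertwining identity $J(\mathcal T_tA)=\mathsf H_t^{(4,n)}J(A)$ and the isometry property with $p=4/3$ give
\[
\norm{\mathcal T_tA}_{4/3}
=\norm{J(\mathcal T_tA)}_{L_{4/3}(G^n;X_{n,d_0})}
=\norm{\mathsf H_t^{(4,n)}f}_{L_{4/3}(G^n;X_{n,d_0})}.
\]
By \cref{lem:four-point-boolean}, the right-hand side is at most
\[
C_0e^{-a_1k\min\{t,t^B\}}\norm f_{L_{4/3}(G^n;X_{n,d_0})}.
\]
Applying the isometry once more, $\norm f_{L_{4/3}(G^n;X_{n,d_0})}=\norm A_{4/3}$, which proves the claimed bound. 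The constants are exactly those of \cref{lem:four-point-boolean}, so they are independent of $n$ and $d_0$.

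The argument has no serious obstacle, since all the analytic work sits in \cref{lem:four-point-boolean}. The one point requiring care is consistency of norms. The Bochner norm on $L_{4/3}(G^n;X_{n,d_0})$ averages $\norm{J(A)(\rho)}_{4/3}^{4/3}$ over $\rho$, and this matches the isometry statement in \cref{lem:pauli-four-point} under the uniform measure. One should also check that the coefficient space is the full $X_{n,d_0}$, in which $J(A)(\rho)$ indeed takes values, so \cref{lem:four-point-boolean} applies verbatim.
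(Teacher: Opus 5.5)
Your proof is correct and is essentially the paper's argument. You transport $A$ via $J$, use that $\widehat{J(A)}(\alpha)=\widehat A(\alpha)\otimes P_\alpha$ vanishes for $|\alpha|\le k$ together with the intertwining identity $J(\mathcal T_tA)=\mathsf H_t^{(4,n)}J(A)$, apply \cref{lem:four-point-boolean}, and return to operators via the $p=4/3$ isometry of \cref{lem:pauli-four-point}; your check that the Bochner norm matches that isometry under the uniform measure is the only bookkeeping required, and it holds.
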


\begin{proof}
By \cref{lem:pauli-four-point}, $\widehat{J(A)}(\alpha)=0$ whenever
$|\alpha|\leq k$, and
\[
J(\mathcal T_tA)=\mathsf H_t^{(4,n)}J(A).
\]
Apply \cref{lem:four-point-boolean} to $J(A)$ and then use the isometry in
\cref{lem:pauli-four-point}:
\[
\begin{aligned}
\norm{\mathcal T_tA}_{4/3}
&=
\norm{J(\mathcal T_tA)}_{L_{4/3}(G^n;X_{n,d_0})}\\
&=
\norm{\mathsf H_t^{(4,n)}J(A)}_{L_{4/3}(G^n;X_{n,d_0})}\\
&\le
C_0e^{-a_1k\min\{t,t^B\}}
\norm{J(A)}_{L_{4/3}(G^n;X_{n,d_0})}\\
&=
C_0e^{-a_1k\min\{t,t^B\}}\norm A_{4/3}.
\end{aligned}
\]
\end{proof}

\paragraph{Duality and completion of the proof.}

\begin{definition}
\label{def:pauli-degree-subspaces}
For integers $d_0,n,k\geq1$, define the subspaces
$E_{\le k},E_{>k}\subseteq\Matrix_{d_0}\otimes\Matrix_2^{\otimes n}$ by
\begin{align*}
E_{\le k}&:=\{R:\widehat R(\alpha)=0
  \text{ whenever }|\alpha|>k\},\\
E_{>k}&:=\{B:\widehat B(\alpha)=0
  \text{ whenever }|\alpha|\leq k\}.
\end{align*}
\end{definition}

\begin{lemma}
\label{lem:pauli-distance-duality}
Let $d_0,n,k\geq1$ be integers, let
$\tau=\tau_{d_0}\otimes\tau_{2^n}$, and let $E_{\le k},E_{>k}$
be the subspaces in \cref{def:pauli-degree-subspaces}.
Then, for every $A\in\Matrix_{d_0}\otimes\Matrix_2^{\otimes n}$,
\begin{equation*}
\operatorname{dist}_4(A,E_{\le k})
:=\inf_{R\in E_{\le k}}\norm{A-R}_4
=\sup_{\substack{B\in E_{>k}\\\norm B_{4/3}\leq1}}
  \abs{\tau(B^\dagger A)}.
\end{equation*}
\end{lemma}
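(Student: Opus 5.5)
This is the standard Hahn--Banach distance formula in a finite-dimensional normed space, once we identify the annihilator of $E_{\le k}$ with $E_{>k}$ and the dual of the normalized Schatten $4$-norm with the normalized Schatten $4/3$-norm. I would proceed in three steps.

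First, the annihilator. The pairing $\langle B,A\rangle:=\tau(B^\dagger A)$ satisfies, by \cref{prop:operator-valued-pauli-identities},
\[
\tau(B^\dagger A)=\sum_{\alpha\in G^n}\tau_{d_0}\!\left(\widehat B(\alpha)^\dagger\widehat A(\alpha)\right).
\]
Hence every $B\in E_{>k}$ annihilates $E_{\le k}$. Conversely, if $B$ annihilates $E_{\le k}$, test against $R=C\otimes P_\alpha$ with $|\alpha|\le k$ and arbitrary $C\in\Matrix_{d_0}$. This gives $\widehat B(\alpha)=0$ for all $|\alpha|\le k$. So $E_{\le k}^\perp=E_{>k}$ under this sesquilinear pairing.

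Second, the inequality ``$\geq$''. For $B\in E_{>k}$ with $\norm B_{4/3}\le1$ and any $R\in E_{\le k}$, the first step gives
\[
|\tau(B^\dagger A)|=|\tau(B^\dagger(A-R))|\le\norm B_{4/3}\norm{A-R}_4 .
\]
The last bound is the normalized Schatten H\"older inequality; it is obtained from the unnormalized one by dividing by $d_0 2^n$, using $\tfrac34+\tfrac14=1$. Taking the supremum over $B$ and the infimum over $R$ proves this direction.

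Third, the inequality ``$\leq$'', which is the only step with content. I would invoke Hahn--Banach together with the isometric duality $(\Matrix_d,\norm\cdot_4)^*\cong(\Matrix_d,\norm\cdot_{4/3})$ via $B\mapsto\tau(B^\dagger\,\cdot)$. This duality holds in the normalized setting because both norms are rescaled by the matching powers of $d$, which multiply to $d^{-1}$.

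If $\operatorname{dist}_4(A,E_{\le k})=\delta>0$, the quotient functional on $\mathrm{span}(E_{\le k},A)$ defined by $R+cA\mapsto c\delta$ has norm at most one. Extend it to a functional $\phi$ of norm at most one, and represent it as $\phi=\tau(B^\dagger\,\cdot)$ with $\norm B_{4/3}\le1$. Since $\phi$ kills $E_{\le k}$, the first step gives $B\in E_{>k}$, and $\tau(B^\dagger A)=\delta$. If $\delta=0$, the claim is trivial.

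Alternatively, one can avoid Hahn--Banach by taking a minimizer $R_0$, which exists by compactness in finite dimensions. Write $A-R_0=U|A-R_0|$ and set
\[
B=U|A-R_0|^3/\norm{A-R_0}_4^3 .
\]
The first-order optimality condition, from differentiating $\norm{A-R_0-\epsilon R}_4^4$, shows $B\perp E_{\le k}$. This version is more explicit.

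The main care point is the complex, sesquilinear convention. One must check that the duality representation uses $B^\dagger$ consistently, so that the annihilator is exactly $E_{>k}$ with no conjugation issue. Since the coefficient condition is closed under adjoints of Pauli coefficients and the Paulis are Hermitian, this is routine.
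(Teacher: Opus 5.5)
Your proposal is correct. Your first two steps match the paper: Pauli orthogonality identifies $E_{>k}$ as the annihilator of $E_{\le k}$, and normalized Schatten H\"older gives the easy inequality.

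For the reverse inequality, your main argument is abstract. You extend the quotient functional by Hahn--Banach and represent the extension through the isometric duality $(\Matrix_d,\norm{\cdot}_4)^*\cong(\Matrix_d,\norm{\cdot}_{4/3})$. Your check that the normalizations combine as $d^{-1/4}\cdot d^{-3/4}=d^{-1}$ is exactly what the normalized setting needs.

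The paper instead uses what you list as the alternative. It fixes a minimizer $R_0$ and puts $C=A-R_0$. From the derivative of $\norm{C-tR}_4^4$ at $t=0$ it gets $\operatorname{Re}\tau\bigl((C^\dagger C)C^\dagger R\bigr)=0$, and applies this to $\iu R$ as well to remove the imaginary part. It then writes down $B=C(C^\dagger C)/\norm{C}_4^3$, which is your $U\abs{C}^3/\norm{C}_4^3$.

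The trade-off is as follows. Your Hahn--Banach route is shorter and works unchanged for every $1<p<\infty$ without differentiating any norm. However, it hands the real content to the nontrivial half of Schatten duality: every functional of norm at most one for $\norm{\cdot}_4$ has the form $\tau(B^\dagger\,\cdot)$ with $\norm{B}_{4/3}\le1$. The standard proof of that fact uses a polar-decomposition norming element of the same kind. The paper's route is self-contained, since $\norm{\cdot}_4^4$ is a polynomial; it needs only H\"older and a one-line derivative, and it exhibits the extremal $B$ explicitly. The quotient/annihilator viewpoint behind your argument is what the paper records in its remark immediately after the proof of \cref{thm:square-tail}.
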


\begin{proof}
Pauli orthogonality gives
\[
B\in E_{>k}
\quad\Longleftrightarrow\quad
\tau(B^\dagger R)=0
\quad\text{for every }R\in E_{\le k}.
\]
Thus, for $B\in E_{>k}$ with $\norm B_{4/3}\leq1$ and $R\in E_{\le k}$,
Schatten H\"older gives
\[
\abs{\tau(B^\dagger A)}
=\abs{\tau(B^\dagger(A-R))}
\leq\norm B_{4/3}\norm{A-R}_4
\leq\norm{A-R}_4.
\]
Taking the supremum over $B$ and the infimum over $R$ proves one inequality.

Choose a minimizer $R_0\in E_{\le k}$, which exists because $E_{\le k}$
is a finite-dimensional subspace, and put $C=A-R_0$.
If $C=0$, both sides are zero.  Otherwise, for every $R\in E_{\le k}$,
minimality gives
\[
0=\left.\frac{\dd}{\dd t}\norm{C-tR}_4^4\right|_{t=0}
=-4\operatorname{Re}\tau\!\left((C^\dagger C)C^\dagger R\right).
\]
Applying the same identity to $iR\in E_{\le k}$ also makes the imaginary
part zero.  Therefore
\[
B:=\frac{C(C^\dagger C)}{\norm C_4^3}
\]
satisfies $\tau(B^\dagger R)=0$ for all $R\in E_{\le k}$, so $B\in E_{>k}$.
Moreover,
\begin{align*}
\norm B_{4/3}^{4/3}
&=\frac{\tau((C^\dagger C)^2)}{\norm C_4^4}=1,\\
\tau(B^\dagger A)
&=\tau(B^\dagger C)
=\frac{\tau((C^\dagger C)^2)}{\norm C_4^3}
=\norm C_4
=\operatorname{dist}_4(A,E_{\le k}).
\end{align*}
This $B$ attains the claimed supremum and proves the reverse inequality.
\end{proof}

We are now ready to prove \cref{thm:square-tail}.
\thmsquaretail*
\begin{proof}[Proof of \cref{thm:square-tail}]
Use the subspaces $E_{\le k}$ and $E_{>k}$ from
\cref{def:pauli-degree-subspaces}.
If $k\ge n$, then $E_{\le k}=\Matrix_{d_0}\otimes\Matrix_{2^n}$, and
$R=S$ satisfies the theorem.  We therefore assume $1\le k<n$.

By \cref{lem:pauli-distance-duality}, with
$\langle B,A\rangle=\tau\Br{B^\dagger A}$,
we have
\begin{equation*}
\operatorname{dist}_4(A,E_{\le k})
=
\sup_{\substack{B\in E_{>k}\\ \norm B_{4/3}\le1}}
\abs{\langle B,A\rangle}.
\end{equation*}

The map $\mathcal T_t$ preserves $E_{>k}$ and is self-adjoint for this
pairing.  Moreover, \cref{cor:S43-Pauli-tail} bounds its norm on $E_{>k}$ by
$C_0e^{-a_1k\min\{t,t^B\}}$.  Consequently, for every
$A\in\Matrix_{d_0}\otimes\Matrix_2^{\otimes n}$,
\begin{align*}
\operatorname{dist}_4(\mathcal T_tA,E_{\le k})
&=
\sup_{\substack{B\in E_{>k}\\ \norm B_{4/3}\le1}}
\abs{\langle B,\mathcal T_tA\rangle} \notag\\
&=
\sup_{\substack{B\in E_{>k}\\ \norm B_{4/3}\le1}}
\abs{\langle \mathcal T_tB,A\rangle} \notag\\
&\le
C_0e^{-a_1k\min\{t,t^B\}}
\operatorname{dist}_4(A,E_{\le k}).
\end{align*}
Here the last inequality uses the fact that $\mathcal T_tB$ remains in
$E_{>k}$ and has Schatten $4/3$-norm at most
$C_0e^{-a_1k\min\{t,t^B\}}$ whenever $\norm B_{4/3}\le1$.

Set
\(
\gamma=\frac1B.
\)
Choose an absolute $b\ge1$ large enough that
\(
C_0e^{-a_1\min\{b,b^B\}}\le\frac12,
\)
and put
\(
\theta=bk^{-\gamma},
\)
which implies that
\[
k\min\{\theta,\theta^B\}
=
\begin{cases}
k\theta^B=b^B, & \theta\le1,\\
k\theta=bk^{1-1/B}\ge b, & \theta\ge1.
\end{cases}
\]
With $t=\theta$, the preceding distance estimate becomes, for every
$A\in\Matrix_{d_0}\otimes\Matrix_2^{\otimes n}$,
\begin{equation*}
\operatorname{dist}_4(\mathcal T_\theta A,E_{\le k})
\le\frac12\operatorname{dist}_4(A,E_{\le k}).
\end{equation*}
Apply this contraction with $A=S$.  The triangle inequality then shows that
\begin{align*}
\operatorname{dist}_4(S,E_{\le k})
&\le
\norm{S-\mathcal T_\theta S}_4
+
\operatorname{dist}_4(\mathcal T_\theta S,E_{\le k})\\
&\le
\norm{S-\mathcal T_\theta S}_4
+
\frac12\operatorname{dist}_4(S,E_{\le k}).
\end{align*}
Therefore
\begin{equation*}
\operatorname{dist}_4(S,E_{\le k})
\le
2\norm{S-\mathcal T_\theta S}_4.
\end{equation*}
Since $E_{\le k}$ is finite dimensional, there exists $R_0\in E_{\le k}$
such that
\[
\norm{S-R_0}_4=\operatorname{dist}_4(S,E_{\le k}).
\]
Since $S=S^\dagger$ and $E_{\le k}$ is closed under adjoints, the
Hermitian part
\(
R=\displaystyle\frac{R_0+R_0^\dagger}{2}
\)
belongs to $E_{\le k}$ and satisfies
\[
\norm{S-R}_4
=
\norm{\operatorname{Re}(S-R_0)}_4
\le
\norm{S-R_0}_4.
\]
Combining this with the preceding distance estimate proves
\cref{thm:square-tail}, since $\theta=bk^{-\gamma}$.
\end{proof}

\begin{remark}
\label{rem:S4-distance-duality}
In this remark, write
$S_p:=(\Matrix_{d_0}\otimes\Matrix_2^{\otimes n},\norm{\cdot}_p)$
for $1<p<\infty$, where the Schatten norms are normalized by
$\tau_{d_0}\otimes\tau_{2^n}$.
Use the subspaces $E_{\le k}$ and $E_{>k}$ from
\cref{def:pauli-degree-subspaces}.

Consider the quotient space $S_4/E_{\le k}$.  For $A\in S_4$, denote its
coset by $[A]:=A+E_{\le k}$.  Its quotient norm is
\begin{equation}
\label{eq:low-degree-quotient-norm}
\norm{[A]}_{S_4/E_{\le k}}
=
\inf_{R\in E_{\le k}}\norm{A-R}_4
=
\operatorname{dist}_4(A,E_{\le k}).
\end{equation}
Because $\mathcal T_t(E_{\le k})\subseteq E_{\le k}$, the map $\mathcal T_t$
induces a well-defined map on the quotient by
\begin{equation}
\label{eq:induced-depolarizing-map}
\widetilde{\mathcal T}_t[A]:=[\mathcal T_tA].
\end{equation}
With respect to the normalized trace pairing
$\langle B,A\rangle=(\tau_{d_0}\otimes\tau_{2^n})(B^\dagger A)$, define the
annihilator of $E_{\le k}$ by
\begin{equation}
\label{eq:low-degree-annihilator}
E_{\le k}^{\perp}
:=
\left\{B\in S_{4/3}:\langle B,R\rangle=0
\text{ for every }R\in E_{\le k}\right\}.
\end{equation}
Pauli orthogonality identifies it with $E_{>k}$.
The dual of the quotient is identified isometrically as
\[
\left(S_4/E_{\le k}\right)^*
\cong
E_{\le k}^{\perp}
\subseteq S_{4/3},
\]
and, under this identification, the adjoint of
$\widetilde{\mathcal T}_t$ is the restriction of $\mathcal T_t^*$ to
$E_{\le k}^{\perp}$.  Since $\mathcal T_t^*=\mathcal T_t$, the high-degree
estimate in \cref{cor:S43-Pauli-tail} implies
\[
\norm{\widetilde{\mathcal T}_t}_{S_4/E_{\le k}\to S_4/E_{\le k}}
=
\sup_{\substack{B\in E_{\le k}^{\perp}\\\norm B_{4/3}\leq1}}
\norm{\mathcal T_tB}_{4/3}
\le
C_0e^{-a_1k\min\{t,t^B\}}.
\]
Applying this operator bound to $[A]$ and using the definition of the
quotient norm in \eqref{eq:low-degree-quotient-norm} recovers the
distance estimate proved above.

Once the factor-$1/2$ distance contraction is known, the last step of the proof
can equivalently be written entirely in the quotient:
\[
\begin{aligned}
\norm{[S]}_{S_4/E_{\le k}}
&\le
\norm{[S-\mathcal T_\theta S]}_{S_4/E_{\le k}}
+
\norm{\widetilde{\mathcal T}_\theta[S]}_{S_4/E_{\le k}}\\
&\le
\norm{S-\mathcal T_\theta S}_4
+
\frac12\norm{[S]}_{S_4/E_{\le k}}.
\end{aligned}
\]
This bounds $\operatorname{dist}_4(S,E_{\le k})$.  Since $E_{\le k}$ is
finite dimensional, the infimum is attained by some $R\in E_{\le k}$,
as shown at the end of the proof.
\end{remark}

\subsubsection{Rectangular Operators}

\begin{lemma}
\label{lem:rectangular-tail}
Let $n,t\geq1$ be integers and $b,\gamma>0$ be the absolute constants from
\cref{thm:square-tail}.
Let 
$W\colon(\mathbb C^2)^{\otimes n}\to\mathbb C^t\otimes(\mathbb C^2)^{\otimes n}$
be a block operator.
For every integer $k\ge1$, there exists a block operator
$Q\colon(\mathbb C^2)^{\otimes n}\to\mathbb C^t\otimes(\mathbb C^2)^{\otimes n}$
with Pauli degree at most $k$
\[
Q=\sum_{|\alpha|\le k}\widehat Q(\alpha)\otimes P_\alpha,
\qquad \widehat Q(\alpha)\in\mathbb C^t,
\]
such that
\begin{equation*}
\norm{W-Q}_4
\le2\norm{W-\mathcal T_{bk^{-\gamma}}W}_4.
\end{equation*}
The constants $b,\gamma$ are independent of $t$ and $n$.
\end{lemma}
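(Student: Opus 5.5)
We reduce to \cref{thm:square-tail} via the Hermitian dilation hinted at in the proof preview. Write $d_0:=t+1$ and view $\mathbb C^{t+1}=\mathbb C^t\oplus\mathbb C$, so that $\Matrix_{d_0}\otimes\Matrix_2^{\otimes n}$ contains block operators $(\mathbb C^2)^{\otimes n}\to\mathbb C^t\otimes(\mathbb C^2)^{\otimes n}$ as the off-diagonal corner. Define the Hermitian operator
\[
S:=\begin{pmatrix}0 & W\\ W^\dagger & 0\end{pmatrix}
=\sum_{a=1}^t\bigl(|a\rangle\langle t+1|\otimes W_a+|t+1\rangle\langle a|\otimes W_a^\dagger\bigr).
\]
Because $\mathcal T_\theta$ acts only on the qubit factor and commutes with adjoints (it is a Pauli multiplier with real symbol), $(\id_{d_0}\otimes\mathcal T_\theta)(S)$ is the dilation of $\mathcal T_\theta W$, and $S-\mathcal T_\theta S$ is the dilation of $W-\mathcal T_\theta W$.

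Next we compare norms. For the dilation $\widetilde X$ of any block operator $X$, we have $\widetilde X^2=\mathrm{diag}(XX^\dagger,X^\dagger X)$. Since $XX^\dagger$ and $X^\dagger X$ have the same nonzero spectrum, $\Tr|\widetilde X|^4=2\Tr\bigl((X^\dagger X)^2\bigr)$. With normalized traces this gives
\[
\norm{\widetilde X}_4^4=\frac{2\cdot 2^n}{(t+1)2^n}\norm X_4^4,
\]
so $\norm{\widetilde X}_4=c_t\norm X_4$ with the common constant $c_t=(2/(t+1))^{1/4}$. Since $c_t$ is the same for every block operator, the inequality of \cref{thm:square-tail} transfers with no loss once we know the approximant may be taken of dilation form.

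Apply \cref{thm:square-tail} to $S$ with $b,\gamma$ as given. This yields a Hermitian $R\in E_{\le k}$ with $\norm{S-R}_4\le 2\norm{S-\mathcal T_\theta S}_4$. The main obstacle is that $R$ need not be off-diagonal. To fix this, let $\Gamma:=\mathrm{diag}(\id_t,-1)\otimes\id$ and compress by $\mathcal E(X):=\tfrac12(X-\Gamma X\Gamma)$. This map extracts the off-diagonal part, fixes $S$, preserves Pauli degree $\le k$ (it acts only on the $\Matrix_{d_0}$ factor), and is a contraction in $\norm{\cdot}_4$ because it averages two unitary conjugations. Then $\mathcal E(R)$ is Hermitian and off-diagonal, hence equal to $\widetilde Q$ for the block operator $Q$ sitting in its upper-right corner, and
\[
\norm{S-\widetilde Q}_4=\norm{\mathcal E(S-R)}_4\le\norm{S-R}_4 .
\]
Since $R$ has Pauli degree $\le k$ in each $\Matrix_{d_0}$-entry, each block $Q_a$ has degree $\le k$, so $Q=\sum_{|\alpha|\le k}\widehat Q(\alpha)\otimes P_\alpha$ with $\widehat Q(\alpha)\in\mathbb C^t$.

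Finally, combine. Using $S-\widetilde Q=\widetilde{W-Q}$ and the norm identity,
\[
c_t\norm{W-Q}_4\le 2\norm{S-\mathcal T_\theta S}_4=2c_t\norm{W-\mathcal T_\theta W}_4 .
\]
Dividing by $c_t$ gives the claim. The constants $b,\gamma$ are the absolute constants of \cref{thm:square-tail}, independent of $d_0=t+1$ and of $n$.
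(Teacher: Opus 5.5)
Your proposal is correct and matches the paper's proof step for step. It uses the same Hermitian dilation with the norm identity $\norm{\widetilde X}_4^4=\frac{2}{t+1}\norm X_4^4$, the same application of \cref{thm:square-tail}, and the same contraction onto the off-diagonal part by averaging with conjugation by a diagonal sign unitary; the only differences (placing $W$ in the upper-right rather than lower-left corner, and ordering the summands as $\mathbb C^t\oplus\mathbb C$) are cosmetic.
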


\begin{proof}
To apply the square-matrix approximation theorem to a rectangular block
operator, we place it in the off-diagonal block of a Hermitian matrix.
For a block operator
$A:(\mathbb C^2)^{\otimes n}\to\mathbb C^t\otimes(\mathbb C^2)^{\otimes n}$
with the Pauli expansion
\[
A=\sum_{\alpha\in G^n}\widehat A(\alpha)\otimes P_\alpha,
\qquad
\widehat A(\alpha)\in\mathbb C^t,
\]
define its \emph{Hermitian dilation} by
\begin{equation}
\label{eq:hermitian-dilation}
\mathscr H(A)=
\begin{pmatrix}
0&A^\dagger\\
A&0
\end{pmatrix}.
\end{equation}
Equivalently,
\[
\mathscr H(A)
=
\sum_{\alpha\in G^n}
\begin{pmatrix}
0&\widehat A(\alpha)^\dagger\\
\widehat A(\alpha)&0
\end{pmatrix}
\otimes P_\alpha.
\]
Thus $A$ and $\mathscr H(A)$ have exactly the same Pauli support and the
same Pauli degree.  The depolarizing map $\mathcal T_\theta$ acts only on
the $(\mathbb C^2)^{\otimes n}$ factor, so
\begin{equation*}
\mathscr H(\mathcal T_\theta A)=\mathcal T_\theta\mathscr H(A).
\end{equation*}
For every rectangular $A$,
\[
\mathscr H(A)^2
=
\begin{pmatrix}
A^\dagger A&0\\
0&AA^\dagger
\end{pmatrix}.
\]
Moreover,
\[
(\Tr_{\mathbb C^t}\otimes\tau_{2^n})\!\left((AA^\dagger)^2\right)
=
\tau_{2^n}\!\left((A^\dagger A)^2\right),
\]
by cyclicity of the ordinary trace (equivalently, $AA^\dagger$ and
$A^\dagger A$ have the same nonzero singular values).  Hence
\begin{equation*}
\grayhighlight{
\norm{\mathscr H(A)}_4^4
=\frac{2}{t+1}\norm A_4^4.}
\end{equation*}

Apply \cref{thm:square-tail} to the Hermitian square matrix
$\mathscr H(W)$.  We obtain a Hermitian operator $R$ of Pauli degree at
most $k$ such that, with $\theta=bk^{-\gamma}$,
\begin{equation*}
\norm{\mathscr H(W)-R}_4
\le
2\norm{\mathscr H(W)-\mathcal T_\theta\mathscr H(W)}_4.
\end{equation*}
The approximant $R$ need not itself be off-diagonal.  Write its block
decomposition as
\[
R=
\begin{pmatrix}
R_{00}&R_{10}^\dagger\\
R_{10}&R_{11}
\end{pmatrix},
\qquad
R_{00}=R_{00}^\dagger,\qquad R_{11}=R_{11}^\dagger,
\]
and define the rectangular block operator
\begin{equation*}
Q:=R_{10}:(\mathbb C^2)^{\otimes n}\longrightarrow
\mathbb C^t\otimes(\mathbb C^2)^{\otimes n}.
\end{equation*}
Taking the $(1,0)$ block acts only on the coefficient space
$\mathbb C\oplus\mathbb C^t$ and therefore does not change any Pauli label.  Since
$R$ has degree at most $k$, so does $Q$.

It remains to check that discarding the diagonal blocks does not increase
the approximation error.  Let
\[
J_0=\begin{pmatrix}\id&0\\0&-\id\end{pmatrix},
\qquad
\mathcal P_{\mathrm{off}}(X)=\frac12(X-J_0XJ_0).
\]
Since unitary conjugation by $J_0$ preserves every Schatten $p$-norm,
\begin{equation*}
\norm{\mathcal P_{\mathrm{off}}(X)}_p
\le
\frac12\bigl(\norm X_p+\norm{J_0XJ_0}_p\bigr)
=\norm X_p.
\end{equation*}
The map $\mathcal P_{\mathrm{off}}$ fixes every Hermitian dilation,
commutes with $\mathcal T_\theta$, and does not increase Pauli degree.  Direct
multiplication of the $2\times2$ blocks shows that
\[
\mathcal P_{\mathrm{off}}(R)=
\frac12
\left[
\begin{pmatrix}R_{00}&R_{10}^\dagger\\ R_{10}&R_{11}\end{pmatrix}
-
\begin{pmatrix}R_{00}&-R_{10}^\dagger\\-R_{10}&R_{11}\end{pmatrix}
\right]
=
\begin{pmatrix}0&Q^\dagger\\Q&0\end{pmatrix}
=\mathscr H(Q).
\]
Thus the precise passage is $R\mapsto Q=R_{10}$ and
$\mathscr H(Q)=\mathcal P_{\mathrm{off}}(R)$; in general one does
\emph{not} have $R=\mathscr H(Q)$.  Using
$\mathcal P_{\mathrm{off}}(\mathscr H(W))=\mathscr H(W)$, we now have
\begin{equation*}
\mathscr H(W)-\mathscr H(Q)
=
\mathcal P_{\mathrm{off}}\bigl(\mathscr H(W)-R\bigr).
\end{equation*}
Combining these identities with the contractivity of
$\mathcal P_{\mathrm{off}}$, we obtain
\begin{align*}
\left(\frac{2}{t+1}\right)^{1/4}\norm{W-Q}_4
&=\norm{\mathscr H(W)-\mathscr H(Q)}_4\\
&\le\norm{\mathscr H(W)-R}_4\\
&\le2\norm{\mathscr H(W)-\mathcal T_\theta\mathscr H(W)}_4\\
&=2\left(\frac{2}{t+1}\right)^{1/4}
  \norm{W-\mathcal T_\theta W}_4.
\end{align*}
Canceling the common factor $(2/(t+1))^{1/4}$ proves
\cref{lem:rectangular-tail}.
\end{proof}

\subsubsection{Approximation of the Block Operator}

\begin{proof}[Proof of \cref{lem:low-degree-block-approximationnew}]
Let $b,\gamma$ be the constants from \cref{lem:rectangular-tail}, and
choose the constant $C$ in the theorem so that $C\geq16b$.
We now return to a fixed POVM $\mathscr A=(A_a)_{a=1}^t$.  For $u>0$, set
\begin{equation}
\label{eq:smoothed-block-density}
M_a(u)=\mathcal T_u(A_a),
\qquad
Z_u=t\sum_{a=1}^t|a\rangle\langle a|\otimes M_a(u).
\end{equation}
\paragraph{Choosing the noise level.}
Since $\tau(Z_u)=1$ and $0\leq Z_u\leq t\id$, the definition of
$\Ent$ in \eqref{eq:tracial-entropy}
specializes to
\begin{equation*}
0\leq\Ent(Z_u)=\tau(Z_u\log Z_u)\leq\log t.
\end{equation*}
By \cref{prop:tracial-entropy-derivative},
with the definition $\bar\cL=\id_{\Matrix_t}\otimes\cL$,
where $\cL$ is the gelerator of $\mathcal{T}_u$, we have
\begin{equation*}
-\frac{\dd}{\dd u}\Ent(Z_u)
=4\cE_{1,\bar\cL}(Z_u).
\end{equation*}
Integrating this identity gives
\[
4\int_s^{2s}\cE_{1,\bar\cL}(Z_u)\,\dd u
=\Ent(Z_s)-\Ent(Z_{2s})
\leq\log t.
\]
Hence there exists $u\in[s,2s]$ such that
\begin{equation*}
\cE_{1,\bar\cL}(Z_u)
\leq\frac{\log t}{4s}.
\end{equation*}
Fix such a time and define
\begin{equation*}
W_u=\sum_{a=1}^t|a\rangle\otimes\sqrt{M_a(u)}.
\end{equation*}
Apply \cref{cor:depolarizing-stroock-varopoulos} to each $M_a(u)$ and sum up
the resulting inequalities over $a$:
\begin{equation*}
\grayhighlight{\sum_a\cE_{2,\cL}\!\left(\sqrt{M_a(u)}\right)
\leq
\sum_a\cE_{1,\cL}(M_a(u))
=
\cE_{1,\bar\cL}(Z_u)
\leq
\frac{\log t}{4s}.}
\end{equation*}
For the equality, use \eqref{eq:tracial-dirichlet-forms}:
\begin{align*}
\cE_{1,\bar\cL}(Z_u)
&=\frac14\bar\tau\!\left((\bar\cL Z_u)\log Z_u\right)\\
&=\frac1{4t}\sum_a\tau_{2^n}\!\left(
  t\,\cL(M_a(u))\bigl((\log t)\id+\log M_a(u)\bigr)
  \right)\\
&=\sum_a\cE_{1,\cL}(M_a(u))
  +\frac{\log t}{4}\sum_a\tau_{2^n}\!\left(\cL(M_a(u))\right)\\
&=\sum_a\cE_{1,\cL}(M_a(u)),
\end{align*}
where zero blocks contribute $0$, and trace preservation gives
\[
\tau_{2^n}\!\left(\cL(M_a(u))\right)
=-\frac{\dd}{\dd u}\tau_{2^n}(M_a(u))
=-\frac{\dd}{\dd u}\tau_{2^n}(A_a)
=0.
\]

By \eqref{eq:block-pauli-expansion} and the $\ell_2^t$-norm in
\eqref{eq:euclidean-inner-product-and-norm}, the Pauli coefficients of $W_u$ satisfy
\begin{equation*}
\norm{\widehat W_u(\alpha)}_{\ell_2^t}^2
=
\sum_{a=1}^t\left|\widehat{\sqrt{M_a(u)}}(\alpha)\right|^2.
\end{equation*}
Let $\theta=bk^{-\gamma}$.  The vector-valued Pauli Parseval identity in
\cref{prop:block-operator-pauli-identities}, the multiplier formula for
$\mathcal T_\theta$, and $(1-e^{-x})^2\leq x$ for $x\geq0$ imply
\begin{align*}
\norm{W_u-\mathcal T_\theta W_u}_2^2
&=\sum_\alpha(1-e^{-\theta|\alpha|})^2
\norm{\widehat W_u(\alpha)}_{\ell_2^t}^2\\
&\le\theta\sum_\alpha|\alpha|
\norm{\widehat W_u(\alpha)}_{\ell_2^t}^2\\
&=\theta\sum_{a=1}^t\sum_{\alpha\in G^n}|\alpha|
\left|\widehat{\sqrt{M_a(u)}}(\alpha)\right|^2\\
&=\theta\sum_{a=1}^t
\cE_{2,\cL}\!\left(\sqrt{M_a(u)}\right).
\end{align*}
The last equality follows from
\cref{prop:depolarizing-dirichlet-pauli-expansion}.

Since $W_u^\dagger W_u=\sum_aM_a(u)=\id$,
\cref{prop:depolarizing-contractivity} implies
$\norm{\mathcal T_\theta(W_u)}_\infty\leq\norm{W_u}_\infty=1$.
The triangle inequality therefore yields
\[
\norm{W_u-\mathcal T_\theta W_u}_\infty
\leq\norm{W_u}_\infty+\norm{\mathcal T_\theta(W_u)}_\infty
\leq2.
\]
Therefore
\begin{align*}
\norm{W_u-\mathcal T_\theta W_u}_4^4
&\le\norm{W_u-\mathcal T_\theta W_u}_\infty^2
\norm{W_u-\mathcal T_\theta W_u}_2^2\\
&\le4\theta\sum_a\cE_{2,\cL}\!\left(\sqrt{M_a(u)}\right)
\le b\frac{k^{-\gamma}\log t}{s}.
\end{align*}

By \cref{lem:rectangular-tail}, there is a block operator $Q$ of Pauli
degree at most $k$ such that
\begin{equation*}
\norm{W_u-Q}_4^4
\le16\norm{W_u-\mathcal T_\theta W_u}_4^4
\le16b\frac{k^{-\gamma}\log t}{s}
\le\eta^4,
\end{equation*}
where the last inequality uses the assumed lower bound on $k$ and
$C\geq16b$.
Since $W_u^\dagger W_u=\id$, Schatten H\"older gives
\[
\norm{Q^\dagger Q-\id}_2
\le\norm{Q-W_u}_4(\norm Q_4+\norm{W_u}_4)
\le\eta(2+\eta),
\]
because $\norm{W_u}_4=1$ and $\norm Q_4\le1+\eta$.
\end{proof}

\subsection{Step 3: Dimension Reduction}
\label{subsec:dimension-reduction}

In this subsection, we prove \cref{thm:folded-block-output}, restated below.

\foldedblockoutput*

\subsubsection{Pauli Folding}

Let $n,D\ge1$ be integers.  Choose $h\colon[n]\to[D]$ uniformly from
all functions and $r\in G^n$ uniformly, independently.  Recall
$H\colon G^n\to G^D$ from \eqref{eq:hashed-pauli-label}:
\[
(H\alpha)_j=\sum_{i:h(i)=j}\alpha_i,
\qquad \alpha\in G^n,\quad j\in[D].
\]
The linear map
$\Phi_{h,r}\colon \Matrix_2^{\otimes n}\to \Matrix_2^{\otimes D}$
from \eqref{eq:hashed-pauli-map} is given by
\begin{equation*}
\Phi_{h,r}(P_\alpha)=\chi_r(\alpha)P_{H\alpha},
\end{equation*}
where $\chi_r(\alpha)=\prod_i\chi_{r_i}(\alpha_i)$ is the character from
\cref{def:pauli-label-fourier-analysis}.

\begin{prop}
\label{prop:folded-norm-expectation}
Let $n,D\ge1$, fix a map $h\colon[n]\to[D]$, and choose $r\in G^n$
uniformly.  For every $X\in \Matrix_2^{\otimes n}$,
\begin{equation*}
\mathbb E_r\norm{\Phi_{h,r}(X)}_2^2
=\norm X_2^2,
\end{equation*}
where the norms on the left and right use the normalized traces
$\tau_{2^D}$ and $\tau_{2^n}$, respectively.
\end{prop}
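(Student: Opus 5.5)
The plan is to expand $X$ in the Pauli basis, push the expansion through $\Phi_{h,r}$, and read off the squared norm using Pauli orthonormality on $D$ qubits; averaging over $r$ will then kill the cross terms.

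Write $X=\sum_{\alpha\in G^n}\widehat X(\alpha)P_\alpha$. By linearity and the definition of the folding map,
\[
\Phi_{h,r}(X)=\sum_{\alpha\in G^n}\chi_r(\alpha)\widehat X(\alpha)P_{H\alpha}.
\]
Expanding $\norm{\Phi_{h,r}(X)}_2^2=\tau_{2^D}\bigl(\Phi_{h,r}(X)^\dagger\Phi_{h,r}(X)\bigr)$ and applying \cref{fact:pauli-orthonormal-basis} on $D$ qubits gives
\[
\norm{\Phi_{h,r}(X)}_2^2=\sum_{\alpha,\beta\in G^n}\overline{\widehat X(\alpha)}\widehat X(\beta)\,\chi_r(\alpha)\chi_r(\beta)\,\mathbf 1_{\{H\alpha=H\beta\}}.
\]
We use that $\chi_r(\alpha)$ is real ($\pm1$) by \cref{fact:character-orthogonality}. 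The hash $h$ is fixed, so the only randomness is in $r$.

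Next take $\mathbb E_r$. Since $\chi_r(\alpha)\chi_r(\beta)=\chi_r(\alpha+\beta)=\chi_{\alpha+\beta}(r)$, \cref{fact:character-orthogonality} on $G^n\cong\mathbb F_2^{2n}$ gives
\[
\mathbb E_r\chi_{\alpha+\beta}(r)=\mathbf 1_{\{\alpha=\beta\}}.
\]
All off-diagonal terms therefore vanish. This is exactly where the random signs cancel the collisions $H\alpha=H\beta$ with $\alpha\neq\beta$. The diagonal terms automatically satisfy $H\alpha=H\alpha$, so what remains is $\sum_\alpha|\widehat X(\alpha)|^2$. By Pauli Parseval (\cref{prop:scalar-pauli-identities}) on $n$ qubits, this equals $\norm X_2^2$.

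There is no real obstacle here. The only points to check are that the identity $\chi_r(\alpha)\chi_r(\beta)=\chi_{\alpha+\beta}(r)$ holds, which is the symmetry of the standard pairing already used in \cref{lem:pauli-four-point}, and that the normalized traces on the two sides are taken on the correct spaces.
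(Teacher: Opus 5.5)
Your proof is correct and follows essentially the same route as the paper: expand $X$ in the Pauli basis, use character orthogonality in $r$ to remove the cross terms, and finish with Pauli Parseval. The only cosmetic difference is the order of steps: you apply Pauli orthonormality on $D$ qubits first, obtaining the indicator $\mathbf 1_{\{H\alpha=H\beta\}}$, and then average over $r$, whereas the paper averages first and then uses $\tau_{2^D}(P_{H\alpha}^\dagger P_{H\alpha})=1$.
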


\begin{proof}
Expand $X=\sum_\alpha\widehat X(\alpha)P_\alpha$.
Character orthogonality gives
$\mathbb E_r[\overline{\chi_r(\alpha)}\chi_r(\beta)]
=\mathbf1_{\{\alpha=\beta\}}$.
Thus, on expanding the squared norm, all cross terms vanish and
\begin{align*}
\mathbb E_r\norm{\Phi_{h,r}(X)}_2^2
&=\sum_\alpha|\widehat X(\alpha)|^2
\tau_{2^D}\!\left(P_{H\alpha}^\dagger P_{H\alpha}\right)\\
&=\sum_\alpha|\widehat X(\alpha)|^2
=\norm X_2^2.
\end{align*}
Here $P_{H\alpha}^\dagger P_{H\alpha}=\id$, and the last equality is
Pauli Parseval from \cref{prop:scalar-pauli-identities}.
\end{proof}

\subsubsection{Approximate Multiplicativity}

To prove \cref{thm:folded-block-output}, we first bound the failure of
$\Phi_{h,r}$ to preserve products.

\begin{lemma}
  \label{lem:multiplication-error}
  There are absolute constants $C_0,C_{\mathrm{mul}}>0$ with the following property.  Let
  $n,k,D\geq1$.  Choose $h\colon[n]\to[D]$ uniformly from all functions
  and $r\in(\mathbb F_2^2)^n$ uniformly, independently, and let $\Phi_{h,r}$ be
  the randomized map in \eqref{eq:hashed-pauli-map}.  If $A,B\in \Matrix_2^{\otimes n}$ have Pauli
  degree at most $k$ and $D\geq C_0k^{64}$, then
  \begin{equation*}
    \mathbb E_{h,r}
    \norm{
      \Phi_{h,r}(AB)-\Phi_{h,r}(A)\Phi_{h,r}(B)
    }_2^2
    \leq
    \frac{C_{\mathrm{mul}}k^{64}}{D}\norm{A}_4^2\norm{B}_4^2.
  \end{equation*}
  Consequently, for every degree-$k$ operator $Q\in \Matrix_2^{\otimes n}$,
  \begin{equation*}
    \mathbb E_{h,r}
    \norm{
      \Phi_{h,r}(Q^\dagger Q)
      -\Phi_{h,r}(Q)^\dagger\Phi_{h,r}(Q)
    }_2^2
    \leq
    \frac{C_{\mathrm{mul}}k^{64}}D\norm{Q}_4^4.
  \end{equation*}
\end{lemma}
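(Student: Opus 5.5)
The plan is to compute the multiplication defect exactly in the Pauli basis, average out the random signs $r$, and then control what remains using only the collision structure of the hash $h$.

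\textbf{Step 1: exact form of the defect.} Since $\chi_r$ is a character and $H$ is linear,
\[
\Phi_{h,r}(P_\alpha)\Phi_{h,r}(P_\beta)=\chi_r(\alpha+\beta)\,\kappa_D(H\alpha,H\beta)\,P_{H(\alpha+\beta)},
\qquad
\Phi_{h,r}(P_\alpha P_\beta)=\chi_r(\alpha+\beta)\,\kappa_n(\alpha,\beta)\,P_{H(\alpha+\beta)} .
\]
Hence the error operator is
\[
E_{h,r}=\sum_{\alpha,\beta}\widehat A(\alpha)\widehat B(\beta)\,\chi_r(\alpha+\beta)\,\kappa_n(\alpha,\beta)\bigl(\varepsilon_h(\alpha,\beta)-1\bigr)P_{H(\alpha+\beta)},
\qquad
\varepsilon_h:=\frac{\kappa_D(H\alpha,H\beta)}{\kappa_n(\alpha,\beta)} .
\]
Inside each bucket $h^{-1}(j)$, reorder the product $\prod_i P_{\alpha_i}\prod_i P_{\beta_i}$ into $\prod_i P_{\alpha_i}P_{\beta_i}$. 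By the cocycle identity (\cref{prop:pauli-cocycle-identities}) this gives
\[
\varepsilon_h(\alpha,\beta)=\prod_{i<i',\,h(i)=h(i')}(-1)^{[\beta_i,\alpha_{i'}]}
=\prod_{i<i',\,h(i)=h(i')}\chi_{\vartheta(\beta_i)}(\alpha_{i'}) .
\]
So the phase error comes only from colliding pairs. Taking $\mathbb E_r$ kills all cross terms between different $\gamma=\alpha+\beta$, as in \cref{prop:folded-norm-expectation}, even when their hashed labels coincide. We are left with
\[
\mathbb E_r\norm{E_{h,r}}_2^2=\sum_\gamma\Bigl|\sum_{\alpha+\beta=\gamma}\widehat A(\alpha)\widehat B(\beta)\kappa_n(\alpha,\beta)\bigl(\varepsilon_h-1\bigr)\Bigr|^2 .
\]
This is exactly $\norm{\cdot}_2^2$ of an $n$-qubit operator obtained by inserting the multiplier $\varepsilon_h-1$ into the product $AB$.

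\textbf{Step 2: expand by collision patterns.} Write $\varepsilon_h-1=\prod_{e}(1+(c_e-1))-1$, with the product over colliding pairs $e=(i,i')$ and $c_e=\chi_{\vartheta(\beta_i)}(\alpha_{i'})$. Expanding gives a sum over nonempty edge sets $F$ of $\prod_{e\in F}(c_e-1)$, weighted by $\mathbf 1[\text{all pairs in }F\text{ collide}]$. To avoid overcounting through transitivity I will organise this by Möbius inversion over the partition of $[n]$ induced by $h$; the probability that a given pattern on $m$ coordinates collides is at most $D^{-(m-1)}$.

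For one pair $(i,i')$, decompose $B$ along qubit $i$ as $B=\sum_s B_{i,s}\otimes P_s$. The factor $c_e-1$ then becomes $(U_sAU_s^\dagger-A)$, with $U_s$ the Pauli on qubit $i'$ from \cref{lem:pauli-multiplier-implementation}. This term is nonzero only when $\alpha_{i'}\neq0$ and $s\neq0$, so it is a product of a discrete derivative of $A$ at $i'$ and one of $B$ at $i$. Higher patterns give products of several such conjugated derivatives. The key point is that the operator form preserves the Fourier cancellations that a coefficientwise collision bound would destroy.

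\textbf{Step 3: norm bounds.} Bound each pattern term in $\norm{\cdot}_2$ by Schatten H\"older as $\norm{\partial A}_4\norm{\partial B}_4$. Then sum over coordinates via Cauchy--Schwarz, weighted by the collision probability $1/D$ per pair. This requires a derivative bound for degree-$k$ operators of the form
\[
\sum_i\norm{\partial_iA}_4^2\le \poly(k)\norm A_4^2 ,
\]
together with its iterated versions for several coordinates. This is the main obstacle. I would try to obtain it by writing $\partial_i$ through Pauli-conjugation averages, and by using the low-degree Schatten approximation theory from the previous step (or a Bernstein-type inequality for the semigroup $\mathcal T_t$ restricted to $E_{\le k}$) in place of hypercontractivity. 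The cost is polynomial in $k$ rather than exponential, and I expect the exponent to degrade to something like $k^{64}$.

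The patterns with $m\ge2$ extra collisions contribute $(\poly(k)/D)^m$. When $D\ge C_0k^{64}$ this geometric series is dominated by its first term, which gives the bound $C_{\mathrm{mul}}k^{64}D^{-1}\norm A_4^2\norm B_4^2$.

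\textbf{Step 4: the second claim.} Take $A=Q^\dagger$ and $B=Q$; both have degree at most $k$. Since $\chi_r$ is real and the Pauli strings are Hermitian, $\Phi_{h,r}(Q^\dagger)=\Phi_{h,r}(Q)^\dagger$, and $\norm{Q^\dagger}_4=\norm Q_4$. Substituting into the first claim gives the consequence.
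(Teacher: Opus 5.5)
There is a genuine gap in Step~3. The inequality you single out as the main obstacle, $\sum_i\norm{\partial_iA}_4^2\le\poly(k)\norm A_4^2$ for degree-$k$ operators, is false, already for diagonal operators, and its iterated versions fail in the same way.

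Here is a counterexample. Take $m=2^k$, $n=km$, blocks $S_1,\dots,S_m$ of size $k$, and $A=\sum_{b=1}^m\prod_{j\in S_b}\tfrac12(\id+Z_j)$, where $Z_j$ and $X_i$ denote $\pauliZ$ and $\pauliX$ on the indicated qubit. On a uniformly random computational basis string, the eigenvalue of $A$ is $N\sim\mathrm{Bin}(2^k,2^{-k})$, so $\norm A_4^4=\mathbb E N^4\le15$. For $i\in S_b$, however, $\partial_iA=\tfrac12(A-X_iAX_i)=\tfrac12Z_i\prod_{j\in S_b\setminus\{i\}}\tfrac12(\id+Z_j)$ has $\norm{\partial_iA}_4^2=\tfrac14 2^{-(k-1)/2}$. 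Summing over the $k2^k$ qubits gives $\sum_i\norm{\partial_iA}_4^2=\tfrac{\sqrt2}{4}k2^{k/2}$. Now let $B$ be the same operator written with $\pauliX$'s. Your termwise bound $D^{-1}\sum_{i\ne i'}\norm{\partial_{i'}A}_4^2\norm{\partial_iB}_4^2$ is then of order $k^22^k/D$, while $\norm A_4,\norm B_4=O(1)$. So applying Schatten H\"older to each collision term separately loses exactly the exponential factor the lemma is designed to avoid. A Bernstein inequality for $\mathcal T_t$ on $E_{\le k}$ does not help: it controls $\norm{\cL A}_4$, which is only $O(k)$ here, not a sum of coordinatewise $4$-norms.

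The paper never takes the norm of a single-coordinate term.
\begin{enumerate}
\item It polarizes $\norm{m_0-m_h}_{\ell_2}^2=\Lambda_{0,0}-\Lambda_{0,h}-\Lambda_{h,0}+\Lambda_{h,h}$. Each piece is then one Fourier sum weighted by a product of fiber phases, which also removes the cross terms between patterns that $\mathbb E_h\norm{E_h}_2^2$ would produce.
\item M\"obius inversion over partitions evaluates $\mathbb E_h$ exactly, and the singleton partition cancels $\Lambda_{0,0}$.
\item For a fixed pattern, the cumulant product is expanded in characters of $\mathcal H_4^v$. The coordinate sum is then the mixed derivative at $0$ of $F(z)=\mathbb E_\varepsilon\tau_{2^n}\bigl((A_3^{(\varepsilon)}A_4^{(\varepsilon)})^\dagger A_1^{(\varepsilon)}A_2^{(\varepsilon)}\bigr)$, where the $A_j^{(\varepsilon)}$ are global random Pauli conjugates.
\item H\"older is applied once per trace, giving $\sup_{[0,1]^v}|F|\le\prod_j\norm{A_j}_4$. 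Markov's inequality, with degree at most $4k$ per variable, turns this into $(32k^2)^v$ with no dependence on $n$. A second M\"obius inversion passes from all coordinate tuples to distinct ones.
\end{enumerate}

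Separately, your phase formula in Step~1 is wrong for the map as defined. $\Phi_{h,r}(P_\alpha)$ is the canonical string $P_{H\alpha}$, not the ordered bucket product of the $P_{\alpha_i}$. So $\kappa_D(H\alpha,H\beta)/\kappa_n(\alpha,\beta)$ contains a coboundary phase in addition to your commutation sign. For one bucket with $\alpha=((1,0),(1,0))$ and $\beta=((0,1),(0,0))$, the ratio is $\kappa(0,(0,1))/\kappa((1,0),(0,1))=\iu$, whereas your product gives $-1$. The bucket phases are therefore $\{\pm1,\pm\iu\}$-valued functions of all labels in the bucket, not products of pairwise characters. This is why the paper treats them as general unimodular functions and expands them in characters. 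Step~4 is correct and matches the paper.
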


To prove \cref{lem:multiplication-error},
we now rewrite the squared multiplication error using the Pauli cocycle
introduced in \cref{def:pauli-cocycle} and four input operators.
Let
\[
A=\sum_\alpha \widehat A(\alpha)P_\alpha,
\qquad
B=\sum_\beta \widehat B(\beta)P_\beta.
\]
The Pauli coefficient vector of $AB$ is
\begin{equation}
\label{eq:pauli-product-coefficients}
m_0(A,B)_\gamma
:=
\sum_{\alpha+\beta=\gamma}
\widehat A(\alpha)\widehat B(\beta)\,\kappa_n(\alpha,\beta).
\end{equation}
After hashing, we multiply Pauli matrices on $D$ qubits, so
the corresponding coefficient convolution is
\begin{equation}
\label{eq:hashed-product-coefficients}
m_h(A,B)_\gamma
:=
\sum_{\alpha+\beta=\gamma}
\widehat A(\alpha)\widehat B(\beta)\,\kappa_D(H\alpha,H\beta).
\end{equation}
In \eqref{eq:pauli-product-coefficients} and \eqref{eq:hashed-product-coefficients}, $\gamma\in G^n$; the corresponding output Pauli string
is $P_{H\gamma}$.

Fix a seed $(h,r)$ and abbreviate $\Phi=\Phi_{h,r}$.  Since $\chi_r$ is a
character and $H$ is linear,
\begin{align*}
\Phi(AB)
&=
\sum_\gamma \chi_r(\gamma)m_0(A,B)_\gamma P_{H\gamma},
\notag\\
\Phi(A)\Phi(B)
&=
\sum_{\alpha,\beta}
\widehat A(\alpha)\widehat B(\beta)
\chi_r(\alpha)\chi_r(\beta)
\kappa_D(H\alpha,H\beta)P_{H(\alpha+\beta)}
\notag\\
&=
\sum_\gamma \chi_r(\gamma)m_h(A,B)_\gamma P_{H\gamma}.
\end{align*}
Therefore the failure of $\Phi$ to be multiplicative is exactly
\begin{equation*}
\Phi(AB)-\Phi(A)\Phi(B)
=
\sum_\gamma
\chi_r(\gamma)
\bigl(m_0(A,B)_\gamma-m_h(A,B)_\gamma\bigr)P_{H\gamma}.
\end{equation*}
For fixed $h$, expand the squared norm of the preceding expression.
Character orthogonality in
the random variable $\bfr$ removes the terms with distinct $\gamma$, and hence
\begin{equation*}
\mathbb E_r
\norm{\Phi(AB)-\Phi(A)\Phi(B)}_2^2
=
\norm{m_0(A,B)-m_h(A,B)}_{\ell_2}^2.
\end{equation*}

For $\xi,\zeta\in\{0,h\}$ and
$A_1,A_2,A_3,A_4\in \Matrix_2^{\otimes n}$, define
\begin{equation}
\Lambda_{\xi,\zeta}(A_1,A_2,A_3,A_4)
:=
\sum_\gamma
m_\xi(A_1,A_2)_\gamma
\overline{m_\zeta(A_3,A_4)_\gamma}.
\label{eq:four-input-form-explicit}
\end{equation}
Here $\xi\in\{0,h\}$ specifies whether the product of $(A_1,A_2)$ uses
$\kappa_n(\alpha,\beta)$ or $\kappa_D(H\alpha,H\beta)$, respectively;
$\zeta$ specifies the same choice for $(A_3,A_4)$.  In particular,
\begin{equation*}
\Lambda_{\xi,\xi}(A,B,A,B)=\norm{m_\xi(A,B)}_{\ell_2}^2.
\end{equation*}
By \eqref{eq:four-input-form-explicit}, the squared $\ell_2$ norm above
has the exact decomposition
\begin{align*}
\mathbb E_r
\norm{\Phi(AB)-\Phi(A)\Phi(B)}_2^2
={}&
\Lambda_{0,0}(A,B,A,B)
-\Lambda_{0,h}(A,B,A,B)
\notag\\
&
-\Lambda_{h,0}(A,B,A,B)
+\Lambda_{h,h}(A,B,A,B).
\end{align*}
The following lemma bounds the expected difference between each
hash-dependent term and $\Lambda_{0,0}$.

\begin{restatable}{lemma}{lemfoldedproductestimate}
\label{lem:folded-product-estimate}
There are absolute constants $C_0,C_1>0$ with the following property.  Let
$n,k,D\geq1$, let
$A_1,A_2,A_3,A_4\in \Matrix_2^{\otimes n}$ have Pauli degree at most $k$, and
choose $h\colon[n]\to[D]$ uniformly from all functions.  Let
$m_0,m_h$ and $\Lambda_{\xi,\zeta}$ be as in
\eqref{eq:pauli-product-coefficients}, \eqref{eq:hashed-product-coefficients},
and \eqref{eq:four-input-form-explicit}.
If $D\geq C_0k^{64}$, then
\begin{equation*}
\abs{
\mathbb E_h\Lambda_{\xi,\zeta}(A_1,A_2,A_3,A_4)
-
\Lambda_{0,0}(A_1,A_2,A_3,A_4)
}
\le
\frac{C_1k^{64}}D
\prod_{j=1}^4\norm{A_j}_4
\end{equation*}
for every $\xi,\zeta\in\{0,h\}$.
\end{restatable}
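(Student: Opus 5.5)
We will write $\Lambda_{\xi,\zeta}$ as a sum over quadruples of Pauli labels and isolate the dependence on $h$ in a phase ratio. Expanding \eqref{eq:four-input-form-explicit} gives
\[
\Lambda_{\xi,\zeta}(A_1,\dots,A_4)=\sum_{\alpha_1+\alpha_2=\alpha_3+\alpha_4}
\widehat{A_1}(\alpha_1)\widehat{A_2}(\alpha_2)\overline{\widehat{A_3}(\alpha_3)\widehat{A_4}(\alpha_4)}\;
\kappa^{\xi}(\alpha_1,\alpha_2)\overline{\kappa^{\zeta}(\alpha_3,\alpha_4)},
\]
where $\kappa^0=\kappa_n$ and $\kappa^h=\kappa_D(H\cdot,H\cdot)$. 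The constraint is exactly that the coordinate tuple $(\alpha_{1,i},\alpha_{2,i},\alpha_{3,i},\alpha_{4,i})$ lies in $\mathcal H_4$ for every $i\in[n]$. This is why $\mathcal H_4^v$ was introduced. The hashed phase $\kappa_D(H\alpha,H\beta)=\prod_j\kappa(\sum_{h(i)=j}\alpha_i,\sum_{h(i)=j}\beta_i)$ differs from $\kappa_n(\alpha,\beta)$ only on output coordinates $j$ where at least two input coordinates of $\supp(\alpha_1)\cup\cdots\cup\supp(\alpha_4)$ collide. If $h$ is injective on the union $S$ of the four supports, then $|S|\le 4k$ and, by the cocycle identities of \cref{prop:pauli-cocycle-identities}, the hashed phase coincides with the unhashed one.

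Next we average over $h$. For fixed labels, the random phase depends only on the partition $\pi$ of $S$ induced by the fibres of $h$. Write $\mathbb E_h[\cdot]=\sum_\pi \Pr[h\text{ induces }\pi]\,\phi_\pi(\alpha)$, and use M\"obius inversion on the partition lattice to express $\Pr[\pi]$ through the events ``$h$ is constant on each block of $\sigma$'', for $\sigma\ge\pi$. Each such event has probability $D^{-(|S|-|\sigma|)}$. The trivial partition reproduces $\Lambda_{0,0}$, and every nontrivial merge costs at least a factor $1/D$. This is the source of the leading $1/D$.

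For a fixed merge pattern, the correction phase $\phi_\pi\overline{\phi_{\mathrm{triv}}}$ is a function of the local tuples in $\mathcal H_4^v$ at the merged coordinates. We expand it in the characters of $\widehat{\mathcal H_4}$ on those coordinates. By \cref{lem:pauli-multiplier-implementation}, each character acts on every $A_j$ as conjugation by a Pauli unitary $U_\rho$. The contribution of a merge pattern therefore becomes a sum, over which coordinates are merged, of expressions of the form $\tau(\widetilde A_1\widetilde A_2\widetilde A_3^\dagger\widetilde A_4^\dagger)$-type traces, where each $\widetilde A_j$ is $A_j$ twisted by Pauli conjugations and restricted to derivative-like components on the merged coordinates. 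The restriction means we take $A_j-\Pi_i A_j$ or a character-weighted average of it.

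We sum over the merged coordinates (pairs, triples, and so on) without going coefficient by coefficient, since that would lose the Fourier cancellations. Instead we control the sums with Schatten H\"older $\prod\|\cdot\|_4$ and a polynomial influence bound: for degree $\le k$ operators, $\sum_i\|\partial_iA\|_4^2\lesssim\poly(k)\|A\|_4^2$. We obtain this by transferring through $J$ (\cref{lem:pauli-four-point}) to vector-valued functions on $G^n$ and using a Bernstein/hypercontractivity-free derivative bound in the $K$-convex space $X_{n,d_0}$ (Mendel--Naor type), which gives polynomial dependence on $k$. Merge patterns with $m$ merged coordinates contribute $O(\poly(k)^m D^{-(m-1)})$ after these bounds. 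The number of block structures is at most $(4k)^{O(m)}$, so under $D\ge C_0k^{64}$ the series in $m$ is geometric and dominated by its first term $C_1k^{64}/D$. We take $\xi,\zeta$ symmetrically; the mixed cases need only one side's phase to be hashed.

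\textbf{Main obstacle.} The hardest step is the middle one: organizing the M\"obius-inverted collision terms so that each is a genuine operator trace amenable to Schatten H\"older, rather than a sum of absolute values of coefficients. The danger is that the collision phase does not factor through individual operators. We will first try to represent it as a product of single-coordinate characters of $\mathcal H_4$ on each merged block and absorb those into conjugations, accepting a factor $64^{\text{block size}}$ that the $D^{-1}$ per merge should beat. Tuning the exponent to reach $k^{64}$ is bookkeeping.
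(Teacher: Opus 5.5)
Your opening steps are sound and match the paper in substance. The hashed-to-unhashed cocycle ratio factors over the fibres of $h$ and is trivial on fibres with at most one active coordinate. M\"obius-inverting the collision probabilities, $\Pr[\ker(h|_S)=\pi]=\sum_{\sigma\ge\pi}\mu(\pi,\sigma)D^{-r(\sigma)}$, is equivalent to the paper's cumulant expansion $\mathbb E_h\prod_B\Theta_B=\sum_\sigma D^{-r(\sigma)}\prod_{C\in\sigma}K_C$.

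The gap is in the step that must make the estimate independent of $n$ and polynomial in $k$: summing over the placements of the merged coordinates. You propose to apply Schatten H\"older for each placement and then sum using $\sum_i\norm{\partial_iA}_4^2\lesssim\poly(k)\norm{A}_4^2$. That inequality is false, already for diagonal operators. Take $N=2^\ell$, disjoint sets $B_1,\dots,B_N$ of $\ell$ bits, and $f(x,y)=\sum_{j=1}^Ny_j\prod_{i\in B_j}\frac{1+x_i}{2}$, which has degree $k=\ell+1$. Then $\norm{f}_4^4=1+3(1-1/N)\le4$, but $\norm{\partial_{y_j}f}_4^2=N^{-1/2}$, so $\sum_j\norm{\partial_{y_j}f}_4^2\ge\sqrt N=2^{(k-1)/2}$. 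The loss is genuinely exponential in $k$.

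Worse, bounding each placement in absolute value before summing destroys cancellation even at degree one. A term in which only one input is differentiated at coordinate $i$ produces an $\ell_1$-sum such as $\sum_i\norm{\pauliX_iA\pauliX_i-A}_4=2\sqrt n$ for $A=n^{-1/2}\sum_i\pauliZ_i$, although $\norm{A}_4$ is bounded. Restricting each input to $A_j-\Pi_iA_j$ also mis-encodes activity. Coordinate $i$ is active when the tuple $\boldsymbol\alpha_i\in\mathcal H_4$ is nonzero, not when each of the four labels is nonzero.

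The missing idea is to keep the whole coordinate sum inside a single trace. In the paper, the cumulant product vanishes whenever a merged coordinate is inactive. It can therefore be expanded using $\delta_s(t)=\frac1{64}\sum_{\varphi}\varphi(s)(\varphi(t)-1)$ for $s\neq0$, which lets the placements range over all injections $[v]\hookrightarrow[n]$ rather than the $\boldsymbol\alpha$-dependent active set. A second M\"obius inversion over $\operatorname{Part}([v])$, with total absolute mass $v!$, then trades pairwise distinct indices for unrestricted ones. Your proposal does not address distinctness at all.

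The resulting terms have the form $\sum_{i_1,\dots,i_v}\sum_{\boldsymbol\alpha}c(\boldsymbol\alpha)\prod_a(\varphi_a(\boldsymbol\alpha_{i_a})-1)=\partial_{z_1}\cdots\partial_{z_v}F(0)$, where $F(z)=\sum_{\boldsymbol\alpha}c(\boldsymbol\alpha)\prod_{a,i}(1-z_a+z_a\varphi_a(\boldsymbol\alpha_i))$. Three facts then finish the argument:
\begin{itemize}
\item For $z\in[0,1]^v$, $F(z)$ is an average over independent Bernoulli$(z_a)$ selections of $\tau((A_3'A_4')^\dagger A_1'A_2')$ with Pauli-conjugated inputs, so $\abs{F}\le\prod_j\norm{A_j}_4$.
\item $F$ has degree at most $4k$ in each variable, because inactive coordinates contribute constant factors.
\item The Markov brothers' inequality then gives $(32k^2)^v\prod_j\norm{A_j}_4$, with no dependence on $n$.
\end{itemize}

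A smaller bookkeeping point: $v$ merged coordinates in $q$ blocks cost $D^{-(v-q)}$, not $D^{-(v-1)}$. The correct constraint $r+1\le v\le 2r$ still makes the series in the rank geometric.
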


\paragraph{Proof outline of~\cref{lem:folded-product-estimate}.}
The complete proof appears in \cref{app:folded-product-estimate}.
Expanding $\Lambda_{\xi,\zeta}$ involves four Pauli labels satisfying
$\alpha_1+\cdots+\alpha_4=0$.  Hashing changes the cocycle phases in this
expansion.  A tempting approach is to bound this change separately for
each label tuple and then apply the triangle inequality.  Since the
cocycle phases have modulus one, this leaves the Fourier coefficient sum
\begin{equation*}
\sum_{\alpha_1+\cdots+\alpha_4=0}
\prod_{j=1}^4\abs{\widehat A_j(\alpha_j)}.
\end{equation*}
Estimating this quantity by the size of the Fourier support introduces a
dependence on $n$, while the standard scalar hypercontractive estimate
incurs an exponential loss in $k$.  Neither bound has the dependence
required by the lemma.

The proof therefore keeps the cocycle phases and Fourier coefficients
together.  Without hashing, their sum is a trace expression:
\begin{equation*}
\abs{\Lambda_{0,0}(A_1,A_2,A_3,A_4)}
=\abs{\tau_{2^n}\!\left((A_3A_4)^\dagger A_1A_2\right)}
\le
\prod_{j=1}^4\norm{A_j}_4,
\end{equation*}
by Schatten H\"older.

First, the ratio between the hashed and source cocycle phases factors
over the hash fibers, namely the sets of coordinates sent to the same
bucket.  A first \mobius{} inversion rewrites the product of fiber phases
as a sum over partitions of the active coordinates, those on which at
least one of the four Pauli labels is nonzero.  The rank of a
partition is $r=\sum_B(|B|-1)$, where $B$ ranges over its blocks.
Requiring the hash to be constant on each block has probability $D^{-r}$.
In the full Fourier sum, the all-singleton contribution is exactly
$\Lambda_{0,0}$ and disappears upon subtraction.  We group the remaining terms by rank and
separate each partition's non-singleton block structure from its placement
among the $n$ coordinates.  At rank $r$, this structure involves at most
$2r$ coordinates.

For each fixed structure, we expand its dependence on the local Pauli
labels into characters.  Each character factors into one Pauli-label
character per input and is therefore implemented by Pauli conjugations.
These preserve the Schatten norms in the trace bound
above.  An auxiliary multivariate polynomial averages such character
multipliers and is therefore uniformly bounded by
$\prod_{j=1}^4\norm{A_j}_4$ on the unit cube.  Its mixed derivatives
encode sums over choices of coordinates.  Since its degree is controlled
by $k$, the Markov brothers' inequality bounds these derivatives without
introducing a dependence on $n$.  The derivative sums allow repeated
coordinates; a second \mobius{} inversion extracts the sums over pairwise
distinct coordinates required by the partition placements.

Finally, we combine this analytic estimate with the number and sizes of
the collision patterns at each rank.  The resulting bounds grow only
polynomially in $k$ per unit of rank.  Together with the factor $D^{-r}$,
they form a geometric series when $D\ge C_0k^{64}$, proving the claimed
estimate.

\subsubsection{Completing the Multiplication-Error Estimate}

\begin{proof}[Proof of \cref{lem:multiplication-error}]
Average the four-term decomposition above over $h$ and
add and subtract two copies of $\Lambda_{0,0}(A,B,A,B)$.  The result is
\begin{align*}
&\mathbb E_{h,r}
\norm{\Phi_{h,r}(AB)-\Phi_{h,r}(A)\Phi_{h,r}(B)}_2^2
\notag\\
&\quad=
\Bigl[
\Lambda_{0,0}(A,B,A,B)-\mathbb E_h\Lambda_{0,h}(A,B,A,B)
\Bigr]
\notag\\
&\qquad
+\Bigl[
\Lambda_{0,0}(A,B,A,B)-\mathbb E_h\Lambda_{h,0}(A,B,A,B)
\Bigr]
\notag\\
&\qquad
+\Bigl[
\mathbb E_h\Lambda_{h,h}(A,B,A,B)-\Lambda_{0,0}(A,B,A,B)
\Bigr].
\end{align*}
Apply \cref{lem:folded-product-estimate} to the three parentheses and add
the resulting bounds.  Since the four inputs are $A,B,A,B$, the triangle
inequality yields
\[
\mathbb E_{h,r}
\norm{\Phi_{h,r}(AB)-\Phi_{h,r}(A)\Phi_{h,r}(B)}_2^2
\le
\frac{3C_1k^{64}}D\norm A_4^2\norm B_4^2,
\]
which proves the first assertion with $C_{\mathrm{mul}}=3C_1$.
Finally set $A=Q^\dagger$ and
$B=Q$.  The identities
$\Phi_{h,r}(Q^\dagger)=\Phi_{h,r}(Q)^\dagger$ and
$\norm{Q^\dagger}_4=\norm Q_4$ prove the second assertion.
\end{proof}


\subsubsection{Normalization Error}
Use the input assumptions and notation of \cref{thm:folded-block-output}.
Fix a question $x\in\mathcal X$ for Alice and suppress the superscript $x$
in $Q^x$ and its entries, writing $Q=\sum_a\ket a\otimes Q_a$.
Write $\Phi=\Phi_{h,r}$; all expectations in this subsection are over the
random seed $(\mathbf{h},\bfr)$, with $x$ fixed.  Since $\Phi(\id)=\id$,
\begin{equation*}
\Phi(Q)^\dagger\Phi(Q)-\id
=\Phi(Q^\dagger Q-\id)
+\sum_a\bigl(\Phi(Q_a)^\dagger\Phi(Q_a)-\Phi(Q_a^\dagger Q_a)\bigr).
\end{equation*}
The triangle inequality and Cauchy--Schwarz give
\begin{align*}
&\norm{\sum_a\bigl(\Phi(Q_a)^\dagger\Phi(Q_a)
-\Phi(Q_a^\dagger Q_a)\bigr)}_2^2\\
&\qquad\le t\sum_a\norm{\Phi(Q_a)^\dagger\Phi(Q_a)
-\Phi(Q_a^\dagger Q_a)}_2^2,
\end{align*}
since there are at most $t$ outcomes.  Together with
\cref{prop:folded-norm-expectation,lem:multiplication-error}, this implies
\begin{align*}
\mathbb E\norm{\Phi(Q)^\dagger\Phi(Q)-\id}_2^2
&\le2\norm{Q^\dagger Q-\id}_2^2\\
&\quad+2t\sum_a\mathbb E
\norm{\Phi(Q_a)^\dagger\Phi(Q_a)-\Phi(Q_a^\dagger Q_a)}_2^2\\
&\le2\norm{Q^\dagger Q-\id}_2^2
+\frac{2C_{\mathrm{mul}}tk^{64}}D\sum_a\norm{Q_a}_4^4.
\end{align*}
Since each $Q_a^\dagger Q_a\geq0$,
\begin{equation*}
\sum_a\norm{Q_a}_4^4
=\sum_a\tau_{2^n}\!\left((Q_a^\dagger Q_a)^2\right)
\le\tau_{2^n}\!\left(\left(\sum_aQ_a^\dagger Q_a\right)^2\right)
=\norm{Q^\dagger Q}_2^2.
\end{equation*}
The assumed bound $\norm{Q^\dagger Q-\id}_2\le\delta$
implies $\norm{Q^\dagger Q}_2\le1+\delta$, so
\begin{equation*}
\grayhighlight{
\begin{aligned}
\mathbb E\norm{\Phi(Q)^\dagger\Phi(Q)-\id}_2^2
&\le 2\delta^2
+\frac{2C_{\mathrm{mul}}tk^{64}}D
(1+\delta)^2\\
&\le 2\delta^2+\frac{8C_{\mathrm{mul}}tk^{64}}D.
\end{aligned}}
\end{equation*}
The same estimate holds for every Bob block $R^y$, $y\in\mathcal Y$.

\subsubsection{Game Value Error}

\begin{lemma}
\label{lem:folded-correlation-error}
Let $\psi_{\bll}$ be a canonical state, let $n,k,D\ge1$ be integers,
and let $X,Y\in \Matrix_2^{\otimes n}$ have Pauli degree at most $2k$.
Choose $h\colon[n]\to[D]$ uniformly from all functions and
$r\in G^n$ uniformly, independently, and let $\Phi_{h,r}$ be the
randomized map in \eqref{eq:hashed-pauli-map}.  Then
\begin{equation*}
\abs{\mathbb E_{h,r}\Corr_{\psi,D}
\bigl(\Phi_{h,r}(X),\Phi_{h,r}(Y)\bigr)-\Corr_{\psi,n}(X,Y)}
\le\frac{4k^2}D\norm X_2\norm Y_2.
\end{equation*}
\end{lemma}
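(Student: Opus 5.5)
The plan is a direct Pauli-coefficient computation. Averaging over the random signs $r$ first will reduce the correlation to a diagonal sum. The only remaining discrepancy is a per-label factor that differs from $1$ only when the hash collides on the support of a label.

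\emph{Averaging over $r$.} Expand $X=\sum_\alpha\widehat X(\alpha)P_\alpha$ and $Y=\sum_\beta\widehat Y(\beta)P_\beta$. By \eqref{eq:hashed-pauli-map}, $\Phi_{h,r}(X)=\sum_\alpha\chi_r(\alpha)\widehat X(\alpha)P_{H\alpha}$, and similarly for $Y$. Both $\Corr_{\psi,D}$ and $\Corr_{\psi,n}$ are bilinear, and by \cref{prop:pauli-canonicalization}(2) we have
\[
\Corr_{\psi,D}(P_{H\alpha},P_{H\beta})=\one_{\{H\alpha=H\beta\}}w_{\boldsymbol\lambda}(H\alpha).
\]
This gives
\[
\Corr_{\psi,D}\bigl(\Phi_{h,r}(X),\Phi_{h,r}(Y)\bigr)
=\sum_{\alpha,\beta}\chi_r(\alpha)\chi_r(\beta)\widehat X(\alpha)\widehat Y(\beta)\,\one_{\{H\alpha=H\beta\}}w_{\boldsymbol\lambda}(H\alpha).
\]
The characters are $\pm1$-valued. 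Since $\chi_r(\alpha)\chi_r(\beta)=\chi_r(\alpha+\beta)$, \cref{fact:character-orthogonality} gives $\mathbb E_r\chi_r(\alpha)\chi_r(\beta)=\one_{\{\alpha=\beta\}}$. For fixed $h$ this yields
\[
\mathbb E_r\Corr_{\psi,D}\bigl(\Phi_{h,r}(X),\Phi_{h,r}(Y)\bigr)=\sum_\alpha\widehat X(\alpha)\widehat Y(\alpha)\,w_{\boldsymbol\lambda}(H\alpha).
\]
Subtracting the Pauli formula for $\Corr_{\psi,n}(X,Y)$ in \cref{def:correlation-functional} leaves
\[
\sum_\alpha\widehat X(\alpha)\widehat Y(\alpha)\bigl(\mathbb E_h w_{\boldsymbol\lambda}(H\alpha)-w_{\boldsymbol\lambda}(\alpha)\bigr).
\]

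\emph{Collision bound.} Fix $\alpha$ with $|\alpha|\le 2k$; only such labels appear, because both operators have Pauli degree at most $2k$. Suppose $h$ is injective on $\supp(\alpha)$. Then each bucket $j$ receives at most one nonzero $\alpha_i$, so $(H\alpha)_j$ equals that $\alpha_i$ or $0$. Since $\lambda_0=1$, this gives $w_{\boldsymbol\lambda}(H\alpha)=\prod_i\lambda_{\alpha_i}=w_{\boldsymbol\lambda}(\alpha)$.

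Otherwise, we still have $|w_{\boldsymbol\lambda}(H\alpha)-w_{\boldsymbol\lambda}(\alpha)|\le2$, because $|\lambda_s|\le1$ (shown in the proof of \cref{prop:corr-contraction}). By a union bound over pairs in $\supp(\alpha)$, the probability of non-injectivity is at most $\binom{2k}{2}/D\le 2k^2/D$. Hence
\[
\bigl|\mathbb E_hw_{\boldsymbol\lambda}(H\alpha)-w_{\boldsymbol\lambda}(\alpha)\bigr|\le \frac{4k^2}{D}.
\]

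\emph{Conclusion.} The total error is therefore at most
\[
\frac{4k^2}{D}\sum_\alpha|\widehat X(\alpha)||\widehat Y(\alpha)|\le\frac{4k^2}{D}\norm X_2\norm Y_2,
\]
by Cauchy--Schwarz and Pauli Parseval (\cref{prop:scalar-pauli-identities}). There is no real obstacle in this argument. The one point to check carefully is that $H$ collapses labels only through collisions, so that the injective case gives exact equality $w_{\boldsymbol\lambda}(H\alpha)=w_{\boldsymbol\lambda}(\alpha)$.
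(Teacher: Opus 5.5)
Your proposal is correct and follows the paper's proof essentially step for step. You average over $r$ by character orthogonality to reach the diagonal sum $\sum_\alpha \widehat X(\alpha)\widehat Y(\alpha)\, w_{\bll}(H\alpha)$, observe exact equality of weights when $h$ is injective on $\supp(\alpha)$, bound the collision probability by $\binom{|\alpha|}{2}/D\le 2k^2/D$ with weight difference at most $2$, and finish with Cauchy--Schwarz and Pauli Parseval.
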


\begin{proof}
Write $\Phi=\Phi_{h,r}$.  Recall
\[
w_{\bll}(\alpha)=\prod_i\lambda_{\alpha_i},
\qquad
\lambda_0=1,
\qquad
|\lambda_s|\le1.
\]
For fixed $h$, orthogonality of the characters in the common random seed
$r$ gives
\begin{equation*}
\mathbb E_r\Corr_{\psi,D}(\Phi X,\Phi Y)
=
\sum_\alpha
\wh X(\alpha)\wh Y(\alpha)w_{\bll}(H\alpha).
\end{equation*}
If $h$ is injective on $\supp\alpha$, then the nonzero entries of $\alpha$
are sent to distinct coordinates of $H\alpha$.  Hence their multiset is
unchanged and
\(
w_{\bll}(H\alpha)=w_{\bll}(\alpha).
\)

For $|\alpha|\le2k$,
\[
\Pr_h[h\text{ is noninjective on }\supp\alpha]
\le
\frac{\binom{|\alpha|}{2}}D
\le
\frac{2k^2}D.
\]
On the collision event both weights have absolute value at most one, so
\[
|w_{\bll}(H\alpha)-w_{\bll}(\alpha)|\le2.
\]
Since the weight difference vanishes when $h$ is injective on
$\supp\alpha$, for every $|\alpha|\le2k$ we have
\begin{align*}
\abs{\mathbb E_h w_{\bll}(H\alpha)-w_{\bll}(\alpha)}
&\le\mathbb E_h\abs{w_{\bll}(H\alpha)-w_{\bll}(\alpha)}\\
&\le2\Pr_h[h\text{ is noninjective on }\supp\alpha]
\le\frac{4k^2}D.
\end{align*}
Subtract the Pauli expansion of $\Corr_{\psi,n}(X,Y)$ from the
expected correlation above.  Since $X$ and $Y$ have degree at most $2k$,
Cauchy--Schwarz gives
\begin{align*}
&\abs{\mathbb E_{h,r}\Corr_{\psi,D}(\Phi X,\Phi Y)
-\Corr_{\psi,n}(X,Y)}\\
&\quad=\abs{\sum_{|\alpha|\le2k}
\wh X(\alpha)\wh Y(\alpha)
\bigl(\mathbb E_h w_{\bll}(H\alpha)-w_{\bll}(\alpha)\bigr)}\\
&\quad\le\frac{4k^2}D\sum_{|\alpha|\le2k}
|\wh X(\alpha)|\,|\wh Y(\alpha)|\\
&\quad\le\frac{4k^2}D
\left(\sum_\alpha|\wh X(\alpha)|^2\right)^{1/2}
\left(\sum_\alpha|\wh Y(\alpha)|^2\right)^{1/2}\\
&\quad=\frac{4k^2}D\norm X_2\norm Y_2.
\end{align*}
The final equality is Pauli
Parseval from \cref{prop:scalar-pauli-identities}.
\end{proof}

Fix questions $x,y$ and suppress the question superscripts on $Q^x$,
$R^y$, and the predicate matrix $V^{x,y}$.  Since every entry of $V$ is either
$0$ or $1$,
\[
\norm{V}_{\mathrm{op}}
\le\left(\sum_{a,b}|V_{ab}|^2\right)^{1/2}
\le\sqrt{\tA\tB}\le t.
\]
Using $Q^\dagger Q=\sum_aQ_a^\dagger Q_a$ and cyclicity of the trace,
\begin{align*}
\sum_a\norm{Q_a^\dagger Q_a}_2^2
&=\norm{Q^\dagger Q}_2^2
-\sum_{a\ne a'}\tau_{2^n}\!\left(Q_a^\dagger Q_aQ_{a'}^\dagger Q_{a'}\right)\\
&=\norm{Q^\dagger Q}_2^2
-\sum_{a\ne a'}\norm{Q_aQ_{a'}^\dagger}_2^2\\
&\le\norm{Q^\dagger Q}_2^2\\
&\le\bigl(1+\norm{Q^\dagger Q-\id}_2\bigr)^2
\le4.
\end{align*}
The last line uses the triangle inequality and the assumed bound.  The same argument
for Bob gives
\[
\sum_b\norm{R_b^\dagger R_b}_2^2
\le\norm{R^\dagger R}_2^2
\le\bigl(1+\norm{R^\dagger R-\id}_2\bigr)^2
\le4.
\]

\begin{samepage}
For each seed $\omega=(h,r)$ and each pair of outcomes $a,b$, bilinearity
of $\Corr_{\psi,D}$ gives the decomposition
\begin{align*}
&\Corr_{\psi,D}\!\left(
\Phi_\omega(Q_a)^\dagger\Phi_\omega(Q_a),
\Phi_\omega(R_b)^\dagger\Phi_\omega(R_b)\right)-\Corr_{\psi,n}(Q_a^\dagger Q_a,R_b^\dagger R_b)\\[2pt]
&\quad=\Corr_{\psi,D}\!\left(
\Phi_\omega(Q_a^\dagger Q_a),\Phi_\omega(R_b^\dagger R_b)\right)
-\Corr_{\psi,n}(Q_a^\dagger Q_a,R_b^\dagger R_b)
&&\text{(i)}\\[6pt]
&\qquad+\Corr_{\psi,D}\!\left(
\Phi_\omega(Q_a)^\dagger\Phi_\omega(Q_a)-\Phi_\omega(Q_a^\dagger Q_a),
\Phi_\omega(R_b^\dagger R_b)\right) &&\text{(ii)}\\[6pt]
&\qquad+\Corr_{\psi,D}\!\left(
\Phi_\omega(Q_a^\dagger Q_a),
\Phi_\omega(R_b)^\dagger\Phi_\omega(R_b)-\Phi_\omega(R_b^\dagger R_b)
\right) &&\text{(iii)}\\[6pt]
&\qquad+\Corr_{\psi,D}\!\left(\begin{aligned}
&\Phi_\omega(Q_a)^\dagger\Phi_\omega(Q_a)-\Phi_\omega(Q_a^\dagger Q_a),
\Phi_\omega(R_b)^\dagger\Phi_\omega(R_b)-\Phi_\omega(R_b^\dagger R_b)
\end{aligned}\right) &&\text{(iv)}
\end{align*}
\end{samepage}

By \cref{lem:multiplication-error},
\begin{align*}
\mathbb E_\omega\sum_a
\norm{\Phi_\omega(Q_a)^\dagger\Phi_\omega(Q_a)
-\Phi_\omega(Q_a^\dagger Q_a)}_2^2
&\le\frac{4C_{\mathrm{mul}}k^{64}}D,\\
\mathbb E_\omega\sum_b
\norm{\Phi_\omega(R_b)^\dagger\Phi_\omega(R_b)
-\Phi_\omega(R_b^\dagger R_b)}_2^2
&\le\frac{4C_{\mathrm{mul}}k^{64}}D.
\end{align*}
Applying \cref{prop:folded-norm-expectation} to each $R_b^\dagger R_b$
and averaging over $h$,
\begin{equation*}
\mathbb E_\omega\sum_b\norm{\Phi_\omega(R_b^\dagger R_b)}_2^2
=\sum_b\norm{R_b^\dagger R_b}_2^2
\le4.
\end{equation*}
The same bound holds for Alice.

\smallskip\noindent
For (i), each of $Q_a^\dagger Q_a$ and $R_b^\dagger R_b$ has degree at
most $2k$.  Applying \cref{lem:folded-correlation-error} to each pair and
using the triangle inequality, we obtain
\begin{align*}
&\left|\sum_{a,b}V_{ab}\left(
\mathbb E_{h,r}\Corr_{\psi,D}\!\left(
\Phi_{h,r}(Q_a^\dagger Q_a),\Phi_{h,r}(R_b^\dagger R_b)\right)
-\Corr_{\psi,n}(Q_a^\dagger Q_a,R_b^\dagger R_b)
\right)\right|\\
&\quad\le\frac{4k^2}D\sum_{a,b}V_{ab}
\norm{Q_a^\dagger Q_a}_2\norm{R_b^\dagger R_b}_2\\
&\quad\le\frac{4k^2}D\norm{V}_{\mathrm{op}}
\left(\sum_a\norm{Q_a^\dagger Q_a}_2^2\right)^{1/2}
\left(\sum_b\norm{R_b^\dagger R_b}_2^2\right)^{1/2}\\
&\quad\le\frac{16tk^2}D.
\end{align*}

\smallskip\noindent
For (ii), apply Cauchy--Schwarz, together with
\cref{prop:corr-contraction} and $\norm{V}_{\mathrm{op}}\le t$:
\begin{align*}
&\mathbb E_\omega\left|\sum_{a,b}V_{ab}\Corr_{\psi,D}\!\left(
\Phi_\omega(Q_a)^\dagger\Phi_\omega(Q_a)-\Phi_\omega(Q_a^\dagger Q_a),
\Phi_\omega(R_b^\dagger R_b)\right)\right|\\
&\quad\le\norm{V}_{\mathrm{op}}
\left(\mathbb E_\omega\sum_a
\norm{\Phi_\omega(Q_a)^\dagger\Phi_\omega(Q_a)
-\Phi_\omega(Q_a^\dagger Q_a)}_2^2\right)^{1/2}\\
&\qquad\times
\left(\mathbb E_\omega\sum_b
\norm{\Phi_\omega(R_b^\dagger R_b)}_2^2\right)^{1/2}
\le\frac{4\sqrt{C_{\mathrm{mul}}}\,tk^{32}}{\sqrt D}.
\end{align*}

\smallskip\noindent
For (iii), the same argument with the two players interchanged gives
\begin{equation*}
\mathbb E_\omega\left|\sum_{a,b}V_{ab}\Corr_{\psi,D}\!\left(
\Phi_\omega(Q_a^\dagger Q_a),
\Phi_\omega(R_b)^\dagger\Phi_\omega(R_b)-\Phi_\omega(R_b^\dagger R_b)
\right)\right|
\le\frac{4\sqrt{C_{\mathrm{mul}}}\,tk^{32}}{\sqrt D}.
\end{equation*}

\smallskip\noindent
For (iv), Cauchy--Schwarz and \cref{prop:corr-contraction} give
\begin{align*}
&\mathbb E_\omega\left|\sum_{a,b}V_{ab}\Corr_{\psi,D}\!\left(
\Phi_\omega(Q_a)^\dagger\Phi_\omega(Q_a)-\Phi_\omega(Q_a^\dagger Q_a),
\Phi_\omega(R_b)^\dagger\Phi_\omega(R_b)-\Phi_\omega(R_b^\dagger R_b)
\right)\right|\\
&\quad\le\norm{V}_{\mathrm{op}}
\left(\mathbb E_\omega\sum_a
\norm{\Phi_\omega(Q_a)^\dagger\Phi_\omega(Q_a)
-\Phi_\omega(Q_a^\dagger Q_a)}_2^2\right)^{1/2}\\
&\qquad\times
\left(\mathbb E_\omega\sum_b
\norm{\Phi_\omega(R_b)^\dagger\Phi_\omega(R_b)
-\Phi_\omega(R_b^\dagger R_b)}_2^2\right)^{1/2}
\le\frac{4C_{\mathrm{mul}}tk^{64}}D.
\end{align*}

For each seed $\omega=(h,r)$, define
\begin{equation}
\label{eq:folded-measurement-families}
\begin{aligned}
\widetilde{\mathscr Q}_\omega
&:=\bigl(\bigl(\Phi_\omega(Q_a^x)^\dagger\Phi_\omega(Q_a^x)\bigr)_a\bigr)_x,\\
\widetilde{\mathscr R}_\omega
&:=\bigl(\bigl(\Phi_\omega(R_b^y)^\dagger\Phi_\omega(R_b^y)\bigr)_b\bigr)_y.
\end{aligned}
\end{equation}
Combining (i)--(iv) and averaging over the questions, we obtain
\begin{equation*}
\grayhighlight{
\begin{aligned}
&\abs{
\mathbb E_\omega
\Val_{\sfG}(\widetilde{\mathscr Q}_\omega,\widetilde{\mathscr R}_\omega)
-
\Val_{\sfG}(\mathscr Q,\mathscr R)
}\\
&\quad\le
4\left(
\frac{2\sqrt{C_{\mathrm{mul}}}\,tk^{32}}{\sqrt D}
+\frac{C_{\mathrm{mul}}tk^{64}}D
+\frac{4tk^2}D
\right).
\end{aligned}}
\end{equation*}

\subsubsection{One seed for all questions}

Let $\mathsf p_X$ and $\mathsf p_Y$ be the question marginals in
\eqref{eq:game-question-marginals}.  For a seed
$\omega=(h,r)$ put
\begin{equation}
\label{eq:folded-normalization-operators}
\wt S_x=\Phi_\omega(Q^x)^\dagger\Phi_\omega(Q^x),
\qquad
\wt T_y=\Phi_\omega(R^y)^\dagger\Phi_\omega(R^y),
\end{equation}
and, using \eqref{eq:folded-measurement-families}, define
\begin{equation}
\label{eq:seed-errors-and-value}
\begin{aligned}
N_A(\omega)
&=\mathbb E_{x\sim\mathsf p_X}
\norm{\wt S_x-\id}_2^2,\\
N_B(\omega)
&=\mathbb E_{y\sim\mathsf p_Y}
\norm{\wt T_y-\id}_2^2,\\
V(\omega)
&=\Val_{\sfG}(\widetilde{\mathscr Q}_\omega,\widetilde{\mathscr R}_\omega).
\end{aligned}
\end{equation}

Then we have the following
\begin{lemma}
\label{lem:seed-value-bound}
For the game
$\sfG=(\mathcal X,\mathcal Y,\mathcal A,\mathcal B,\mathsf p_{\sfG},V)$
and the quantities $N_A(\omega),N_B(\omega),V(\omega)$ in \eqref{eq:seed-errors-and-value},
every seed $\omega=(h,r)$ satisfies
\begin{equation*}
0\le V(\omega)
\le
\bigl(1+\sqrt{N_A(\omega)}\bigr)
\bigl(1+\sqrt{N_B(\omega)}\bigr).
\end{equation*}
\end{lemma}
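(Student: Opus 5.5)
The plan is to bound the value question by question and then average, using the block-operator Cauchy--Schwarz machinery already developed in the proof of \cref{lem:game-expression-holder}.

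\textbf{Lower bound.} Fix $x,y$. The folded operators $\wt Q_a:=\Phi_\omega(Q_a^x)$ and $\wt R_b:=\Phi_\omega(R_b^y)$ give positive semidefinite elements $\wt Q_a^\dagger\wt Q_a$ and $(\wt R_b^\dagger\wt R_b)^{\mathsf T}$. Their tensor product is positive and $\psi^{\otimes D}\geq0$, so each trace $\Tr[\psi^{\otimes D}(\wt Q_a^\dagger\wt Q_a\otimes(\wt R_b^\dagger\wt R_b)^{\mathsf T})]$ is nonnegative. The weights $\mathsf p_{\sfG}(x,y)V(a,b\mid x,y)$ are nonnegative, hence $V(\omega)\geq0$.

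\textbf{Per-question upper bound.} Because $V\leq1$ and every term is nonnegative, we may drop $V$ and sum over all $a,b$. This bounds the $(x,y)$ contribution by
\[
\Tr\!\left[\psi^{\otimes D}\bigl(\wt S_x\otimes\wt T_y^{\mathsf T}\bigr)\right].
\]
Applying the Cauchy--Schwarz step from the end of the proof of \cref{lem:game-expression-holder}, with the maximally mixed marginals, bounds this by $\norm{\wt S_x}_2\norm{\wt T_y}_2$. The triangle inequality then gives
\[
\norm{\wt S_x}_2\leq1+\norm{\wt S_x-\id}_2,
\qquad
\norm{\wt T_y}_2\leq1+\norm{\wt T_y-\id}_2.
\]

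\textbf{Averaging over questions.} After averaging over $(x,y)\sim\mathsf p_{\sfG}$, the bound is
\[
V(\omega)\leq\mathbb E_{(x,y)}\bigl(1+e_x\bigr)\bigl(1+f_y\bigr),
\qquad
e_x:=\norm{\wt S_x-\id}_2,\quad f_y:=\norm{\wt T_y-\id}_2.
\]
This is the main obstacle: $x$ and $y$ are correlated, so the expectation does not factor as a product of marginal expectations. We handle it with Cauchy--Schwarz over the joint distribution:
\[
\mathbb E\bigl[(1+e_x)(1+f_y)\bigr]\leq\bigl(\mathbb E(1+e_x)^2\bigr)^{1/2}\bigl(\mathbb E(1+f_y)^2\bigr)^{1/2}.
\]
Each factor depends on only one marginal, so $\mathbb E(1+e_x)^2$ can be taken over $\mathsf p_X$. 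Minkowski's inequality in $L_2(\mathsf p_X)$ gives
\[
\bigl(\mathbb E(1+e_x)^2\bigr)^{1/2}\leq1+\bigl(\mathbb E e_x^2\bigr)^{1/2}=1+\sqrt{N_A(\omega)},
\]
and likewise the second factor is at most $1+\sqrt{N_B(\omega)}$. Multiplying the two bounds gives the claim.
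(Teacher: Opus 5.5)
Your proof is correct and follows essentially the paper's route: positivity for the lower bound, dropping the predicate to bound each question pair by $\Tr[\psi^{\otimes D}(\wt S_x\otimes\wt T_y^{\mathsf T})]\le\norm{\wt S_x}_2\norm{\wt T_y}_2$, Cauchy--Schwarz over the joint question distribution, and Minkowski. The only differences are cosmetic. The paper obtains the per-question bound from \cref{prop:corr-contraction} rather than from the marginal Cauchy--Schwarz step in \cref{lem:game-expression-holder}, and it applies Cauchy--Schwarz before, rather than after, the triangle inequality.
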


\begin{proof}
The operators $\Phi_\omega(Q_a^x)^\dagger\Phi_\omega(Q_a^x)$ and
$\Phi_\omega(R_b^y)^\dagger\Phi_\omega(R_b^y)$ are positive, so
$V(\omega)\ge0$.
For fixed $(x,y)$, adding the nonnegative contributions from rejected answer
pairs bounds the conditional value by
\[
\Corr_{\psi,D}(\wt S_x,\wt T_y)
\le \norm{\wt S_x}_2\norm{\wt T_y}_2.
\]
Here the inequality follows from \cref{prop:corr-contraction}.
After averaging with respect to $\mathsf p_{\sfG}(x,y)$, Cauchy--Schwarz
separates the two question variables:
\[
V(\omega)
\le
\left(\mathbb E_x\norm{\wt S_x}_2^2\right)^{1/2}
\left(\mathbb E_y\norm{\wt T_y}_2^2\right)^{1/2}.
\]
Minkowski's inequality yields
\[
\left(\mathbb E_x\norm{\wt S_x}_2^2\right)^{1/2}
\le 1+\sqrt{N_A(\omega)},
\]
and the analogous Bob bound proves the claimed pointwise estimate.
\end{proof}

\begin{lemma}
\label{lem:common-seed}
Let $(\Omega,\mathbb P)$ be a probability space, let
$V,N_A,N_B\colon\Omega\to[0,\infty)$ be measurable functions, and let
$v_0\in\mathbb R$ and $\alpha,\Delta\geq0$.  Suppose
\begin{equation*}
V\le(1+\sqrt{N_A})(1+\sqrt{N_B}),
\qquad
\mathbb E V\ge v_0-\Delta,
\qquad
\mathbb E N_A,\mathbb E N_B\le\alpha.
\end{equation*}
If $\beta>2\alpha$, then there is $\omega_0\in\Omega$ satisfying
\(
N_A(\omega_0),N_B(\omega_0)\le\beta
\)
and
\begin{equation*}
V(\omega_0)
\ge
v_0-\Delta
-4\left(
\frac{\alpha}{\beta}
+\frac{\alpha}{\sqrt\beta}
+\alpha
\right).
\end{equation*}
\end{lemma}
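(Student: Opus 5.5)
Let $E:=\{N_A\le\beta,\ N_B\le\beta\}$ be the good event. The plan is to show that $\mathbb E[V\mathbf 1_E]$ is almost as large as $\mathbb E V$. Since $\mathbb E[V\mathbf 1_E]\le \mathbb P(E)\sup_E V\le\sup_E V$ and $V\ge0$, the existence of $\omega_0\in E$ with $V(\omega_0)\ge\mathbb E[V\mathbf 1_E]$ then follows by averaging. If $\mathbb P(E)=0$ the argument would fail, but Markov gives $\mathbb P(E^c)\le\mathbb P(N_A>\beta)+\mathbb P(N_B>\beta)\le 2\alpha/\beta<1$, so $E$ has positive probability. Hence some $\omega_0\in E$ satisfies $V(\omega_0)\ge \mathbb E[V\mathbf 1_E]/\mathbb P(E)\ge\mathbb E[V\mathbf 1_E]$, the last step using $\mathbb E[V\mathbf 1_E]\ge0$. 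If instead the right-hand side of the claim is negative, any $\omega\in E$ works because $V\ge0$.

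It remains to bound the bad contribution $\mathbb E[V\mathbf 1_{E^c}]$. Expand the pointwise bound
\[
V\le 1+\sqrt{N_A}+\sqrt{N_B}+\sqrt{N_AN_B}
\]
and integrate each term over $E^c$.
\begin{itemize}
\item The constant term contributes $\mathbb P(E^c)\le 2\alpha/\beta$.
\item For $\sqrt{N_A}\,\mathbf 1_{E^c}$, split $E^c=\{N_A>\beta\}\cup\{N_B>\beta\}$. On $\{N_A>\beta\}$ we have $\sqrt{N_A}\le N_A/\sqrt\beta$, so the contribution is at most $\alpha/\sqrt\beta$. On $\{N_B>\beta\}$, Cauchy--Schwarz gives $\mathbb E[\sqrt{N_A}\mathbf 1_{N_B>\beta}]\le\sqrt{\alpha}\sqrt{\alpha/\beta}=\alpha/\sqrt\beta$. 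Hence $\mathbb E[\sqrt{N_A}\mathbf 1_{E^c}]\le2\alpha/\sqrt\beta$, and the same holds for $\sqrt{N_B}$.
\item For the cross term, AM--GM gives $\sqrt{N_AN_B}\le(N_A+N_B)/2$, so its contribution is at most $\alpha$.
\end{itemize}

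Summing, $\mathbb E[V\mathbf 1_{E^c}]\le 2\alpha/\beta+4\alpha/\sqrt\beta+\alpha$, which is at most $4(\alpha/\beta+\alpha/\sqrt\beta+\alpha)$. Therefore
\[
V(\omega_0)\ge\mathbb E[V\mathbf 1_E]=\mathbb E V-\mathbb E[V\mathbf 1_{E^c}]\ge v_0-\Delta-4\Bigl(\frac{\alpha}{\beta}+\frac{\alpha}{\sqrt\beta}+\alpha\Bigr).
\]
The only delicate step is the $\sqrt{N_A}$ term on the set where $N_B$ (not $N_A$) is large. A naive bound there loses a factor of $\sqrt{\alpha}$ rather than $\alpha$, and the mixed Cauchy--Schwarz with Markov is what keeps the bound linear in $\alpha$.
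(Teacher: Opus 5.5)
Your proof is correct and follows the paper's argument essentially step for step: the same good event, the Markov bound on its complement, and the same split of $\mathbb E[\sqrt{N_A}\mathbf 1_{E^c}]$ into a piece bounded by $\mathbb E N_A/\sqrt\beta$ and a piece bounded by Cauchy--Schwarz combined with Markov. There are only two cosmetic differences. You bound the cross term by AM--GM, $\sqrt{N_AN_B}\le (N_A+N_B)/2$, where the paper uses Cauchy--Schwarz, $\mathbb E\sqrt{N_AN_B}\le\sqrt{\mathbb E N_A\,\mathbb E N_B}$. And your observation that $\mathbb E[V\mathbf 1_E]\ge 0$ already makes the final case split (present in both your write-up and the paper's) unnecessary.
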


\begin{proof}
Let
\(
\cG=\{N_A\le\beta,\ N_B\le\beta\}.
\)
By Markov's inequality,
\(
\displaystyle\Pr(\cG^c)\le\frac{2\alpha}{\beta}.
\)

Since $\cG^c\subseteq\{N_A>\beta\}\cup\{N_B>\beta\}$,
\begin{align*}
\mathbb E[\sqrt{N_A}\one_{\cG^c}]
&\le\mathbb E[\sqrt{N_A}\one_{\{N_A>\beta\}}]
+\mathbb E[\sqrt{N_A}\one_{\{N_B>\beta\}}]\\
&\le\frac{\mathbb E N_A}{\sqrt\beta}
+\sqrt{\mathbb E N_A}\sqrt{\Pr(N_B>\beta)}\\
&\le\frac{2\alpha}{\sqrt\beta}.
\end{align*}
The second inequality uses $\sqrt{N_A}\le N_A/\sqrt\beta$ on
$\{N_A>\beta\}$ and Cauchy--Schwarz; the last uses
$\mathbb E N_A\le\alpha$ and Markov's inequality for $N_B$.
Interchanging $A$ and $B$ gives
$\mathbb E[\sqrt{N_B}\one_{\cG^c}]\le2\alpha/\sqrt\beta$.
Also, by Cauchy--Schwarz,
\(
\mathbb E\sqrt{N_AN_B}\le\sqrt{\mathbb E N_A\mathbb E N_B}\le\alpha
\).
Substituting these estimates into the pointwise upper bound for $V$, we obtain
\begin{align*}
\mathbb E[V\one_{\cG^c}]
&\le
\Pr(\cG^c)
+\mathbb E[\sqrt{N_A}\one_{\cG^c}]
+\mathbb E[\sqrt{N_B}\one_{\cG^c}]
+\mathbb E\sqrt{N_AN_B}\\
&\le\frac{2\alpha}{\beta}
 +\frac{4\alpha}{\sqrt\beta}+\alpha\\
&\le
4\left(
\frac{\alpha}{\beta}
+\frac{\alpha}{\sqrt\beta}
+\alpha
\right).
\end{align*}
Consequently,
\[
\mathbb E[V\one_{\cG}]
\ge
v_0-\Delta
-4\left(
\frac{\alpha}{\beta}
+\frac{\alpha}{\sqrt\beta}
+\alpha
\right).
\]
Because $\beta>2\alpha$, Markov's inequality also ensures
$\Pr(\cG)>0$.  If the displayed lower bound is positive, the conditional
average of $V$ on $\cG$ is at least $\mathbb E[V\one_{\cG}]$, so some seed
in $\cG$ attains the asserted lower bound.  If the lower bound is
nonpositive, the claim follows from $V\ge0$ for any seed in the nonempty set
$\cG$.
\end{proof}


\begin{proof}[Proof of \cref{thm:folded-block-output}]
Initially take $C\ge\max\{1,C_0\}$, where $C_0$ is the constant in
\cref{lem:multiplication-error}.  The stated lower bound on $D$ ensures
$tk^{64}/D\le\delta^2$ and
$tk^{32}/\sqrt D\le\delta^{1/3}$.
Averaging the normalization estimate over the questions and using the
game-value estimate, we obtain
\[
\mathbb E N_A,\mathbb E N_B\le c_N\delta^2,
\qquad
\mathbb E V\ge\Val_{\sfG}(\mathscr Q,\mathscr R)-c_V\delta^{1/3},
\]
where $c_N=2+8C_{\mathrm{mul}}$ and
$c_V=8\sqrt{C_{\mathrm{mul}}}+4C_{\mathrm{mul}}+16$ are fixed independently
of the later choice of $C$.

Apply \cref{lem:common-seed} with
$v_0=\Val_{\sfG}(\mathscr Q,\mathscr R)$ and
\[
\alpha=c_N\delta^2,
\qquad
\Delta=c_V\delta^{1/3},
\qquad
\beta=4c_N\delta^{4/3}.
\]
The pointwise hypothesis is \cref{lem:seed-value-bound}, and
$\beta>2\alpha$ because $0<\delta<1$.  Hence one seed $\omega_0$
satisfies $N_A(\omega_0),N_B(\omega_0)\le4c_N\delta^{4/3}$ and
\begin{align*}
V(\omega_0)
&\ge\Val_{\sfG}(\mathscr Q,\mathscr R)
-c_V\delta^{1/3}-\delta^{2/3}
-2\sqrt{c_N}\delta^{4/3}-4c_N\delta^2\\
&\ge\Val_{\sfG}(\mathscr Q,\mathscr R)
-\bigl(c_V+1+2\sqrt{c_N}+4c_N\bigr)\delta^{1/3}.
\end{align*}
Choose $C$ also to be at least $4c_N$ and
$c_V+1+2\sqrt{c_N}+4c_N$.  By \eqref{eq:seed-errors-and-value}, these
are the two asserted bounds for the same deterministic seed $\omega_0$.
\end{proof}

\subsection{Step 4: Rounding}

In this subsection, we prove \cref{lem:polar-soundness}, restated below.

\polarsoundness*

\begin{proof}[Proof of \cref{lem:polar-soundness}]
Let $d,t\geq1$ be integers and let
$Q:\mathbb C^d\to\mathbb C^t\otimes\mathbb C^d$ be a block operator.
Put $S=Q^\dagger Q$ and take a polar decomposition
$Q=Q^\sharp\sqrt S$.  Extend the partial isometry $Q^\sharp$ on
$\ker S$ to an isometry; this is possible because the output dimension
is at least the input dimension.  The extension still satisfies
$Q=Q^\sharp\sqrt S$ and $(Q^\sharp)^\dagger Q^\sharp=\id$.
Since $Q-Q^\sharp=Q^\sharp(\sqrt S-\id)$, applying the inequality
$(\sqrt s-1)^4\leq(s-1)^2$ for $s\geq0$ to each eigenvalue of $S$ gives
\begin{equation*}
\norm{Q-Q^\sharp}_4^4
=\tau_d((\sqrt S-\id)^4)
\le\tau_d((S-\id)^2)
=\norm{S-\id}_2^2.
\end{equation*}

Apply this construction to every $Q^x$ and $R^y$, with $d=2^n$, and
denote the resulting isometries by $(Q^x)^\sharp$ and $(R^y)^\sharp$.
Define Alice's and Bob's POVM families by
\begin{align*}
A_a^{x,\sharp}
&=((Q^x)^\sharp)^\dagger
  (|a\rangle\langle a|\otimes\id)(Q^x)^\sharp,\\
B_b^{y,\sharp}
&=((R^y)^\sharp)^\dagger
  (|b\rangle\langle b|\otimes\id)(R^y)^\sharp.
\end{align*}
The isometry identities imply that these operators are positive and sum
to $\id$ for each question.  Thus they define POVM strategies
$\mathscr A^\sharp=((A_a^{x,\sharp})_{a\in\mathcal A})_{x\in\mathcal X}$
and
$\mathscr B^\sharp=((B_b^{y,\sharp})_{b\in\mathcal B})_{y\in\mathcal Y}$.

The polar-decomposition estimate and the hypotheses of the lemma imply
\[
\mathbb E_{x\sim\mathsf p_X}\norm{Q^x-(Q^x)^\sharp}_4^4\le\nu^4,
\qquad
\mathbb E_{y\sim\mathsf p_Y}\norm{R^y-(R^y)^\sharp}_4^4\le\nu^4.
\]
For each question pair $(x,y)$, write
$Q^x=(Q^x)^\sharp+\bigl(Q^x-(Q^x)^\sharp\bigr)$ and
$R^y=(R^y)^\sharp+\bigl(R^y-(R^y)^\sharp\bigr)$, and expand
$\cG_{x,y}^{\psi^{\otimes n}}(Q^x,Q^x;R^y,R^y)$.
The function in \eqref{eq:block-correlation-functional} is additive
in each of its four arguments,
so the expansion has $2^4=16$ terms.  The term containing only
isometries satisfies
\[
\mathbb E_{(x,y)\sim\mathsf p_{\sfG}}
\cG_{x,y}^{\psi^{\otimes n}}\bigl((Q^x)^\sharp,(Q^x)^\sharp;
              (R^y)^\sharp,(R^y)^\sharp\bigr)
=\Val_{\sfG}(\mathscr A^\sharp,\mathscr B^\sharp).
\]
Each of the other $15$ terms contains at least one difference.
For example, since the isometries have normalized Schatten $4$-norm
one, \cref{lem:game-expression-holder} and H\"older's inequality give
\begin{align*}
&\mathbb E_{(x,y)\sim\mathsf p_{\sfG}}
 \abs{\cG_{x,y}^{\psi^{\otimes n}}\bigl(
 Q^x-(Q^x)^\sharp,(Q^x)^\sharp;
 R^y-(R^y)^\sharp,(R^y)^\sharp\bigr)}\\
&\qquad\leq
 \mathbb E_{(x,y)\sim\mathsf p_{\sfG}}
 \norm{Q^x-(Q^x)^\sharp}_4\norm{R^y-(R^y)^\sharp}_4\\
&\qquad\leq
 \left(\mathbb E_{x\sim\mathsf p_X}
 \norm{Q^x-(Q^x)^\sharp}_4^4\right)^{1/4}
 \left(\mathbb E_{y\sim\mathsf p_Y}
 \norm{R^y-(R^y)^\sharp}_4^4\right)^{1/4}
 \leq\nu^2.
\end{align*}
The same argument applies to every other term: H\"older's inequality
with exponent $4$ for each of the four factors contributes at most $\nu$
for each difference and $1$ for each isometry.  Thus a term containing
$j$ differences has expected absolute value at most $\nu^j$.
There are $\binom4j$ such terms for each $j=1,2,3,4$, so the triangle
inequality yields
\begin{align*}
\abs{\Val_{\sfG}(\mathscr Q,\mathscr R)
      -\Val_{\sfG}(\mathscr A^\sharp,\mathscr B^\sharp)}
&\leq4\nu+6\nu^2+4\nu^3+\nu^4\\
&\leq15\nu.
\end{align*}
The last inequality uses $0\leq\nu\leq1$ and proves the claim.
\end{proof}

\subsection{Succinct Certificates}
\label{subsec:succinct-certificates}

We prove \cref{prop:succinct-verifier}, restated below.

\succinctverifier*

Fix a language in
$\MIP^{\psi}[\poly,\poly]$, a corresponding uniform
two-prover protocol, and an input $z\in\{0,1\}^N$ of bit length $N$.  On this
input, the protocol verifier induces a finite game
$\sfG=(\mathcal X,\mathcal Y,\mathcal A,\mathcal B,\mathsf p_{\sfG},V)$.
Let $R(N)$ be the number of its random bits, let $q$ be the number of question
pairs having positive probability, and put
$t:=\max\{|\mathcal A|,|\mathcal B|,2\}$.
The protocol verifier runs in polynomial time and uses polynomial-length
answers, so $R(N)\leq\poly(N)$, $q\leq2^{R(N)}$, and
$\log t\leq\poly(N)$.
Let $\Delta_0>0$ be a constant
lower bound on the protocol's completeness--soundness gap.  Use
\cref{prop:pauli-canonicalization} to fix the canonical form
$\psi_{\boldsymbol\lambda}$ of $\psi$ and its parameters
$(\lambda_s)_{s\in\mathbb F_2^2}$.  For an approximation parameter
$0<\varepsilon<1/10$, to be chosen in the final argument, let $k$ and $D$ be
the corresponding parameters in \cref{thm:finite-copy-approximation}.

We call the original randomized machine the \emph{protocol verifier} and the
$\NEXP$ machine constructed below the \emph{certificate verifier}.  Instead
of writing the POVMs as full matrices, the certificate retains the low-degree
entries $Q_a^x$ and $R_b^y$ of the block approximants.  Their associated
positive operators are
$(Q_a^x)^\dagger Q_a^x$ on Alice's side and
$(R_b^y)^\dagger R_b^y$ on Bob's side.  Thus
positivity is automatic.  All quantities needed by the certificate verifier can be
computed from the Pauli coefficients of these block entries,
without constructing a $2^D\times2^D$ matrix.

\subsubsection{Certificate Format}

For every Alice question $x$ with positive marginal probability and every
answer $a$, the witness contains a degree-$k$ operator
\begin{equation*}
  Q_a^x=\sum_{|\alpha|\leq k}q_{a,\alpha}^xP_\alpha,
\end{equation*}
and contains analogous operators $R_b^y$ for Bob.  The coefficients are
rational complex numbers.  These entries form the block operators
$Q^x=\sum_a|a\rangle\otimes Q_a^x$ and
$R^y=\sum_b|b\rangle\otimes R_b^y$, with approximate strategies
$\mathscr Q,\mathscr R$ defined in \eqref{eq:block-operator-families}.
The number of Pauli strings
of degree at most $k$
on $D$ qubits is
\begin{equation}
\label{eq:low-degree-pauli-count}
  M(D,k):=\sum_{j=0}^k3^j\binom Dj
  \leq(3eD)^k.
\end{equation}
For each question, put
\begin{equation}
\label{eq:certificate-normalization-operators}
  S_x:=(Q^x)^\dagger Q^x
  =\sum_a(Q_a^x)^\dagger Q_a^x,
  \qquad
  T_y:=(R^y)^\dagger R^y.
\end{equation}
Pauli multiplication determines all coefficients of $S_x$ and $T_y$ by
finite convolution.  Parseval's identity expresses the first normalization
test as
\begin{equation*}
  \norm{S_x-\id}_2^2
  =\sum_\gamma
    \abs{\wh S_x(\gamma)-\one_{\{\gamma=0\}}}^2.
\end{equation*}

The diagonal Pauli form of the canonical state similarly expresses the
game value as
\begin{align*}
  \Val_{\sfG}^{\psi_{\bll}^{\otimes D}}(\mathscr Q,\mathscr R)
  ={}&\sum_{x,y,a,b}\mathsf p_{\sfG}(x,y)V(a,b\mid x,y)
  \sum_\gamma w_{\bll}(\gamma)
  \notag\\
  &\quad\cdot
  \wh{\bigl((Q_a^x)^\dagger Q_a^x\bigr)}(\gamma)
  \wh{\bigl((R_b^y)^\dagger R_b^y\bigr)}(\gamma),
\end{align*}
where, by \eqref{eq:canonical-correlation-weights},
\begin{equation*}
  w_{\bll}(\gamma):=\prod_{i=1}^D\lambda_{\gamma_i}.
\end{equation*}
Every associated positive
operator occurring here has Pauli degree at most $2k$.  The normalization
tests and the value computation therefore require at most
\begin{equation*}
  \grayhighlight{\poly(q,t)D^{O(k)}}
\end{equation*}
elementary arithmetic operations.  In particular, neither test requires the
  certificate verifier to diagonalize an exponentially large matrix.
  Positivity is already guaranteed by the representation
  $(Q_a^x)^\dagger Q_a^x$.

\subsubsection{Finite Precision}

The protocol verifier uses $R(N)\leq\poly(N)$ random bits.  Consequently, every
question with nonzero marginal probability satisfies
\begin{equation*}
  \mathsf p_X(x),\mathsf p_Y(y)\geq2^{-R(N)}.
\end{equation*}
This elementary observation supplies the norm bound for each question needed for
uniform rational approximation.

\begin{lemma}
  \label{lem:questionwise-block-norm-bound}
  Let $D,t\geq1$, and let
  $\mathsf p_X$ be a probability distribution on a finite question set
  $\mathcal X$.  For each $x\in\mathcal X$, let
  $Q^x\colon(\mathbb C^2)^{\otimes D}\to
    \mathbb C^t\otimes(\mathbb C^2)^{\otimes D}$ be a block
  operator, with the Schatten norm normalized by the input trace
  $\tau_{2^D}$.  Suppose
  \begin{equation*}
    \mathbb E_{x\sim\mathsf p_X}
      \norm{(Q^x)^\dagger Q^x-\id}_2^2\leq1.
  \end{equation*}
  If $\mathsf p_X(x)>0$, then
  \begin{equation*}
    \norm{Q^x}_4
    \leq
    \left(2+\frac{2}{\mathsf p_X(x)}\right)^{1/4}.
  \end{equation*}
  The analogous statement holds for Bob.
\end{lemma}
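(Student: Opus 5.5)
We first isolate the single question $x$ from the averaged hypothesis. Every summand in the expectation is nonnegative, so keeping only the term for $x$ gives
\[
  \mathsf p_X(x)\norm{(Q^x)^\dagger Q^x-\id}_2^2
  \le\mathbb E_{x'\sim\mathsf p_X}\norm{(Q^{x'})^\dagger Q^{x'}-\id}_2^2
  \le1 .
\]
Hence $\norm{(Q^x)^\dagger Q^x-\id}_2^2\le1/\mathsf p_X(x)$.

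Next we express the Schatten $4$-norm of the block operator through its Gram operator. By the convention after \eqref{eq:block-schatten-norm}, we have $\norm{Q^x}_4^4=\norm{(Q^x)^\dagger Q^x}_2^2$. Write $S=(Q^x)^\dagger Q^x$, which is a Hermitian operator on $(\mathbb C^2)^{\otimes D}$. The elementary bound $\norm{X+Y}_2^2\le2\norm X_2^2+2\norm Y_2^2$, applied with $X=\id$ and $Y=S-\id$, together with $\norm{\id}_2=1$ under the normalized trace $\tau_{2^D}$, gives
\[
  \norm{Q^x}_4^4=\norm S_2^2\le2+2\norm{S-\id}_2^2\le2+\frac{2}{\mathsf p_X(x)} .
\]
Taking fourth roots proves the claim.

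There is no real obstacle here. The only point to check is the normalization: the block-operator norm uses the input trace $\tau_{2^D}$, so $\norm{\id}_2=1$ and the identity $\norm W_4^2=\norm{W^\dagger W}_2$ applies directly. Bob's statement follows verbatim with $R^y$ and $\mathsf p_Y$ in place of $Q^x$ and $\mathsf p_X$.
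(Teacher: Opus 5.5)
Your proposal is correct and follows the paper's proof essentially step for step. Like the paper, you extract $\mathsf p_X(x)\norm{(Q^x)^\dagger Q^x-\id}_2^2\le1$ from the averaged hypothesis and then combine $\norm{Q^x}_4^4=\norm{(Q^x)^\dagger Q^x}_2^2$ with $\norm{(Q^x)^\dagger Q^x}_2^2\le2\norm{(Q^x)^\dagger Q^x-\id}_2^2+2$, using $\norm{\id}_2=1$ under the normalized trace.
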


\begin{proof}
  Set $S_x=(Q^x)^\dagger Q^x$.  Since the Schatten norms are normalized,
  $\norm{\id}_2=1$, and hence
  \begin{equation*}
    \norm{S_x}_2^2
    \leq2\norm{S_x-\id}_2^2+2.
  \end{equation*}
  On the other hand, the hypothesis of
  \cref{lem:questionwise-block-norm-bound} implies
  \begin{equation*}
    \mathsf p_X(x)\norm{S_x-\id}_2^2\leq1.
  \end{equation*}
  Therefore
  \begin{equation*}
    \norm{Q^x}_4^4=\norm{S_x}_2^2
    \leq2+\frac{2}{\mathsf p_X(x)},
  \end{equation*}
  which proves the claim.
\end{proof}

\begin{lemma}
  \label{lem:rational-block-approximation}
  Fix a game
  $\sfG=(\mathcal X,\mathcal Y,\mathcal A,\mathcal B,\mathsf p_{\sfG},V)$,
  integers $D,k\geq1$, and $D$ copies of a canonical state
  $\psi_{\boldsymbol\lambda}$.  Let
  $(Q^x)_{x\in\mathcal X}$ and
  $(R^y)_{y\in\mathcal Y}$ be collections of block approximants
  on $D$ qubits whose
  entries have Pauli degree at most $k$, and let $\mathscr Q,\mathscr R$
  be the families of measurement elements in \eqref{eq:block-operator-families}.
  Let $L_0\geq1$ be the total number of complex Pauli coefficients of
  the entries $Q_a^x,R_b^y$, and suppose every block operator has normalized
  Schatten $4$-norm at most $B\geq1$.  For every $0<\delta<1$, all coefficients can
  be replaced by rational complex numbers using
  \begin{equation*}
    O\!\left(\log B+\log L_0+\log\frac1\delta\right)
  \end{equation*}
  bits per real and imaginary part, so that both averaged squared
  normalization errors, averaged with the marginals $\mathsf p_X$ and
  $\mathsf p_Y$, and the value
  $\Val_{\sfG}^{\psi_{\boldsymbol\lambda}^{\otimes D}}(\mathscr Q,\mathscr R)$
  change by at most $\delta$.
\end{lemma}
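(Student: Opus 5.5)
The approach is to round every Pauli coefficient to a grid of mesh $\epsilon$ in its real and imaginary parts, and then bound the resulting changes through Schatten-norm perturbation estimates. These estimates are multilinear in the block operators, so the loss is controlled by $\norm{E}_4$, where $E$ is the perturbation. Since normalized Schatten norms are at most the operator norm, we will bound $\norm{E}_4\le\norm{E}_\infty$ crudely.

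\textbf{Step 1 (size of the perturbation).} Let $\widetilde Q^x=Q^x+E^x$, where each coefficient of $E^x$ has modulus at most $\sqrt2\,\epsilon$. Each entry $E^x_a=\sum_{|\alpha|\le k}e^x_{a,\alpha}P_\alpha$ satisfies $\norm{E_a^x}_\infty\le\sqrt2\,\epsilon\,M(D,k)$. For the block operator this gives
\[
\norm{E^x}_4\le\norm{E^x}_\infty\le\Bigl(\sum_a\norm{E^x_a}_\infty^2\Bigr)^{1/2}\le\sqrt2\,\epsilon L_0 .
\]
We use here that the total number of coefficients over all $a$ is at most $L_0$. The same bound holds for Bob.

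\textbf{Step 2 (normalization errors).} Write
\[
\widetilde S_x-S_x=(E^x)^\dagger Q^x+(Q^x)^\dagger E^x+(E^x)^\dagger E^x .
\]
Schatten H\"older gives
\[
\norm{\widetilde S_x-S_x}_2\le\norm{E^x}_4\bigl(2\norm{Q^x}_4+\norm{E^x}_4\bigr)\le 3B\sqrt2\,\epsilon L_0,
\]
provided $\sqrt2\,\epsilon L_0\le B$. Next expand the difference of squares:
\[
\bigl|\norm{\widetilde S_x-\id}_2^2-\norm{S_x-\id}_2^2\bigr|\le\norm{\widetilde S_x-S_x}_2\bigl(\norm{\widetilde S_x-\id}_2+\norm{S_x-\id}_2\bigr).
\]
Here $\norm{S_x-\id}_2\le B^2+1$. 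Averaging over $x\sim\mathsf p_X$ yields a change of $O(B^3\epsilon L_0)$.

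\textbf{Step 3 (game value).} Apply the expansion in the proof of \cref{lem:polar-soundness}. Writing $\widetilde Q^x=Q^x+E^x$ and $\widetilde R^y=R^y+F^y$, the difference
\[
\cG_{x,y}(\widetilde Q^x,\widetilde Q^x;\widetilde R^y,\widetilde R^y)-\cG_{x,y}(Q^x,Q^x;R^y,R^y)
\]
consists of $15$ terms, each containing at least one perturbation. By \cref{lem:game-expression-holder}, each such term is bounded by $(2B)^3\sqrt2\,\epsilon L_0$. Averaging over $(x,y)$ therefore changes the value by $O(B^3\epsilon L_0)$.

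\textbf{Step 4 (choice of precision).} Choose $\epsilon=c\,\delta/(B^3L_0)$ for a small absolute constant $c$. This forces all three changes below $\delta$. A grid of this mesh on coefficients of modulus at most $\norm{Q_a^x}_2\le\norm{Q^x}_4\le B$ needs $O(\log B+\log L_0+\log(1/\delta))$ bits per real and imaginary part.

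The only delicate point is Step 1. We avoid any dependence on $2^D$ by using the operator norm of a sum of at most $L_0$ unitaries; the crude factor $L_0$ is harmless because it enters only logarithmically.
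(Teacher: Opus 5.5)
Your proposal is correct and follows essentially the same route as the paper's proof. Both round the coefficients to a mesh, bound the perturbation by $\norm{E^x}_4\le\norm{E^x}_\infty\lesssim L_0\cdot(\text{mesh})$ using unitarity of Pauli strings, control the normalization change via H\"older and the bound $\norm{S_x-\id}_2\le B^2+1$, and control the value via the $15$-term multilinear expansion with \cref{lem:game-expression-holder}. The mesh is chosen of order $\delta/(B^3L_0)$ in both, and the only differences are cosmetic constants.
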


\begin{proof}
  Round every real and imaginary coefficient to a mesh of width $\Delta$.
  For each component of a block operator,
  \begin{equation*}
    \abs{\wh Q_a^x(\alpha)}
    \leq\norm{Q_a^x}_2
    \leq\norm{Q_a^x}_4
    \leq\norm{Q^x}_4
    \leq B.
  \end{equation*}
  Here the penultimate inequality follows from positivity and
  \begin{equation*}
    \tau\!\left(\bigl((Q_a^x)^\dagger Q_a^x\bigr)^2\right)
    \leq
    \tau\!\left(\bigl((Q^x)^\dagger Q^x\bigr)^2\right).
  \end{equation*}
  Thus $O(\log B)$ bits suffice for the integer part.

  Let $E^x$ denote the perturbation of one complete block operator.  It has at
  most $tM(D,k)\leq L_0$ nonzero coefficients.  Since every Pauli string is
  unitary,
  \begin{equation*}
    \norm{E^x}_4
    \leq\norm{E^x}_\infty
    \leq\sum_{a,\alpha}\abs{\wh E_a^x(\alpha)}
    \leq C L_0\Delta.
  \end{equation*}
  Write $e=CL_0\Delta$.  For the normalization map
  $S(Q)=Q^\dagger Q$, H\"older's inequality provides the perturbation bound
  \begin{equation*}
    \norm{S(Q+E)-S(Q)}_2
    \leq(2\norm Q_4+\norm E_4)\norm E_4
    \leq(2B+e)e.
  \end{equation*}
  Moreover, $\norm{S(Q)-\id}_2\leq B^2+1$.  Combining this with the preceding
  perturbation bound shows, whenever $e\leq1$, that
  \begin{equation*}
    \abs{
      \norm{S(Q+E)-\id}_2^2-
      \norm{S(Q)-\id}_2^2
    }
    \leq C(B+1)^3e.
  \end{equation*}

  Replace the Pauli coefficients of every $Q_a^x$ and $R_b^y$ by their
  rational approximations, form the corresponding measurement elements,
  and expand the value formula \eqref{eq:block-game-value}. Applying
  \cref{lem:game-expression-holder} to each term yields
  \begin{equation*}
    \abs{\Delta\Val_{\sfG}}\leq C(B+1)^3e.
  \end{equation*}
  Because all estimates are subsequently averaged with respect to the
  question distribution, there is no multiplicative dependence on the number
  of questions.
  Taking
  \begin{equation*}
    \Delta=\frac{\delta}{C(B+1)^3L_0}
  \end{equation*}
  makes both changes at most $\delta$.  Solving this choice for the required
  precision proves the bit bound asserted in
  \cref{lem:rational-block-approximation}.
\end{proof}

For the operators supplied by \cref{thm:folded-block-output} in the
completeness proof below, the two averaged normalization errors
are smaller than one.  By \cref{lem:questionwise-block-norm-bound} and the
lower bound on nonzero question probabilities above, one may therefore take
\begin{equation*}
  B\leq2^{O(R(N))}.
\end{equation*}
Furthermore,
\begin{equation*}
  L_0\leq\poly(q,t)M(D,k).
\end{equation*}
The bounds on $R(N),q,t$ above, the count $M(D,k)$ of Pauli strings, and the
parameter bounds in \cref{thm:finite-copy-approximation} show that the precision
promised by \cref{lem:rational-block-approximation} is polynomial in $N$ for
every constant $\delta$.

The correlation weights in the certificate-value formula are determined by
the known canonical parameters.  Every relevant $\gamma$ has degree at most
$2k$, and
\begin{equation*}
  w_{\bll}(\gamma)
  =\prod_{s\in\mathbb F_2^2\setminus\{0\}}
    \lambda_s^{N_s(\gamma)},
  \qquad
  N_s(\gamma):=\abs{\{i:\gamma_i=s\}}\leq2k.
\end{equation*}
We reserve a fixed fraction of the completeness--soundness gap for numerical
error.  The local unitaries from \cref{prop:pauli-canonicalization} do not occur in the
certificate: in the completeness and soundness arguments they are absorbed
into the POVMs.

\subsubsection{Completeness, soundness, and running time}

\begin{proof}[Proof of \cref{prop:succinct-verifier}]
  Choose the approximation parameter $\varepsilon$ fixed above sufficiently
  small that the
  total dimension-reduction, rational-approximation, rounding, and numerical
  errors are at most $\Delta_0/4$.

  For completeness, start from a near-optimal strategy and follow the proof
  of \cref{thm:finite-copy-approximation}.  Immediately before the final
  application of the polar
  decomposition, retain the degree-$k$ operators supplied by
  \cref{thm:folded-block-output} with $\delta=\varepsilon^3$.
  Their value is within the reserved
  error of the original value, and their averaged squared normalization
  errors are $O(\varepsilon^4)$.  The norm bound for each question and
  \cref{lem:rational-block-approximation} replace them by rational operators
  of polynomial coefficient precision without exhausting the reserved gap.

  For soundness, choose the normalization-test threshold in the form
  $\nu_{\mathrm{cert}}^4$, where $0<\nu_{\mathrm{cert}}\leq1$ is a fixed
  constant small enough that the rounding loss is within the reserved part
  of the gap.  Suppose a rational certificate passes the two averaged
  normalization tests and the value test.  Apply
  \cref{lem:polar-soundness} in the analysis: the certified operators have polar
  parts defining POVM families for Alice and Bob.  The value of the
  resulting strategy differs from the accepted value by at most
  $15\nu_{\mathrm{cert}}$, which is within the reserved error by the choice
  of $\nu_{\mathrm{cert}}$.
  Thus a certificate
  accepted above the midpoint of the original gap cannot exist on a no
  instance.  Notice that the polar decomposition is not performed by the
  certificate verifier; it is used only to prove soundness.

  Finally, the certificate verifier enumerates the protocol verifier's
  $2^{R(N)}$ random strings to recover
  the nonzero question pairs and their probabilities.  This enumeration
  takes $2^{\poly(N)}$ time.  It then reads the Pauli coefficients and
  performs the normalization and value tests described above.
  The count of Pauli strings and arithmetic bound
  established there, together with
  \cref{lem:rational-block-approximation}, bound the number of coefficients
  and the number of arithmetic operations in these tests by
  \begin{equation*}
    \poly(q,t)(3eD)^{O(k)},
  \end{equation*}
  with $\poly(N)$ bits per real and imaginary coefficient.
  Since $\log q,\log t\leq\poly(N)$ and, for the fixed state $\psi$ and
  constant $\varepsilon$, \cref{thm:finite-copy-approximation} gives
  $k\leq\poly(\log t)$ and $D\leq\poly(t)$, we have
  \begin{equation*}
    \log\!\left(\poly(q,t)(3eD)^{O(k)}\right)\leq\poly(N).
  \end{equation*}
  Thus both the certificate length and the total running time, including
  enumeration and finite-precision arithmetic, are at most $2^{\poly(N)}$.
\end{proof}

\section{\texorpdfstring{The Case $\rho_{\max}=1$}
  {The Case rho-max = 1}}
\label{sec:maximal-correlation-one}

We prove \cref{prop:boundary-canonical,thm:boundary-re-lower-bound,lem:fixed-resource-re-upper-bound}
using the notation fixed in their statements.  The proof has two branches.
After canonicalization, the separable branch is settled by
\cref{prop:separable-resource-classical}.  In the entangled branch,
\cref{lem:boundary-hamming-partition,lem:boundary-local-extraction} supply
the local extraction, and a bilateral Pauli twirl completes
\cref{thm:boundary-epr-component}.  Its simulation consequence is then
combined with constant-answer EPR protocols and parallel repetition to
prove the lower bound.

\subsection{Canonical Form at \texorpdfstring{$\rho_{\max}=1$}{rho-max = 1}}

\begin{proof}[Proof of \cref{prop:boundary-canonical}]
  Apply \cref{prop:pauli-canonicalization}.  Since
  $\rho_{\max}(\psi)=1$, one of the three nonidentity canonical coefficients
  has modulus one.  Conjugating locally by suitable Clifford unitaries
  permutes the three nonidentity Pauli matrices and changes their signs, so
  we may move this coefficient to the $\pauliZ$ coordinate and make it positive.
  Thus it is enough
  to consider
  \begin{equation*}
    \frac14\left(
      \id\otimes\id+\lambda_1\pauliX\otimes\pauliX^{\mathsf T}
      +\lambda_2\pauliY\otimes\pauliY^{\mathsf T}
      +\pauliZ\otimes\pauliZ^{\mathsf T}
    \right).
  \end{equation*}
  The Bell basis diagonalizes this operator.  In the order
  $\Phi^+,\Phi^-,\Psi^+,\Psi^-$, its four eigenvalues are
  \begin{equation*}
    \frac{2+\lambda_1+\lambda_2}{4},\qquad
    \frac{2-\lambda_1-\lambda_2}{4},\qquad
    \frac{\lambda_1-\lambda_2}{4},\qquad
    \frac{-\lambda_1+\lambda_2}{4}.
  \end{equation*}
  Positivity of the last two eigenvalues forces
  $\lambda_1=\lambda_2=:r$.  Conjugating one subsystem by $\pauliZ$ changes the
  signs of both $\pauliX$ and $\pauliY$ coefficients while leaving the
  $\pauliZ$ coefficient
  fixed, so we may take $\lambda:=|r|\in[0,1]$.  Written in the computational
  and Pauli bases, the two surviving Bell components are precisely the two
  displayed forms of $\Psi_\lambda$ preceding
  \cref{prop:boundary-canonical}.

  Finally, the eigenvalues of the partial transpose
  $\Psi_\lambda^{\mathsf T_B}$ are
  \begin{equation*}
    \frac12,\qquad \frac12,\qquad \frac\lambda2,
    \qquad -\frac\lambda2.
  \end{equation*}
  Hence $\lambda>0$ implies entanglement, whereas
  $\Psi_0=\tfrac12(\ketbra{00}+\ketbra{11})$ is separable.
\end{proof}

\subsection{The case \texorpdfstring{$\psi$}{psi} is separable}

We prove \cref{prop:separable-resource-classical} in this subsection, restated below.

\separableresourceclassical*

\begin{proof}
  For each finite $n$ one may write
  \begin{equation*}
    \psi^{\otimes n}=\sum_z p_z\,\psi_z^A\otimes\psi_z^B.
  \end{equation*}
  For arbitrary local POVMs, the induced correlation is therefore
  \begin{equation*}
    p(a,b\mid x,y)
    =\sum_zp_z\Tr(A_a^x\psi_z^A)\Tr((B_b^y)^{\mathsf T}\psi_z^B).
  \end{equation*}
  This is a local correlation with shared random variable $z$, and hence it
  cannot exceed the optimal deterministic classical value of the game.
  Conversely, the players may ignore $\psi$ and use any classical
  strategy.  The complexity statement follows from the classical identity
  \begin{equation*}
    \MIP[\poly,O(1)]=\MIP[\poly,\poly]=\NEXP=\MIP,
  \end{equation*}
  obtained from the scaled-up PCP construction
  \cite{Mie09,FL92,BFL91}.
\end{proof}

\subsection{The case \texorpdfstring{$\psi$}{psi} is Entangled}

\subsubsection{A Perfect Hamming Partition}

Fix $0<\lambda\leq1$.  For $x,y\in\mathbb F_2^n$, let
$d_H(x,y)$ be their Hamming distance.  Expanding the $n$th tensor power of
the computational-basis formula for $\Psi_\lambda$ gives
\begin{equation*}
  \Psi_\lambda^{\otimes n}
  =2^{-n}\sum_{x,y\in\mathbb F_2^n}
    \lambda^{d_H(x,y)}\ket{x,x}\!\bra{y,y}.
\end{equation*}

Fix $m\geq1$, put $d:=2^m$ and $n:=d-1$, and let $H$ be the
$m\times n$ binary matrix whose columns are all the nonzero vectors of
$\mathbb F_2^m$.  Define the Hamming code
$\mathcal C_m:=\ker H\subseteq\mathbb F_2^n$.  Let
$e_1,\ldots,e_n$ denote the standard basis vectors and set $e_0:=0$.

\begin{lemma}[Hamming partition]
  \label{lem:boundary-hamming-partition}
  Every $x\in\mathbb F_2^n$ has a unique representation
  \begin{equation*}
    x=c+e_i,
    \qquad c\in\mathcal C_m,
    \qquad i\in\{0,1,\ldots,n\}.
  \end{equation*}
  Moreover, $|\mathcal C_m|=2^{n-m}$ and
  $|\mathcal C_m|d=2^n$.
\end{lemma}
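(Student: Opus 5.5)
The plan is to use the syndrome map $x\mapsto Hx$, which is a linear map $\mathbb F_2^n\to\mathbb F_2^m$.

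First, $H$ has full row rank $m$: its columns include the $m$ standard basis vectors of $\mathbb F_2^m$. Hence $H$ is surjective, and rank--nullity gives
\[
|\mathcal C_m|=|\ker H|=2^{n-m}.
\]
Since $d=2^m$, this immediately yields $|\mathcal C_m|d=2^n$.

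Second, I would check that the syndromes of the $n+1=d$ vectors $e_0,e_1,\ldots,e_n$ are exactly the $d$ elements of $\mathbb F_2^m$, each appearing once. Indeed, $He_0=0$, and $He_i$ is the $i$th column of $H$. By construction the columns of $H$ run through all nonzero vectors of $\mathbb F_2^m$ exactly once. So $i\mapsto He_i$ is a bijection from $\{0,1,\ldots,n\}$ onto $\mathbb F_2^m$.

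Existence of the representation follows. Given $x$, choose the unique $i$ with $He_i=Hx$ and set $c:=x+e_i$. Then $Hc=Hx+He_i=0$, so $c\in\mathcal C_m$, and $x=c+e_i$ because addition is modulo~$2$.

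Uniqueness also follows. If $x=c+e_i=c'+e_j$ with $c,c'\in\mathcal C_m$, apply $H$ to get $He_i=He_j$. The bijection then forces $i=j$, and hence $c=c'$.

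Alternatively, the counting identity $|\mathcal C_m|\cdot d=2^n$, together with uniqueness, gives existence by cardinality. There is no real obstacle here; the only point requiring care is that the column list of $H$ contains each nonzero vector exactly once, so that $0$ is hit only by $e_0$.
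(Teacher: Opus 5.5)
Your proposal is correct and follows essentially the same route as the paper: the syndrome map, the observation that $He_0,He_1,\ldots,He_n$ are pairwise distinct and exhaust $\mathbb F_2^m$, and rank--nullity for $|\mathcal C_m|=2^{n-m}$. You add two details the paper leaves implicit: why $H$ has rank $m$ (its columns include the standard basis vectors), and the explicit computation $He_i=He_j$ that gives uniqueness.
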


\begin{proof}
  If $Hx=0$, take $i=0$ and $c=x$.  Otherwise, $Hx$ equals a unique nonzero
  column $He_i$, and $c:=x+e_i$ belongs to $\mathcal C_m$.  The syndromes
  $He_0,He_1,\ldots,He_n$ are pairwise distinct, which also proves
  uniqueness.  Finally, $H$ has rank $m$, so
  $|\mathcal C_m|=2^{n-m}$.
\end{proof}

The bijection in \cref{lem:boundary-hamming-partition} defines a local
computational-basis unitary
\begin{equation}
\label{eq:hamming-extraction-unitary}
  U_m\ket{c+e_i}=\ket c_{\mathrm{code}}\ket i_{\mathrm{log}}.
\end{equation}

\begin{lemma}[Local extraction]
  \label{lem:boundary-local-extraction}
  If Alice and Bob apply $U_m$ from \eqref{eq:hamming-extraction-unitary}
  to their respective $n$ qubits and discard
  the code registers, the state of the two $d$-dimensional logical registers
  is
  \begin{equation*}
    \psi_{d,\lambda}
    =\frac1d\sum_{i,j=0}^{d-1}
      \lambda^{d_H(e_i,e_j)}\ket{i,i}\!\bra{j,j}.
  \end{equation*}
  The extraction is deterministic and uses no postselection.
\end{lemma}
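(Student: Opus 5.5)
The plan is a direct computation starting from the tensor-power expansion
\begin{equation*}
  \Psi_\lambda^{\otimes n}
  =2^{-n}\sum_{x,y\in\mathbb F_2^n}
    \lambda^{d_H(x,y)}\ket{x,x}\!\bra{y,y}.
\end{equation*}
First apply $U_m\otimes U_m$ using \cref{lem:boundary-hamming-partition}. Write $x=c+e_i$ and $y=c'+e_j$, so the local unitaries send $\ket{x,x}$ to $\ket{c}\ket{i}\otimes\ket{c}\ket{i}$, with Alice's code and logical registers first and Bob's second. The conjugated state is therefore
\begin{equation*}
  2^{-n}\sum_{c,c'\in\mathcal C_m}\sum_{i,j=0}^{d-1}
  \lambda^{d_H(c+e_i,c'+e_j)}
  \ket{c,i}\!\bra{c',j}_A\otimes\ket{c,i}\!\bra{c',j}_B.
\end{equation*}
Since $U_m$ is a bijective relabeling of computational basis states, this step is exact and involves no postselection.

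Next, trace out both code registers. The partial trace over Alice's and Bob's code registers keeps exactly the terms with $c=c'$, because the code labels appear as $\ket{c}\!\bra{c'}$ on each side and the trace contributes $\delta_{c,c'}$. For those surviving terms, translation invariance gives $d_H(c+e_i,c+e_j)=d_H(e_i,e_j)$, so the exponent no longer depends on $c$. Summing over $c$ then produces a factor $|\mathcal C_m|$, and \cref{lem:boundary-hamming-partition} gives $|\mathcal C_m|\,2^{-n}=1/d$. This yields the stated $\psi_{d,\lambda}$.

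The argument has no real obstacle. The only point needing care is bookkeeping the tensor ordering, so that the code and logical registers on the two sides are traced correctly. I would also note that $d_H(e_i,e_j)\in\{0,1,2\}$, which gives the coherence lower bound $\lambda^2$ used later.
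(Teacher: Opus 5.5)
Your proposal is correct and follows essentially the same route as the paper. Both arguments substitute $x=c+e_i$, $y=c'+e_j$ into the tensor expansion, trace out the code registers to force $c=c'$, use translation invariance $d_H(c+e_i,c+e_j)=d_H(e_i,e_j)$, and collect the scalar $|\mathcal C_m|2^{-n}=2^{-m}=1/d$.
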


\begin{proof}
  Substitute $x=c+e_i$ and $y=c'+e_j$ into
  the tensor expansion above.  Tracing out the two code registers
  removes exactly the terms with $c\neq c'$.  For the remaining terms,
  Hamming distance is invariant under the common translation by $c$, so
  \begin{equation*}
    d_H(c+e_i,c+e_j)=d_H(e_i,e_j),
  \end{equation*}
  and the common scalar is
  $|\mathcal C_m|2^{-n}=2^{-m}=1/d$.  This proves
  \cref{lem:boundary-local-extraction}.
\end{proof}

For the state in \cref{lem:boundary-local-extraction}, the only distances
that occur are
\begin{equation*}
  d_H(e_i,e_j)=
  \begin{cases}
    0,&i=j,\\
    1,&\text{exactly one of $i,j$ is zero},\\
    2,&i,j\neq0\text{ and }i\neq j.
  \end{cases}
\end{equation*}
Thus $\lambda^{d_H(e_i,e_j)}\geq\lambda^2$ for all $i,j$, independently of $d$.

\subsubsection{Bilateral Pauli Twirling}

For every $m\geq1$, set
\begin{equation}
\label{eq:epr-extraction-parameters}
  d:=2^m,
  \qquad
  N_m:=d-1+2m,
\end{equation}
then we can define
\begin{equation}
\label{eq:maximally-entangled-target}
  \ket{\Phi_d}
  :=
  \frac1{\sqrt d}\sum_{i=0}^{d-1}\ket{i,i},
  \qquad
  \Phi_d:=\ketbra{\Phi_d}.
\end{equation}
Under the identification
$\mathbb C^d\cong(\mathbb C^2)^{\otimes m}$, the state $\Phi_d$ is
precisely a tensor product of $m$ EPR pairs.

Identify each $d$-dimensional logical register with $m$ qubits, and let
$\mathcal P_m$ be the $m$-qubit Pauli operators modulo global phase.  Define
the bilateral Pauli twirl
\begin{equation}
\label{eq:bilateral-pauli-twirl}
  \mathcal W_d(\rho)
  :=\frac1{d^2}\sum_{P\in\mathcal P_m}
    (P\otimes\overline P)\rho(P\otimes\overline P)^\dagger,
\end{equation}
where the bar denotes entrywise conjugation in the computational basis.
This channel is diagonal in the generalized Bell basis and preserves the coefficient of $\Phi_d$. In
particular, $P\otimes\overline P$ fixes $\ket{\Phi_d}$ and the twirl removes
all matrix elements between distinct generalized Bell states.  Hence, for
every state $\rho$,
\begin{equation*}
  \mathcal W_d(\rho)=F\Phi_d+(1-F)\zeta,
  \qquad F=\bra{\Phi_d}\rho\ket{\Phi_d},
\end{equation*}
for a state $\zeta$ supported on the orthogonal complement of
$\ket{\Phi_d}$.

Each extra copy of $\Psi_\lambda$, when measured on both sides in the
computational basis, produces one shared unbiased bit.  Therefore
$2m$ extra copies provide the shared seed needed to sample uniformly from
the $d^2=2^{2m}$ Paulis in \eqref{eq:bilateral-pauli-twirl}.

\begin{theorem}[Dimension-independent EPR weight]
  \label{thm:boundary-epr-component}
  Fix $0<\lambda\leq1$.  For every $m\geq1$, there exist local quantum
  channels and a bipartite state $\zeta_{m,\lambda}$ such that
  \begin{equation*}
    \Psi_\lambda^{\otimes N_m}
    \longmapsto
    F_{m,\lambda}\Phi_d
    +(1-F_{m,\lambda})\zeta_{m,\lambda},
    \qquad d=2^m,N_m=2^m-1+2m
  \end{equation*}
  where
  \begin{equation}
  \label{eq:epr-component-weight}
     F_{m,\lambda}
  =
  \frac{
    d+2(d-1)\lambda+(d-1)(d-2)\lambda^2
  }{d^2}\geq\lambda^2.
  \end{equation}
\end{theorem}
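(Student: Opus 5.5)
The plan is to compose three local operations on $\Psi_\lambda^{\otimes N_m}$: the Hamming extraction on $d-1$ copies, a shared-randomness generation step on the remaining $2m$ copies, and a bilateral Pauli twirl $\mathcal W_d$ controlled by that randomness. Then compute the overlap with $\Phi_d$.

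First, split the $N_m=(d-1)+2m$ copies into a block of $n=d-1$ copies and a block of $2m$ copies. On the first block, Alice and Bob apply $U_m$ and discard the code registers. By \cref{lem:boundary-local-extraction} this deterministically yields
\[
\psi_{d,\lambda}=\frac1d\sum_{i,j=0}^{d-1}\lambda^{d_H(e_i,e_j)}\ket{i,i}\!\bra{j,j}.
\]
On the second block, both parties measure each copy in the computational basis. The diagonal of $\Psi_\lambda$ is $\tfrac12(\ketbra{00}+\ketbra{11})$, so each copy yields a uniformly random bit that Alice and Bob share perfectly. This gives $2m$ shared uniform bits, i.e.\ a uniform index into $\mathcal P_m$, which has $d^2$ elements.

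With this seed, Alice applies $P$ and Bob applies $\overline P$ to their logical registers, then both forget the seed. The resulting map is a local channel (a classical mixture of product unitaries), and averaging over the seed gives exactly $\mathcal W_d(\psi_{d,\lambda})$. By the twirl property stated before the theorem, this equals $F\Phi_d+(1-F)\zeta$, with $F=\bra{\Phi_d}\psi_{d,\lambda}\ket{\Phi_d}$ and $\zeta$ a state supported on $\ket{\Phi_d}^\perp$. Set $\zeta_{m,\lambda}:=\zeta$.

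It remains to compute $F$. Since $\braket{\Phi_d}{i,i}=d^{-1/2}$,
\[
F=\frac1{d^2}\sum_{i,j}\lambda^{d_H(e_i,e_j)}.
\]
Using the three-case distance table:
\begin{itemize}
\item the $d$ diagonal pairs ($i=j$) contribute $1$ each;
\item the $2(d-1)$ pairs with exactly one index zero contribute $\lambda$ each;
\item the $(d-1)(d-2)$ pairs of distinct nonzero indices contribute $\lambda^2$ each.
\end{itemize}
Summing gives \eqref{eq:epr-component-weight}. For the lower bound, every term is at least $\lambda^2$ because $0<\lambda\le1$, and there are $d^2$ terms, so $F\ge\lambda^2$.

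The only delicate point is verifying that the twirl is diagonal in the generalized Bell basis and that $P\otimes\overline P$ fixes $\ket{\Phi_d}$. The fixing follows from the identity $(X\otimes\id)\ket{\Phi_d}=(\id\otimes X^{\mathsf T})\ket{\Phi_d}$ together with $\overline P^{\mathsf T}=P^\dagger$. Diagonality follows because the Bell states $(P_\beta\otimes\id)\ket{\Phi_d}$ acquire distinct characters $\pm1$ under conjugation by the twirling Paulis, so averaging kills the off-diagonal elements. The paper has already asserted this property, so I will simply invoke it.
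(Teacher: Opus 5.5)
Your proposal is correct and follows the paper's proof: Hamming extraction on the first $d-1$ copies, the remaining $2m$ copies measured in the computational basis to seed the bilateral twirl $\mathcal W_d$, and the three-class count of the terms $\lambda^{d_H(e_i,e_j)}$ to compute the overlap. The only cosmetic difference is the lower bound. You average the termwise bound $\lambda^{d_H(e_i,e_j)}\geq\lambda^2$ over the $d^2$ terms, whereas the paper writes out the factorization $F_{m,\lambda}-\lambda^2=(1-\lambda)\bigl(d(1+3\lambda)-2\lambda\bigr)/d^2$; both are immediate.
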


Although the target dimension $d=2^m$ grows with $m$, the weight of the
exact maximally entangled component is bounded below by the fixed
quantity $\lambda^2$.  Thus every entangled boundary state
$\Psi_\lambda$ contains, in arbitrary finite dimension, an exact EPR
component whose weight does not deteriorate with the number of EPR
pairs.

\begin{proof}[Proof of \cref{thm:boundary-epr-component}]
  Apply \cref{lem:boundary-local-extraction} to the first $d-1$ copies and
  use the remaining $2m$ copies to implement the bilateral twirl.  The
  extracted state satisfies
  \begin{equation*}
    \bra{\Phi_d}\psi_{d,\lambda}\ket{\Phi_d}
    =\frac1{d^2}\sum_{i,j=0}^{d-1}
      \lambda^{d_H(e_i,e_j)}.
  \end{equation*}
  There are $d$ ordered pairs at distance zero, $2(d-1)$ at distance one,
  and $(d-1)(d-2)$ at distance two.  Substitution of these three counts into
  the preceding overlap sum gives $F_{m,\lambda}$ in
  \eqref{eq:epr-component-weight}.  Moreover,
  \begin{equation*}
    F_{m,\lambda}-\lambda^2
    =\frac{(1-\lambda)\bigl(d(1+3\lambda)-2\lambda\bigr)}{d^2}
    \geq0.
  \end{equation*}
  The resource conversion asserted in \cref{thm:boundary-epr-component} now
  follows from the twirl decomposition above.
\end{proof}


\subsubsection{Simulation and Amplification}

\begin{cor}[One-sided EPR simulation]
  \label{cor:boundary-epr-simulation}
  Suppose a strategy for a game
  $\sfG=(\mathcal X,\mathcal Y,\mathcal A,\mathcal B,\mathsf p_{\sfG},V)$
  uses $m$ EPR pairs and is accepted with
  probability $p$.  Then a strategy using $N_m$ copies of
  $\Psi_\lambda$ is accepted with probability at least
  \begin{equation*}
    F_{m,\lambda}p\geq\lambda^2p.
  \end{equation*}
\end{cor}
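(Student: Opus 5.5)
The plan is to compose the resource conversion of \cref{thm:boundary-epr-component} with the given EPR strategy and use linearity of the acceptance probability in the shared state.

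Fix the $m$-EPR strategy: POVMs $(A_a^x)$ for Alice and $(B_b^y)$ for Bob acting on $\Phi_d$, with $d=2^m$, and acceptance probability
\[
p=\sum_{x,y,a,b}\mathsf p_{\sfG}(x,y)V(a,b\mid x,y)\Tr\!\left[(A_a^x\otimes(B_b^y)^{\mathsf T})\Phi_d\right].
\]
The new strategy on $\Psi_\lambda^{\otimes N_m}$ works as follows. Each player first applies their local part of the conversion from \cref{thm:boundary-epr-component}: the extraction unitary $U_m$ on $d-1$ copies, discarding the code register, and then the twirl. The twirl is implemented by measuring the remaining $2m$ copies in the computational basis, which yields $2m$ perfectly correlated uniform bits. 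These bits select a Pauli $P\in\mathcal P_m$; Alice applies $P$ and Bob applies $\overline P$. Both operations act locally and do not depend on the questions, so they can be performed before the questions arrive.

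Each player then measures the original POVM on their logical register. Equivalently, Alice's effective POVM is the Heisenberg-picture pullback $\mathcal E_A^*(A_a^x)$ under her local channel $\mathcal E_A$, and likewise for Bob. Since the adjoint of a unital CP map, or of a channel, maps POVMs to POVMs, this is a valid strategy on $N_m$ copies. The classical seed is generated by measuring the shared copies, so no outside randomness is used.

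Because the local channels act in product form, $\mathcal E_A\otimes\mathcal E_B$, the acceptance probability of the new strategy equals the original value functional evaluated on the converted state
\[
F_{m,\lambda}\Phi_d+(1-F_{m,\lambda})\zeta_{m,\lambda}.
\]
By linearity this equals $F_{m,\lambda}p+(1-F_{m,\lambda})p'$, where $p'$ is the value of the same POVMs on $\zeta_{m,\lambda}$. Since $V\ge0$ and the operators $A_a^x\otimes(B_b^y)^{\mathsf T}$ are positive, $p'\ge0$, so the acceptance probability is at least $F_{m,\lambda}p$, which is at least $\lambda^2p$ by \eqref{eq:epr-component-weight}.

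The only step needing care is that the twirl, implemented via the shared random seed, is a genuinely local operation that produces exactly the mixture stated in \cref{thm:boundary-epr-component}. The twirl is a convex combination of the product unitaries $P\otimes\overline P$, indexed by shared randomness that is independent of the extracted registers, so averaging over the seed reproduces $\mathcal W_d(\psi_{d,\lambda})$. I would also check the transpose convention: Bob's operator enters as $(B_b^y)^{\mathsf T}$, and this must be consistent with applying $\mathcal E_B$ before the measurement. This only relabels Bob's POVM and does not affect the argument. The remaining steps are routine.
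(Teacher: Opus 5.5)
Your proposal is correct and follows the paper's own proof: apply the local conversion of \cref{thm:boundary-epr-component}, then the original EPR measurements, and discard the $(1-F_{m,\lambda})\zeta_{m,\lambda}$ contribution using positivity of the acceptance operator $M=\sum_{x,y,a,b}\mathsf p_{\sfG}(x,y)V(a,b\mid x,y)A_a^x\otimes(B_b^y)^{\mathsf T}\geq0$. Your extra discussion of the seed-based twirl and of the Heisenberg pullback only makes explicit the locality that the paper takes for granted. One small correction there: what you need is that the adjoint of a channel is unital and completely positive, and hence maps POVMs to POVMs; the adjoint of a merely unital CP map need not preserve normalization.
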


\begin{proof}
  First apply the local conversion in
  \cref{thm:boundary-epr-component}, and then perform the original EPR
  measurements.  The resulting acceptance operator has the form
  \begin{equation*}
    M=\sum_{x,y,a,b}\mathsf p_{\sfG}(x,y)V(a,b\mid x,y)
      A_a^x\otimes(B_b^y)^{\mathsf T}\geq0.
  \end{equation*}
  Consequently,
  \begin{align*}
    \Tr\!\left[M\left(
      F_{m,\lambda}\Phi_d+(1-F_{m,\lambda})\zeta_{m,\lambda}
    \right)\right]
    &\geq F_{m,\lambda}\Tr(M\Phi_d)\notag\\
    &=F_{m,\lambda}p.
  \end{align*}
\end{proof}

We need constant-answer EPR protocols with an arbitrarily prescribed
constant soundness.

\begin{theorem}[Constant-answer EPR protocols]
  \label{thm:constant-answer-epr-re}
  Let $L\in\RE$ and fix $\delta>0$.  There exist a constant
  $a_\delta\in\mathbb N$ and a uniform family of entangled games
  $H_x=(\mathcal X_x,\mathcal Y_x,\mathcal A_x,\mathcal B_x,\mathsf p_{H_x},V_x)$,
  generated by polynomial-time classical verifiers, with polynomial
  question length and answer length at most $a_\delta$ such that the
  following hold.
  \begin{enumerate}
    \item If $x\in L$, then $\omega^*(H_x)=1$.
    \item If $x\notin L$, then $\omega^*(H_x)\leq\delta$.
  \end{enumerate}
  On yes-instances, strategies using finitely many EPR pairs can achieve
  acceptance probabilities arbitrarily close to one.
\end{theorem}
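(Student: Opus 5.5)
The plan is to combine the constant-answer EPR characterization of $\RE$ from \cite[Corollary~2 / main theorem]{DongEtAl24} (equivalently, the compression-based $\MIP^*=\RE$ protocols of \cite{JNVWY21} with answer reduction to constant length) with Yuen's parallel repetition, \cref{thm:entangled-parallel-repetition}. By \cite{DongEtAl24}, for every $L\in\RE$ there is a uniform family of games $G_x$ with polynomial question length and constant answer length $a_0$. These games satisfy $\omega^*(G_x)=1$ on yes-instances, with near-optimal value attained using finitely many EPR pairs, and $\omega^*(G_x)\le 1-\epsilon_0$ on no-instances for an absolute constant $\epsilon_0>0$. The completeness part should also read: for every $\eta>0$ there is a strategy on finitely many EPR pairs with value at least $1-\eta$. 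I would take this directly from the cited result, checking that its strategies use maximally entangled states of $2$-power dimension. If they do not, I would embed them into a larger EPR register by padding.

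Next, I set $H_x:=G_x^{\otimes r}$ (AND repetition) with $r$ a constant depending only on $\delta$, $\epsilon_0$ and $a_0$. The verifier samples $r$ independent question pairs and checks all coordinates, so it still runs in polynomial time. Question length grows by the constant factor $r$, and answer length becomes $a_\delta:=r a_0$. For soundness, \cref{thm:entangled-parallel-repetition} gives $\omega^*(H_x)\le C s\log r/(\epsilon_0^{17}r^{1/4})$ with $s=\max\{2a_0\log 2,1\}$. This quantity is independent of $x$, so choosing $r$ large enough makes it at most $\delta$.

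For completeness, \cref{lem:and-repetition-lower-bounds} applied to a strategy of value $1-\eta$ on $m$ EPR pairs yields a strategy on $rm$ EPR pairs (a tensor product of EPR pairs is again EPR pairs) of value at least $(1-\eta)^r$. Letting $\eta\to0$ shows that $\omega^*(H_x)=1$ and that this supremum is approached by finite-EPR strategies.

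The main obstacle is making sure the cited base protocol really has a constant gap with constant-length answers uniformly in the input. Yuen's bound depends on the answer alphabet size, so any growth of answer length with $|x|$ would force $r$ to depend on $x$. I would verify this uniformity in \cite{DongEtAl24} first. If only the weaker statement $\MIP^*[\poly,O(1)]=\RE$ with a constant gap is available, that suffices, because the argument uses nothing else.
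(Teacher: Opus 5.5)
Your proposal is correct and matches the paper's proof essentially step for step. Both start from the constant-gap, constant-answer EPR games of \cite{DongEtAl24}, take an $r$-fold AND repetition with $r$ depending only on $a_0$, the gap, and $\delta$, bound soundness with Yuen's theorem using $s=\max\{2a_0\log 2,1\}$, and obtain completeness from independent product strategies on $rm$ EPR pairs. The paper uses per-coordinate success $1-\nu/r$ to get $1-\nu$, which is equivalent to your $(1-\eta)^r\to1$, and it additionally fixes answers to exactly $a_0$ bits so that $\log|\mathcal A_x\times\mathcal B_x|\le 2a_0\log 2$ holds.
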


The starting point is the constant-answer result of \cite{DongEtAl24}, which supplies a fixed soundness gap rather than a
prescribed soundness bound.

\begin{theorem}[{\cite[Theorems~3 and~36]{DongEtAl24}}]
  \label{thm:dfn-constant-answer-epr}
  For every $L\in\RE$, there are constants $a_0\in\mathbb N$ and
  $0<\gamma<1$ and a uniform family of entangled
  games
  $\sfG_x=(\mathcal X_x,\mathcal Y_x,\mathcal A_x,\mathcal B_x,\mathsf p_{\sfG_x},V_x)$,
  generated by polynomial-time classical verifiers, with polynomial
  question length and answer length at most $a_0$, such that the
  following hold.
  \begin{enumerate}
    \item If $x\in L$, then $\omega^*(\sfG_x)=1$.
    \item If $x\notin L$, then $\omega^*(\sfG_x)\leq1-\gamma$.
  \end{enumerate}
  On yes-instances, strategies using finitely many EPR pairs can achieve
  acceptance probabilities arbitrarily close to one.
\end{theorem}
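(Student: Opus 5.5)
The statement is imported verbatim from \cite[Theorems~3 and~36]{DongEtAl24}, so the plan is to reconstruct that proof. It has two stages: start from a polynomial-answer $\MIP^*=\RE$ protocol, then compress the answers to constant length by a Hadamard-code answer reduction that is sound against entangled provers. The honest strategies must be kept on finitely many EPR pairs throughout.

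\emph{Step 1: polynomial-answer protocol.} By \cite{JNVWY21,JNVWY22}, every $L\in\RE$ has a uniform family of two-prover one-round games $\sfG'_x$ with these properties: polynomial-time verifiers, $\poly(|x|)$-bit questions and answers, and synchronous (and, after oracularization, symmetric) structure. Yes-instances have $\omega^*=1$, approached by strategies that measure Pauli-type observables on finitely many EPR pairs. No-instances have $\omega^*\le1/2$. I would take this as the base protocol and record two facts. First, its decision predicate $V(a,b\mid x,y)$ is computed by a polynomial-size circuit. Second, in the honest strategy each answer string is the outcome of a projective measurement.

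\emph{Step 2: answer reduction.} Each prover is asked to return single bits of the Hadamard encoding of its answer, together with the Hadamard encoding of the tensor of a circuit witness certifying $V(a,b\mid x,y)=1$. The verifier uses the classical ALMSS-style Hadamard PCP for quadratic equations:
\begin{itemize}
\item questions are random vectors in $\mathbb F_2^{\poly}$, so they remain polynomial length;
\item answers are constantly many bits;
\item the tests are linearity (BLR) tests, consistency tests between the two provers' encodings via an oracularized copy, and the quadratic-equation check.
\end{itemize}
For completeness, suppose the honest prover measures the projective measurement $\{P_a\}$. It answers query $r$ with the outcome of the observable $\sum_a(-1)^{a\cdot r}P_a$. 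These observables commute for different $r$ and are consistent with the partner's measurements, so the new strategy uses the same EPR pairs and still passes with probability arbitrarily close to one.

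\emph{Step 3: soundness, the main obstacle.} Soundness must hold against arbitrary entangled provers. The plan is as follows.
\begin{itemize}
\item Use the entangled linearity test (Ito--Kobayashi--Matsumoto / Pauli-braiding style). A prover passing BLR with probability $1-\epsilon$ must act, up to $O(\epsilon)$ in state-dependent distance, like a POVM $\{M_a\}$ followed by classical evaluation of $a\cdot r$.
\item Use the oracularization and consistency tests to make Alice's decoded string agree with the string Bob decodes.
\item Use the quadratic-equation test to force the decoded pair to satisfy $V$.
\end{itemize}
Together these give a strategy for $\sfG'_x$ with value $1-O(\epsilon)$, and hence a constant soundness gap $\gamma$.

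The delicate point is that the rounding loss must be independent of the answer length. This is exactly where I expect the work to concentrate. I would first try the robust, dimension-free analysis of the entangled linearity test, combined with the data-hiding/self-consistency arguments from \cite{DongEtAl24}. With a constant number of queries and a constant $\gamma$, this yields the stated constants $a_0$ and $\gamma$.
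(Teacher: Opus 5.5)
This statement has no proof in the paper. It is imported verbatim from \cite{DongEtAl24} and used only as a black box in the proof of \cref{thm:constant-answer-epr-re}, through three properties: answers of at most $a_0$ bits, perfect completeness approached by finite-EPR strategies, and soundness $1-\gamma$. For this paper, the citation is the whole proof.

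Your architecture is a reasonable reconstruction: take the $\MIP^*=\RE$ games of \cite{JNVWY21}, oracularize, and compress the answers with an ALMSS-style Hadamard-code PCP, which is affordable exactly because questions may have polynomial length. Your completeness argument is fine, provided the base honest strategy is oracularizable, as the \cite{JNVWY21} strategies are. This is needed because the oracle prover must jointly measure the projective measurements for both original questions, so these must commute on supported question pairs.

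As a standalone proof, however, Step~3 is a list of ingredients rather than an argument, and it omits the substantive part. Single-table soundness of the entangled linearity test controls only one prover's answer to a \emph{single} point $r$ of \emph{one} table. It says that answer behaves like ``sample $a$ from a POVM $\{M_a\}$, output $a\cdot r$''. The tensor and quadratic-equation checks instead read one prover at several correlated points in one round. These points lie in two tables, the codes of $w$ and of $w\otimes w$, together with self-correction shifts. The decoding POVMs of different tables need not commute, so ``the decoded pair'' is not even jointly defined. Moreover, the measurement answering the tuple sees the correlations among $r$, $s$ and $r\otimes s$, and nothing in single-point BLR soundness forces its answers to agree with the single-point decodings.

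To close this gap you must do three things.
\begin{enumerate}
\item Decode all tables with one POVM, for example by running the linearity test on the concatenated table, so that one measurement outputs $(a,b,w,W)$ jointly.
\item Self-correct every query so that each individual point is uniform, and add a consistency test that sends one uniformly chosen point of the tuple to the other prover. Then prove that the tuple answers are $\poly(\epsilon)$-close, with no dependence on the answer length, to evaluating linear forms on a single sample of that POVM. Only after this do the classical tensor and quadratic analyses apply.
\item Transfer the decoded player measurements back to $\sfG'_x$ by the quantum oracularization argument. This argument combines cross-prover consistency with same-side closeness, obtained from the synchronous self-consistency test and the triangle inequality for state-dependent distance.
\end{enumerate}
Your closing appeal to ``arguments from \cite{DongEtAl24}'' for this step defers to the very result being proved. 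At that point the proposal reduces to the same citation the paper uses.
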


\begin{proof}[Proof of \cref{thm:constant-answer-epr-re}]
  It suffices to prove the claim for $0<\delta<1$.
  Fix the games $(\sfG_x)_x$ and constants $a_0,\gamma$ from
  \cref{thm:dfn-constant-answer-epr}.  After increasing $a_0$ if necessary,
  use fixed-length $a_0$-bit answer encodings.  Write
  $\mathcal A_x,\mathcal B_x$ for the answer sets of $\sfG_x$.
  
  Set $S:=\max\{2a_0\log2,1\}$, so that
  $\max\{\log|\mathcal A_x\times\mathcal B_x|,1\}\leq S$ for every $x$.
  Let $C>0$ be the constant in \cref{thm:entangled-parallel-repetition}.
  Choose a fixed integer $r\geq2$ large enough that
  \begin{equation*}
    \frac{C S\log r}{\gamma^{17}r^{1/4}}\leq\delta,
  \end{equation*}
  which is possible because $\log r/r^{1/4}\to0$, and define
  $H_x:=\sfG_x^{\otimes r}$.
  For $x\notin L$, we have $\omega^*(\sfG_x)\leq1-\gamma$, so
  \cref{thm:entangled-parallel-repetition} implies
  \begin{equation*}
    \omega^*(H_x)
    \leq\frac{C S\log r}{\gamma^{17}r^{1/4}}
    \leq\delta.
  \end{equation*}
  This upper bound allows arbitrary joint entangled strategies across
  all $r$ coordinates.

  Let $x\in L$ and $0<\nu<1$.  Some strategy using a finite number $m$
  of EPR pairs succeeds in $\sfG_x$ with probability at least $1-\nu/r$.
  Running this strategy independently on
  $r$ blocks of $m$ EPR pairs succeeds in $H_x$ with probability at least
  $(1-\nu/r)^r\geq1-\nu$.  The shared resource still consists of finitely
  many EPR pairs, and $\nu$ is arbitrary, so the acceptance
  probabilities of finite EPR pairs  for $H_x$ have supremum $1$.

  The integer $r$ depends only on $a_0,\gamma,\delta$, not on the input
  length or the question-alphabet sizes.  Hard-code it in the new verifier.
  Each question is a tuple of $r$ original questions, and each answer
  contains $a_\delta:=ra_0$ bits.  Sampling and checking $r$ instances
  takes polynomial time, the question length remains polynomial, and
  the answer length remains constant.  All questions are sent together
  and all answers are returned together, so the protocol still has two
  provers and one round.
\end{proof}

Now we can prove \cref{thm:boundary-re-lower-bound}, restated below.

\boundaryrelowerbound*

\begin{proof}
  Fix $L\in\RE$.  Choose and hard-code a rational number
  $0<\underline\lambda\leq\lambda$, and set
  \begin{equation*}
    q:=\frac{\underline\lambda^2}{2},
    \qquad
    k:=\left\lceil\frac{\log3}{q}\right\rceil,
    \qquad
    \delta:=\frac1{6k}.
  \end{equation*}
  Let $(H_x)_x$ be the family from
  \cref{thm:constant-answer-epr-re} with this value of $\delta$, and define
  $K_x:=\operatorname{OR}_k(H_x)$.

  Suppose first that $x\in L$.  Since the finite-EPR value of $H_x$ has
  supremum one, some strategy using a finite number $m=m(x)$ of EPR pairs is
  accepted with probability at least $1/2$.  By
  \cref{cor:boundary-epr-simulation}, a block of copies of
  $\Psi_\lambda$ simulates this strategy with acceptance probability at
  least $\lambda^2/2\geq q$.  Independence of the resource blocks used for
  the $k$ coordinates implies
  \begin{equation*}
    \Pr[K_x\text{ accepts}]
    \geq1-(1-q)^k
    \geq1-e^{-qk}
    \geq\frac23.
  \end{equation*}

  If $x\notin L$, the upper bound in \cref{lem:or-repetition-bounds} yields
  \begin{equation*}
    \omega^*(K_x)\leq k\omega^*(H_x)
    \leq k\delta=\frac16.
  \end{equation*}
  This soundness bound holds against all finite-dimensional entangled
  strategies and therefore against the fixed-resource strategies.  Since
  $k$ and $a_\delta$ depend only on the fixed resource and fixed gap, the
  answer length remains constant.
\end{proof}

\subsection{The Upper Bound}

\fixedresourcereupperbound*

\begin{proof}
  Consider a protocol with completeness $c$, soundness $s<c$, and a rational
  threshold $s<\theta<c$.  On input $x$, dovetail over all copy numbers $n$,
  all Alice and Bob POVMs in a computable dense family, and increasing
  numerical precision.  For example, such a dense family is obtained by
  enumerating rational matrices $R_1,\ldots,R_t$, putting
  $S:=\sum_aR_a^\dagger R_a$, and, whenever $S$ is positive definite, setting
  \begin{equation*}
    A_a:=S^{-1/2}R_a^\dagger R_aS^{-1/2}.
  \end{equation*}
  Enumerate the Bob POVMs in the same way.  For each enumerated strategy,
  evaluate its acceptance probability using the known parameters of $\psi$
  and the verifier's decision predicate, enclosing the result in rational
  intervals of decreasing width.  Halt as soon as a certified lower endpoint
  exceeds $\theta$.  On a yes-instance, density and completeness guarantee
  that this eventually happens; on a no-instance, soundness guarantees that
  it never happens.  The language is therefore recursively enumerable.
\end{proof}

\bibliographystyle{alpha}
\bibliography{references}

\appendix
\section{\texorpdfstring{Proof of \cref{lem:complete-sdpi}}
  {Proof of the Complete SDPI Lemma}}
\label{app:complete-sdpi}

This appendix proves \cref{lem:complete-sdpi}.  We first recall the
definitions used in Gao--Rouz\'e's complete strong data-processing theorem in \cite{GR22},
and then verify its hypotheses for the canonical correlation channel.

\begin{definition}[Multiplicative domain]
\label{def:multiplicative-domain}
Let $\Psi\colon\mathcal M\to\mathcal M$ be a unital completely positive
map on a finite-dimensional matrix algebra.  Its \emph{multiplicative
domain} is
\begin{equation*}
\operatorname{MD}(\Psi)
:=
\left\{
a\in\mathcal M:
\begin{array}{l}
\Psi(a^\dagger a)=\Psi(a)^\dagger\Psi(a),\\
\Psi(aa^\dagger)=\Psi(a)\Psi(a)^\dagger
\end{array}
\right\}.
\end{equation*}
\end{definition}

\begin{definition}[State-preserving conditional expectation]
\label{def:conditional-expectation}
Let $\mathcal N\subseteq\mathcal M$ be a unital $^*$-subalgebra.  A
\emph{conditional expectation} onto $\mathcal N$ is a completely positive
unital map $E_{\mathcal N}\colon\mathcal M\to\mathcal N$ satisfying
\begin{equation*}
E_{\mathcal N}(a)=a,
\qquad
E_{\mathcal N}(aXb)=aE_{\mathcal N}(X)b
\end{equation*}
for all $a,b\in\mathcal N$ and $X\in\mathcal M$.  Its predual is determined
by
\begin{equation*}
\Tr\!\left(E_{\mathcal N*}(\rho)X\right)
=
\Tr\!\left(\rho E_{\mathcal N}(X)\right).
\end{equation*}
For a full-rank state $\sigma$, the expectation is $\sigma$-preserving if
$E_{\mathcal N*}(\sigma)=\sigma$.
\end{definition}

The data-processing inequality says that every quantum channel $\Phi$
satisfies
\begin{equation*}
\RelEnt(\Phi(\rho)\|\Phi(\omega))\leq \RelEnt(\rho\|\omega).
\end{equation*}
A complete strong data-processing inequality strengthens this by requiring a
strict contraction, with the same coefficient after tensoring by an identity
channel on an arbitrary finite-dimensional ancilla.  In the form needed here,
Gao--Rouz\'e prove the following result:

\begin{theorem}[{\cite[Corollary~4.3]{GR22}}]
\label{thm:gao-rouze-complete-sdpi}
Let $\mathcal M$ be a finite-dimensional matrix algebra, let
$\Phi\colon\mathcal M_*\to\mathcal M_*$ be a quantum channel, and set
$\mathcal N=\operatorname{MD}(\Phi^*)$.  Let
$E_{\mathcal N}\colon\mathcal M\to\mathcal N$ be the conditional
expectation preserving a full-rank invariant state $\sigma=\Phi(\sigma)$,
and let $E_{\mathcal N*}$ be its predual.  Suppose that $\Phi^*$ satisfies
$\sigma$-detailed balance and that
\begin{equation*}
\lambda(\Phi)
:=
\norm{
\Phi^*(\operatorname{id}-E_{\mathcal N})
:
L_2(\sigma)\to L_2(\sigma)
}^{\,2}
<1.
\end{equation*}
Write $C=C_{\tau,\mathrm{cb}}(\mathcal M:\mathcal N)$.  Then there is an
explicit constant $c_{\mathrm{GR}}(C,\lambda(\Phi))<1$ such that, for every
$r\geq1$ and every state $\rho$ on $\mathcal M\otimes \Matrix_r$,
\begin{equation*}
\begin{aligned}
&\RelEnt\!\left(
(\Phi\otimes\operatorname{id}_{\Matrix_r})(\rho)
\,\middle\|\,
((\Phi\circ E_{\mathcal N*})\otimes\operatorname{id}_{\Matrix_r})(\rho)
\right)\\
&\qquad\leq
c_{\mathrm{GR}}(C,\lambda(\Phi))
\RelEnt\!\left(
\rho
\,\middle\|\,
(E_{\mathcal N*}\otimes\operatorname{id}_{\Matrix_r})(\rho)
\right).
\end{aligned}
\end{equation*}
Here $C_{\tau,\mathrm{cb}}(\mathcal M:\mathcal N)$ is the complete
Pimsner--Popa index of the inclusion $\mathcal N\subseteq\mathcal M$.
\end{theorem}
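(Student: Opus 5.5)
The plan is to follow the Gao--Rouz\'e route. First convert the index hypothesis into an operator inequality that bounds densities. Then compare relative entropy with a weighted $L_2$ (that is, $\chi^2$) distance on that bounded range. Finally use the spectral-gap quantity $\lambda(\Phi)<1$ to obtain strict contraction of the $\chi^2$ part. We work throughout on $\mathcal M\otimes\Matrix_r$ with the amplified maps $\Phi\otimes\operatorname{id}_{\Matrix_r}$ and $E_{\mathcal N}\otimes\operatorname{id}_{\Matrix_r}$. The key point is that every constant will depend only on $C$ and $\lambda(\Phi)$, never on $r$. This is exactly why the \emph{complete} Pimsner--Popa index $C=C_{\tau,\mathrm{cb}}(\mathcal M:\mathcal N)$ appears instead of the ordinary one.

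\textbf{Step 1 (bounded densities).} By the definition of the complete index, every state $\rho$ on $\mathcal M\otimes\Matrix_r$ satisfies
\[
\rho\le C\,(E_{\mathcal N*}\otimes\operatorname{id})(\rho).
\]
Set $\omega:=(E_{\mathcal N*}\otimes\operatorname{id})(\rho)$. The relative density $X:=\omega^{-1/2}\rho\,\omega^{-1/2}$ then satisfies $0\le X\le C$.

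\textbf{Step 2 (invariance of the reference).} Next we show that the reference is preserved at the output. Because $\mathcal N=\operatorname{MD}(\Phi^*)$ and $\Phi^*$ satisfies $\sigma$-detailed balance, $\Phi^*$ maps $\mathcal N$ to itself as a $*$-automorphism. It also commutes with $E_{\mathcal N}$. Hence $\Phi\circ E_{\mathcal N*}=E_{\mathcal N*}\circ\Phi$. Therefore $\Phi(\omega)$ is the $E_{\mathcal N*}$-projection of $\Phi(\rho)$, so Step 1 applies equally at the output.

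\textbf{Step 3 (comparison and contraction).} On densities with spectrum in $[0,C]$, relative entropy is two-sided comparable to the $\chi^2$ distance. We use the operator-convex bound $D\le\chi^2$ in one direction, and the reverse bound $D\ge c(C)\chi^2$ from the scalar estimate $x\log x-x+1\ge c(C)(x-1)^2$ on $[0,C]$. The $\chi^2$ quantity is a weighted $L_2(\omega)$ norm of $X-1$, and $X-1$ lies in the range of $\operatorname{id}-E_{\mathcal N}$. By detailed balance, that $L_2$ norm contracts under $\Phi^*$ by the factor $\lambda(\Phi)$.

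\textbf{Main obstacle.} The naive chain $D_{\rm out}\le\chi^2_{\rm out}\le\lambda\,\chi^2_{\rm in}\le(\lambda/c(C))D_{\rm in}$ gives a coefficient that need not be below $1$. We will instead aim for a deficit inequality,
\[
D_{\rm in}-D_{\rm out}\ \ge\ c(C)\,(1-\lambda)\,D_{\rm in},
\]
which immediately gives $c_{\rm GR}=1-c(C)(1-\lambda)<1$. To prove it, we first try interpolating along $\rho_s=(1-s)\omega+s\rho$, which keeps the density bound for all $s$. Along this path we differentiate both $D(\rho_s\|\omega)$ and $D(\Phi\rho_s\|\Phi\omega)$ twice. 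The second derivatives are weighted $L_2$ norms with operator-monotone weights evaluated at densities in $[0,C]$. In those norms the loss through $\Phi$ is at least $(1-\lambda)$ times the input norm, up to a $C$-dependent change of weight. Integrating twice from $s=0$ then yields the deficit inequality. The hardest step is controlling this weight change between the non-commutative (Bogoliubov/Kubo--Mori type) metric and the symmetric $L_2(\omega)$ metric in which $\lambda(\Phi)$ is defined. If direct comparison fails, we will fall back on the explicit constant from \cite[Corollary~4.3]{GR22}.
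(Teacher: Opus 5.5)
The paper gives no proof of this statement. It imports it from \cite{GR22} and uses only the tracial special case ($\sigma=\id/2$, $\mathcal N=\mathbb C\id$, $C=4$) in \cref{lem:complete-sdpi}. Judged as a self-contained proof, your sketch has two genuine gaps.

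The first gap is in Step~1. $C=C_{\tau,\mathrm{cb}}$ is the complete index of the \emph{trace}-preserving expectation, whereas $E_{\mathcal N}$ is the $\sigma$-preserving one. So $\rho\le C\,(E_{\mathcal N*}\otimes\operatorname{id})(\rho)$ does not follow from the definition, and it is false in general. For $\mathcal N=\mathbb C\id\subseteq\Matrix_d$ one has $E_{\mathcal N*}(\rho)=\Tr(\rho)\sigma$, so the best constant is at least $\norm{\sigma^{-1}}_\infty$. This exceeds $C_{\tau,\mathrm{cb}}(\Matrix_d:\mathbb C)=d^2$ as soon as $\sigma$ has an eigenvalue below $d^{-2}$, and the replacement channel $\Phi(\rho)=\Tr(\rho)\sigma$ meets every hypothesis. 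Step~3 also tacitly uses $E_{\mathcal N*}=E_{\mathcal N}$ and $\Phi=\Phi^*$. As written, the argument can at best cover the case where $E_{\mathcal N}$ is trace preserving.

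The second, more serious gap is the core of Step~3.

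\emph{Where the contraction comes from.} The $\lambda(\Phi)$-contraction of an $\omega$-weighted quadratic form does not follow from detailed balance, which only concerns $L_2(\sigma)$. In the tracial case it follows from the bimodule property. Your Step~2 is correct: $\omega\in\mathcal N\otimes\Matrix_r\subseteq\operatorname{MD}(\Phi^*\otimes\operatorname{id})$, and $\Phi=\Phi^*$ acts there as a $*$-automorphism. Hence $\Phi\otimes\operatorname{id}$ intertwines every function of the left and right multiplications $(L_\omega,R_\omega)$ with the same function of $(L_{\Phi(\omega)},R_{\Phi(\omega)})$, and $E_{\mathcal N}\otimes\operatorname{id}$ commutes with all of these. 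Write $\Phi$ for $\Phi\otimes\operatorname{id}$, let $\Omega_\gamma(Y)=\int_0^\infty(\gamma+u)^{-1}Y(\gamma+u)^{-1}\dd u$, and set $Y=\rho-\omega$. Then
\[
\Tr\bigl(\Phi(Y)\,\Omega_{\Phi(\omega)}(\Phi(Y))\bigr)\le\lambda\,\Tr\bigl(Y\,\Omega_\omega(Y)\bigr)
\]
holds directly in the Kubo--Mori metric.

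\emph{Why your plan fails.} The step you flag as hardest is therefore unnecessary, and as you phrase it, it would fail. Different monotone metrics at $\omega$ differ by factors governed by the eigenvalue ratios of $\omega$. Moreover $\omega$ can have arbitrary spectrum independent of $C$: take $\omega=\frac{\id}{d}\otimes\rho_R$ with $\rho_R$ the ancilla marginal. Likewise, no pointwise loss uniform in $s$ is available, because the only lower bound $\rho_s\ge(1-s)\omega$ degenerates as $s\to1$.

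\emph{The repair.} Localize the strict loss. Put $g(s)=\Tr(Y\Omega_{\rho_s}(Y))$, $h(s)=\Tr(\Phi(Y)\Omega_{\Phi(\rho_s)}(\Phi(Y)))$, $D_{\mathrm{in}}=\RelEnt(\rho\|\omega)$ and $D_{\mathrm{out}}=\RelEnt(\Phi(\rho)\|\Phi(\omega))$. Then
\begin{gather*}
D_{\mathrm{in}}=\int_0^1(1-s)g(s)\dd s,\qquad
D_{\mathrm{out}}=\int_0^1(1-s)h(s)\dd s,\qquad
h\le g,\\
\frac{g(0)}{1+s(C-1)}\le g(s)\le\frac{g(0)}{1-s},\qquad
h(s)\le\frac{\lambda\,g(0)}{1-s}.
\end{gather*}
Here $h\le g$ is Petz monotonicity, and the remaining bounds come from $(1-s)\omega\le\rho_s\le(1+s(C-1))\omega$. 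Choose $s_1=s_1(C,\lambda)>0$ with $\lambda(1+s_1(C-1))\le\frac{1+\lambda}{2}(1-s_1)$. This gives $g-h\ge\frac{1-\lambda}{2}g$ on $[0,s_1]$. Also $D_{\mathrm{in}}\le g(0)$ and $\int_0^{s_1}(1-s)g(s)\dd s\ge s_1(1-s_1)g(0)/C$. Together these give $D_{\mathrm{in}}-D_{\mathrm{out}}\ge\frac{(1-\lambda)s_1(1-s_1)}{2C}D_{\mathrm{in}}$.

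\emph{Two further points.} Your scalar bound $x\log x-x+1\ge c(C)(x-1)^2$ does not transfer to noncommuting $\rho,\omega$. The same integral gives the correct two-sided comparison $\frac{1}{2C}\chi^2_{\mathrm{KM}}\le\RelEnt\le\chi^2_{\mathrm{KM}}$ at base point $\omega$. Falling back on \cite[Corollary~4.3]{GR22} would be circular.

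With both repairs your route proves the tracial case, which is all \cref{lem:complete-sdpi} uses; the non-tracial statement is not reached by this argument.
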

We are now ready to prove \cref{lem:complete-sdpi}, which is restated below for the reader's convenience.
\completesdpi*
\begin{proof}[Proof of \cref{lem:complete-sdpi}]
Regard the canonical correlation channel $\K$ as a Schr\"odinger-picture
channel on $\Matrix_{2*}$.  We verify the hypotheses of
\cref{thm:gao-rouze-complete-sdpi}.

\paragraph{Invariant state and detailed balance.}
Because $\K$ is unital and trace preserving,
$\K(\id/2)=\id/2$, so $\sigma=\id/2$ is a full-rank invariant state.  The
$\id/2$-weighted inner product is a scalar multiple of the
Hilbert--Schmidt inner product.  Since $\K$ is real diagonal in the
orthogonal Hermitian Pauli basis, $\K^*=\K$.  Thus $\K$ satisfies the
$\sigma$-detailed-balance condition.  The signs of the Pauli eigenvalues do
not matter: detailed balance requires self-adjointness, not positivity as an
$L_2$ operator.

\paragraph{Multiplicative domain and expectation.}
Let $X\in\operatorname{MD}(\K)$ and put
$X_0=X-\tau_2\Br{X}\id$.  Since $\K$ is unital and preserves adjoints,
expanding $X_0$ gives
\begin{align*}
\K(X_0^\dagger X_0)-\K(X_0)^\dagger\K(X_0)
&=\K(X^\dagger X)-\K(X)^\dagger\K(X)=0,\\
\K(X_0X_0^\dagger)-\K(X_0)\K(X_0)^\dagger
&=\K(XX^\dagger)-\K(X)\K(X)^\dagger=0.
\end{align*}
Thus $X_0\in\operatorname{MD}(\K)$ and $\tau_2\Br{X_0}=0$.
Trace preservation then gives
\[
\norm{\K(X_0)}_2^2
=
\tau_2\!\left(\K(X_0)^\dagger\K(X_0)\right)
=
\tau_2\!\left(\K(X_0^\dagger X_0)\right)
=
\norm{X_0}_2^2.
\]
On the traceless Pauli subspace,
\[
\norm{\K(X_0)}_2
\leq
\rho_{\max}(\psi)\norm{X_0}_2,
\qquad
\rho_{\max}(\psi)<1.
\]
Hence $X_0=0$ and
\begin{equation*}
\operatorname{MD}(\K)=\mathbb C\id,
\qquad
E_{\mathcal N}(X)=E_{\mathcal N*}(X)=\tau_2(X)\id=\Pi(X).
\end{equation*}

\paragraph{Complete index and spectral parameter.}
For the scalar inclusion in $\Matrix_2$, \cite[Eq.~(18)]{GR22} identifies
\begin{equation*}
C_{\tau,\mathrm{cb}}(\Matrix_2:\mathbb C\id)=4.
\end{equation*}
Moreover,
\begin{align*}
\lambda(\K)
&=
\norm{\K^*\circ(\operatorname{id}-E_{\mathcal N}):
L_2(\id/2)\to L_2(\id/2)}^{\,2}\notag\\
&=
\norm{\K\circ(\operatorname{id}-\Pi):L_2(\tau_2)\to L_2(\tau_2)}^{\,2}
=\rho_{\max}(\psi)^2<1.
\end{align*}

Gao--Rouz\'e's theorem therefore applies with
$C=4$ and $\lambda(\K)=\rho_{\max}(\psi)^2$.  Set
\begin{equation*}
\alpha_\psi
:=
c_{\mathrm{GR}}\!\left(4,\rho_{\max}(\psi)^2\right)<1.
\end{equation*}
Since $\K\circ\Pi=\Pi$, the second argument of the relative entropy on the
left-hand side of \cref{thm:gao-rouze-complete-sdpi} is
$(\Pi\otimes\operatorname{id}_{\Matrix_r})(\rho)$.
Exchanging the system and ancilla tensor factors puts that theorem in the
channel order required by \cref{lem:complete-sdpi}, proving the asserted
inequality for every density matrix $\rho\in \Matrix_m\otimes \Matrix_2$.
The constant $\alpha_\psi$ depends only on $\psi$ and not on $m$.
\end{proof}

\section{Proof of Lemma \ref{lem:block-value-stabilitynew}}
\label{sec:appendix:block-value-stability}

\blockvaluestabilitynew*

\begin{proof}[Proof of \cref{lem:block-value-stabilitynew}]
For each $(x,y)\in\mathcal X\times\mathcal Y$, set
\begin{alignat*}{2}
S_0^x&:=A^x,\qquad
& S_1^x&:=Q^x-A^x,\\
T_0^y&:=B^y,\qquad
& T_1^y&:=R^y-B^y.
\end{alignat*}
By the normalization and approximation hypotheses,
\[
\norm{S_0^x}_4=\norm{T_0^y}_4=1,
\qquad
\norm{S_1^x}_4,\norm{T_1^y}_4\leq\eta.
\]
Recall the definition of $\cG$ in \cref{eq:block-correlation-functional}:
\begin{equation*}
  \cG_{x,y}^{\psi^{\otimes n}}(Q_1,Q_2;R_1,R_2)
  :=\sum_{a\in\mathcal A,\,b\in\mathcal B}V(a,b\mid x,y)
  \Tr\!\left[
    \psi^{\otimes n}
    \bigl(Q_{1,a}^\dagger Q_{2,a}\otimes
          (R_{1,b}^\dagger R_{2,b})^{\mathsf T}\bigr)
  \right].
\end{equation*}
Expanding the game value gives
\begin{align*}
&\Val_{\sfG}(\mathscr Q,\mathscr R)
 -\Val_{\sfG}(\mathscr A,\mathscr B)\\
&\qquad= \mathbb E_{(x,y)\sim\mathsf p_{\sfG}}\Bigl[
 \cG_{x,y}^{\psi^{\otimes n}}(Q^x,Q^x;R^y,R^y)
 -\cG_{x,y}^{\psi^{\otimes n}}(A^x,A^x;B^y,B^y)\Bigr]\\
&\qquad= \mathbb E_{(x,y)\sim\mathsf p_{\sfG}}\Bigl[
 \cG_{x,y}^{\psi^{\otimes n}}(S_0^x+S_1^x,S_0^x+S_1^x;
 T_0^y+T_1^y,T_0^y+T_1^y)\\
&\hspace{43mm}-\cG_{x,y}^{\psi^{\otimes n}}(S_0^x,S_0^x;T_0^y,T_0^y)\Bigr]\\
&\qquad=
\mathbb E_{(x,y)\sim\mathsf p_{\sfG}}
\sum_{\substack{\varepsilon\in\{0,1\}^4\\\varepsilon\neq0}}
\cG_{x,y}^{\psi^{\otimes n}}\!\left(
 S_{\varepsilon_1}^x,S_{\varepsilon_2}^x;
 T_{\varepsilon_3}^y,T_{\varepsilon_4}^y\right).
\end{align*}
For each nonzero $\varepsilon\in\{0,1\}^4$,
\cref{lem:game-expression-holder} yields
\begin{align*}
&\abs{\cG_{x,y}^{\psi^{\otimes n}}\!\left(
 S_{\varepsilon_1}^x,S_{\varepsilon_2}^x;
 T_{\varepsilon_3}^y,T_{\varepsilon_4}^y\right)}\\
&\qquad\leq
\prod_{j=1}^2\norm{S_{\varepsilon_j}^x}_4
                  \norm{T_{\varepsilon_{j+2}}^y}_4
\leq\eta^{\varepsilon_1+\varepsilon_2+\varepsilon_3+\varepsilon_4}.
\end{align*}
The triangle inequality therefore gives
\begin{align*}
\abs{\Val_{\sfG}(\mathscr Q,\mathscr R)
 -\Val_{\sfG}(\mathscr A,\mathscr B)}
&\leq\sum_{\substack{\varepsilon\in\{0,1\}^4\\\varepsilon\neq0}}
\eta^{\varepsilon_1+\varepsilon_2+\varepsilon_3+\varepsilon_4}\\
&=4\eta+6\eta^2+4\eta^3+\eta^4\\
&\leq15\eta,
\end{align*}
where the last inequality uses $0\leq\eta\leq1$.
\end{proof}

\section{\texorpdfstring{Proof of \cref{lem:folded-product-estimate}}
  {Proof of the Folded-Product Estimate}}
\label{app:folded-product-estimate}

\subsection{\mobius{} Inversion on the Partition Lattice}
In the dimension-reduction argument, we use partitions to describe which
coordinates have the same hash value.  We begin by recalling \mobius{}
inversion on the \emph{partition lattice}.  The
general incidence-algebra construction appears in
\cite[Section~3]{Rota64}.

\begin{definition}[Fibers of a hash map]
  \label{def:hash-fibers}
  Let $h\colon[n]\to[D]$.  For $j\in[D]$, the \emph{fiber} of $h$ over
  $j$ is
  \[
    B_j:=h^{-1}(j)=\{i\in[n]:h(i)=j\}.
  \]
  We write
  \begin{equation*}
    \operatorname{Fib}(h)
    :=\{B_j:j\in[D],\ B_j\ne\varnothing\}
  \end{equation*}
  for the collection of nonempty fibers.  It is the partition of $[n]$
  induced by the equivalence relation
  \[
    i\sim_h i'
    \quad\Longleftrightarrow\quad
    h(i)=h(i').
  \]
  Thus, for $C\subseteq[n]$, the restriction $h|_C$ is constant exactly
  when $C$ is contained in one block of $\operatorname{Fib}(h)$.
\end{definition}

A partition of $[n]$ is a collection of
nonempty, pairwise disjoint subsets, called \emph{blocks}, whose union is
$[n]$. We denote the set of all such partitions by $\Pi_n$. For
$\pi,\sigma\in\Pi_n$, write $\pi\leq\sigma$ if $\pi$ refines $\sigma$,
that is, if every block of $\pi$ is contained in a block of $\sigma$.
The least and greatest elements of this lattice are
\[
  \hat{0}=\bigl\{\{1\},\{2\},\ldots,\{n\}\bigr\},
  \qquad
  \hat{1}=\bigl\{[n]\bigr\},
\]
respectively.

For $\pi\leq\sigma$, the \mobius{} function of $\Pi_n$ is characterized by
\begin{equation*}
  \sum_{\pi\leq\tau\leq\sigma}\mu(\pi,\tau)
  =
  \sum_{\pi\leq\tau\leq\sigma}\mu(\tau,\sigma)
  =
  \begin{cases}
    1, & \pi=\sigma,\\
    0, & \pi<\sigma.
  \end{cases}
\end{equation*}
It has the explicit form
\begin{equation}
  \mu(\pi,\sigma)
  =\prod_{B\in\sigma}(-1)^{k_B-1}(k_B-1)!,
  \label{eq:partition-lattice-mobius}
\end{equation}
where $k_B$ is the number of blocks of $\pi$ contained in the block
$B\in\sigma$. We set $\mu(\pi,\sigma)=0$ when $\pi\nleq\sigma$. This
specialization to the partition lattice is proved in
\cite[Section~7, Corollary to Proposition~3]{Rota64}.

In this paper we will use both the upward and downward forms of \mobius{}
inversion on the partition lattice, which we now state as a lemma.

\begin{lemma}[{\cite[Section~3, Proposition~2 and Corollary~1]{Rota64}}]
  \label{lem:mobius-inversion-partition-lattice}
  Let $n\geq1$, let $V$ be a vector space, and let
  $f,g\colon\Pi_n\to V$. Then both
  of the following inversion rules hold.
  \begin{enumerate}
    \item[(i)] \emph{Downward sums:} for every $\sigma\in\Pi_n$,
    \begin{equation*}
      g(\sigma)=\sum_{\pi\leq\sigma}f(\pi)
      \quad\Longleftrightarrow\quad
      f(\sigma)=\sum_{\pi\leq\sigma}\mu(\pi,\sigma)g(\pi).
    \end{equation*}

    \item[(ii)] \emph{Upward sums:} for every $\pi\in\Pi_n$,
    \begin{equation*}
      g(\pi)=\sum_{\sigma\geq\pi}f(\sigma)
      \quad\Longleftrightarrow\quad
      f(\pi)=\sum_{\sigma\geq\pi}\mu(\pi,\sigma)g(\sigma).
    \end{equation*}
  \end{enumerate}
\end{lemma}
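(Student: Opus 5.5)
This is the standard Möbius inversion theorem on a finite poset, here specialized to the partition lattice $\Pi_n$. I would prove it directly from the defining relations of $\mu$, which are stated just before the lemma. Nothing about the explicit product formula \eqref{eq:partition-lattice-mobius} is needed. All sums are finite because $\Pi_n$ is finite, so interchanging sums is unproblematic, and $V$ being an arbitrary vector space causes no difficulty since only finite linear combinations appear.

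For (i), the forward direction goes as follows. Assume $g(\sigma)=\sum_{\pi\le\sigma}f(\pi)$ for all $\sigma$, and substitute into the right-hand side:
\[
\sum_{\pi\le\sigma}\mu(\pi,\sigma)g(\pi)=\sum_{\rho\le\sigma}f(\rho)\sum_{\rho\le\pi\le\sigma}\mu(\pi,\sigma).
\]
By the second defining identity $\sum_{\rho\le\tau\le\sigma}\mu(\tau,\sigma)=\mathbf 1_{\{\rho=\sigma\}}$, the inner sum is this indicator, so the whole expression collapses to $f(\sigma)$.

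For the converse of (i), assume $f(\sigma)=\sum_{\pi\le\sigma}\mu(\pi,\sigma)g(\pi)$ and compute
\[
\sum_{\pi\le\sigma}f(\pi)=\sum_{\rho\le\sigma}g(\rho)\sum_{\rho\le\pi\le\sigma}\mu(\rho,\pi).
\]
This time the first defining identity $\sum_{\rho\le\tau\le\sigma}\mu(\rho,\tau)=\mathbf 1_{\{\rho=\sigma\}}$ makes the inner sum the indicator, yielding $g(\sigma)$.

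Part (ii) is the same computation with the order reversed. In the forward direction,
\[
\sum_{\sigma\ge\pi}\mu(\pi,\sigma)g(\sigma)=\sum_{\rho\ge\pi}f(\rho)\sum_{\pi\le\sigma\le\rho}\mu(\pi,\sigma),
\]
and the first identity reduces this to $f(\pi)$. The converse direction uses the second identity in the same way.

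There is no real obstacle here. The only point needing care is to match each direction with the correct one of the two normalizations of $\mu$ (summing over the first versus the second argument). Their simultaneous validity is the standard fact that the left and right inverses of the zeta function in the incidence algebra coincide, which is how $\mu$ was characterized above \cite{Rota64}.
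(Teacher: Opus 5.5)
Your proof is correct and is essentially the same argument as the cited source: in each of the four directions you interchange the finite sums and apply the appropriate normalization of $\mu$. The paper gives no proof of its own and simply cites Rota for this standard substitution argument. You also read the lemma correctly, as an equivalence between the two families of identities holding for all $\sigma$ (resp.\ all $\pi$), which is exactly what each of your computations uses.
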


\begin{lemma}
  \label{lem:total-absolute-mobius-mass}
  For every integer $v\ge1$,
  \begin{equation*}
    \sum_{\sigma\in\Pi_v}\abs{\mu(\hat0,\sigma)}
    =v!.
  \end{equation*}
\end{lemma}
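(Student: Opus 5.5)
Using the explicit formula \eqref{eq:partition-lattice-mobius} with $\pi=\hat0$: every block of $\hat0$ is a singleton, so $k_B=|B|$ for each block $B$ of $\sigma$. Hence
\[
  \abs{\mu(\hat0,\sigma)}=\prod_{B\in\sigma}(|B|-1)!,
\]
and the claim becomes the identity $\sum_{\sigma\in\Pi_v}\prod_{B\in\sigma}(|B|-1)!=v!$. The plan is to prove this identity bijectively.

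The quantity $(|B|-1)!$ is the number of cyclic orderings of the elements of $B$, that is, the number of cyclic permutations of the set $B$. A choice of partition $\sigma$ together with a cyclic ordering of each of its blocks is therefore the same data as a permutation of $[v]$. Indeed, every permutation decomposes uniquely into disjoint cycles. Its cycle supports form a partition $\sigma$, and the cycles themselves are cyclic orderings of the blocks. Conversely, such data determine the permutation.

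Summing over $\sigma$ therefore counts each permutation of $[v]$ exactly once. This gives $v!$.

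As a cross-check, one can also argue by induction on $v$ through the location of element $v$. It either forms a singleton block, contributing the sum for $v-1$, or it joins an existing block $B$ of a partition of $[v-1]$, multiplying that block's weight by $|B|$. Summing $\sum_{B}|B|=v-1$ gives the recursion $S_v=S_{v-1}+(v-1)S_{v-1}=vS_{v-1}$.

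There is no real obstacle here. The only point to check is that $k_B=|B|$ when $\pi=\hat0$, which is immediate.
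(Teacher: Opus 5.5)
Your proposal is correct and follows the paper's proof essentially verbatim: the explicit formula with $k_B=|B|$ gives $\abs{\mu(\hat0,\sigma)}=\prod_{B\in\sigma}(|B|-1)!$, and the cycle-decomposition bijection then counts each permutation of $[v]$ exactly once. Your inductive cross-check via the recursion $S_v=vS_{v-1}$ is also valid, though the paper does not include it.
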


\begin{proof}
  By \eqref{eq:partition-lattice-mobius},
  \begin{equation}
    \abs{\mu(\hat0,\sigma)}
    =
    \prod_{B\in\sigma}(|B|-1)!.
    \label{eq:absolute-mobius-block-product}
  \end{equation}
  To prove the lemma, we give a combinatorial interpretation of the left-hand side. First partition $[v]$ into the
  supports of its disjoint cycles.  For a fixed block $B$, there are exactly
  $(|B|-1)!$ cyclic orderings of the elements of $B$, hence that many cycles
  with support $B$.  Thus, for a fixed partition $\sigma$, the product in
  \eqref{eq:absolute-mobius-block-product} counts the permutations whose
  cycle-support partition is $\sigma$.  Summing over all
  $\sigma\in\Pi_v$ counts every permutation of $[v]$ exactly once, and there
  are $v!$ such permutations.
\end{proof}

\subsection{Characters on \texorpdfstring{$\mathcal H_4$}{H4}}
\label{subsec:four-input-characters}

Recall that $G=\mathbb F_2^2$ and
\[
  \mathcal H_4
  =\{(t_1,t_2,t_3,t_4)\in G^4:t_1+t_2+t_3+t_4=0\}.
\]
Projection onto the first three coordinates identifies $\mathcal H_4$
with $G^3$, since $t_4=t_1+t_2+t_3$.  Under this identification, its
characters are
\[
  \chi_\nu(t_1,t_2,t_3,t_4)
  =\prod_{j=1}^3\chi_{\nu_j}(t_j),
  \qquad \nu=(\nu_1,\nu_2,\nu_3)\in G^3,
\]
where $\chi_r(s)=(-1)^{r_1s_1+r_2s_2}$ for $r,s\in G$.
Setting $\nu_4=0$ writes the same expression as
$\prod_{j=1}^4\chi_{\nu_j}(t_j)$, with one character for each input.
We use this four-input form in the Fourier sums below.


We use the Fourier convention of
\cref{def:coefficient-valued-fourier-expansion} with $\V=\mathbb C$.
For a scalar-valued function $f$ on a finite abelian group $K$, write
\begin{equation}
\label{eq:fourier-coefficient-l1-norm}
  \norm{\widehat f}_{\ell_1(\widehat K)}
  :=\sum_{\chi\in\widehat K}|\widehat f(\chi)|.
\end{equation}
This is an unnormalized sum over characters; the average defining each
Fourier coefficient remains normalized.  For $s\in K$, let
$\delta_s(t):=\mathbf1_{\{t=s\}}$.  The following scalar identities will be
used on $\mathcal H_4$ and its products.

\begin{lemma}
  \label{lem:finite-abelian-fourier-identities}
  Let $K$ be a finite abelian group and let $f,g:K\to\mathbb C$.  Then
  \begin{equation*}
    \widehat{fg}(\chi)=\sum_{\eta\in\widehat K}\widehat f(\eta)\widehat g(\eta^{-1}\chi),
    \qquad
    \norm{\widehat{fg}}_{\ell_1(\widehat K)}
    \le\norm{\widehat f}_{\ell_1(\widehat K)}
       \norm{\widehat g}_{\ell_1(\widehat K)}.
  \end{equation*}
  Moreover, for $s,t\in K$,
  \begin{equation*}
    \delta_s(t)=\frac1{|K|}\sum_{\chi\in\widehat K}\overline{\chi(s)}\chi(t).
  \end{equation*}
\end{lemma}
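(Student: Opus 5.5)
The plan is to verify the three assertions directly from the Fourier conventions of \cref{def:coefficient-valued-fourier-expansion}, with $\widehat f(\chi)=\mathbb E_{x\in K}\overline{\chi(x)}f(x)$ and inversion $f=\sum_\chi\widehat f(\chi)\chi$, using only character orthogonality from \cref{fact:character-orthogonality}.

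For the convolution identity, I will expand both factors by Fourier inversion:
\[
fg=\sum_{\eta,\eta'\in\widehat K}\widehat f(\eta)\widehat g(\eta')\,\eta\eta' .
\]
Since $\eta\eta'$ is again a character, I will group the terms with $\eta\eta'=\chi$, which means $\eta'=\eta^{-1}\chi$. Uniqueness of the character expansion, which holds because the characters form a basis, then gives the coefficient formula. Equivalently, one can compute $\widehat{fg}(\chi)$ directly and apply orthogonality $\mathbb E_x\overline{\chi(x)}\eta(x)\eta'(x)=\one_{\{\eta\eta'=\chi\}}$.

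For the $\ell_1$ bound, I will take absolute values in the convolution formula and sum over $\chi$:
\[
\sum_\chi|\widehat{fg}(\chi)|\le\sum_{\eta}|\widehat f(\eta)|\sum_\chi|\widehat g(\eta^{-1}\chi)|.
\]
For each fixed $\eta$, the map $\chi\mapsto\eta^{-1}\chi$ is a bijection of $\widehat K$, so the inner sum equals $\norm{\widehat g}_{\ell_1(\widehat K)}$, and the product bound follows.

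For the delta identity, I will compute the Fourier coefficients of $\delta_s$: $\widehat{\delta_s}(\chi)=\frac1{|K|}\overline{\chi(s)}$. Fourier inversion then yields the claim. Alternatively, $\frac1{|K|}\sum_\chi\overline{\chi(s)}\chi(t)=\frac1{|K|}\sum_\chi\chi(t-s)$, and this equals $\one_{\{t=s\}}$ by the dual orthogonality, which holds since $|\widehat K|=|K|$ and the character matrix is unitary after scaling.

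No step is a real obstacle. The only point needing care is the dual orthogonality used in the delta identity, and routing through Fourier inversion avoids it entirely.
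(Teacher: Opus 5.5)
Your proposal is correct and follows the paper's proof essentially step for step. You expand both factors by Fourier inversion and group the terms with $\eta\eta'=\chi$; for the $\ell_1$ bound you use that $\chi\mapsto\eta^{-1}\chi$ permutes $\widehat K$; and you compute $\widehat{\delta_s}(\chi)=|K|^{-1}\overline{\chi(s)}$ and invert (Fourier inversion itself rests on completeness of the characters, which the paper supplies in \cref{fact:character-orthogonality}, so this is no gap).
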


\begin{proof}
  Expand both factors in $f(t)g(t)$ using
  \cref{def:coefficient-valued-fourier-expansion}.  A term contributes to
  the coefficient of $\chi$ precisely when its two characters multiply to
  $\chi$.  Thus
  \[
    \widehat{fg}(\chi)
    =\sum_{\eta\in\widehat K}
      \widehat f(\eta)\widehat g(\eta^{-1}\chi).
  \]
  Taking absolute values and summing over characters, we obtain
  \begin{align*}
    \sum_{\chi\in\widehat K}|\widehat{fg}(\chi)|
    &\leq
      \sum_{\eta\in\widehat K}|\widehat f(\eta)|
      \sum_{\chi\in\widehat K}|\widehat g(\eta^{-1}\chi)|\\
    &=\norm{\widehat f}_{\ell_1(\widehat K)}
      \norm{\widehat g}_{\ell_1(\widehat K)},
  \end{align*}
  where the last equality uses that
  $\chi\mapsto\eta^{-1}\chi$ permutes $\widehat K$.
  Finally, the coefficient formula gives
  $\widehat{\delta_s}(\chi)=|K|^{-1}\overline{\chi(s)}$; its Fourier
  expansion is the claimed point-mass identity.
\end{proof}

For $K=\mathcal H_4$, every character is $\{\pm1\}$-valued and
$|\mathcal H_4|=64$.  Hence complex conjugation may be omitted from the
point-mass identity in \cref{lem:finite-abelian-fourier-identities}; if
$s\neq0$,
then
\begin{equation*}
    \frac{1}{64}\sum_{\varphi\in\widehat{\mathcal H_4}}\varphi(s) = 0.
\end{equation*}
Applying the same identity on $\mathcal H_4$ and using that its characters
are real-valued, we obtain
\begin{equation}
  \delta_s(t)=\frac1{64}\sum_{\varphi\in\widehat{\mathcal H_4}}\varphi(s)\varphi(t)
  =\frac1{64}\sum_{\varphi\in\widehat{\mathcal H_4}}\varphi(s)\bigl(\varphi(t)-1\bigr).
  \label{eq:h4-nonzero-point-mass-fourier}
\end{equation}
The last equality subtracts
$64^{-1}\sum_{\varphi\in\widehat{\mathcal H_4}}\varphi(s)$, which is zero
because $s\neq0$.

\subsection{Proof of the Folded-Product Estimate}

\lemfoldedproductestimate*

Write $\Delta_{\xi,\zeta}:=\mathbb E_h\Lambda_{\xi,\zeta}-\Lambda_{0,0}$,
with the four inputs fixed.  \Cref{fig:folded-product-proof-overview}
summarizes the proof; the other quantities are defined in the cited
equations.

\begin{figure}[htbp]
  \centering
  \begin{tikzpicture}[
      overview box/.style={draw=gray!65,fill=gray!8,rounded corners=2pt,
        align=flush center,font=\footnotesize,text width=94mm,
        inner xsep=4pt,inner ysep=4pt},
      estimate/.style={overview box,text width=44mm},
      wide/.style={text width=158mm},
      inversion/.style={draw=blue!45!gray,fill=blue!5},
      auxiliary/.style={overview box,text width=55mm,
        draw=orange!55!gray,fill=yellow!12},
      result/.style={overview box,fill=gray!15},
      arrow/.style={-{Stealth[length=1.8mm]},semithick},
      node distance=4mm
    ]
    \node[overview box,wide] (fibers) at (3.2,0)
      {\textbf{Factor the phases over the fibers of $h$},~\cref{eq:lambda-fiber-phase-expansion}\\[2pt]
       $\textstyle
       \Lambda_{\xi,\zeta}
       =\sum_{\alpha_1+\cdots+\alpha_4=0}c(\boldsymbol\alpha)
        \prod_{B\in\operatorname{Fib}(h)}
        \Theta_B^{\xi,\zeta}(\bm\alpha_B)$};

    \node[overview box,wide,inversion,below=of fibers] (first-inversion)
      {\textbf{First M\"obius inversion; take the expectation over $h$},~\cref{eq:physical-partition-expansion}\\[2pt]
       $\textstyle
       \Delta_{\xi,\zeta}
       =\sum_{\alpha_1+\cdots+\alpha_4=0}c(\boldsymbol\alpha)
       \sum_{\substack{\pi\in\operatorname{Part}(S(\boldsymbol\alpha))\\
                       \pi\ne\hat0}}
       D^{-r(\pi)}\prod_{C\in\pi}K_C(\bm\alpha_C)$\\[2pt]
       Each block must have a constant hash value; the singleton partition
       cancels.};

    \node[overview box,anchor=north] (ranks)
      at ($(first-inversion.south)+(-3.2,-0.4)$)
      {\textbf{Relabel the non-singleton blocks and group by rank},~\cref{eq:abstract-rank-decomposition,eq:rank-vertex-relations}\\[2pt]
       $\textstyle
       \Delta_{\xi,\zeta}
       =\sum_{r\ge1}D^{-r}\sum_v\frac1{v!}
        \sum_{\mathcal C\in\mathfrak P_{v,r}}\mathcal R_{\mathcal C}$\\[2pt]
       $r+1\le v\le\min\{2r,4k\}$};

    \node[overview box,below=of ranks] (characters)
      {\textbf{Expand the cumulant product using characters of $\mathcal H_4$},~\cref{eq:abstract-pattern-character-contributions}\\[2pt]
       $\textstyle \mathcal R_{\mathcal C}
       =\sum_{\boldsymbol\varphi}
        m_{\boldsymbol\varphi}\mathfrak M_{\boldsymbol\varphi}$;\\[2pt]
       the factors $\varphi_a-1$ vanish on inactive coordinates.};

    \node[estimate,anchor=north] (coefficients)
      at ($(characters.south)+(-2.5,-0.55)$)
      {\textbf{Bound the coefficients},~\cref{eq:character-coefficient-l1-bound}\\[3pt]
       $\textstyle
       \sum_{\boldsymbol\varphi}|m_{\boldsymbol\varphi}|
       \le63^v v^{2v}$};

    \node[estimate,inversion,anchor=north] (second-inversion)
      at ($(characters.south)+(2.5,-0.55)$)
      {\textbf{\mbox{Second M\"obius inversion}},\\[1pt]
       \mbox{\cref{eq:second-mobius-character-contribution,eq:block-auxiliary-polynomial-derivative}}\\[2pt]
       $\textstyle \mathfrak M_{\boldsymbol\varphi}
       =\sum_\sigma\mu(\hat0,\sigma)
        \mathfrak M_{\boldsymbol\varphi,\sigma}$\\[2pt]
       Allow repeated coordinates};

    \node[auxiliary,anchor=north west] (polynomial)
      at ($(ranks.north east)+(0.6,0)$)
      {\textbf{Auxiliary polynomial},~\cref{eq:block-auxiliary-polynomial}\\[2pt]
       $\sigma=\{B_1,\ldots,B_b\}$\\[2pt]
       $\begin{gathered}
       \textstyle F_{\sigma,\boldsymbol\chi}(w_1,\ldots,w_b)\\[1pt]
       \textstyle =\sum_{\alpha_1+\cdots+\alpha_4=0}c(\boldsymbol\alpha)\\[1pt]
       \textstyle {}\cdot\prod_{\ell=1}^b\prod_{i=1}^n
       \bigl(1-w_\ell+w_\ell\chi_{B_\ell}(\bm\alpha_i)\bigr)
       \end{gathered}$\\[3pt]
       $\deg_{w_\ell}F_{\sigma,\boldsymbol\chi}\le4k$\\[2pt]
       Pauli conjugation and Schatten H\"older give\\[2pt]
       $\textstyle \sup_{w\in[0,1]^b}
       |F_{\sigma,\boldsymbol\chi}(w)|
       \le\prod_{j=1}^4\norm{A_j}_4$};

    \node[overview box,text width=55mm,below=5mm of polynomial] (markov)
      {\textbf{Markov brothers' \mbox{inequality}},~\cref{eq:multivariate-markov-brothers}\\[2pt]
       $\begin{aligned}
       &\textstyle |\partial_{w_1}\cdots\partial_{w_b}
         F_{\sigma,\boldsymbol\chi}(0)|\\[2pt]
       &\textstyle\quad\le (2(4k)^2)^b
         \sup_{w\in[0,1]^b}|F_{\sigma,\boldsymbol\chi}(w)|\\[2pt]
       &\textstyle\quad\le (32k^2)^b
         \prod_{j=1}^4\norm{A_j}_4
       \end{aligned}$};

    \coordinate (estimate-merge) at ($(markov.south)+(-2.9,-0.4)$);
    \node[overview box,anchor=north] (distinct)
      at ($(estimate-merge)+(0,-0.4)$)
      {\textbf{Bound the sums over \mbox{distinct coordinates}},~\cref{eq:distinct-coordinate-bound}\\[2pt]
       $\textstyle\sum_\sigma|\mu(\hat0,\sigma)|=v!$,
       \quad $v\le4k$\\[2pt]
       $\textstyle
       |\mathfrak M_{\boldsymbol\varphi}|
       \le(512k^3)^v\prod_{j=1}^4\norm{A_j}_4$};

    \node[overview box,wide,anchor=north,yshift=-4mm] (pattern)
      at (fibers.center |- distinct.south)
      {\textbf{Combine the two bounds for each $\mathcal C$},~\cref{eq:fixed-abstract-pattern-bound}\\[2pt]
       $\textstyle |\mathcal R_{\mathcal C}|
       \le63^v v^{2v}(512k^3)^v\prod_{j=1}^4\norm{A_j}_4$};

    \node[result,wide,below=of pattern] (conclusion)
      {\textbf{Count the partitions and sum over $r\ge1$}\\[2pt]
       $\textstyle |\Delta_{\xi,\zeta}|
       \le\sum_{r\ge1}\left(\frac{Ck^{16}}D\right)^r
          \prod_{j=1}^4\norm{A_j}_4
       \le\frac{C_1k^{64}}D\prod_{j=1}^4\norm{A_j}_4$,
       $D\ge C_0k^{64}$};

    \draw[arrow] (fibers) -- (first-inversion);
    \draw[arrow] (ranks.north |- first-inversion.south) -- (ranks.north);
    \draw[arrow] (ranks) -- (characters);
    \draw[arrow] ($(characters.south)+(-2.5,0)$) -- (coefficients);
    \draw[arrow] ($(characters.south)+(2.5,0)$) -- (second-inversion);
    \draw[arrow] (polynomial.south) -- (markov.north);
    \draw[semithick] (second-inversion.south) |- (estimate-merge);
    \draw[semithick] (markov.south) |- (estimate-merge);
    \draw[arrow] (estimate-merge) -- (distinct.north);
    \draw[arrow] (coefficients.south) -- (coefficients.south |- pattern.north);
    \draw[arrow] (distinct.south) -- (distinct.south |- pattern.north);
    \draw[arrow] (pattern) -- (conclusion);
  \end{tikzpicture}
  \caption{Proof overview for \cref{lem:folded-product-estimate}.
  The first M\"obius inversion evaluates the average over $h$;
  the second expresses sums over distinct coordinates in terms of
  the derivatives of an auxiliary polynomial.}
  \label{fig:folded-product-proof-overview}
\end{figure}

\begin{proof}[Proof of \cref{lem:folded-product-estimate}]
Expanding the two coefficient convolutions in
\eqref{eq:four-input-form-explicit}, the defining Fourier sum is supported on
$\alpha_1+\alpha_2+\alpha_3+\alpha_4=0$, since $-\alpha=\alpha$ for
$\alpha\in G^n$.  Write
\(\boldsymbol\alpha=(\alpha_1,\alpha_2,\alpha_3,\alpha_4)\), and define the
coefficient by
\begin{equation}
c(\boldsymbol\alpha)
:=
\widehat A_1(\alpha_1)\widehat A_2(\alpha_2)
\overline{\widehat A_3(\alpha_3)\widehat A_4(\alpha_4)}
\kappa_n(\alpha_1,\alpha_2)
\overline{\kappa_n(\alpha_3,\alpha_4)}.
\label{eq:source-fourier-coefficient}
\end{equation}
Then we have
\begin{equation*}
\Lambda_{0,0}(A_1,A_2,A_3,A_4)
=
\sum_{\alpha_1+\cdots+\alpha_4=0}c(\boldsymbol\alpha).
\end{equation*}
For any element $\boldsymbol{\alpha}$ in this summation, at coordinate $i$, we have
\[
\bm{\alpha}_i
:=(\alpha_{1,i},\alpha_{2,i},\alpha_{3,i},\alpha_{4,i})
\in\mathcal H_4,
\]
where $\mathcal H_4$ is defined in
\cref{def:four-input-constraint-group}.
Membership follows because
$\alpha_1+\cdots+\alpha_4=0$ holds coordinatewise.

\paragraph{Factoring over the fibers of $h$.}
We will show that
\begin{equation}
\grayhighlight{\Lambda_{\xi,\zeta}(A_1,A_2,A_3,A_4)
=
\sum_{\alpha_1+\cdots+\alpha_4=0}
c(\boldsymbol\alpha)
\prod_{B\in\operatorname{Fib}(h)}
\Theta_B^{\xi,\zeta}(\bm{\alpha}_B),}
\label{eq:lambda-fiber-phase-expansion}
\end{equation}
where $\operatorname{Fib}(h)$ is the set of nonempty fibers of $h$, and the
factors $\Theta_B^{\xi,\zeta}(\bm{\alpha}_B)$ are defined below as ratios of
Pauli phases.

Fix $\xi,\zeta\in\{0,h\}$ and one Fourier tuple
$\boldsymbol\alpha=(\alpha_1,\ldots,\alpha_4)$ in the expansion of
$\Lambda_{\xi,\zeta}$.
The coefficient
$c(\boldsymbol\alpha)$ in \eqref{eq:source-fourier-coefficient} corresponds to the case $\xi=\zeta=0$ and uses
$\kappa_n$ for both product pairs.  We compare the 
cocycles on each hash fiber.  For $j\in[D]$, let $B_j=h^{-1}(j)$.
For any pair of labels $\alpha,\beta\in G^n$, the tensor-product
definition of the Pauli cocycle implies
\begin{align}
\kappa_D(H\alpha,H\beta)
&=
\prod_{j=1}^D
\kappa\!\left(
  \sum_{i\in B_j}\alpha_i,
  \sum_{i\in B_j}\beta_i
\right),
\notag\\
\kappa_n(\alpha,\beta)
&=
\prod_{j=1}^D\prod_{i\in B_j}
\kappa(\alpha_i,\beta_i).
\label{eq:bucketwise-cocycle-factorization}
\end{align}
For a finite coordinate set $B\subseteq[n]$ and two restricted label tuples
$\alpha_B,\beta_B\in G^B$, define
\begin{equation}
\label{eq:bucket-pair-phase-ratio}
\theta_B(\alpha_B,\beta_B)
:=
\frac{
\kappa\!\left(\sum_{i\in B}\alpha_i,
               \sum_{i\in B}\beta_i\right)}
{\prod_{i\in B}\kappa(\alpha_i,\beta_i)}.
\end{equation}
The ratio between the two terms in \cref{eq:bucketwise-cocycle-factorization} therefore factors as
\begin{equation*}
\frac{\kappa_D(H\alpha,H\beta)}{\kappa_n(\alpha,\beta)}
=
\prod_{B\in\operatorname{Fib}(h)}
\theta_B(\alpha_B,\beta_B),
\end{equation*}
where $\operatorname{Fib}(h)$ is the collection of nonempty fibers.
On each fiber, the choice $\xi=h$ contributes
$\theta_B((\alpha_1)_B,(\alpha_2)_B)$, whereas $\xi=0$ contributes $1$.
Similarly, $\zeta=h$ contributes the conjugate of
$\theta_B((\alpha_3)_B,(\alpha_4)_B)$ because the second pair enters
$\Lambda_{\xi,\zeta}$ conjugate-linearly.  For
$\bm{\alpha}_B=(\bm{\alpha}_i)_{i\in B}$, combine these contributions as
\begin{equation}
\Theta_B^{\xi,\zeta}(\bm{\alpha}_B)
:=
\theta_B((\alpha_1)_B,(\alpha_2)_B)^{\one_{\xi=h}}
\overline{
\theta_B((\alpha_3)_B,(\alpha_4)_B)
}^{\one_{\zeta=h}}.
\label{eq:exact-bucket-phase}
\end{equation}
Multiplying over the fibers expresses the full cocycle ratio as
\begin{equation}
\left(
\frac{\kappa_D(H\alpha_1,H\alpha_2)}{\kappa_n(\alpha_1,\alpha_2)}
\right)^{\one_{\xi=h}}
\overline{\left(
\frac{\kappa_D(H\alpha_3,H\alpha_4)}{\kappa_n(\alpha_3,\alpha_4)}
\right)}^{\one_{\zeta=h}}
=
\prod_{B\in\operatorname{Fib}(h)}
\Theta_B^{\xi,\zeta}(\bm{\alpha}_B),
\label{eq:fiber-phase-product}
\end{equation}
and hence we have \cref{eq:lambda-fiber-phase-expansion}.

The definition gives the following properties of
$\Theta_B^{\xi,\zeta}$; we suppress the superscripts in the statement.

\begin{fact}
\label{prop:fiber-phase-basic-properties}
For every $B\subseteq[n]$ and every $i\in[n]$,
\begin{equation*}
|\Theta_B(\bm{\alpha}_B)|=1,
\qquad
\Theta_\varnothing=\Theta_{\{i\}}=1.
\end{equation*}
For every $x\in B$,
\begin{equation*}
\bm{\alpha}_x=0\Longrightarrow
\Theta_B(\bm{\alpha}_B)
=\Theta_{B\setminus\{x\}}(\bm{\alpha}_{B\setminus\{x\}}).
\end{equation*}
Moreover, $\Theta_B$ is invariant under simultaneous relabeling of the
coordinates in $B$ and of the corresponding local tuples.
\end{fact}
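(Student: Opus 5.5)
The statement to prove is \cref{prop:fiber-phase-basic-properties}. It concerns the fiber phase $\Theta_B^{\xi,\zeta}$ from \eqref{eq:exact-bucket-phase}. Every claim follows from the corresponding property of the pair ratio $\theta_B$ in \eqref{eq:bucket-pair-phase-ratio}. Indeed, $\Theta_B$ is a product of at most two factors, namely $\theta_B((\alpha_1)_B,(\alpha_2)_B)$ and the conjugate of $\theta_B((\alpha_3)_B,(\alpha_4)_B)$, each raised to the power $0$ or $1$. Moduli multiply, conjugation preserves modulus, and both the deletion and relabeling statements pass through products and conjugates. So I will prove each property for $\theta_B(\alpha_B,\beta_B)$ with arbitrary $\alpha_B,\beta_B\in G^B$, and then assemble them.

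\emph{Modulus and trivial cases.} By \cref{def:pauli-cocycle}, every value $\kappa(s,t)$ lies in $\{\pm1,\pm\iu\}$, so it has modulus one. Hence both the numerator and the denominator of $\theta_B$ have modulus one, and $|\theta_B|=1$. For $B=\varnothing$, both sums are $0$ and the product is empty, so $\theta_\varnothing=\kappa(0,0)/1=1$ by \cref{prop:pauli-cocycle-identities}. For $B=\{i\}$, the numerator and denominator are both $\kappa(\alpha_i,\beta_i)$, so $\theta_{\{i\}}=1$.

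\emph{Deleting an inactive coordinate.} Suppose $\bm\alpha_x=0$, meaning all four labels $\alpha_{j,x}$ vanish. In particular the pair entering each $\theta_B$ has $\alpha_x=\beta_x=0$. The two sums $\sum_{i\in B}\alpha_i$ and $\sum_{i\in B}\beta_i$ are then unchanged when $x$ is removed. In the denominator, the factor $\kappa(\alpha_x,\beta_x)=\kappa(0,0)=1$, again by \cref{prop:pauli-cocycle-identities}. Therefore $\theta_B=\theta_{B\setminus\{x\}}$ for each pair, and consequently $\Theta_B=\Theta_{B\setminus\{x\}}$.

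\emph{Relabeling invariance.} Let $\pi\colon B\to B'$ be a bijection, and transport the tuples along $\pi$. The sums over $B$ are commutative in the abelian group $G$, and the product of scalars is commutative. So the numerator and denominator of $\theta_B$ are unchanged, and hence $\theta_B$ and $\Theta_B$ are unchanged.

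None of these steps is a real obstacle. The only point needing care is the bookkeeping in the exponents $\one_{\xi=h}$ and $\one_{\zeta=h}$, which I handle by the reduction to $\theta_B$ in the first paragraph.
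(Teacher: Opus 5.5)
Your proof is correct and takes the same approach as the paper. The paper simply states that these properties follow from the definitions \eqref{eq:bucket-pair-phase-ratio} and \eqref{eq:exact-bucket-phase}, and your reduction to the pair ratio $\theta_B$ (using $|\kappa|=1$, $\kappa(0,0)=1$, and commutativity of the sums in $G$ and the scalar products) is exactly that omitted verification.
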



\paragraph{The first \mobius{} inversion.} We now prepare to take the expectation over $h$ of the product in
\eqref{eq:lambda-fiber-phase-expansion}.  The fiber $B$ itself
depends on $h$.  We therefore seek an exact expansion of each
$\Theta_B(\bm{\alpha}_B)$ into products of terms attached to fixed subsets
$C\subseteq B$.  A term attached to $C$ will later require only the simple
event that $h$ be constant on $C$.

This is the first \mobius{} inversion.  It removes from $\Theta_B$ the terms
already generated by partitions of $B$ into smaller subsets.
Suppress the superscript $(\xi,\zeta)$, and let $\hat{1}_B$ be the one-block
partition of $B$.  Using the downward inversion formula from
\cref{lem:mobius-inversion-partition-lattice}, define the \emph{cumulant} associated
with $B$ by
\begin{align}
K_B(\bm{\alpha}_B)
&=
\sum_{\pi\in\operatorname{Part}(B)}
\mu(\pi,\hat{1}_B)
\prod_{C\in\pi}\Theta_C(\bm{\alpha}_C)
\notag\\
&=
\sum_{\pi\in\operatorname{Part}(B)}
(-1)^{|\pi|-1}(|\pi|-1)!
\prod_{C\in\pi}\Theta_C(\bm{\alpha}_C).
\label{eq:fiber-phase-cumulant}
\end{align}
Then partition-lattice \mobius{} inversion makes
\eqref{eq:fiber-phase-cumulant} equivalent to the following reverse formula.
\begin{fact}
\label{lem:fiber-phase-cumulant-inversion}
For every nonempty
$B\subseteq[n]$,
\begin{equation}
\Theta_B(\bm{\alpha}_B)
=
\sum_{\pi\in\operatorname{Part}(B)}
\prod_{C\in\pi}K_C(\bm{\alpha}_C).
\label{eq:bucket-moment-cumulant}
\end{equation}
\end{fact}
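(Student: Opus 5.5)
This is a direct application of the downward form of \mobius{} inversion on the partition lattice of $B$, \cref{lem:mobius-inversion-partition-lattice}(i). The plan is to apply it to two functions on $\operatorname{Part}(B)$ that are multiplicative over blocks.

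For $\sigma\in\operatorname{Part}(B)$, set
\[
g(\sigma):=\prod_{D\in\sigma}\Theta_D(\bm\alpha_D),
\qquad
f(\sigma):=\prod_{D\in\sigma}K_D(\bm\alpha_D),
\]
where each $K_D$ is defined by \eqref{eq:fiber-phase-cumulant} applied to the subset $D$. The target identity \eqref{eq:bucket-moment-cumulant} is the statement $g(\hat1_B)=\sum_{\pi\le\hat1_B}f(\pi)$. I will prove the stronger claim that $g(\sigma)=\sum_{\pi\le\sigma}f(\pi)$ for every $\sigma$, by first establishing the inverted form and then invoking the equivalence in \cref{lem:mobius-inversion-partition-lattice}(i).

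The inverted form is
\[
f(\sigma)=\sum_{\pi\le\sigma}\mu(\pi,\sigma)\,g(\pi).
\]
To verify it, I use the fact that the interval $[\hat0,\sigma]$ in $\operatorname{Part}(B)$ is canonically the product $\prod_{D\in\sigma}\operatorname{Part}(D)$: a refinement $\pi\le\sigma$ is exactly a choice of one partition $\pi_D$ of each block $D$. Under this identification, the explicit formula \eqref{eq:partition-lattice-mobius} factors as $\mu(\pi,\sigma)=\prod_{D\in\sigma}\mu(\pi_D,\hat1_D)$, and $g(\pi)$ factors as $\prod_D\prod_{C\in\pi_D}\Theta_C$. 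The sum over $\pi\le\sigma$ therefore splits into a product over $D\in\sigma$ of independent sums. By \eqref{eq:fiber-phase-cumulant}, each factor equals $K_D(\bm\alpha_D)$, so the whole sum is $f(\sigma)$.

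Applying \cref{lem:mobius-inversion-partition-lattice}(i) now gives $g(\sigma)=\sum_{\pi\le\sigma}f(\pi)$ for all $\sigma$. Taking $\sigma=\hat1_B$ yields \eqref{eq:bucket-moment-cumulant}.

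The only step that needs care is the product decomposition of the interval together with the factorization of $\mu$. This is immediate from \eqref{eq:partition-lattice-mobius}, since $\mu(\pi,\sigma)$ depends only on how many blocks of $\pi$ lie in each block of $\sigma$. The argument uses no property of $\Theta$ beyond its being defined on arbitrary subsets, so it holds for every choice of $\xi,\zeta$.
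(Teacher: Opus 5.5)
Your proposal is correct and follows the paper's proof step for step. The paper also defines the block-multiplicative functions $G(\sigma)=\prod_{C\in\sigma}\Theta_C$ and $F(\sigma)=\prod_{C\in\sigma}K_C$. It verifies $F(\sigma)=\sum_{\pi\le\sigma}\mu(\pi,\sigma)G(\pi)$ by expanding the cumulant definition on each block of $\sigma$ and factoring $\mu$ via \eqref{eq:partition-lattice-mobius}, and then applies downward M\"obius inversion at $\sigma=\hat1_B$.
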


%

The following fact shows that we could constrain the cumulant $K_B$ on nonzero inputs, the number of which could be bounded by the Pauli degree of the inputs.

\begin{fact}
\label{lem:cumulant-zero-input}
If $B\subseteq[n]$, $|B|\ge2$, and
$\bm{\alpha}_x=0$ for some
$x\in B$, then
\begin{equation}
K_B(\bm{\alpha}_B)=0.
\label{eq:cumulant-zero-input}
\end{equation}
In particular, $K_{\{i\}}=1$ for every singleton $B=\{i\}$.
\end{fact}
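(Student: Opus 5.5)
The plan is an induction on $|B|$ that uses the moment--cumulant relation \eqref{eq:bucket-moment-cumulant} together with the zero-input property of $\Theta$ in \cref{prop:fiber-phase-basic-properties}. The singleton claim comes directly from \eqref{eq:fiber-phase-cumulant}: $\operatorname{Part}(\{i\})$ has only the one-block partition, whose \mobius{} weight is $1$, so $K_{\{i\}}=\Theta_{\{i\}}=1$.

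For the main claim, fix $x\in B$ with $\bm\alpha_x=0$, set $B'=B\setminus\{x\}$ (nonempty since $|B|\ge2$), and assume as induction hypothesis that $K_C(\bm\alpha_C)=0$ for every proper subset $C\subsetneq B$ with $x\in C$ and $|C|\ge2$. The base case $|B|=2$ needs no hypothesis. Every partition $\pi$ of $B$ arises in exactly one of two ways from a partition $\pi'$ of $B'$:
\begin{itemize}
\item $\pi=\pi'\cup\{\{x\}\}$, where $x$ is a singleton;
\item $x$ is adjoined to some block $C'\in\pi'$.
\end{itemize}
Expand $\Theta_B(\bm\alpha_B)$ by \eqref{eq:bucket-moment-cumulant} and split the sum this way.

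In the second type, if $\pi\neq\hat1_B$ then the block $C'\cup\{x\}$ is a proper subset of $B$ of size at least $2$ containing $x$, so its cumulant vanishes by induction. The only surviving term of that type is $\pi=\hat1_B$, which contributes $K_B(\bm\alpha_B)$. In the first type, $K_{\{x\}}=1$, so these terms sum to $\sum_{\pi'\in\operatorname{Part}(B')}\prod_{C\in\pi'}K_C(\bm\alpha_C)=\Theta_{B'}(\bm\alpha_{B'})$, again by \eqref{eq:bucket-moment-cumulant}. Hence
\[
\Theta_B(\bm\alpha_B)=\Theta_{B'}(\bm\alpha_{B'})+K_B(\bm\alpha_B).
\]
By \cref{prop:fiber-phase-basic-properties}, $\bm\alpha_x=0$ gives $\Theta_B(\bm\alpha_B)=\Theta_{B'}(\bm\alpha_{B'})$, so $K_B(\bm\alpha_B)=0$.

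The only step needing care is the bookkeeping of the bijection between partitions of $B$ and the pairs (partition of $B'$, placement of $x$). In particular one must check that $\hat1_B$ is the unique second-type partition whose $x$-block is not proper. The induction should be stated for all subsets containing the fixed zero coordinate $x$, so that the hypothesis applies to every block $C'\cup\{x\}$.
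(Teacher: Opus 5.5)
Your proof is correct, but it runs in the opposite direction from the paper's.

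The paper works directly with the defining formula \eqref{eq:fiber-phase-cumulant}. It groups the partitions of $B$ by their restriction $\rho\in\operatorname{Part}(B\setminus\{x\})$. By \cref{prop:fiber-phase-basic-properties}, all $q+1$ partitions in a group, where $q=|\rho|$, carry the same $\Theta$-product. The paper then checks that their explicit \mobius{} weights cancel, $(-1)^q q!+q(-1)^{q-1}(q-1)!=0$. This is a one-shot, non-inductive computation.

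You instead use the inverse moment--cumulant relation \eqref{eq:bucket-moment-cumulant}, with induction over subsets containing the fixed zero coordinate $x$. This kills every block through $x$ other than $B$ itself and gives $\Theta_B=\Theta_{B\setminus\{x\}}+K_B$. Both arguments rest on the same bijection between partitions of $B$ and placements of $x$ into partitions of $B\setminus\{x\}$.

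Your route never needs the explicit values of $\mu(\pi,\hat1_B)$. It is the standard structural proof that a cumulant vanishes when one argument is trivial, and it works for any family tied to its moments by \eqref{eq:bucket-moment-cumulant}. The cost is an induction and a dependence on \cref{lem:fiber-phase-cumulant-inversion}. That fact is proved in the paper independently of the present one, so there is no circularity. The paper's version is shorter and self-contained once the product formula for the \mobius{} function is available.
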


It remains to average the product in \eqref{eq:lambda-fiber-phase-expansion}.  For the
fixed Fourier tuple $\boldsymbol\alpha$, define its \emph{active set} by
\begin{equation}
\label{eq:fourier-tuple-active-set}
S=S(\boldsymbol\alpha)
:=\{i:\bm{\alpha}_i\ne0\}.
\end{equation}
Taking the expectation over $h$ gives the following identity; its proof is deferred.
Define
\begin{equation}
\label{eq:partition-rank}
r(\pi)=\sum_{C\in\pi}(|C|-1).
\end{equation}
\begin{fact}\label{fact:phisycal-expansion}
\begin{equation}
\grayhighlight{
\begin{aligned}
\Delta_{\xi,\zeta}
&:=
\mathbb E_h\Lambda_{\xi,\zeta}(A_1,A_2,A_3,A_4)
-\Lambda_{0,0}(A_1,A_2,A_3,A_4)
\\
&=
\sum_{\alpha_1+\cdots+\alpha_4=0}
c(\boldsymbol\alpha)
\sum_{\substack{
\pi\in\operatorname{Part}(S(\boldsymbol\alpha))\\
\pi\ne\hat0_{S(\boldsymbol\alpha)}}}
D^{-r(\pi)}
\prod_{C\in\pi}K_C(\bm{\alpha}_C).
\end{aligned}
}
\label{eq:physical-partition-expansion}
\end{equation}
\end{fact}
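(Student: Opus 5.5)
The plan is to start from the fiber factorization \eqref{eq:lambda-fiber-phase-expansion} and fix one Fourier tuple $\boldsymbol\alpha$ with active set $S=S(\boldsymbol\alpha)$. The first reduction is to the active coordinates. By \cref{prop:fiber-phase-basic-properties}, any coordinate with $\bm\alpha_i=0$ may be deleted from a fiber without changing $\Theta_B$. Hence
\[
\prod_{B\in\operatorname{Fib}(h)}\Theta_B(\bm\alpha_B)
=\prod_{B'\in\operatorname{Fib}(h|_S)}\Theta_{B'}(\bm\alpha_{B'}),
\]
where $\operatorname{Fib}(h|_S)$ is the partition of $S$ induced by $h$ (the empty intersections contribute $\Theta_\varnothing=1$).

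Next I would apply \cref{lem:fiber-phase-cumulant-inversion} to every block $B'$ of this induced partition and multiply the results. A choice of a partition of each block $B'$ is the same thing as a partition $\pi$ of $S$ refining $\operatorname{Fib}(h|_S)$. This gives the pointwise identity
\[
\prod_{B'}\Theta_{B'}(\bm\alpha_{B'})
=\sum_{\pi\in\operatorname{Part}(S)}\mathbf 1\{h \text{ is constant on every block of }\pi\}\prod_{C\in\pi}K_C(\bm\alpha_C).
\]

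Taking the expectation over uniform $h$, the indicator has probability $\prod_{C\in\pi}D^{-(|C|-1)}=D^{-r(\pi)}$. This uses that the blocks are disjoint and that the hash values are independent across coordinates.

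The singleton partition $\hat0_S$ has $r=0$, and by \cref{lem:cumulant-zero-input} every factor is $K_{\{i\}}=1$, so this term contributes exactly $1$. Multiplying by $c(\boldsymbol\alpha)$ and summing over tuples, the singleton terms reproduce $\Lambda_{0,0}$. Subtracting it leaves precisely \eqref{eq:physical-partition-expansion}.

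Interchanging the expectation with the sum over tuples is justified because the sum is finite. The only point requiring care is that $\Theta$ (and therefore $K_C$) depends only on the restricted tuple $\bm\alpha_C$ and not on $h$. That holds by the definition \eqref{eq:exact-bucket-phase}, so the sole $h$-dependence sits in the indicator. I expect no real obstacle here; the main check is the bookkeeping that a refinement of the fiber partition corresponds bijectively to the product of per-block partition choices.
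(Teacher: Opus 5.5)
Your proposal is correct and follows the paper's proof essentially step for step. Both arguments delete the inactive coordinates via \cref{prop:fiber-phase-basic-properties}, expand each remaining fiber by the moment--cumulant identity \eqref{eq:bucket-moment-cumulant}, and identify the product of per-fiber partitions with partitions of $S(\boldsymbol\alpha)$ on whose blocks $h$ is constant; the expectation then gives $D^{-r(\pi)}$ by independence over disjoint blocks, and the all-singleton term cancels against $\Lambda_{0,0}$.
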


The sum in \cref{fact:phisycal-expansion} runs over partitions of
$S(\boldsymbol\alpha)$.  For every contributing Fourier tuple,
$|S(\boldsymbol\alpha)|\leq\sum_{j=1}^4|\alpha_j|\leq4k$, since each
$A_j$ has Pauli degree at most $k$.  The active coordinates may occur
anywhere in $[n]$.  We next separate the partition structure from the
placement of these coordinates.

\paragraph{Relabeling the non-singleton blocks.}
We now reorganize \eqref{eq:physical-partition-expansion} without changing
its value.  The factor $D^{-r(\pi)}$ suggests grouping the terms first by
the rank $r(\pi)$.  A direct termwise estimate is not suitable, however:
the domain $S(\boldsymbol\alpha)$ of $\pi$ varies with the outer Fourier
tuple, and taking absolute values before reorganizing the sum would leave an
uncontrolled Fourier $\ell_1$ norm of the inputs.  On the other hand,
singleton blocks affect neither $r(\pi)$ nor the cumulant product, since
they contribute $|C|-1=0$ and $K_{\{i\}}=1$.  Hence a term is determined by
its non-singleton block structure and by the exact coordinates on which
that structure is placed.  We separate these two pieces by labeling the
coordinates in the non-singleton blocks with a fixed set $[v]$.
This is an
exact reindexing of \eqref{eq:physical-partition-expansion}.

Fix $\boldsymbol\alpha$ and a nontrivial partition
$\pi\in\operatorname{Part}(S(\boldsymbol\alpha))$.  Let
$\pi_{\mathrm{ns}}$ be its collection of non-singleton blocks and put
\begin{equation}
\label{eq:partition-nonsingleton-support}
V(\pi):=\bigcup_{C\in\pi_{\mathrm{ns}}}C,
\qquad
v:=|V(\pi)|,
\qquad
q:=|\pi_{\mathrm{ns}}|.
\end{equation}
Since singleton blocks contribute zero in \eqref{eq:partition-rank},
the rank satisfies
\begin{equation*}
r(\pi)=v-q.
\end{equation*}
Because $\pi$ is nontrivial, $q\ge1$, and hence $v=r+q\ge r+1$.
Moreover, every block in $\pi_{\mathrm{ns}}$ contains at least two
coordinates, so
$v\ge2q=2(v-r)$, equivalently $v\le2r$.  Finally,
$V(\pi)\subseteq S(\boldsymbol\alpha)$ and each input $A_j$ has Pauli degree
at most $k$, whence
\begin{equation}
r+1\le v\le2r,
\qquad
v\le |S(\boldsymbol\alpha)|
\le\sum_{j=1}^4|\alpha_j|
\le4k.
\label{eq:rank-vertex-relations}
\end{equation}
These are the only values of $v$ that can occur in a nonzero rank-$r$
term.
The coordinates in $S(\boldsymbol\alpha)\setminus V(\pi)$ remain singleton
blocks and contribute only factors $K_{\{i\}}=1$.  Choose a labeling, namely
a bijection
\begin{equation*}
\iota:[v]\longrightarrow V(\pi)\subseteq[n],
\qquad \iota(a)=i_a,
\end{equation*}
and write the non-singleton blocks as subsets of $[v]$ using this labeling:
\begin{equation*}
\mathcal C
:=
\bigl\{\iota^{-1}(C):C\in\pi_{\mathrm{ns}}\bigr\}.
\end{equation*}
Thus $\mathcal C$ is a partition of $[v]$ with no singleton blocks.  It
records the blocks after labeling by $[v]$, while $\iota$ specifies their
coordinates in $[n]$.  Conversely, given such a pair $(\mathcal C,\iota)$ and a Fourier
tuple $\boldsymbol\alpha$, the blocks $\iota(C)$, $C\in\mathcal C$, together
with singleton blocks on
$S(\boldsymbol\alpha)\setminus\iota([v])$, recover $\pi$.
The property \cref{lem:cumulant-zero-input} allows the
condition $\iota([v])\subseteq S(\boldsymbol\alpha)$ to be omitted: a term
with $\bm{\alpha}_{\iota(a)}=0$ already vanishes.

Every partition of a $v$-element subset of $[n]$ into non-singleton blocks
has exactly $v!$ labelings by $[v]$.  Define
\begin{equation}
\label{eq:abstract-partition-family}
\mathfrak P_{v,r}
:=
\left\{
\mathcal C\in\operatorname{Part}([v]):
|B|\ge2\ \forall B\in\mathcal C,
\ |\mathcal C|=v-r
\right\}.
\end{equation}
Then \eqref{eq:physical-partition-expansion} may be rewritten exactly as
\begin{equation}
\grayhighlight{
\Delta_{\xi,\zeta}
=
\sum_{r\ge1}D^{-r}
\sum_{v\ge1}
\frac1{v!}
\sum_{\mathcal C\in\mathfrak P_{v,r}}
\sum_{\alpha_1+\cdots+\alpha_4=0}
c(\boldsymbol\alpha)
\sum_{\iota:[v]\hookrightarrow[n]}
\prod_{B\in\mathcal C}
K_B\bigl((\bm{\alpha}_{\iota(a)})_{a\in B}\bigr).}
\label{eq:abstract-rank-decomposition}
\end{equation}
Denote
\begin{equation}
\mathcal R_{\mathcal C}
:=
\sum_{\alpha_1+\cdots+\alpha_4=0}
c(\boldsymbol\alpha)
\sum_{\iota:[v]\hookrightarrow[n]}
\prod_{B\in\mathcal C}
K_B\bigl((\bm{\alpha}_{\iota(a)})_{a\in B}\bigr).
\label{eq:fixed-abstract-pattern-contribution}
\end{equation}
For fixed $\mathcal C$, the choice of coordinates in $[n]$ is specified by
the injection $\iota$, or equivalently by a tuple
$(i_1,\ldots,i_v)$ of pairwise distinct coordinates.  Crucially, $v$ and
the partition $\mathcal C$ no longer depend on the Fourier tuple
$\boldsymbol\alpha$.  We may therefore regard the cumulant product as one
fixed function on the finite space $\mathcal H_4^v$ and analyze the
remaining Fourier sum without taking coefficient-wise absolute values.

We will control each $\mathcal{R}_{\mathcal{C}}$
using the following fact.
\begin{fact}\label{fact:R_C-upperbound}
\begin{equation}
\abs{\mathcal R_{\mathcal C}}
\le
63^v v^{2v}(512k^3)^v
\prod_{j=1}^4\norm{A_j}_4.
\label{eq:fixed-abstract-pattern-bound}
\end{equation}
\end{fact}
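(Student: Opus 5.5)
We follow the route laid out in \cref{fig:folded-product-proof-overview}. Fix $v$ and $\mathcal C\in\mathfrak P_{v,r}$, and view the cumulant product
\[
g(\bm t_1,\ldots,\bm t_v):=\prod_{B\in\mathcal C}K_B\bigl((\bm t_a)_{a\in B}\bigr)
\]
as a fixed function on $\mathcal H_4^v$. By \cref{lem:cumulant-zero-input}, $g$ vanishes whenever some $\bm t_a=0$. Hence we may write $g=g\prod_{a=1}^v\one_{\{\bm t_a\neq0\}}$. Each indicator $\one_{\{\bm t\ne0\}}=1-\delta_0(\bm t)$ expands, by \eqref{eq:h4-nonzero-point-mass-fourier} and its $s=0$ analogue, as $-\frac1{64}\sum_{\varphi\neq\mathbf 1}(\varphi(\bm t)-1)$.

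The first step is a Fourier expansion on $\mathcal H_4^v$ of the form
\[
g=\sum_{\boldsymbol\varphi}m_{\boldsymbol\varphi}\prod_{a=1}^v(\varphi_a(\bm t_a)-1),
\]
in which every $\varphi_a$ is nontrivial. For the coefficients we use \cref{lem:finite-abelian-fourier-identities}: the Fourier $\ell_1$ norm of $g$ is at most the product over blocks of the Fourier $\ell_1$ norms of $K_B$. Each $K_B$ is the sum \eqref{eq:fiber-phase-cumulant} of at most $|B|^{|B|}$ products of unimodular $\Theta_C$. Since $\|\widehat f\|_{\ell_1}\le\sqrt{|K|}\,\|f\|_\infty$, each $\Theta_C$ on $\mathcal H_4^{C}$ has $\ell_1$ norm at most $8^{|C|}$. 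Passing to the $(\varphi_a-1)$ basis costs at most a factor $2$ per coordinate. Altogether this should give $\sum|m_{\boldsymbol\varphi}|\le 63^vv^{2v}$, up to constant bookkeeping, which is \eqref{eq:character-coefficient-l1-bound}.

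For fixed $\boldsymbol\varphi$ we must control
\[
\mathfrak M_{\boldsymbol\varphi}=\sum_{\boldsymbol\alpha}c(\boldsymbol\alpha)\sum_{\iota}\prod_a\bigl(\varphi_a(\bm\alpha_{\iota(a)})-1\bigr).
\]
The sum over injections is turned into unrestricted sums by the second \mobius{} inversion of \cref{lem:mobius-inversion-partition-lattice}. Summing over all maps $[v]\to[n]$ and inverting over which labels coincide gives
\[
\mathfrak M_{\boldsymbol\varphi}=\sum_{\sigma\in\Pi_v}\mu(\hat0,\sigma)\,\mathfrak M_{\boldsymbol\varphi,\sigma},
\]
where in $\mathfrak M_{\boldsymbol\varphi,\sigma}$ each block $B_\ell$ of $\sigma$ shares one free coordinate $i_\ell$. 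On that coordinate the labels in $B_\ell$ combine to the single character $\chi_{B_\ell}:=\prod_{a\in B_\ell}\varphi_a$ minus lower terms; the exact factors $(\varphi_a-1)$ multiply out into a signed combination of at most $2^{|B_\ell|}$ character products. This is absorbed into the constant.

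We then identify each unrestricted sum with a mixed derivative of the auxiliary polynomial \eqref{eq:block-auxiliary-polynomial}. Expanding
\[
\prod_i\bigl(1-w_\ell+w_\ell\chi_{B_\ell}(\bm\alpha_i)\bigr)
\]
and taking $\partial_{w_\ell}$ at $0$ yields $\sum_i(\chi_{B_\ell}(\bm\alpha_i)-1)$. Thus $\partial_{w_1}\cdots\partial_{w_b}F_{\sigma,\boldsymbol\chi}(0)=\mathfrak M_{\boldsymbol\varphi,\sigma}$, for a suitable choice of $\boldsymbol\chi$ or a bounded combination of such choices. The degree in each $w_\ell$ is at most $4k$, since only active coordinates contribute a nonconstant factor.

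For $w\in[0,1]^b$, the product $\prod_\ell(1-w_\ell+w_\ell\chi_\ell)$ is, coordinatewise, an expectation over independent Bernoulli choices of a character of $\mathcal H_4$. Each character factors as one $G$-character per input, and by \cref{lem:pauli-multiplier-implementation} these act by Pauli conjugations $U_\rho$ on each $A_j$. Hence $F(w)$ is an average of expressions
\[
\tau\bigl((A_3'A_4')^\dagger A_1'A_2'\bigr),
\]
where for $\xi=\zeta=0$ the $A_j'$ are unitary conjugates of the $A_j$. Schatten H\"older gives $|F(w)|\le\prod_j\|A_j\|_4$.

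For $\xi$ or $\zeta$ equal to $h$, the phases $\Theta$ were already absorbed into the coefficients $m_{\boldsymbol\varphi}$, so $F$ involves only $c(\boldsymbol\alpha)$ and the same trace bound applies. This is exactly why the characters were extracted first.

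Next we apply the multivariate Markov brothers' inequality in each variable successively on $[0,1]$. A degree-$d$ polynomial satisfies $\|p'\|_{[0,1]}\le 2d^2\|p\|_{[0,1]}$, which gives the derivative bound $(32k^2)^b\prod_j\|A_j\|_4$. Summing over $\sigma$ with $|\mu(\hat0,\sigma)|$, \cref{lem:total-absolute-mobius-mass} contributes $v!\le(4k)^v$. Together with the per-block factors $2^{|B|}$ this gives $|\mathfrak M_{\boldsymbol\varphi}|\le(512k^3)^v\prod_j\|A_j\|_4$. Combining with the coefficient bound yields \eqref{eq:fixed-abstract-pattern-bound}.

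The main obstacle, I expect, is the exact matching between $\mathfrak M_{\boldsymbol\varphi,\sigma}$ and derivatives of $F$. When several labels merge on one coordinate, the product $\prod_{a\in B_\ell}(\varphi_a-1)$ is not a single shifted character. I would handle this by expanding it into signed products $\prod_{a\in T}\varphi_a$ over $T\subseteq B_\ell$. Each resulting term is a derivative of a polynomial of the same form, with $\chi_{B_\ell}$ replaced by $\prod_{a\in T}\varphi_a$. The cost is a factor $2^{v}$, which is absorbed into the constants.
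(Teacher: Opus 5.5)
Your proposal is correct and is essentially the paper's proof. It uses the same character expansion of the cumulant product, with $m_{\boldsymbol\varphi}=\widehat{W_{\mathcal C}}(\boldsymbol\varphi)$, so your Fourier-$\ell_1$ bookkeeping is only a variant of the paper's bound $\sum_{\boldsymbol s}|W_{\mathcal C}(\boldsymbol s)|\le 63^v\|W_{\mathcal C}\|_\infty$; it then follows the same second M\"obius inversion, per-block expansion into at most $2^{|B|}$ shifted characters, auxiliary polynomial controlled by Pauli conjugation and Schatten H\"older, Markov brothers' inequality, and $v!\le(4k)^v$.

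One step should be made explicit. In your subset expansion of $\prod_{a\in B}(\varphi_a-1)$, you need to replace each $\prod_{a\in T}\varphi_a$ by $\prod_{a\in T}\varphi_a-1$, since that shifted form is what the derivative of $F_{\sigma,\boldsymbol\chi}$ produces. This replacement is exact only because $\sum_{T\subseteq B}(-1)^{|B|-|T|}=0$, i.e.\ $g_B(0)=0$; otherwise an $n$-dependent constant term would survive. The paper records exactly this in \eqref{eq:merged-block-character-minus-one}.
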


There are at most $v^v$ partitions of $[v]$, hence at most $v^v$ choices
for $\mathcal C$.  Summing
\eqref{eq:fixed-abstract-pattern-bound} over the values of $v$ allowed by
\eqref{eq:rank-vertex-relations}, using $1/v!\leq1$, and enlarging the
exponent, we obtain
\begin{equation*}
\sum_{v\ge1}\frac1{v!}
\sum_{\mathcal C\in\mathfrak P_{v,r}}
\abs{\mathcal R_{\mathcal C}}
\le
(Ck^{16})^r\prod_{j=1}^4\norm{A_j}_4.
\end{equation*}
This bound is independent of $n$.  Returning to the exact rank decomposition
\eqref{eq:abstract-rank-decomposition}, we apply the triangle inequality
and the preceding bound at each rank to obtain
\begin{align*}
\abs{\Delta_{\xi,\zeta}}
&\le
\sum_{r\ge1}D^{-r}
\sum_{v\ge1}\frac1{v!}
\sum_{\mathcal C\in\mathfrak P_{v,r}}
\abs{\mathcal R_{\mathcal C}}
\notag\\
&\le
\sum_{r\ge1}\left(\frac{Ck^{16}}D\right)^r
\prod_{j=1}^4\norm{A_j}_4.
\end{align*}
If $D\ge C_0k^{64}$ with $C_0$ sufficiently large, the series is geometric and satisfies the bound in \cref{lem:folded-product-estimate}, after enlarging $C_1$.  No Hermitianity or reality assumption was used: the last two inputs enter conjugate-linearly, exactly as encoded by the conjugated phase in \eqref{eq:exact-bucket-phase}.
\end{proof}

\begin{proof}[Proof of \cref{fact:phisycal-expansion}]
By
\cref{prop:fiber-phase-basic-properties}, inactive coordinates can first be
deleted from every fiber.  Apply \eqref{eq:bucket-moment-cumulant} to each
remaining fiber and multiply the expansions.  Choosing one partition inside
each fiber is equivalent to choosing a global partition $\pi$ of $S$ such
that $h$ is constant on every block $C\in\pi$.  Hence, for each fixed $h$,
\begin{equation}
\prod_{B\in\operatorname{Fib}(h)}\Theta_B(\bm{\alpha}_B)
=
\sum_{\pi\in\operatorname{Part}(S)}
\mathbf{1}_{\{h|_C\text{ is constant for every }C\in\pi\}}
\prod_{C\in\pi}K_C(\bm{\alpha}_C).
\label{eq:fixed-h-fiber-cumulant-expansion}
\end{equation}
For a block $C$ of size $s$, the probability that $h|_C$ is constant is
$D^{-(s-1)}$: the image of one coordinate is arbitrary, and the other
$s-1$ images must agree with it.  Since the blocks of $\pi$ are disjoint,
these events depend on disjoint hash variables and are independent.

Termwise expectation in \eqref{eq:fixed-h-fiber-cumulant-expansion} replaces
each partition by its factor $D^{-r(\pi)}$.  Thus
\begin{equation}
\mathbb E_h\prod_{B\in\operatorname{Fib}(h)}
\Theta_B(\bm{\alpha}_B)
=
\sum_{\pi\in\operatorname{Part}(S)}
D^{-r(\pi)}\prod_{C\in\pi}K_C(\bm{\alpha}_C).
\label{eq:fiber-cumulant-expansion}
\end{equation}
The all-singleton partition $\hat{0}_S$ contributes exactly $1$.  Take
expectation in \eqref{eq:lambda-fiber-phase-expansion} and substitute
\eqref{eq:fiber-cumulant-expansion}.  Subtracting $\Lambda_{0,0}$ removes
the all-singleton contribution and yields the expansion over partitions of
$S(\boldsymbol\alpha)$:
\begin{equation*}
\grayhighlight{
\begin{aligned}
\Delta_{\xi,\zeta}
&:=
\mathbb E_h\Lambda_{\xi,\zeta}(A_1,A_2,A_3,A_4)
-\Lambda_{0,0}(A_1,A_2,A_3,A_4)
\\
&=
\sum_{\alpha_1+\cdots+\alpha_4=0}
c(\boldsymbol\alpha)
\sum_{\substack{
\pi\in\operatorname{Part}(S(\boldsymbol\alpha))\\
\pi\ne\hat0_{S(\boldsymbol\alpha)}}}
D^{-r(\pi)}
\prod_{C\in\pi}K_C(\bm{\alpha}_C).
\end{aligned}
}
\end{equation*}
Here $\pi$ ranges over partitions of $S(\boldsymbol\alpha)\subseteq[n]$.
\end{proof}

\begin{proof}[Proof of \cref{lem:fiber-phase-cumulant-inversion}]
For $\sigma\in\operatorname{Part}(B)$, set
\[
G(\sigma)
:=\prod_{C\in\sigma}\Theta_C(\bm{\alpha}_C),
\qquad
F(\sigma)
:=\prod_{C\in\sigma}K_C(\bm{\alpha}_C).
\]
If $\pi\le\sigma$, let $\pi|_C$ denote the restriction of $\pi$ to a
block $C\in\sigma$.  Substitute \eqref{eq:fiber-phase-cumulant} on every
block of $\sigma$ and expand the resulting product.  This produces
\begin{equation*}
F(\sigma)
=
\sum_{\pi\le\sigma}
\left(
  \prod_{C\in\sigma}
  \mu(\pi|_C,\hat1_C)
\right)
G(\pi)
=
\sum_{\pi\le\sigma}
\mu(\pi,\sigma)G(\pi).
\end{equation*}
The second equality follows from the product formula
\eqref{eq:partition-lattice-mobius}: the interval $[\pi,\sigma]$ splits
over the blocks $C\in\sigma$.  The downward inversion formula in
\cref{lem:mobius-inversion-partition-lattice} therefore reads
\[
G(\sigma)=\sum_{\pi\le\sigma}F(\pi).
\]
Taking $\sigma=\hat1_B$ yields
\eqref{eq:bucket-moment-cumulant}.
This is precisely the form of the classical moment--cumulant relation (see e.g. \cite{Speed83}).
\end{proof}

We now prove \cref{fact:R_C-upperbound}.
\begin{proof}[Proof of \cref{fact:R_C-upperbound}]
We control each $\mathcal R_{\mathcal C}$ by a
character expansion and an auxiliary polynomial.

\paragraph{Fourier expansion of $\mathcal R_{\mathcal C}$.}
Fix $\mathcal C\in\mathfrak P_{v,r}$.  To make explicit the quantity that
must be bounded, define the scalar-valued function
$W_{\mathcal C}:\mathcal H_4^v\to\mathbb C$ by
\begin{equation}
\label{eq:cumulant-product-function}
W_{\mathcal C}(t_1,\ldots,t_v)
:=
\prod_{B\in\mathcal C}K_B((t_a)_{a\in B}).
\end{equation}
Every block of $\mathcal C$ has size at least two, so
\eqref{eq:cumulant-zero-input} implies that
$W_{\mathcal C}(t_1,\ldots,t_v)=0$ whenever some $t_a=0$.  Consequently its
exact point-mass expansion is
\begin{equation}
W_{\mathcal C}(t_1,\ldots,t_v)
=
\sum_{\boldsymbol s\in(\mathcal H_4\setminus\{0\})^v}
W_{\mathcal C}(\boldsymbol s)
\prod_{a=1}^v\delta_{s_a}(t_a).
\label{eq:cumulant-product-point-mass-expansion}
\end{equation}
For $s\in\mathcal H_4\setminus\{0\}$, Fourier inversion and character
orthogonality specialize to \eqref{eq:h4-nonzero-point-mass-fourier}.  We use this
identity below in the form
\begin{equation}
\delta_s(t)=\frac1{64}\sum_{\varphi\in\widehat{\mathcal H_4}}\varphi(s)\bigl(\varphi(t)-1\bigr).
\label{eq:nonzero-point-mass}
\end{equation}
For each $\varphi\in\widehat{\mathcal H_4}$, write
$\varphi(t)=\prod_{j=1}^4\chi_{\nu_j}(t_j)$ by taking its three
coordinate labels and setting $\nu_4=0$.  Thus multiplication of
the $j$th input's Pauli coefficient by the corresponding coordinate
character is conjugation by a one-qubit Pauli.  Substitute
\eqref{eq:cumulant-product-point-mass-expansion} into
\eqref{eq:fixed-abstract-pattern-contribution}, and then expand every point
mass by \eqref{eq:nonzero-point-mass}.  The resulting identity is
\begin{align}
\mathcal R_{\mathcal C}
={}&
\frac1{64^v}
\sum_{\boldsymbol s\in(\mathcal H_4\setminus\{0\})^v}
W_{\mathcal C}(\boldsymbol s)
\sum_{\varphi_1,\ldots,\varphi_v\in\widehat{\mathcal H_4}}
\left(\prod_{a=1}^v\varphi_a(s_a)\right)
\sum_{\alpha_1+\cdots+\alpha_4=0}
c(\boldsymbol\alpha)
\notag\\
&\quad\cdot
\sum_{\substack{i_1,\ldots,i_v\in[n]\\
i_1,\ldots,i_v\ \mathrm{pairwise\ distinct}}}\hspace{-1em}
\prod_{a=1}^v
\bigl(\varphi_a(\bm{\alpha}_{i_a})-1\bigr).
\label{eq:abstract-pattern-direct-character-expansion}
\end{align}
The final two sums are the only part that still depends on the input
operators.  Define
\begin{equation}
{\mathfrak M_{\boldsymbol\varphi}
:={}
\sum_{\alpha_1+\cdots+\alpha_4=0}
c(\boldsymbol\alpha)\hspace{-2em}
\sum_{\substack{i_1,\ldots,i_v\in[n]\\
i_1,\ldots,i_v\ \mathrm{pairwise\ distinct}}}\hspace{-1em}
\prod_{a=1}^v
\bigl(\varphi_a(\bm{\alpha}_{i_a})-1\bigr).
\label{eq:distinct-character-contribution}}
\end{equation}
For the fixed pattern $\mathcal C$, also define
\begin{equation}
m_{\boldsymbol\varphi}
:=
\frac1{64^v}
\sum_{\boldsymbol s\in(\mathcal H_4\setminus\{0\})^v}
W_{\mathcal C}(\boldsymbol s)
\prod_{a=1}^v\varphi_a(s_a).
\label{eq:abstract-pattern-character-coefficient}
\end{equation}
Reverse the order of the two finite sums in
\eqref{eq:abstract-pattern-direct-character-expansion} to obtain
\begin{equation}
\grayhighlight{
\mathcal R_{\mathcal C}
=
\sum_{\boldsymbol\varphi\in\widehat{\mathcal H_4}^{\,v}}
m_{\boldsymbol\varphi}\mathfrak M_{\boldsymbol\varphi}.}
\label{eq:abstract-pattern-character-contributions}
\end{equation}

We need an auxiliary polynomial to bound
$\mathfrak M_{\boldsymbol\varphi}$ without any dependence on $n$.

\begin{lemma}
\label{lem:auxiliary-polynomial-estimate}
Let $n,k,v\geq1$, let
$A_1,A_2,A_3,A_4\in \Matrix_2^{\otimes n}$ have Pauli degree at most $k$, and fix
characters $\varphi_1,\ldots,\varphi_v\in\widehat{\mathcal H_4}$.  For
$\boldsymbol\alpha=(\alpha_1,\ldots,\alpha_4)$ with sum zero, put
$\bm{\alpha}_i=(\alpha_{1,i},\ldots,\alpha_{4,i})\in\mathcal H_4$.  Define
the polynomial $F\colon\mathbb C^v\to\mathbb C$ by
\begin{equation}
F(z_1,\ldots,z_v)
=
\sum_{\alpha_1+\cdots+\alpha_4=0}
c(\boldsymbol\alpha)
\prod_{a=1}^v\prod_{i=1}^n
\bigl(1-z_a+z_a\varphi_a(\bm{\alpha}_i)\bigr),
\label{eq:auxiliary-polynomial}
\end{equation}
where $c$ is defined in \eqref{eq:source-fourier-coefficient}.  Then
\begin{equation}
\partial_{z_1}\cdots\partial_{z_v}F(0)
=
\sum_{\alpha_1+\cdots+\alpha_4=0}
c(\boldsymbol\alpha)
\sum_{i_1,\ldots,i_v\in[n]}
\prod_{a=1}^v
\bigl(\varphi_a(\bm{\alpha}_{i_a})-1\bigr).
\label{eq:auxiliary-derivative-all-coordinates}
\end{equation}
\begin{equation}
\sup_{z\in[0,1]^v}|F(z)|
\le
\prod_{j=1}^4\norm{A_j}_4,
\label{eq:auxiliary-polynomial-bound}
\end{equation}
and
\begin{equation}
\abs{\partial_{z_1}\cdots\partial_{z_v}F(0)}
\le
(32k^2)^v\prod_{j=1}^4\norm{A_j}_4.
\label{eq:mixed-derivative-bound}
\end{equation}
\end{lemma}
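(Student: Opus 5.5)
The plan is to prove the three assertions of \cref{lem:auxiliary-polynomial-estimate} in order: the derivative identity, then the sup bound, then the derivative bound. The sup bound is the substantive step.

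\emph{Derivative identity.} For fixed $\boldsymbol\alpha$, the product $\prod_{a}\prod_{i}(1-z_a+z_a\varphi_a(\bm\alpha_i))$ is multi-affine in each $z_a$ factor by factor. Its $z_a$-dependence is $\prod_{i=1}^n(1+z_a(\varphi_a(\bm\alpha_i)-1))$. The first derivative in $z_a$ at $z_a=0$ is therefore $\sum_{i_a}(\varphi_a(\bm\alpha_{i_a})-1)$. Differentiating once in each variable and evaluating at $0$ factorizes over $a$, which gives \eqref{eq:auxiliary-derivative-all-coordinates} with unrestricted indices $i_1,\dots,i_v$. This is routine.

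\emph{Sup bound.} For $z\in[0,1]^v$, I will interpret $F(z)$ as an average. For each $a$, draw independent coordinates $\rho^{(a)}\in\{0,1\}^n$ with $\Pr[\rho^{(a)}_i=1]=z_a$. Then $\prod_i(1-z_a+z_a\varphi_a(\bm\alpha_i))=\mathbb E\prod_{i:\rho^{(a)}_i=1}\varphi_a(\bm\alpha_i)$.

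Write $\varphi_a(t)=\prod_{j=1}^4\chi_{\nu_{a,j}}(t_j)$. The product over $a$ and $i$ then becomes $\prod_{j=1}^4\chi_{\rho_j}(\alpha_j)$, where $\rho_j\in G^n$ has $i$-th coordinate $\sum_a\rho^{(a)}_i\nu_{a,j}$. Hence $F(z)=\mathbb E\,\Lambda_{0,0}(A_1',A_2',A_3',A_4')$, where $A_j'$ has Pauli coefficients $\chi_{\rho_j}(\alpha)\widehat A_j(\alpha)$. Since these characters are real, conjugating the coefficients of $A_3,A_4$ commutes with the multiplier.

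By \cref{lem:pauli-multiplier-implementation}, each $A_j'=U_{\rho_j}A_jU_{\rho_j}^\dagger$ is a Pauli conjugation, so $\norm{A_j'}_4=\norm{A_j}_4$. Because $\Lambda_{0,0}(A_1',\dots,A_4')=\tau((A_3'A_4')^\dagger A_1'A_2')$, Schatten H\"older bounds each realization by $\prod_j\norm{A_j}_4$. Averaging gives \eqref{eq:auxiliary-polynomial-bound}.

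\emph{Derivative bound.} First I control the degree. In each $\boldsymbol\alpha$ term, the factor for index $a$ differs from $1$ only at coordinates $i$ with $\bm\alpha_i\neq0$, where it is affine in $z_a$. On inactive coordinates, $\varphi_a(0)=1$ makes the factor identically $1$. There are at most $|S(\boldsymbol\alpha)|\le4k$ active coordinates, so $\deg_{z_a}F\le4k$ for each $a$.

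I then apply the one-variable Markov brothers' inequality on $[0,1]$: a polynomial $p$ of degree $d$ satisfies $\norm{p'}_{\infty,[0,1]}\le2d^2\norm p_{\infty,[0,1]}$. I apply it successively in $z_1,\dots,z_v$, each time to a polynomial in one variable with the others fixed in $[0,1]$. Each derivative $\partial_{z_1}\cdots\partial_{z_\ell}F$ still has degree $\le4k$ in the remaining variables. This yields $\sup_{[0,1]^v}|\partial_{z_1}\cdots\partial_{z_v}F|\le(2(4k)^2)^v\sup|F|=(32k^2)^v\prod_j\norm{A_j}_4$, and in particular the bound holds at $0$.

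The main obstacle is the sup bound. The key point is that $F$ at real points of $[0,1]^v$ is a genuine average over random Pauli-conjugated inputs, so it is bounded by the trace form and not by coefficient $\ell_1$ sums. Getting the character bookkeeping exactly right requires setting $\nu_4=0$ and checking that conjugation is compatible with the real characters.
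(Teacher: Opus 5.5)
Your proposal is correct and follows essentially the same route as the paper. The derivative identity comes from writing each factor as $1+z_a(\varphi_a(\bm{\alpha}_i)-1)$; the sup bound realizes $F(z)$, for $z\in[0,1]^v$, as a Bernoulli average of trace forms $\tau_{2^n}((A_3'A_4')^\dagger A_1'A_2')$ with Pauli-conjugated inputs and applies Schatten H\"older; and the derivative bound uses the per-variable degree bound $4k$ with the Markov inequality on $[0,1]$, constant $2d^2$, iterated over $z_1,\ldots,z_v$. The only detail the paper spells out that you leave implicit is that $F$ has complex coefficients, so Markov's inequality is needed in its complex form; this follows from the real case by multiplying $p$ by a unimodular scalar and taking real parts.
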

Its proof is deferred until after the proof of
\cref{lem:folded-product-estimate}.

\paragraph{Removing repeated coordinates: the second \mobius{} inversion.}
At this point the need for a second \mobius{} inversion appears.  The contribution
\eqref{eq:distinct-character-contribution} is a sum over injective maps
$[v]\hookrightarrow[n]$, whereas the
derivative \eqref{eq:auxiliary-derivative-all-coordinates} sums over all
maps $[v]\to[n]$.  Its bound in \eqref{eq:mixed-derivative-bound} is
independent of $n$.  The repeated-coordinate terms cannot simply be
discarded, since the summands are complex and may cancel.  The following
lemma expresses the sum over injective maps as a linear combination of sums
over all maps; its proof is deferred.

\begin{lemma}
\label{lem:distinct-coordinate-mobius}
Let $n,v\geq1$.  For arbitrary functions
$f_1,\ldots,f_v:[n]\to\mathbb C$,
\begin{equation}
\sum_{\substack{i_1,\ldots,i_v\in[n]\\
i_1,\ldots,i_v\ \mathrm{pairwise\ distinct}}}
\prod_{a=1}^v f_a(i_a)=
\sum_{\sigma\in\operatorname{Part}([v])}
\mu(\hat0,\sigma)
\prod_{B\in\sigma}
\left(\sum_{i=1}^n\prod_{a\in B}f_a(i)\right),
\label{eq:distinct-coordinate-mobius}
\end{equation}
where
\begin{equation}
\mu(\hat0,\sigma)
=
\prod_{B\in\sigma}(-1)^{|B|-1}(|B|-1)!.
\label{eq:second-mobius-coefficient}
\end{equation}
\end{lemma}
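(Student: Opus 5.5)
The plan is to derive \eqref{eq:distinct-coordinate-mobius} from the downward \mobius{} inversion in \cref{lem:mobius-inversion-partition-lattice}(i), applied on the partition lattice $\operatorname{Part}([v])$. Each map $\iota\colon[v]\to[n]$, $a\mapsto i_a$, has a \emph{kernel partition} $\ker\iota\in\operatorname{Part}([v])$: two labels $a,a'$ lie in the same block exactly when $i_a=i_{a'}$. For every $\sigma\in\operatorname{Part}([v])$, define two quantities. The first is
\begin{equation*}
  f(\sigma):=\sum_{\iota:\ \ker\iota=\sigma}\ \prod_{a=1}^v f_a(i_a),
\end{equation*}
the sum over maps whose coincidence pattern is exactly $\sigma$. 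The second is
\begin{equation*}
  g(\sigma):=\sum_{\iota:\ \ker\iota\ge\sigma}\ \prod_{a=1}^v f_a(i_a),
\end{equation*}
the sum over maps that are constant on every block of $\sigma$. The left-hand side of \eqref{eq:distinct-coordinate-mobius} is $f(\hat0)$, since an injective map is precisely one with kernel $\hat0$.

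Next I compute $g$ in closed form. A map that is constant on each block of $\sigma$ is determined by an arbitrary choice of one coordinate $i_B\in[n]$ per block $B\in\sigma$, with no constraint between different blocks. The product therefore factorizes over blocks:
\begin{equation*}
  g(\sigma)=\prod_{B\in\sigma}\Bigl(\sum_{i=1}^n\prod_{a\in B}f_a(i)\Bigr).
\end{equation*}
Grouping the maps in $g(\sigma)$ by their exact kernel gives $g(\sigma)=\sum_{\tau\ge\sigma}f(\tau)$. This is the upward-sum relation, so \cref{lem:mobius-inversion-partition-lattice}(ii) yields
\begin{equation*}
  f(\pi)=\sum_{\sigma\ge\pi}\mu(\pi,\sigma)\,g(\sigma).
\end{equation*}
Taking $\pi=\hat0$ gives exactly \eqref{eq:distinct-coordinate-mobius}. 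Note that this uses the upward rule (ii), not the downward rule (i) mentioned at the outset.

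Finally, I check the coefficient \eqref{eq:second-mobius-coefficient} by specializing \eqref{eq:partition-lattice-mobius} to $\pi=\hat0$. In that case each block $B\in\sigma$ contains $k_B=|B|$ singleton blocks of $\hat0$, so $\mu(\hat0,\sigma)=\prod_{B\in\sigma}(-1)^{|B|-1}(|B|-1)!$.

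There is no real obstacle. The only points requiring care are using the upward form of inversion, since the relation is $g(\sigma)=\sum_{\tau\ge\sigma}f(\tau)$, and confirming that every map $[v]\to[n]$ has exactly one kernel partition, so the grouping in $g(\sigma)$ is a disjoint decomposition. Both follow directly from the definitions.
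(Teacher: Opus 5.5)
Your proposal is correct and matches the paper's proof essentially step for step. Like you, the paper groups maps $[v]\to[n]$ by their kernel partition, identifies the sum over maps constant on the blocks of $\sigma$ with the block-factorized product, writes it as $\sum_{\rho\ge\sigma}E_\rho$, and applies the upward M\"obius inversion of \cref{lem:mobius-inversion-partition-lattice}(ii) at $\hat0$, obtaining the coefficient by specializing \eqref{eq:partition-lattice-mobius}. Fix your opening sentence, which cites the downward rule (i) although your argument correctly uses the upward rule (ii), and consider renaming your $f(\sigma)$ so it does not clash with the given functions $f_a$.
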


Apply \cref{lem:distinct-coordinate-mobius} for each fixed
$\boldsymbol\alpha$ by taking
\[
f_a(i)=\varphi_a(\bm{\alpha}_i)-1,
\qquad a\in[v],\quad i\in[n].
\]
Its left side is exactly the injective inner sum in
\eqref{eq:distinct-character-contribution}; its right side merges the factors
whose vertex labels belong to the same block $B\in\sigma$.  Multiply this
identity by $c(\boldsymbol\alpha)$ and sum over the Fourier tuples.  We arrive
at the exact second-inversion formula
\begin{equation}
\mathfrak M_{\boldsymbol\varphi}
=
\sum_{\sigma\in\operatorname{Part}([v])}
\mu(\hat0,\sigma)
\sum_{\alpha_1+\cdots+\alpha_4=0}
c(\boldsymbol\alpha)
\prod_{B\in\sigma}
\left(
\sum_{i=1}^n
\prod_{a\in B}
(\varphi_a(\bm{\alpha}_i)-1)
\right).
\label{eq:second-mobius-character-contribution}
\end{equation}

For a block $B\in\sigma$, the function
$g_B(t)=\prod_{a\in B}(\varphi_a(t)-1)$ satisfies $g_B(0)=0$.  Each factor
$\varphi_a-1$ has Fourier $\ell_1$ norm at most $2$, and multiplication of
functions corresponds to convolution of their Fourier coefficients by
\cref{lem:finite-abelian-fourier-identities}.
Therefore
\begin{equation*}
\sum_{\chi\in\widehat{\mathcal H_4}}
\abs{\widehat{g_B}(\chi)}
\le2^{|B|}.
\end{equation*}
Since
$g_B(0)=\sum_\chi\widehat{g_B}(\chi)=0$, the coefficient of the trivial
character satisfies
\begin{equation*}
\widehat{g_B}(\mathbf1)
=
-\sum_{\chi\ne\mathbf1}\widehat{g_B}(\chi).
\end{equation*}
Thus we may write directly
\begin{equation}
g_B(t)
=
\sum_{\substack{\chi\in\widehat{\mathcal H_4}\\
                  \chi\ne\mathbf1}}
\widehat{g_B}(\chi)\bigl(\chi(t)-1\bigr),
\qquad
\sum_{\chi\ne\mathbf1}\abs{\widehat{g_B}(\chi)}
\le2^{|B|}.
\label{eq:merged-block-character-minus-one}
\end{equation}


For a fixed $\sigma\in\operatorname{Part}([v])$, denote its contribution
inside \eqref{eq:second-mobius-character-contribution} by
\begin{equation}
\label{eq:character-partition-contribution}
\mathfrak M_{\boldsymbol\varphi,\sigma}
:=
\sum_{\alpha_1+\cdots+\alpha_4=0}
c(\boldsymbol\alpha)
\prod_{B\in\sigma}
\left(
\sum_{i=1}^n g_B(\bm{\alpha}_i)
\right).
\end{equation}
Let $b:=|\sigma|$.  Expand
\eqref{eq:merged-block-character-minus-one} for every $B\in\sigma$.  Then
\begin{equation}
\mathfrak M_{\boldsymbol\varphi,\sigma}
=
\sum_{(\chi_B)_{B\in\sigma}}
\left(
\prod_{B\in\sigma}\widehat{g_B}(\chi_B)
\right)
\sum_{\alpha_1+\cdots+\alpha_4=0}
c(\boldsymbol\alpha)
\prod_{B\in\sigma}
\left(
\sum_{i=1}^n
\bigl(\chi_B(\bm{\alpha}_i)-1\bigr)
\right),
\label{eq:fixed-partition-character-expansion}
\end{equation}
where every $\chi_B$ ranges over the nontrivial characters of
$\mathcal H_4$.  To identify the inner Fourier sum with an auxiliary
polynomial derivative, enumerate the blocks as
$\sigma=\{B_1,\ldots,B_b\}$ and, for a fixed character tuple
$\boldsymbol\chi=(\chi_{B_1},\ldots,\chi_{B_b})$, define a new
$b$-variable polynomial
\begin{equation}
F_{\sigma,\boldsymbol\chi}(w_1,\ldots,w_b)
:=
\sum_{\alpha_1+\cdots+\alpha_4=0}
c(\boldsymbol\alpha)
\prod_{\ell=1}^b\prod_{i=1}^n
\bigl(1-w_\ell+w_\ell\chi_{B_\ell}(\bm{\alpha}_i)\bigr).
\label{eq:block-auxiliary-polynomial}
\end{equation}
Expanding the product of the $b$ sums gives
\begin{equation}
\prod_{\ell=1}^b
\left(
\sum_{i=1}^n\bigl(\chi_{B_\ell}(\bm{\alpha}_i)-1\bigr)
\right)
=
\sum_{i_1,\ldots,i_b\in[n]}
\prod_{\ell=1}^b
\bigl(\chi_{B_\ell}(\bm{\alpha}_{i_\ell})-1\bigr).
\label{eq:product-of-coordinate-sums-expanded}
\end{equation}
On the other hand, the mixed derivative of
\eqref{eq:block-auxiliary-polynomial}, once in every variable, is
\begin{equation}
\partial_{w_1}\cdots\partial_{w_b}
F_{\sigma,\boldsymbol\chi}(0)
=
\sum_{\alpha_1+\cdots+\alpha_4=0}
c(\boldsymbol\alpha)
\sum_{i_1,\ldots,i_b\in[n]}
\prod_{\ell=1}^b
\bigl(\chi_{B_\ell}(\bm{\alpha}_{i_\ell})-1\bigr).
\label{eq:block-auxiliary-polynomial-derivative}
\end{equation}
Equations \eqref{eq:product-of-coordinate-sums-expanded} and
\eqref{eq:block-auxiliary-polynomial-derivative} show that this derivative
is exactly the inner Fourier sum in
\eqref{eq:fixed-partition-character-expansion}.
Apply \cref{lem:auxiliary-polynomial-estimate} with $b$ characters to obtain
\begin{equation*}
\abs{
\partial_{w_1}\cdots\partial_{w_b}
F_{\sigma,\boldsymbol\chi}(0)
}
\le
(32k^2)^b\prod_{j=1}^4\norm{A_j}_4.
\end{equation*}
Together with \eqref{eq:merged-block-character-minus-one}, this implies
\begin{align*}
\abs{\mathfrak M_{\boldsymbol\varphi,\sigma}}
&\le
(32k^2)^b
\prod_{B\in\sigma}
\left(
\sum_{\chi\ne\mathbf1}\abs{\widehat{g_B}(\chi)}
\right)
\prod_{j=1}^4\norm{A_j}_4
\notag\\
&\le
(32k^2)^b\,2^{\sum_{B\in\sigma}|B|}
\prod_{j=1}^4\norm{A_j}_4
\notag\\
&=
(32k^2)^b\,2^v
\prod_{j=1}^4\norm{A_j}_4
\le
(64k^2)^v
\prod_{j=1}^4\norm{A_j}_4,
\end{align*}
where the last inequality uses $b\le v$ and $k\ge1$.

Equation \eqref{eq:second-mobius-character-contribution} now reads
$\mathfrak M_{\boldsymbol\varphi}
=\sum_\sigma\mu(\hat0,\sigma)
\mathfrak M_{\boldsymbol\varphi,\sigma}$.  Therefore
\cref{lem:total-absolute-mobius-mass} and $v\le4k$ imply
\begin{align*}
\abs{\mathfrak M_{\boldsymbol\varphi}}
&\le
\left(\sum_\sigma|\mu(\hat0,\sigma)|\right)
(64k^2)^v
\prod_{j=1}^4\norm{A_j}_4
\notag\\
&=
v!(64k^2)^v
\prod_{j=1}^4\norm{A_j}_4
\notag\\
&\le
v^v(64k^2)^v
\prod_{j=1}^4\norm{A_j}_4
\le
(256k^3)^v
\prod_{j=1}^4\norm{A_j}_4.
\end{align*}
For later use, weaken the numerical constant from $256$ to $512$ and record
the estimate in the form
\begin{equation}
\abs{\mathfrak M_{\boldsymbol\varphi}}
\le
(512k^3)^v\prod_{j=1}^4\norm{A_j}_4
\label{eq:distinct-coordinate-bound}
\end{equation}
for arbitrary
$\boldsymbol\varphi\in\widehat{\mathcal H_4}^{\,v}$.

\paragraph{Summing the ranks.}
We now return to the exact decomposition
\eqref{eq:abstract-rank-decomposition}.  Fix
$\mathcal C\in\mathfrak P_{v,r}$ with $v$ in the admissible range
\eqref{eq:rank-vertex-relations}.  At this point we use the size of the
cumulants.  If $B\in\mathcal C$ and $s=|B|$, then
$|\Theta_{C'}(\bm{\alpha}_{C'})|=1$ for every $C'\subseteq B$.  Taking
absolute values in \eqref{eq:fiber-phase-cumulant}, we obtain
\begin{equation*}
\norm{K_B}_\infty
\le
\sum_{\pi\in\operatorname{Part}(B)}(|\pi|-1)!
\le s^{2s}.
\end{equation*}
Indeed, there are at most $s^s$ partitions of $B$, and
$(|\pi|-1)!\le s!\le s^s$.  Consequently,
\begin{equation*}
\prod_{B\in\mathcal C}\norm{K_B}_\infty
\le
\prod_{B\in\mathcal C}|B|^{2|B|}
\le v^{2v}.
\end{equation*}
Indeed, $|B|\le v$ for every block and
$\sum_{B\in\mathcal C}|B|=v$.  In particular,
\eqref{eq:abstract-pattern-character-coefficient} and
$|\varphi_a(s_a)|=1$ imply the coefficient $\ell_1$ estimate
\begin{equation}
\sum_{\boldsymbol\varphi\in\widehat{\mathcal H_4}^{\,v}}
\abs{m_{\boldsymbol\varphi}}
\le
\sum_{\boldsymbol s\in(\mathcal H_4\setminus\{0\})^v}
\abs{W_{\mathcal C}(\boldsymbol s)}
\le
63^v\norm{W_{\mathcal C}}_\infty
\le
63^v v^{2v}.
\label{eq:character-coefficient-l1-bound}
\end{equation}
Here the factor $64^{-v}$ is exactly cancelled by the $64^v$ possible
character tuples, while $(\mathcal H_4\setminus\{0\})^v$ has $63^v$
elements.  Hence \eqref{eq:abstract-pattern-character-contributions},
\eqref{eq:character-coefficient-l1-bound}, and
\eqref{eq:distinct-coordinate-bound} imply
\begin{equation*}
\abs{\mathcal R_{\mathcal C}}
\le
63^v v^{2v}(512k^3)^v
\prod_{j=1}^4\norm{A_j}_4.
\end{equation*}

\end{proof}

We now prove the deferred lemmas.

\begin{proof}[Proof of \cref{lem:auxiliary-polynomial-estimate}]
Since
\[
1-z_a+z_a\varphi_a(\bm{\alpha}_i)
=
1+z_a\bigl(\varphi_a(\bm{\alpha}_i)-1\bigr),
\]
the mixed derivative in \eqref{eq:auxiliary-polynomial} is exactly
\eqref{eq:auxiliary-derivative-all-coordinates}.

We next prove the uniform bound
\eqref{eq:auxiliary-polynomial-bound}.  For each $a\in[v]$, let
$\nu_{a,1},\nu_{a,2},\nu_{a,3}\in G$ be the three coordinate labels of
$\varphi_a$ and set $\nu_{a,4}=0$.  Then
\begin{equation*}
\varphi_a(t_1,t_2,t_3,t_4)
=
\prod_{j=1}^4\chi_{\nu_{a,j}}(t_j).
\end{equation*}
Fix $z=(z_1,\ldots,z_v)\in[0,1]^v$, and let
$\varepsilon_{a,i}\in\{0,1\}$ be independent Bernoulli variables with
$\Pr(\varepsilon_{a,i}=1)=z_a$ for $i\in[n]$.  Then
\begin{align}
&\prod_{a=1}^v\prod_{i=1}^n
\bigl(1-z_a+z_a\varphi_a(\bm{\alpha}_i)\bigr)
=
\mathbb E_{\varepsilon}
\prod_{a=1}^v\prod_{i=1}^n
\varphi_a(\bm{\alpha}_i)^{\varepsilon_{a,i}}
=
\mathbb E_{\varepsilon}
\prod_{j=1}^4
\chi_{\rho_j(\varepsilon)}(\alpha_j),
\label{eq:auxiliary-product-as-random-character}
\end{align}
where $\rho_j(\varepsilon)\in G^n$ is defined coordinatewise by
\begin{equation*}
\rho_j(\varepsilon)_i
:=
\sum_{a=1}^v\varepsilon_{a,i}\nu_{a,j}.
\end{equation*}
Indeed, for every fixed $j\in[4]$ and $i\in[n]$, multiplicativity of the
characters implies
\begin{equation}
\prod_{a=1}^v
\chi_{\nu_{a,j}}(\alpha_{j,i})^{\varepsilon_{a,i}}
=
\chi_{\sum_{a=1}^v\varepsilon_{a,i}\nu_{a,j}}(\alpha_{j,i})
=
\chi_{\rho_j(\varepsilon)_i}(\alpha_{j,i}).
\label{eq:auxiliary-character-product-at-one-coordinate}
\end{equation}
The product of \eqref{eq:auxiliary-character-product-at-one-coordinate} over
$i$ equals $\chi_{\rho_j(\varepsilon)}(\alpha_j)$, which proves the final
equality in \eqref{eq:auxiliary-product-as-random-character}.

Let $U_\rho$ be the Pauli string constructed in
\cref{lem:pauli-multiplier-implementation}, and set
\begin{equation*}
A_j^{(\varepsilon)}
:=
U_{\rho_j(\varepsilon)}A_jU_{\rho_j(\varepsilon)}^\dagger.
\end{equation*}
By \cref{lem:pauli-multiplier-implementation}, the Fourier coefficient of
$A_j^{(\varepsilon)}$ at $\alpha_j$ is
$\chi_{\rho_j(\varepsilon)}(\alpha_j)\widehat A_j(\alpha_j)$.
After substituting \eqref{eq:auxiliary-product-as-random-character} into
\eqref{eq:auxiliary-polynomial}, \cref{prop:scalar-pauli-identities} rewrites the
expression as
\begin{equation}
F(z)
=
\mathbb E_{\varepsilon}
\tau_{2^n}\!\left(
\bigl(A_3^{(\varepsilon)}A_4^{(\varepsilon)}\bigr)^\dagger
A_1^{(\varepsilon)}A_2^{(\varepsilon)}
\right).
\label{eq:auxiliary-polynomial-trace-average}
\end{equation}
For every realization of $\varepsilon$, Schatten H\"older and unitary
invariance yield
\begin{equation*}
\left|
\tau_{2^n}\!\left(
\bigl(A_3^{(\varepsilon)}A_4^{(\varepsilon)}\bigr)^\dagger
A_1^{(\varepsilon)}A_2^{(\varepsilon)}
\right)
\right|
\le
\prod_{j=1}^4\norm{A_j^{(\varepsilon)}}_4
=
\prod_{j=1}^4\norm{A_j}_4.
\end{equation*}
Taking the expectation in
\eqref{eq:auxiliary-polynomial-trace-average}, and then the supremum over
$z\in[0,1]^v$, proves \eqref{eq:auxiliary-polynomial-bound}.

We record here the polynomial derivative estimate needed for the final step.
The classical one-variable Markov brothers' inequality is stated, for example,
in \cite[Equation~(0.2)]{BE95Markov}.  If $d\geq0$ and $p$ is a complex
polynomial of degree at most $d$, then
\begin{equation*}
\norm{p'}_{L_\infty([-1,1])}
\leq d^2\norm{p}_{L_\infty([-1,1])}.
\end{equation*}
For real polynomials this is the classical Markov inequality.  For a complex
polynomial, fix $x_0\in[-1,1]$ with $p'(x_0)\neq0$ and choose
$\zeta\in\mathbb C$, $|\zeta|=1$, so that
$\zeta p'(x_0)=|p'(x_0)|$.  Applying the real inequality to
$\operatorname{Re}(\zeta p)$ and taking the supremum over $x_0$ proves the
same estimate.  Under the affine change of variables $q(t)=p(2t-1)$, the
derivative acquires a factor of $2$; hence
\begin{equation*}
\norm{p'}_{L_\infty([0,1])}
\leq 2d^2\norm{p}_{L_\infty([0,1])}.
\end{equation*}
Applying this estimate successively in $z_1,\ldots,z_v$ shows that every
polynomial $P(z_1,\ldots,z_v)$ of degree at most $d$ in each variable
satisfies
\begin{equation}
\abs{\partial_{z_1}\cdots\partial_{z_v}P(0)}
\leq (2d^2)^v
\sup_{z\in[0,1]^v}\abs{P(z)}.
\label{eq:multivariate-markov-brothers}
\end{equation}

It remains to obtain the derivative estimate.  Fix a Fourier tuple
$\boldsymbol\alpha$ occurring in \eqref{eq:auxiliary-polynomial}.  If
$i\notin S(\boldsymbol\alpha)$, then $\bm{\alpha}_i=0$ and hence
$\varphi_a(\bm{\alpha}_i)=1$ for every $a$.  The corresponding factor in
\eqref{eq:auxiliary-polynomial} is therefore identically one.  Consequently,
this summand has degree in each variable $z_a$ at most
\begin{equation*}
|S(\boldsymbol\alpha)|
\le
\sum_{j=1}^4|\alpha_j|
\le4k.
\end{equation*}
Since this holds for every Fourier tuple in the sum, $F$ has degree at most
$4k$ in each variable.
Apply \eqref{eq:multivariate-markov-brothers} with $d=4k$, and then use
\eqref{eq:auxiliary-polynomial-bound}, to obtain
\begin{align*}
\abs{\partial_{z_1}\cdots\partial_{z_v}F(0)}
&\le
\bigl(2(4k)^2\bigr)^v
\sup_{z\in[0,1]^v}|F(z)|\\
&\le
(32k^2)^v\prod_{j=1}^4\norm{A_j}_4.
\end{align*}
This is \eqref{eq:mixed-derivative-bound}.
\end{proof}

\begin{proof}[Proof of \cref{lem:cumulant-zero-input}]
Fix $x\in B$ with $\bm{\alpha}_x=0$.  For each
$\rho\in\operatorname{Part}(B\setminus\{x\})$, group together all
partitions of $B$ that reduce to $\rho$ after deleting $x$.  If $\rho$ has
$q$ blocks, there is one partition in which $x$ is a singleton; its
coefficient in \eqref{eq:fiber-phase-cumulant} is $(-1)^q q!$.
There are also $q$ partitions obtained by inserting $x$ into one block of
$\rho$, each with coefficient $(-1)^{q-1}(q-1)!$.  By \cref{prop:fiber-phase-basic-properties}, all
$q+1$ terms have the same product of $\Theta$-factors after $x$ is deleted.
Their total coefficient is
\[
(-1)^q q!+q(-1)^{q-1}(q-1)!=0.
\]
Every group therefore cancels, proving
\eqref{eq:cumulant-zero-input}.
\end{proof}

\begin{proof}[Proof of \cref{lem:distinct-coordinate-mobius}]
For a map $\boldsymbol i=(i_1,\ldots,i_v):[v]\to[n]$, let
$\ker(\boldsymbol i)$ be the partition defined by
$a\sim b$ if and only if $i_a=i_b$.  For
$\rho\in\operatorname{Part}([v])$, set
\[
E_\rho
:=
\sum_{\ker(\boldsymbol i)=\rho}\prod_{a=1}^v f_a(i_a).
\]
For $\sigma\in\operatorname{Part}([v])$, define
\begin{align*}
A_\sigma
&:=
\sum_{\sigma\le\ker(\boldsymbol i)}\prod_{a=1}^v f_a(i_a)=
\prod_{B\in\sigma}
\left(\sum_{i=1}^n\prod_{a\in B}f_a(i)\right).
\end{align*}
The condition $\sigma\le\ker(\boldsymbol i)$ enforces equality inside each
$\sigma$-block but permits equality between different blocks.  Therefore
\[
A_\sigma=\sum_{\rho\ge\sigma}E_\rho.
\]
The injective maps are precisely those with
$\ker(\boldsymbol i)=\hat0_{[v]}$.  The upward form of \mobius{} inversion
from \cref{lem:mobius-inversion-partition-lattice}, evaluated at
$\hat0_{[v]}$, reads
\[
E_{\hat0}
=
\sum_{\sigma\in\operatorname{Part}([v])}
\mu(\hat0,\sigma)A_\sigma,
\]
which is \eqref{eq:distinct-coordinate-mobius}.  The coefficient formula
\eqref{eq:second-mobius-coefficient} is the specialization of
\eqref{eq:partition-lattice-mobius} to $\pi=\hat0$.
\end{proof}

\end{document}